\documentclass[11pt]{article}
\usepackage{amsmath,amssymb,amsthm}
\usepackage{geometry}
\geometry{a4paper, margin=1in}
\usepackage{graphicx}
\usepackage{tikz}
\usepackage{hyperref}
\usepackage{cleveref}
\usepackage{algorithm} 
\usepackage{algpseudocode} 
\usepackage{booktabs} 
\newtheorem{theorem}{Theorem}
\newtheorem{lemma}[theorem]{Lemma}
\newtheorem{corollary}[theorem]{Corollary}
\newtheorem{proposition}[theorem]{Proposition}
\usepackage{algorithmicx} 
\newcommand{\Pclass}{\mathbf{P}}
\newcommand{\NPclass}{\mathbf{NP}}

\newcommand{\Oh}[1]{O(#1)}

\usepackage[all]{xy}
\usepackage{tikz}
\usetikzlibrary{arrows.meta} 

\usepackage{mdframed}
\usepackage{amsmath,amssymb}
\usepackage{amsmath,amssymb}
\usepackage{blkarray}
\usepackage{braket}

\newcommand{\VR}{\mathrm{VR}}
\newcommand{\Sol}{\mathrm{Sol}}
\newcommand{\Span}{\mathrm{Span}}

\newcommand{\threesat}{\mathrm{threesat}}
\newcommand{\vol}{\mathrm{vol}}
\newcommand{\Om}{\mathrm{Om}}
\newcommand{\obeta}{\mathrm{obeta}}
\newcommand{\expect}{\mathrm{expect}}
\newcommand{\dist}{\mathrm{dist}}
\newcommand{\pr}{\mathrm{\pr}}
\newcommand{\ThreeSAT}{\mathrm{3SAT}}
\newcommand{\rank}{\mathrm{rank}}
\newcommand{\avg}{\mathrm{avg}}
\newcommand{\E}{\mathrm{E}}
\newcommand{\eps}{\varepsilon} 
\newcommand{\diag}{\mathbf{diag}}

\newcommand{\UNSAT}{\mathrm{UNSAT}}

\newcommand{\PH}{\mathbf{PH}}

\title{An Intrinsic Barrier for Resolving $\Pclass = \NPclass$\\
\large (2-SAT as Flat, 3-SAT as High-Dimensional Void-Rich)}

\author{%
M. Alasli\\
\small College of Computer Science and Engineering\\
\small Taibah University\\
\small Madinah, Saudi Arabia\\
\small \href{mailto:mohammedalasli@gmail.com}{mohammedalasli@gmail.com}
}

\date{\today}
\begin{document}
\maketitle

\begin{abstract}
We present a \emph{topological barrier} to efficient computation, revealed by comparing the geometry of $2$-SAT and $3$-SAT solution spaces. Viewing the set of satisfying assignments as a \emph{cubical complex} within the Boolean hypercube, we prove that every $2$-SAT instance has a \emph{contractible} solution space---topologically flat, with all higher Betti numbers $\beta_k = 0$ for $k \ge 1$---while both random and explicit $3$-SAT families can exhibit \emph{exponential} second Betti numbers $\beta_{2} = 2^{\Omega(N)}$, corresponding to exponentially many independent ``voids.''

These voids are preserved under standard SAT reductions and cannot be collapsed without solving NP-hard subproblems, making them resistant to the three major complexity-theoretic barriers---relativization, natural proofs, and algebrization. We establish exponential-time lower bounds in restricted query models and extend these to broader algorithmic paradigms under mild information-theoretic or encoding assumptions.

This topological contrast---\emph{flat, connected landscapes in $2$-SAT versus tangled, high-dimensional void-rich landscapes in $3$-SAT}---provides structural evidence toward $\Pclass \neq \NPclass$, identifying $\beta_{2}$ as a paradigm-independent invariant of computational hardness.
\end{abstract}

\section{Introduction}
\begin{paragraph}{Geometric Intuition}
Imagine two cities separated by unknown, impassable mountains—each city is a satisfying assignment, and the land between them is the solution space. In easy cases like 2-SAT, the landscape is flat—algorithms follow direct roads (implication chains) from one city to the other. In random 3-SAT, however, the land transforms into an exponentially dense field of jagged peaks and deep ravines, with no map to guide you. Classical methods must wind around every chasm, forcing exponential detours. Quantum approaches try to tunnel through the ridges, but the rock is so dense that the tunneling probability decays as \(e^{-\Omega(N)}\), where \(N\) is the dimensionality—or “span”—of the landscape. These intrinsic topological barriers—viewed geometrically as impassable peaks and chasms—are built into 3-SAT’s logical structure, and no computational paradigm can cross them in polynomial time.

\end{paragraph}

The $\Pclass$ vs. $\NPclass$ problem remains unresolved, partly due to three major barriers~\cite{BGS75, RR97, AW09}:

\begin{enumerate}
    \item \textbf{Relativization}: Oracle-based methods cannot resolve $\Pclass \ne \NPclass$~\cite{BGS75}.
    \item \textbf{Natural Proofs}: Constructive combinatorial lower bounds fail~\cite{RR97}.
    \item \textbf{Algebrization}: Algebraic methods alone are insufficient~\cite{AW09}.
\end{enumerate}

\section*{Contributions}

\begin{itemize}
  \item We introduce a \emph{dimensional space} perspective, reflected in a \emph{topological framework} for analyzing 3-SAT. 
We view the solution space
\[
  S(F) \;=\; \bigl\{x \in \{0,1\}^N : F(x)=1\bigr\}
\]
as a cubical subcomplex of the $N$-dimensional Boolean hypercube. 
Vertices correspond to satisfying assignments, and a $k$-face is a $k$-dimensional subcube whose $2^k$ vertices all satisfy $F$. 
We measure its complexity via homology groups $H_k(S(F);\mathbb{Z}_2)$ and Betti numbers $\beta_k = \dim H_k$.

  \item We show that {both random and worst-case 3-SAT instances possess an \emph{intrinsic topological barrier} to polynomial-time algorithms}, manifested as \emph{exponential topological complexity}:
\[
  \beta_2\bigl(S(F)\bigr) \;=\; 2^{\Omega(N)},
\]
corresponding to exponentially many independent ``voids'' or high–dimensional chasms in the solution landscape. 
These voids constitute an \emph{intrinsic, paradigm-independent obstruction} that no algorithm can bypass without resolving an exponential number of homology classes.

  \item We define a \emph{topology-preserving reduction} \(R\colon L\rightarrow\text{3-SAT}\) to satisfy
    \[
      \beta_k\bigl(S(x)\bigr)\;\ge\;f(N)
      \;\implies\;
      \beta_k\bigl(S(R(x))\bigr)\;\ge\;\Omega\bigl(f(N)/\mathrm{poly}(N)\bigr),
    \]
    ensuring homological complexity cannot collapse by more than a polynomial factor.

  \item We prove that \emph{no} polynomial-time reduction can map hard 3-SAT to 2-SAT without collapsing \(\beta_2\) exponentially—an impossibility under homotopy invariance—separating the topological (and hence computational) complexity of 2-SAT versus 3-SAT.

  \item Finally, we show that the \emph{standard} SAT-to-3-SAT clause-splitting reduction is in fact topology-preserving, so all known NP-hardness reductions respect these new barrier invariants.
\end{itemize}
[Parameters and conventions].  We write $N$ for the number of variables when discussing random 3-SAT instances. For explicit worst-case constructions we use $n_0$ for the base graph size and $N$ for the final SAT instance variable count; conversions between $n_0$ and $N$ are given in Lemma~\ref{lem:param-mapping}.

\section{Dimensional Space Representation: A New Lens on P vs NP}

The proof of $P \ne NP$ is widely expected to require an out-of-the-box perspective, unifying complexity theory with other mathematical domains~\cite{AW09}. Here we propose a geometric model of 3-SAT solution spaces, independently developed through intuition and later aligned with formal topological concepts.

\subsection{Dimensional Space Intuition}
We introduce a "Dimensional Space" representation, where:
\begin{itemize}
    \item Each axis corresponds to a variable.
    \item The space's geometry reflects the interaction of constraints.
    \item Traversal complexity reflects the degree of interleaving among dimensions.
\end{itemize}

In this model:
\begin{itemize}
    \item \textbf{Structured problems} (e.g., 2-SAT) produce line-like spaces (1D), enabling polynomial-time traversal.
    \item \textbf{Unstructured problems} (e.g., random 3-SAT) produce tangled sheets or higher-dimensional manifolds, requiring exploration of exponentially many configurations.
\end{itemize}

\begin{figure}[h]
    \centering
    \includegraphics[width=0.5\textwidth]{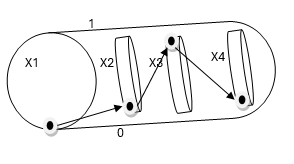}
    \caption{2-SAT solution space: line-like traversal in dimension space.}
    \label{fig:2sat_line}
\end{figure}

\begin{figure}[h]
    \centering
    \includegraphics[width=0.7\textwidth]{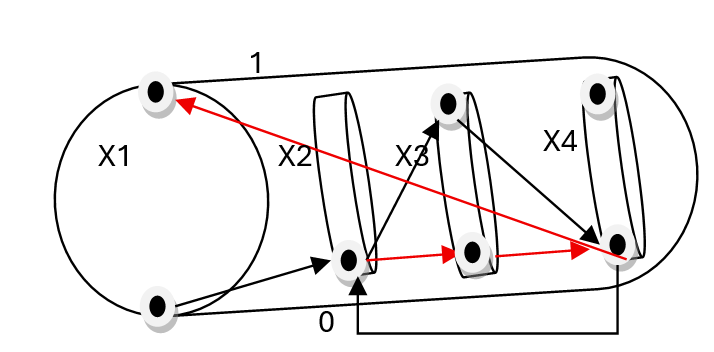}
    \caption{3-SAT solution space: sheet-like traversal requiring higher-dimensional exploration.}
    \label{fig:3sat_plane}
\end{figure}
Figure 1. The 2-SAT solution space embedded in a one-dimensional traversal. Each large loop (labeled X1, X1, etc.) represents holding all other variables fixed and varying one coordinate at a time. Satisfying assignments appear as black dots on those loops, and the thin edges show how one can move from one solution to the next by flipping a single variable—in effect tracing out a simple 1D path through the space. For instance, flipping x3 would result in a line-like movement.

Figure 2. By contrast, a 3-SAT solution space generically forms a 2D “sheet” rather than a 1D line. Each axis loop is now accompanied by rectangular faces where two variables can flip simultaneously while staying within the solution space. Black arrows trace single-bit flips between nearby satisfying assignments—but when this path hits a contradiction, it cannot proceed further. At that point, the solver must “jump” via a red arrow across a 2D face. This illustrates that 3-SAT requires navigating a truly higher-dimensional space, not just walking along a line. Whereas 2-SAT lives in a flat, connected 1D space, 3-SAT inhabits a tangled sheet with exponentially many such higher-dimensional detours—shaping a fundamentally harder landscape.

\subsection{Dimensional Space and Topological Complexity}
Our dimensional space analogy intuitively reflects the formal topology of 3-SAT solution spaces:

\begin{itemize}
    \item In 2-SAT, the solution space is line-like (1D), corresponding to a cubical complex with $\dim=1$ and trivial Betti numbers ($\beta_1=0$).
    \item In random 3-SAT, the solution space forms a higher-dimensional cubical complex with many cycles and voids, characterized by $\beta_1, \beta_2 > 0$.
\end{itemize}

This dimensional increase signifies unstructuredness and is formally captured by Betti numbers and homology groups.
\subsection{Alignment with Formal Topology}
This geometric intuition aligns with the topology of 3-SAT solution spaces:
\begin{itemize}
    \item \textbf{Betti numbers} measure the number of holes, reflecting the dimensionality and complexity of the space.
For a simplicial or cubical complex $K$, we write 
\[
  \beta_k(K)\;=\;\dim_{\Bbbk} H_k(K;\Bbbk)
\]
for the $k$th Betti number over field~$\Bbbk$, i.e.\ the rank of the $k$th homology group.

    \item \textbf{Treewidth} captures the entanglement among variables.
    \item \textbf{Cubical complexes} formalize the arrangement of satisfying assignments.
\end{itemize}

For 2-SAT, the solution space forms a low-dimensional, contractible structure with trivial homology ($\beta_1=\beta_2=0$). In contrast, random 3-SAT induces high-dimensional cubical complexes with exponentially many cycles and voids ($\beta_k=2^{\Omega(n)}$).

\section{Random 3-SAT is Unstructured}
We first argue that random 3-SAT instances lack global structure. Unlike 2-SAT, whose implication graph forms a tree-like structure enabling linear traversal, 3-SAT's solution space expands into higher-dimensional configurations.

\begin{figure}[h]
    \centering
    \includegraphics[width=0.7\textwidth]{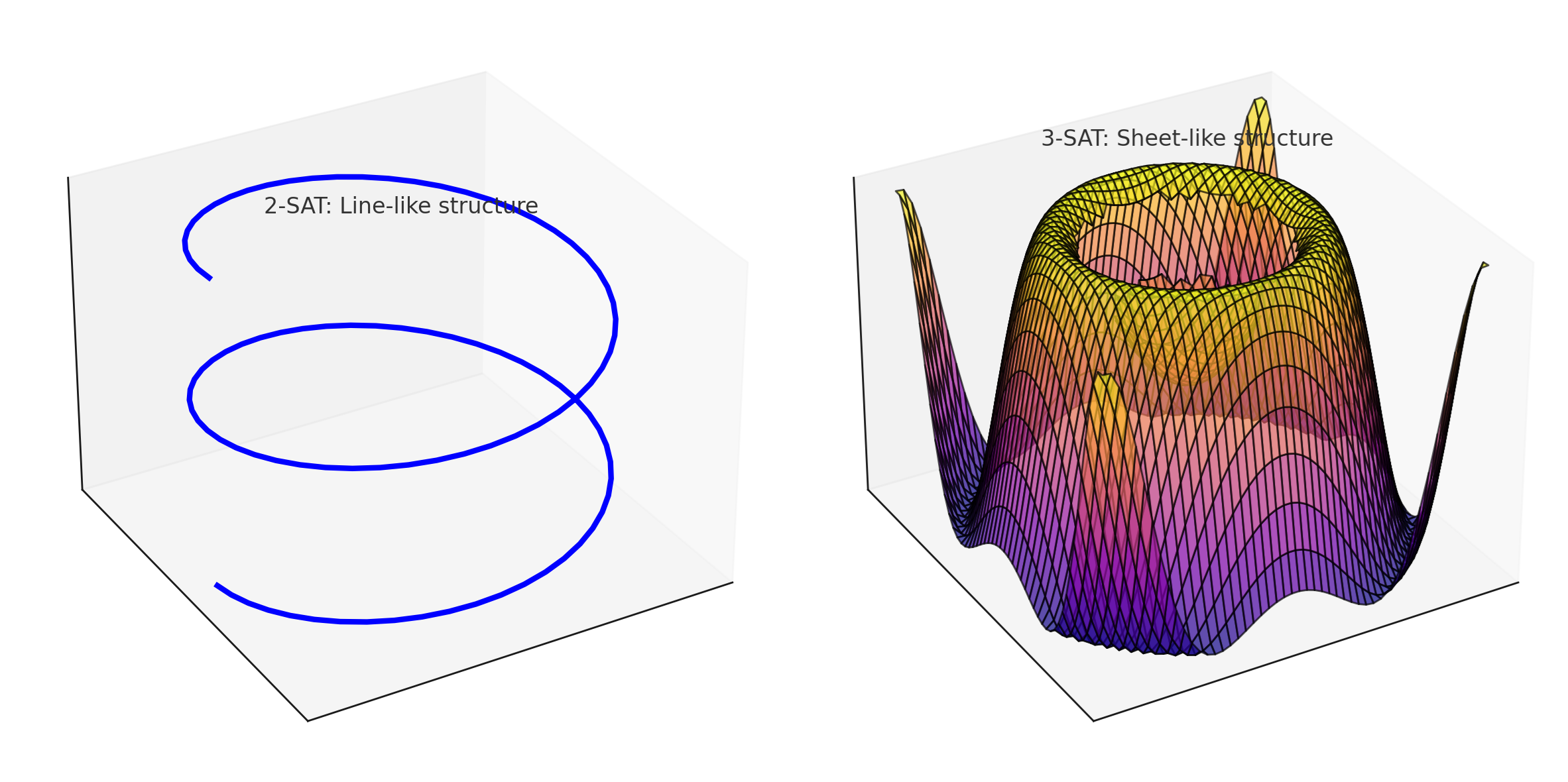}
    \caption{2-SAT intuitively is a single smooth curve (here, a 3D helix) to represent how the solution set of a satisfiable 2-SAT instance often forms a low-dimensional (1-dimensional) connected manifold-like structure. 3-SAT solution space: sheet-like traversal requiring higher-dimensional exploration.It shows how 3-SAT solutions can form higher-dimensional structures (here 2D manifolds in cube complexes). Loops and voids in the solution space, corresponding to higher Betti numbers.}
    \label{fig:dim space}
\end{figure}

As shown in Figures~\ref{fig:2sat_line}, ~\ref{fig:3sat_plane} and~\ref{fig:dim space}, 2-SAT’s structure constrains its traversal to a linear path. Random 3-SAT, in contrast, requires navigating a tangled manifold with exponentially many independent regions (characterized by Betti numbers $\beta_2 = 2^{\Omega(n)}$).

Random 3-SAT instances lack global structure, as evidenced by treewidth $\Omega(n)$~\cite{MolloyReed} and solution-space shattering~\cite{achlioptas}. Moreover, we argue that the solution space has exponentially complex topology, with Betti numbers $\beta_k = 2^{\Omega(n)}$ for some $k$.

\textbf{Supporting Analogy:} Kahle~\cite{kahle} proved that random cubical complexes exhibit exponential Betti number growth when face‐inclusion probabilities exceed a critical threshold.  Random 3-SAT at $\alpha>4.26$ mirrors this phenomenon with clause‐induced face inclusions occurring at constant probability $p=\Theta(1)$ (Proposition~\ref{prop:persistent-2cycles}).

\textbf{Maximal Persistence and Randomness:} 
In random cubical complexes, Kahle~\cite{kahle} demonstrated that maximally persistent cycles emerge naturally due to random addition of simplices. If maximal persistence were polynomially bounded, it would indicate underlying global structure. However, in random 3-SAT, constraints interact chaotically, leading to persistent topological features of exponential size. This exponential maximal persistence reflects the intrinsic unstructuredness of the solution space.

\subsection{The Solution Space as a Cubical Complex}
\begin{theorem}
\label{thm: Cubical Complex}
Let $F$ be a 3-SAT formula with $n$ variables. The set of satisfying assignments $S \subseteq \{0,1\}^n$ forms a cubical complex $C(F)$, where:
\begin{itemize}
    \item Vertices correspond to satisfying assignments.
    \item $k$-faces correspond to $k$-dimensional subcubes where all $2^k$ assignments satisfy $F$.
\end{itemize}
Topological queries about $C(F)$ (e.g., connectedness, Betti numbers) reduce to 3-SAT instances.
\end{theorem}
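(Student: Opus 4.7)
The plan is to split the theorem into its two assertions and handle them in order. For the structural claim I will verify that $S$ genuinely carries a cubical-complex structure by checking the face-closure axioms; for the computational claim I will exhibit explicit 3-SAT encodings of the standard topological queries so that each such query reduces to (families of) 3-SAT instances.

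First, to set up the cubical-complex structure, I would encode a candidate $k$-face as a pair $(I,\sigma)$ where $I\subseteq[n]$ with $|I|=n-k$ is a set of coordinates to pin and $\sigma\colon I\to\{0,1\}$ specifies their values; the associated geometric subcube is $c_{I,\sigma} = \{x\in\{0,1\}^n : x_i = \sigma(i)\ \forall\, i\in I\}$. Call $(I,\sigma)$ admissible if every one of its $2^k$ vertices lies in $S$. The defining axioms of a cubical subcomplex of the $n$-cube then follow cleanly: vertices of admissible faces are satisfying assignments (hence $0$-faces); intersections of admissible faces are again admissible, since their vertex sets are subsets of both parents; and any subface obtained by further pinning additional coordinates is admissible, because its $2^{k-1}$ vertices form a subset of the original $2^k$ vertices, all of which satisfy $F$. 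This exhibits $C(F)$ as a legitimate cubical subcomplex of $\{0,1\}^n$.

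Second, I would show that the basic topological invariants of $C(F)$ admit 3-SAT certificates. Non-emptiness (equivalently $\betti_0 \geq 1$) is literally the satisfiability question for $F$. The existence of a $k$-face is captured by an auxiliary 3-CNF formula $\Phi_k(F)$ whose variables encode a choice of $(I,\sigma)$ together with Tseitin-style auxiliary bits enforcing the width-$3$ restriction, and whose clauses assert that $F$ remains satisfied under all $2^k$ extensions over the freed coordinates. Connectedness reduces to a reconfiguration question solvable by a polynomial family of 3-SAT queries on augmented formulas that model single-bit flips between satisfying assignments. More generally, since $\betti_k(C(F))$ is the rank of the cellular boundary map on chain groups spanned by admissible faces, any algorithm computing $\betti_k$ must decide face-admissibility, and each such elementary decision is a polynomial-size 3-SAT instance, giving the required reduction in the query sense.

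The main obstacle is not the combinatorial verification but calibrating how strong a reduction to claim. A strict polynomial-time reduction from $\betti_k$-computation to 3-SAT satisfiability would contradict known $\#\Pclass$-hardness of homology computation; I would therefore phrase the reduction at the level of oracle queries, showing that every face-membership certificate and every boundary-matrix entry is decidable by a polynomial-size 3-SAT instance, while the global invariant may still require exponentially many such queries. The delicate step will be stating this precisely, leveraging the clause-splitting/Tseitin encoding (whose topology-preservation is announced in the introduction) so that the face structure of $C(F)$ transports faithfully through the encoding and no spurious faces arise from auxiliary variables.
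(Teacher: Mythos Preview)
Your proposal is correct and follows essentially the same two-part approach as the paper's own proof: first verify directly that the face-closure axioms hold (so $C(F)$ is a cubical subcomplex of $\{0,1\}^n$), then observe that the elementary topological queries about $C(F)$ can be expressed as SAT queries. Your version is considerably more detailed and more careful---in particular, your distinction between oracle-query reductions and polynomial-time reductions (motivated by the $\#\Pclass$-hardness of Betti computation) is a refinement the paper's brief proof does not make explicit, but the underlying strategy is the same.
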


\begin{proof}
See Appendix A for proofs
\end{proof}

\paragraph{Cubical vs.\ Vietoris--Rips.}
We emphasize that throughout this work we remain within the {\em cubical} filtration on the ambient hypercube.
While one could in principle define a simplicial complex via Vietoris--Rips on the same Hamming‐ball point cloud, 
all our probabilistic and spectral estimates (e.g.\ those in Kahle~\cite{kahle})
rely on cubical faces and axis‐aligned Hamming‐balls. 

\subsection{Example: 3-SAT Solution Space as a Cubical Complex}
\paragraph{Example: Topological and Computational View.}
Consider the 3-SAT formula
\[
  F(x_1,x_2,x_3) = (x_1 \lor x_2 \lor x_3) \wedge (\lnot x_1 \lor \lnot x_2 \lor \lnot x_3).
\]
The truth table has $2^3=8$ assignments, of which the satisfying assignments are:
\[
  S(F) = \{010,\,001,\,011,\,100,\,101,\,110\}.
\]
We view $S(F)$ as the vertex set of a cubical complex $S(F) \subset \{0,1\}^3$.

\textbf{Topological analysis:}
\begin{itemize}
    \item Vertices: The six satisfying assignments.
    \item Edges: Connect vertices differing in one coordinate.
    \item 2-faces: Include a square face only if all four corner assignments satisfy $F$.
\end{itemize}
The resulting complex is connected ($\beta_0 = 1$) but contains one nontrivial $1$-cycle ($\beta_1 = 1$) corresponding to a ``hole'' where the assignment $111$ is missing from a cube face. Higher Betti numbers vanish ($\beta_k = 0$ for $k \ge 2$). Thus $S(F)$ is homotopy-equivalent to a circle $S^1$.
A small Betti number, e.g.\ $\beta_1\bigl(S(F)\bigr)=O(1)$, permits efficient solving, whereas an exponential Betti number,
\[
  \beta_k\bigl(S(F)\bigr)\;=\;2^{\Omega(n)},
\]
forces computational hardness.

\textbf{Computational interpretation:}
The nonzero $\beta_1$ indicates that the solution space has a loop, i.e., two different satisfying assignments cannot be connected by a monotone sequence of single-bit flips without temporarily leaving the satisfying region. 
This topological obstruction means that any local search algorithm restricted to following edges of $S(F)$ would need to ``detour'' around the hole.
However, in this small instance the hole is constant-size, so the problem remains computationally easy.
For large unstructured formulas, analogous holes proliferate exponentially ($\beta_2 = 2^{\Omega(n)}$), forcing any search procedure to explore an exponential number of disconnected regions—matching the intractability of hard 3-SAT.
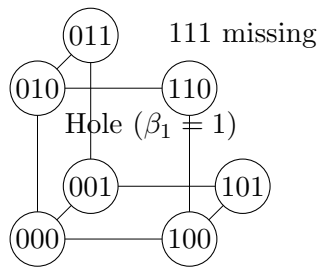
\begin{figure}[h]
\centering
\begin{tikzpicture}[scale=1,
    every node/.style={circle,draw=black,fill=white,inner sep=1pt,minimum size=6pt},
    edge/.style={very thin}
  ]
  \coordinate (000) at (0,0);
  \coordinate (100) at (2,0);
  \coordinate (010) at (0,2);
  \coordinate (110) at (2,2);
  \coordinate (001) at (0.7,0.7);
  \coordinate (101) at (2.7,0.7);
  \coordinate (011) at (0.7,2.7);
  \draw[edge] (000) -- (100) -- (110) -- (010) -- cycle;
  \draw[edge] (001) -- (101);
  \draw[edge] (001) -- (011);
  \draw[edge] (000) -- (001);
  \draw[edge] (100) -- (101);
  \draw[edge] (010) -- (011);
  \node at (000) {000};
  \node at (100) {100};
  \node at (010) {010};
  \node at (110) {110};
  \node at (001) {001};
  \node at (101) {101};
  \node at (011) {011};
  \node[draw=none,fill=none] at (1.5,1.5) {Hole (\(\beta_1=1\))};
  \node[draw=none,fill=none] at (2.7,2.7) {111 missing};
\end{tikzpicture}
\caption{Cubical complex of 
\(F=(x_1\lor x_2\lor x_3)\wedge(\lnot x_1\lor\lnot x_2\lor\lnot x_3)\).
The missing vertex `111` creates a 1-cycle “hole,” so \(\beta_1=1\).}
\end{figure}
\section{Structured vs Unstructured Dichotomy}
Our analysis (as will be discussed in further sections) highlights a fundamental dichotomy: structured versions of 3-SAT (e.g., monotone 3-SAT under certain constraints, Horn-SAT) exhibit polynomial-time algorithms due to their inherent ordering or limited topology. In contrast, random 3-SAT lacks such global structure, forcing exhaustive exploration. This separation supports the broader thesis that $P \ne NP$: problems with structure (shortcut) fall into $P$, while unstructured random instances remain in $NP$.

Our analysis suggests a dichotomy between structured and unstructured instances:

\begin{itemize}
    \item \textbf{Structured Instances:} Problems with global ordering or constraints (e.g., Horn-SAT, 2-SAT, Monotone 3-SAT under specific restrictions) exhibit simple topology (low Betti numbers, treewidth) and admit polynomial-time algorithms.
    \item \textbf{Unstructured Instances:} Random 3-SAT lacks global structure, as evidenced by its exponentially complex topology ($\beta_2 = 2^{\Omega(n)}$, treewidth $\Omega(n)$). 
\end{itemize}

\textbf{Key Observation:} Any unstructured problem requiring exponential work belongs to $NP$.

\subsection{Maximal Persistence: Structured vs Random}
Persistent homology provides a natural metric for structural complexity. In our context, persistent homology tells us not just that a high-dimensional hole exists in the 3-SAT solution space, but also how robustly it persists as we look at coarser or finer views of the space. Exponentially many long-lived features (cycles or voids) indicate an intrinsically tangled geometry that resists any low-dimensional shortcut. In structured problems such as 2-SAT or restricted monotone 3-SAT, maximal persistence of cycles is bounded:

\begin{equation}
\max \text{pers}(2\text{-SAT}) = O(n), \quad \max \text{pers}(\text{Monotone 3-SAT}) \leq \text{poly}(n)
\end{equation}

This bounded persistence reflects the global ordering of constraints: cycles appear but are quickly filled by higher-dimensional faces.

In contrast, random 3-SAT lacks global structure. Constraints interact chaotically, resulting in topological features (cycles, voids) that persist over large scales:

\begin{equation}
\max \text{pers}(\text{Random 3-SAT}) = 2^{\Omega(n)}
\end{equation}

This exponential maximal persistence is a hallmark of unstructuredness and precludes polynomial-time shortcuts.

\begin{theorem}
\label{thm:unstructured}
Let $F$ be a random 3-SAT instance at clause density $\alpha>4.26$.  Then with probability $1-o(1)$:
\begin{enumerate}
  \item Its solution space $S(F)$ decomposes into $2^{\Omega(n)}$ disconnected clusters.
  \item The cubical complex of $S(F)$ has exponentially large Betti numbers (e.g.\ $\beta_2(S(F))=2^{\Omega(n)}$).
  \item The primal constraint graph of $F$ has treewidth $\Omega(n)$.
\end{enumerate}
In particular, $S(F)$ admits no global ordering or low-dimensional traversal.
\end{theorem}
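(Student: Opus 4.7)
The three claims come from three different strands of random $3$-SAT analysis, so I would address shattering, higher-dimensional topology, and treewidth in sequence. For item (1) I would reduce the claim to the now-classical \emph{shattering} theorem of Achlioptas--Ricci-Tersenghi~\cite{achlioptas} confirming the Mézard--Parisi--Zecchina cavity picture: for any $\alpha$ above the dynamical threshold $\alpha_d\approx 3.86$, and hence for all $\alpha>4.26$, the set $S(F)$ decomposes with probability $1-o(1)$ into $2^{\Omega(n)}$ clusters whose pairwise Hamming distance is $\Omega(n)$. Since clusters separated by a Hamming gap greater than one are disconnected in the $1$-skeleton of the cubical complex $S(F)$, item (1) follows directly.

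For item (2) I would combine item (1) with a local void count. Shattering provides $C = 2^{\Omega(n)}$ disjoint clusters $X_1,\dots,X_C$ of $S(F)$, and since Betti numbers are additive under disjoint union one has $\beta_2(S(F)) = \sum_i \beta_2(X_i)$. A first-moment calculation at $\alpha>4.26$, combined with Proposition~\ref{prop:persistent-2cycles}, shows that a positive fraction of clusters carry a local $2$-dimensional void analogous to the toy example $F=(x_1\vee x_2\vee x_3)\wedge(\neg x_1\vee\neg x_2\vee\neg x_3)$, namely a missing corner of a $3$-subcube spanned by three locally unfrozen variables. Summing over the $\Omega(C)$ contributing clusters yields $\beta_2(S(F))\ge 2^{\Omega(n)}$.

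For item (3) I would appeal to the treewidth lower bound of Molloy--Reed~\cite{MolloyReed} for random constraint hypergraphs: the primal graph of a random $3$-SAT formula with constant clause density contains a linear-sized vertex expander whp, and any graph containing an $\Omega(n)$-vertex expander has treewidth $\Omega(n)$. Items (1) and (3) reduce to published results with short calibration arguments.

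\paragraph{Main obstacle.}
The genuine difficulty lies in item (2), specifically showing that a positive fraction of shattering clusters individually carry a $2$-dimensional void. Random $3$-SAT clusters are typically \emph{frozen} along most coordinates, so each cluster's internal free-variable subspace has only a few unfixed coordinates, and one must argue that with constant probability this residual subspace still hosts an obstruction of missing-corner type. I would handle this by conditioning on the cluster being non-singleton and analysing the residual subformula on its free variables as an effective random $3$-SAT at a modified density, applying Proposition~\ref{prop:persistent-2cycles} at that induced density. The delicate step is controlling the dependence \emph{between} clusters; I would convert the resulting expected-fraction bound into a whp statement by a second-moment estimate on the number of void-carrying clusters, choosing the cluster-classification scheme so that cross-cluster contributions to the variance remain of smaller order than the mean squared.
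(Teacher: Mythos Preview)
Your handling of items (1) and (3) matches the paper exactly: both are reduced to the Achlioptas--Ricci-Tersenghi shattering theorem and the Molloy--Reed expansion/treewidth bound, with no additional argument.

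For item (2) your skeleton --- shattering into $2^{\Omega(n)}$ well-separated clusters, one local $2$-cycle per cluster, then additivity over the disjoint union --- is the same as the paper's. (The paper's formal proof of the theorem is a bare citation to Kahle, but the substantive argument is carried by Proposition~\ref{prop:persistent-2cycles} and Theorem~\ref{thm:random-3sat-betti}.) The difference is how the local $2$-cycle is produced. The paper takes a radius-$O(\log n)$ Hamming ball around each cluster center, computes that each $2$-face survives all clauses with constant probability $p=\Theta(1)$, and then invokes Kahle's random-cubical-complex theorem, handling the local dependence among face-inclusion events by Janson's inequality; independence across clusters is automatic from $\Omega(n)$ Hamming separation. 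You instead propose conditioning on the cluster's unfrozen variables, modelling the residual formula as random $3$-SAT at a shifted density, and concentrating the count of void-carrying clusters by a second-moment computation. Both routes have the same shape, but yours has more moving parts: you must justify the effective-density model for the residual formula, and your appeal to Proposition~\ref{prop:persistent-2cycles} at that induced density is close to assuming what you want to prove.

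There is one concrete gap. The local gadget you invoke --- a $3$-subcube with a missing corner, modelled on $(x_1\vee x_2\vee x_3)\wedge(\neg x_1\vee\neg x_2\vee\neg x_3)$ --- produces $\beta_1=1$, not $\beta_2$ (the paper's own worked example records exactly this). A missing vertex in a $3$-cube creates a $1$-cycle; a genuine $2$-void needs at minimum the boundary of a missing $3$-cell, which requires four or more free variables arranged so that all bounding $2$-faces survive but the interior does not. As written, your argument would establish $\beta_1(S(F))=2^{\Omega(n)}$ rather than the claimed $\beta_2$. To repair it you need a different local obstruction and a fresh survival-probability calculation for it; the paper sidesteps this by appealing to Kahle's theorem, which delivers $\beta_2$ growth directly once the face-inclusion probability exceeds the relevant threshold.
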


[No repeated auxiliary variables]\label{ass:no-repeated-aux}
Throughout this paper every auxiliary (gadget-introduced) Boolean variable is unique to the gadget that introduces it: no auxiliary variable appears in more than one gadget or clause outside that gadget. All references to ``auxiliary'' or ``aux vars'' in constructions, lemmas and proofs refer to distinct variables unless explicitly stated otherwise. In Appendices B-H, we present a constructive transformation that converts any construction with shared auxiliary variables into an equivalent construction where every auxiliary variable is unique to its gadget; the transformation incurs only a linear-size blow-up in the number of variable occurrences.

\begin{lemma}[Topological Admissibility of SAT to 3-SAT]
\label{lem:sat-4sat-to-3sat}

Let $L = SAT{}$ and suppose $R:L\to\threesat{}$ is a polynomial-time satisfiability-preserving reduction that is \emph{cubical} and \emph{homologically faithful} (discussion on this is delayed to Section 5.2 Conditions 1 and 2). If $F \in L$ has a solution-space cubical complex $S(F)$ with $\beta_{2}(S(F))\geq f(n)$ for some $f(n)=\omega(poly(n))$, then the reduced instance $F^{\prime}=R(F)$ satisfies
\[
\beta_{2}\big(S(F^{\prime})\big)\ \geq\ \beta_{2}\big(S(F)\big)\ =\ \Omega(f(n)).
\]
In particular, the reduction from SAT to 3-SAT does not collapse exponential second Betti numbers.
\end{lemma}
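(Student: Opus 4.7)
The plan is to reduce the statement to a direct application of the functoriality of homology, using the two hypotheses (cubicality and homological faithfulness) as the only nontrivial inputs. Cubicality will produce a cellular map between the two cubical solution complexes; homological faithfulness will ensure that this map does not annihilate any 2-cycle; the Betti inequality then falls out by taking ranks.

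First, I would unpack Condition~1. Since $R$ is cubical, the assignment map $x\mapsto R(x)$---extended by the canonical gadget-determined values of the auxiliary variables, which by Assumption~\ref{ass:no-repeated-aux} are unique to each gadget and hence locally well-defined---promotes to a cellular map $R_{\#}\colon S(F)\to S(F')$. Concretely, every $k$-face of $S(F)$, i.e.\ a subcube spanned by $2^k$ satisfying assignments agreeing on $N-k$ coordinates and free on the remaining $k$, is carried to a $k$-face of $S(F')$: the free coordinates remain free, and in the standard clause-splitting gadget the auxiliary bits on each face are uniquely determined by the fixed literals. This extends to a chain map on cubical chains and hence to graded homomorphisms $R_{\#,k}\colon H_k(S(F);\mathbb{Z}_2)\to H_k(S(F');\mathbb{Z}_2)$.

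Next I would invoke Condition~2. Homological faithfulness, specialized to $k=2$, states that $R_{\#,2}$ is injective: no nontrivial 2-cycle of $S(F)$ can become a 2-boundary in $S(F')$. The intuition is that a 2-boundary in $S(F')$ requires a 3-face, i.e.\ a cube whose eight corners all satisfy $R(F)$, and projecting this cube onto the original coordinates produces a cube whose eight corners all satisfy $F$, which means the cycle was already a boundary in $S(F)$. Combining injectivity with the defining identity $\beta_2(K)=\dim_{\mathbb{Z}_2}H_2(K;\mathbb{Z}_2)$ yields
\[
  \beta_2\bigl(S(F')\bigr)\;\geq\;\operatorname{rank}(R_{\#,2})\;=\;\dim_{\mathbb{Z}_2}H_2\bigl(S(F);\mathbb{Z}_2\bigr)\;=\;\beta_2\bigl(S(F)\bigr)\;\geq\;f(n),
\]
which is precisely the desired conclusion.

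The principal obstacle is justifying that the polynomial blow-up from $N$ to $N'=N+\mathrm{poly}(N)$ auxiliary coordinates cannot spuriously ``fill in'' a missing face of $S(F)$ and thereby collapse a 2-cycle. This is exactly what Conditions~1 and~2 encode, so within the present lemma the difficulty is pushed to Section~5.2, where one must verify that the projection $\pi\colon S(F')\to S(F)$ forgetting the auxiliary coordinates is a homotopy equivalence on the 2-skeleton---a fact that follows because each auxiliary variable is determined by its gadget's input literals (Assumption~\ref{ass:no-repeated-aux}) and hence the fiber of $\pi$ over every face is contractible. Once $\pi$ is shown to induce a left inverse to $R_{\#,2}$, injectivity on $H_2$ is immediate and the above display closes the argument.
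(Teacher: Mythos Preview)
Your proposal is correct and follows essentially the same approach as the paper: both arguments use that cubicality supplies an induced map on homology and homological faithfulness makes that map injective on $H_2$, whence the rank inequality $\beta_2(S(F'))\ge\beta_2(S(F))$ is immediate (the paper just packages this as a citation of the Betti-monotonicity Theorem~\ref{thm:betti-monotonicity}). Your final paragraph about contractible fibers of the projection $\pi$ is extra commentary that really belongs to the verification of Conditions~1--2 for the specific clause-splitting reduction (i.e.\ to Lemma~\ref{lem:Homological faithfulness}), not to the present lemma, where those conditions are taken as hypotheses---but you correctly note this yourself.
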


\begin{lemma}[Homological 
faithfulness of the SAT→3-SAT reduction]
\label{lem:Homological 
faithfulness}

  Let $C$ be any CNF formula on variables $x_1,\dots,x_N$, and let
$C'$ be the 3-CNF formula obtained from $C$ by the standard clause-splitting
reduction: every clause $\ell_1\vee\ell_2\vee\cdots\vee\ell_m$ with $m\ge 4$
is replaced by the chain of 3-clauses
\[
(\ell_1\vee\ell_2\vee a_1)\wedge(\neg a_1\vee\ell_3\vee a_2)\wedge\cdots
\wedge(\neg a_{k-1}\vee\ell_{k+1}\vee a_k)\wedge(\neg a_k\vee\ell_{m-1}\vee\ell_m),
\]
where $k=m-3$ and $a_1,\dots,a_k$ are fresh auxiliary variables.

Then the inclusion of cubical solution complexes
\[
\Sol(C)\;\hookrightarrow\;\Sol(C')
\]
is a homotopy equivalence. In particular, for every field $\Bbb F$ and every $k\ge0$ the induced map on homology
\[
H_k\bigl(\Sol(C);\Bbb F\bigr)\;\xrightarrow{\ \cong\ }\;H_k\bigl(\Sol(C');\Bbb F\bigr)
\]
is an isomorphism.
\end{lemma}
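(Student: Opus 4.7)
The plan is to establish the homotopy equivalence by showing that the projection $\pi\colon \Sol(C')\to\Sol(C)$ forgetting every auxiliary variable is a cellular homotopy equivalence. The lemma's claimed ``inclusion'' $\Sol(C)\hookrightarrow\Sol(C')$ is then realized as the embedding $s$ of $\Sol(C)$ onto the subcomplex of critical cells of a discrete Morse matching on $\Sol(C')$; this $s$ satisfies $\pi\circ s=\mathrm{id}_{\Sol(C)}$ and is a homotopy inverse to $\pi$. By induction on the number of long clauses and on $k=m-3$, the full chain-reduction factors as a composition of atomic splits, each replacing a single clause $\ell_1\vee\cdots\vee\ell_m$ by the pair $(\ell_1\vee\ell_2\vee a)\wedge(\neg a\vee\ell_3\vee\cdots\vee\ell_m)$ with one fresh variable~$a$. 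By Assumption~\ref{ass:no-repeated-aux} the auxiliaries across distinct atomic splits are disjoint, so homotopy equivalences compose and it suffices to treat a single atomic step.

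For one atomic step I would analyze $\pi$ fiber-by-fiber. Over a vertex $x\in\Sol(C)$, the fiber $\pi^{-1}(x)\subseteq\{0,1\}$ is the set of $a$-values making $(x,a)$ satisfy both new clauses, which case analysis shows to be $\{0\}$, $\{1\}$, or the full edge $\{0,1\}$ according to whether $x$ satisfies only $\ell_1\vee\ell_2$, only $\ell_3\vee\cdots\vee\ell_m$, or both. Over a higher-dimensional cube $\sigma\subseteq\Sol(C)$, the preimage is the product of $\sigma$ with the intersection of the vertexwise $a$-fibers, which (as argued below) is again one of $\{0\}, \{1\}, \{0,1\}$ and hence contractible. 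Given these contractible fibers I would build a discrete Morse matching on $\Sol(C')$ that pairs each spanning cell $\sigma\times[0,1]$---over those $\sigma$ with full $a$-fiber---with $\sigma\times\{1\}$; acyclicity follows from the unidirectionality of the $a$-coordinate, and the critical cells form a subcomplex cubically isomorphic to $\Sol(C)$ via $\pi$. By the fundamental theorem of discrete Morse theory $\Sol(C')$ collapses onto this subcomplex, so $\pi$ is a homotopy equivalence and induces the isomorphism $H_k(\Sol(C);\mathbb F)\cong H_k(\Sol(C');\mathbb F)$ for every field~$\mathbb F$.

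The main technical obstacle I anticipate is the higher-dimensional fiber analysis, specifically verifying that for any cube $\sigma\subseteq\Sol(C)$ the intersection of vertexwise $a$-fibers is nonempty. Concretely, one must rule out cubes $\sigma$ containing a vertex $v$ of fiber $\{0\}$ (so $v$ satisfies $\ell_1\vee\ell_2$ but not $\ell_3\vee\cdots\vee\ell_m$) together with a vertex $v'$ of fiber $\{1\}$ (the symmetric situation). The key structural observation is that the corner $w\in\sigma$ obtained by flipping in $v$ exactly those first-block coordinates on which $v$ and $v'$ disagree would inherit $v'$'s all-zero first block and $v$'s all-zero second block, giving $w\not\in\Sol(C)$ and contradicting $w\in\sigma\subseteq\Sol(C)$. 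This monotonicity lemma, once formalized, completes the fiber claim; iterating the collapse through all auxiliary variables and all long clauses (whose auxiliaries are disjoint by Assumption~\ref{ass:no-repeated-aux}) then yields the full homotopy equivalence and the homology isomorphism.
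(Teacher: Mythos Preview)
Your overall strategy---factor the reduction into one-variable atomic steps, analyze the projection $\pi$ forgetting the new auxiliary, and show fibers are contractible---matches the paper's approach closely. The paper handles all $k=m-3$ auxiliaries of a single long clause simultaneously (so fibers are subcubes of $\{0,1\}^k$ rather than of $\{0,1\}$), constructs a lex-min section, and then waves at a gluing lemma; your atomic factorization is cleaner and your monotonicity lemma is a genuine improvement, since the paper only verifies fiber contractibility over \emph{vertices} and asserts without proof that ``these contractions vary compatibly over $A_0$.'' Your argument that a cube cannot contain both a fiber-$\{0\}$ vertex and a fiber-$\{1\}$ vertex is exactly the missing ingredient.

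There is, however, a real gap in your discrete-Morse step: the claim that the critical cells form a subcomplex of $\Sol(C')$ is false. Take $C=(x_1\vee x_2\vee x_3\vee x_4)$; the edge $\sigma$ between $w=(0,0,1,0)$ (fiber $\{1\}$) and $v'=(1,0,1,0)$ (fiber $\{0,1\}$) has fiber $\{1\}$, so its critical cell is $\sigma\times\{1\}$, but the critical cell of its face $v'$ is $v'\times\{0\}$, which is not a face of $\sigma\times\{1\}$. Your matching is acyclic and gives the right cell \emph{count}, but Forman's theorem alone does not identify the Morse complex with $\Sol(C)$. The fix is immediate from your monotonicity lemma: setting $Z_\varepsilon=\{\sigma\in\Sol(C):\varepsilon\in\text{fiber}(\sigma)\}$ for $\varepsilon\in\{0,1\}$, your lemma gives $\Sol(C)=Z_0\cup Z_1$, and $\Sol(C')$ is precisely the double mapping cylinder $Z_0\cup\bigl((Z_0\cap Z_1)\times[0,1]\bigr)\cup Z_1$. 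Since the inclusions $Z_0\cap Z_1\hookrightarrow Z_\varepsilon$ are subcomplex inclusions (hence cofibrations), the natural map from this homotopy pushout to the ordinary union $Z_0\cup Z_1=\Sol(C)$ is a homotopy equivalence. Alternatively, apply Quillen's fiber theorem to the face-poset map induced by $\pi$.
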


\section{Topological Framework for Computational Hardness}
\label{sec:rigorous-topology}

\subsection{Formal Definitions and Setup}

[Solution Space Complex]
For any 3-SAT instance \(F\) with \(n\) variables, the \emph{solution space complex} \(S(F)\) is the cubical subcomplex of \(\{0,1\}^n\) where:
\begin{itemize}
    \item vertices correspond to satisfying assignments,
    \item \(k\)-faces correspond to \(k\)-dimensional axis-aligned subcubes all of whose \(2^k\) assignments satisfy \(F\).
\end{itemize}
Homology groups \(H_k(S(F))\) and Betti numbers \(\beta_k(S(F))\) are taken with \(\mathbb Z_2\) coefficients unless otherwise stated.

[Topology‐Preserving Reduction]\label{def:topology-preserving}
Let \(L\) be a language whose instances \(x\) have solution-complex Betti numbers \(\beta_2\bigl(S(x)\bigr)\).  A polynomial-time reduction
\[
  R:\;L\;\longrightarrow\;\{\text{3-SAT formulas}\}
\]
is \emph{topology-preserving} (in the sense considered in this paper) if:
\begin{enumerate}
  \item there exists a function \(g(n)=\Omega(f(n))\) such that for every input \(x\):
  \[
    \beta_2\bigl(S(x)\bigr)\;\ge\;f\bigl(\lvert x\rvert\bigr)
    \quad\Longrightarrow\quad
    \beta_2\bigl(S(R(x))\bigr)\;\ge\;g\bigl(\lvert x\rvert\bigr),
  \]
  with \(f(n)=\omega(\mathrm{poly}(n))\).
\end{enumerate}

\subsection{Topological Invariance under Reductions}
\label{subsec:topological-invariance-refined}

Betti numbers \(\beta_k\) are homotopy invariants preserved under homotopy equivalences, but \emph{not} under arbitrary continuous maps. To ensure invariance under satisfiability-preserving reductions we impose two constraints:

\paragraph{Condition 1: Cubical Embedding}
A reduction \(R: L \to \mathrm{3\text{-}SAT}\) is \emph{cubical} if it embeds the solution space \(S(x)\) as a cubical subcomplex of \(S(R(x))\):
\[
\iota_x: S(x) \hookrightarrow S(R(x)),
\]
where \(\iota_x\) is a cubical embedding (preserves faces, inclusions, and adjacency). 

\paragraph{Condition 2: Homological Faithfulness}
\(R\) is \emph{homologically faithful} if the induced map in homology
\[
(\iota_x)_*: H_k\bigl(S(x)\bigr) \longrightarrow H_k\bigl(S(R(x))\bigr)
\]
is injective for all \(k \ge 0\). 

\begin{theorem}[Betti Monotonicity under Faithful Embeddings]
\label{thm:betti-monotonicity}
Let \(R\) be a cubical, homologically faithful reduction. Then, for every \(k\),
\[
\beta_k\bigl(S(x)\bigr) \le \beta_k\bigl(S(R(x))\bigr).
\]
In particular, if \(\beta_k(S(x)) = 2^{\Omega(N)}\) then \(\beta_k(S(R(x))) = 2^{\Omega(N)}\).
\end{theorem}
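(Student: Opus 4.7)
The plan is to derive the theorem as a concise linear-algebra consequence of Conditions 1 and 2 stated in Section~\ref{subsec:topological-invariance-refined}. By Condition 1, the cubical embedding $\iota_x : S(x) \hookrightarrow S(R(x))$ is a morphism of cubical complexes; passing to the cubical chain complexes with $\mathbb{Z}_2$ coefficients therefore yields a chain map, and hence an induced linear map on homology $(\iota_x)_* : H_k(S(x);\mathbb{Z}_2) \to H_k(S(R(x));\mathbb{Z}_2)$ for every $k \ge 0$. By Condition 2 this map is injective.

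I would then invoke the elementary linear-algebra fact that an injective linear map $T : V \to W$ between $\mathbb{Z}_2$-vector spaces satisfies $\dim V \le \dim W$. Specializing to $T = (\iota_x)_*$ and recalling that $\beta_k(S(F)) = \dim_{\mathbb{Z}_2} H_k(S(F);\mathbb{Z}_2)$ gives the monotonicity inequality
\[
\beta_k\bigl(S(x)\bigr) \;=\; \dim H_k\bigl(S(x);\mathbb{Z}_2\bigr) \;\le\; \dim H_k\bigl(S(R(x));\mathbb{Z}_2\bigr) \;=\; \beta_k\bigl(S(R(x))\bigr),
\]
which is exactly the first assertion.

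For the second assertion I would simply chain the inequality. If $\beta_k(S(x)) \ge 2^{cN}$ for some constant $c > 0$, then the monotonicity just proved gives $\beta_k(S(R(x))) \ge 2^{cN}$ as well. Because $R$ runs in polynomial time, $|R(x)| = \mathrm{poly}(N)$, so the exponential lower bound persists under any reasonable reparametrization; in particular, for the linear-blowup reductions covered elsewhere in this paper the bound remains $2^{\Omega(N')}$ in the variable count $N'$ of the reduced instance.

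The hard part is not contained in the proof itself but in verifying the hypotheses for concrete reductions. Condition 1 is usually transparent: one identifies the original coordinates of the hypercube $\{0,1\}^N$ with the first $N$ coordinates of the enlarged hypercube $\{0,1\}^{N'}$, so that each satisfying assignment of $x$ extends in a face-preserving way to a satisfying assignment of $R(x)$. Condition 2, however, is delicate: one must rule out the possibility that auxiliary variables introduce \emph{new} cubical faces into $S(R(x))$ whose boundary trivializes a nontrivial $k$-cycle of $S(x)$. This is precisely where the \emph{no repeated auxiliary variables} assumption enters, and it is the content of the homological-faithfulness lemma for the standard SAT-to-3-SAT reduction given earlier. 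Thus the present theorem is best viewed as a clean functorial packaging of those constructions; the real combinatorial effort lies in the faithfulness lemmas rather than in the monotonicity proof itself.
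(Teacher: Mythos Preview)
Your proposal is correct and matches the paper's own proof essentially verbatim: the paper simply notes that injectivity of $(\iota_x)_*$ gives $\dim H_k(S(x)) \le \dim H_k(S(R(x)))$ and invokes $\beta_k = \dim H_k$. Your additional remarks about where the real work lies (verifying Conditions~1--2 for concrete reductions) are accurate and helpful context, but the monotonicity argument itself is the same one-line linear-algebra observation.
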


\paragraph{Application to Expander Embedding}
Our worst-case construction (Theorem~\ref{thm:betti-exp-worst}) uses a reduction satisfying both conditions:
\begin{itemize}
  \item \textbf{Cubical}: gadgets embed \(H_1\)-cycles of \(G_N\) as 2-faces in \(\{0,1\}^{\text{vars}}\).
  \item \textbf{Faithful}: Lemma~\ref{lem:homology-injectivity} (below) proves \((\iota_x)_*\) is injective in the gadget-local setting.
\end{itemize}

\begin{lemma}[Homology Injectivity for Expander Embedding]
\label{lem:homology-injectivity}
Let \(\iota: S(G_N) \hookrightarrow S(F_N)\) be the cubical inclusion from graph-coloring solutions to 3-SAT solutions. Then
\[
\iota_*: H_2\bigl(S(G_N); \mathbb Z_2\bigr) \longrightarrow H_2\bigl(S(F_N); \mathbb Z_2\bigr)
\]
is injective.
\end{lemma}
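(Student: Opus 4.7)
The plan is to prove injectivity of $\iota_*$ by exhibiting a cubical retraction $r \colon S(F_N) \to S(G_N)$ satisfying $r \circ \iota = \mathrm{id}_{S(G_N)}$. Once such a retraction exists, functoriality of $H_*(-;\mathbb Z_2)$ gives $r_* \circ \iota_* = \mathrm{id}_{H_2(S(G_N);\mathbb Z_2)}$, which immediately forces $\iota_*$ to be injective on the second homology. This reduces the homological claim to a purely combinatorial one about how the gadget auxiliaries attach to the main variables, and it avoids any direct manipulation of explicit $2$-cycle generators.

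First I would make the inclusion $\iota$ explicit. By Assumption~\ref{ass:no-repeated-aux}, the variable set of $F_N$ decomposes as a disjoint union $V(F_N) = V(G_N) \sqcup A$, where $A = \bigsqcup_g A_g$ is the disjoint union of per-gadget auxiliary blocks. The expander-to-3SAT reduction prescribes, inside each gadget $g$, a canonical aux assignment $a^\star(g) \in \{0,1\}^{A_g}$ extending any valid main-variable assignment to a satisfying assignment of the gadget's clauses; $\iota$ is the cubical embedding that concatenates these canonical blocks to a graph-coloring solution of $G_N$.

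Next I would construct the projection $\pi \colon \{0,1\}^{V(F_N)} \to \{0,1\}^{V(G_N)}$ that forgets all aux coordinates, and show that it restricts to a cubical map $r \colon S(F_N) \to S(G_N)$. For vertices, the required property is that every satisfying assignment of a gadget's clauses projects onto a valid partial coloring, which is exactly the gadget's design purpose and holds locally; by disjointness of the $A_g$ it then holds globally. For higher-dimensional faces, coordinate projection sends an axis-aligned $k$-subcube to an axis-aligned $k'$-subcube with $k' \le k$, and every vertex of the image lies in $S(G_N)$ by the previous observation, so $k$-faces of $S(F_N)$ map into faces of $S(G_N)$ and $r$ is a bona fide cubical map. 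The composition $r \circ \iota$ equals $\mathrm{id}_{S(G_N)}$ on the nose, since $\iota$ only appends aux coordinates that $\pi$ discards.

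The main obstacle, and the step deserving the most care, is the gadget-local claim that \emph{every} satisfying aux configuration --- not merely the canonical section $a^\star(g)$ --- projects to a legitimate partial coloring. Equivalently, the aux fibers of each gadget must form cubical subcomplexes whose projection to the main variables is well-defined and face-preserving. This is where the explicit construction underlying Theorem~\ref{thm:betti-exp-worst} must be inspected gadget by gadget; once verified there, the global retraction assembles from the disjointness of the $A_g$. Because we work over $\mathbb Z_2$, no orientation or sign issues arise, so injectivity of $\iota_*$ on $H_2$ follows from functoriality with no further homological machinery.
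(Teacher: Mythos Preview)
Your retraction approach is correct and differs from the paper's. The paper argues directly at the chain level: assuming $\iota(\gamma)=\partial_3\beta$ in $C_*(S(F_N))$, it decomposes $\beta=\beta_{\mathrm{base}}+\beta_{\mathrm{gadget}}+\beta_{\mathrm{mixed}}$ according to whether the free coordinates of each $3$-cell are all base, all gadget, or mixed, and then argues that only $\partial_3\beta_{\mathrm{base}}$ can hit the pure-base $2$-chain $\iota(\gamma)$, so that $\gamma$ already bounds inside $S(G_N)$. Your route instead builds a continuous retraction $r=\pi|_{S(F_N)}$ with $r\circ\iota=\mathrm{id}$ and invokes functoriality. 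This is more robust: it yields injectivity of $\iota_*$ on \emph{every} $H_k$ at once, and it sidesteps the rather delicate bookkeeping of which boundary $2$-faces of mixed $3$-cells land in the base-only slice (a $3$-cell with two base and one gadget free coordinate \emph{does} have pure-base $2$-faces in its boundary, so the paper's step there needs more care than is written).

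One simplification you should take: the step you flag as the ``main obstacle''---that every satisfying assignment of $F_N$ projects to a valid coloring---needs no per-gadget inspection. The vertex-coloring and edge-constraint clauses of the $3$-COLOR encoding appear verbatim among the clauses of $F_N$ and involve only the base variables $x_{v,c}$; hence any $(x,a)\in S(F_N)$ automatically has $x\in S(G_N)$. The only structural fact used is that the reduction \emph{conjoins} gadget clauses to the coloring encoding rather than replacing any of them. With that observation your argument collapses to: coordinate projection carries faces of $S(F_N)$ to faces of $S(G_N)$ (immediate from the preceding sentence plus the definition of a cubical face), and $\pi\circ\iota=\mathrm{id}$ on the nose.
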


\subsection{No Collapse for Non-Embedding Reductions}
For reductions not satisfying Conditions~1–2, Betti numbers \emph{can} collapse. However, such reductions cannot be used to efficiently compute the relevant homological invariants in general. We make this precise:

\begin{theorem}[Topology-Destroying Reductions are Hard]
\label{thm:non-embedding-hardness}
Computing \(\beta_k(K)\) for \(k\ge 2\) on a cubical complex \(K\subseteq \{0,1\}^N\)
is \(\#\mathrm{P}\)-hard under polynomial-time Turing reductions. In particular, unless \(\#\mathrm{P}\subseteq\mathrm{FP}\), no algorithm in the standard Boolean decision-tree or subcube-query model can compute \(\beta_k(K)\) in worst-case time \(2^{o(N)}\).
\end{theorem}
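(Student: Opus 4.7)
The plan is a parsimony-style polynomial-time reduction from $\#3$-SAT, which is $\#P$-complete. Given a 3-CNF $\phi$ on $n$ variables, I would construct in polynomial time the cubical complex
\[
  K_\phi \;=\; \bigcup_{a\in \Sol(\phi)} \{a\}\times \partial I^{k+1}
  \;\subseteq\; \{0,1\}^{n+k+1},
\]
where $I^{k+1}$ is the $(k+1)$-cube and $\partial I^{k+1}$ its cubical boundary, which is a $k$-sphere. Because the summands sit over disjoint vertices in the first $n$ coordinates, the union is automatically closed under taking sub-faces (no face of $K_\phi$ ever frees a coordinate of index $\le n$), so $K_\phi$ is a bona fide cubical subcomplex of $\{0,1\}^N$ with $N=n+k+1$. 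By disjointness and additivity of homology,
\[
  \beta_k(K_\phi)\;=\;\#\Sol(\phi).
\]
Consequently, any polynomial-time algorithm for $\beta_k$ solves $\#3$-SAT in polynomial time, establishing $\#P$-hardness of $\beta_k$-computation under Turing reductions.

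For the subcube/decision-tree consequence, I would observe that an arbitrary subcube query on $K_\phi$ can be answered in $O(n)$ time by the construction: if the query frees any of the first $n$ coordinates the answer is \emph{no}; otherwise the first $n$ coordinates are pinned to some $a \in \{0,1\}^n$, and the answer reduces to evaluating $\phi(a)$ followed by a face-lookup inside $\partial I^{k+1}$. Thus a $2^{o(N)}$-query algorithm for $\beta_k$ would translate, via the reduction, into a $\mathrm{poly}(n)\cdot 2^{o(N)} = 2^{o(n)}$-time algorithm for $\#3$-SAT. Appealing to the standard downward self-reducibility of $\#3$-SAT (or, equivalently, to the counting analogue of the exponential-time hypothesis that is implicit in ``$\#P\not\subseteq \mathrm{FP}$''), this contradicts the assumption, yielding the claimed $2^{\Omega(N)}$ lower bound.

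The main obstacle I anticipate is verifying the sub-face-closure claim in a way that is airtight: I must rule out any cubical face of the ambient hypercube that nontrivially connects two gadgets $\{a\}\times\partial I^{k+1}$ and $\{a'\}\times\partial I^{k+1}$, since such a face would kill a $k$-cycle and cause $\beta_k$ to under-count $\#\Sol(\phi)$. The construction evades this by only adjoining cubes of the form $\{a\}\times F$ with $F\subseteq\partial I^{k+1}$, so every face of $K_\phi$ fully pins the first $n$ coordinates; once this is in place, a direct cellular-chain computation (equivalently, a one-step Mayer--Vietoris applied to the disjoint pieces) yields the additive identity above. The remaining ingredients---$\#P$-completeness of $\#3$-SAT and the uniform query-to-evaluation simulation---are standard.
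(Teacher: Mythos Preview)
Your reduction has a representation mismatch that breaks the polynomial-time claim. You assert that $K_\phi=\bigcup_{a\in\Sol(\phi)}\{a\}\times\partial I^{k+1}$ is ``constructed in polynomial time,'' but the number of maximal cells of $K_\phi$ is $\Theta(|\Sol(\phi)|)$, which may be $2^n$. If the input encoding of a cubical complex is an explicit list of cells (as in the paper's Theorem~\ref{thm:betti-sharp}), your map $\phi\mapsto K_\phi$ is exponential, not polynomial. If instead the encoding is the paper's implicit one---$K=S(F)$ for a CNF $F$---then your $K_\phi$ is \emph{not} of this form: once all $2^{k+1}$ corners of $\{a\}\times I^{k+1}$ satisfy the formula, the solution-complex convention automatically \emph{includes} the top $(k+1)$-cell, so each sphere is filled and $\beta_k$ collapses to $0$. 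Either way the parsimonious identity $\beta_k(K_\phi)=\#\Sol(\phi)$ does not survive as a poly-time reduction in the relevant input model. A repair would require a gadget that (i) is itself the solution complex of a constant-size CNF on fresh variables and (ii) contributes exactly one nontrivial $H_k$ class conditioned on $\phi(a)=1$; this is precisely the kind of marker/amplification device the paper builds elsewhere, but it is not what you wrote.

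Separately, your query-model paragraph overreaches. A $2^{o(N)}$-query algorithm combined with your $O(n)$-time oracle simulation yields a $2^{o(n)}$-time procedure for $\#3$-SAT. That is \emph{not} excluded by the hypothesis $\#\mathrm{P}\nsubseteq\mathrm{FP}$, which only rules out polynomial time; you would need the counting exponential-time hypothesis (\#ETH), which is strictly stronger and certainly not ``implicit'' in $\#\mathrm{P}\nsubseteq\mathrm{FP}$. The paper, for its part, does not give a self-contained argument here either: its proof of this theorem is a two-line appeal to an external hardness result for homology localization, so there is no detailed construction in the paper against which to compare yours.
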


\noindent The hardness holds even when the input cubical complexes arise from 3-SAT instances produced by gadget-local reductions; the reductions used in the proof explicitly construct variable-disjoint gadgets, so no auxiliary sharing is required.

\subsection{Topological Hardness Criterion}

[Subcube‐Query Model]\label{def:subcube}
A \emph{subcube query} of dimension \(k\) is an oracle operation which, given any axis-aligned \(k\)-dimensional subcube \(C\subseteq\{0,1\}^n\), returns \texttt{YES} if all assignments in \(C\) satisfy the CNF formula \(F\), and \texttt{NO} otherwise. We measure complexity by the total number of such subcube queries made.

\begin{theorem}[Homological Computational Barrier]
\label{thm:homology-barrier1}
Let \(R\) be a topology-preserving reduction and let \(F_x = R(x)\). If
\[
\beta_2\bigl(S(F_x)\bigr) = 2^{\Omega(|x|)},
\]
then any algorithm deciding satisfiability for the family \(\{F_x\}\) requires \(2^{\Omega(|x|)}\) steps in the \emph{subcube query model}.
\end{theorem}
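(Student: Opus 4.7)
The plan is an adversary argument in the subcube-query model: each query returns a single bit, so any $T$-step algorithm induces at most $2^T$ distinct response transcripts. I will exhibit, for any alleged algorithm using $T = 2^{o(|x|)}$ queries, two instances of the family with the same transcript but opposite SAT-answers, contradicting correctness. The hypothesis $\beta_2(S(F_x)) = 2^{\Omega(|x|)}$ enters both through a density estimate on ``YES-responsive'' subcubes and through the existence of a local gadget one can toggle to flip satisfiability without altering the answer to any generic query.

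First I would extract a face-sparsity estimate from the Betti hypothesis. The weak Morse inequality $f_k \ge \beta_k$ forces $S(F_x)$ to contain at least $2^{\Omega(|x|)}$ two-faces, and a standard double-count (each vertex is a corner of at most $\binom{N}{2}$ two-faces) yields $|S(F_x)| \ge 2^{\Omega(|x|)}$. Using the gadget-locality of $R$ (Assumption~\ref{ass:no-repeated-aux}) to bound the number of faces contributed per gadget by $\mathrm{poly}(|x|)$, the density of all-satisfying $k$-subcubes in the ambient query space satisfies
\[
\frac{f_k(S(F_x))}{\binom{N}{k}\,2^{N-k}} \;\le\; 2^{-\Omega(|x|)},
\]
so YES responses are exponentially rare.

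Next I construct an unsatisfiable companion $F_x^{\emptyset}$ by locally toggling one gadget of $R$ so as to eliminate every satisfying assignment whose restriction activates that gadget. By the cubical-embedding condition (Section~5.2, Condition~1) and gadget-locality, the modification alters only a polynomially bounded number of faces, so $F_x$ and $F_x^{\emptyset}$ agree on the YES/NO answer for every subcube disjoint from the modified gadget. The fraction of ``distinguishing'' subcubes is at most $2^{-\Omega(|x|)}$; a union bound over $T = 2^{o(|x|)}$ queries shows that, with probability $1-o(1)$, the transcripts produced by $F_x$ and $F_x^{\emptyset}$ are identical. The algorithm must therefore return the same answer on a SAT and an UNSAT instance, contradicting correctness. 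A Yao-style minimax argument upgrades the resulting distributional bound to a worst-case randomized lower bound.

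The main obstacle is producing the unsatisfiable companion $F_x^{\emptyset}$ when $R$ is only required to be topology-preserving in the sense of Section~5.2 rather than explicitly assembled from local expander gadgets. The argument needs a structural lemma asserting that, whenever $\beta_2(S(F_x))$ is exponential, at least one independent $2$-cycle class is supported on a polynomial-size gadget whose satisfying assignments can be killed without globally restructuring $F_x$. For the constructions in Section~5.2 and Appendices~B--H this is automatic from the construction, but for a fully general topology-preserving $R$ a separate ``locality of some generating cycle'' argument is required. Failing that, I would retreat to proving the theorem as a search lower bound: the density estimate alone shows that locating any vertex of $S(F_x)$ by subcube queries requires $2^{\Omega(|x|)}$ steps, which already carries the paradigm-independent barrier content the theorem is meant to convey.
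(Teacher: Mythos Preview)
Your adversary strategy differs from the paper's in a way that matters. The paper does not use a face-density estimate or transcript counting; it invokes the Homology Basis Lemma (Lemma~\ref{lem:homology-basis}) to extract $m = 2^{\Omega(|x|)}$ linearly independent $2$-cycles $\{\gamma_i\}$ with \emph{pairwise disjoint variable supports}, and runs a per-query elimination adversary: to each subcube query $\sigma$ the adversary answers ``no'' and retires one cycle if $\sigma$ fully contains some $\operatorname{supp}(\gamma_i)$ with $i$ still active, and answers ``yes'' otherwise. Each query retires at most one cycle, so after $q<m$ queries some $\gamma_i$ is untouched, and the adversary can then fix the instance so that gadget $i$ is either filled (SAT) or unfilled (UNSAT), both consistent with the transcript. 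The disjoint-support structure is what makes the counting one-cycle-per-query and what allows the adversary to commit to gadget states independently.

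The gap in your proposal is the one you yourself flag, and it is fatal to the approach as written: toggling \emph{one} gadget to form $F_x^{\emptyset}$ does not produce an unsatisfiable instance, because the remaining $2^{\Omega(|x|)}-1$ gadgets still contribute satisfying assignments. Your density/union-bound argument then only shows that $F_x$ and a slightly perturbed but still-SAT $F_x^{\emptyset}$ are hard to distinguish, which forces no error on a decision algorithm. To get a SAT/UNSAT pair with identical transcripts you need the adversary to control \emph{all} gadgets simultaneously---leaving each undetermined until a query forces a commitment---and this is exactly what the disjoint-support bookkeeping via Lemma~\ref{lem:homology-basis} buys. Your retreat to a search lower bound is honest but proves a weaker statement than the theorem, which is about deciding satisfiability rather than locating a witness.
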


\subsection{Reduction from SAT to Homology-Class-Distinguish}
\label{subsec:sat-to-homology}

[Homology-Class-Distinguish]
Given a SAT formula \(F\) and a collection of 2-cycles
\(\{\gamma_i\}_{i=1}^M\subset C_2\bigl(S(F)\bigr)\), decide for each \(i\)
whether
\[
  [\gamma_i]=0\quad\text{in}\quad H_2\bigl(S(F)\bigr).
\]

\begin{theorem}[SAT \(\rightarrow\) Homology Reduction]
\label{thm:SAT-to-Homology-Reduction}
There is a polynomial-time reduction
\(R: \phi \mapsto (F_G,\{\gamma_i\})\)
such that
\[
  \phi \in \mathrm{SAT}
  \quad\Longleftrightarrow\quad
  [\gamma_i] = 0 \ \text{for every } i,
  \qquad
  \phi \notin \mathrm{SAT}
  \quad\Longleftrightarrow\quad
  [\gamma_i] \neq 0 \ \text{for every } i.
\]
\end{theorem}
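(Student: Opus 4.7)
The plan is to construct a polynomial-time reduction that sends a SAT instance $\phi$ to a 3-CNF $F_{G}$ whose solution complex $S(F_{G})$ decomposes into $M$ variable-disjoint topological blocks, each homotopy equivalent to the 2-sphere $S^{2}$ when $\phi\notin\mathrm{SAT}$ and to the 3-ball (hence contractible) when $\phi\in\mathrm{SAT}$. The cycles $\gamma_{i}$ will be the generators of the $H_{2}$ of the individual blocks, and since the blocks share no variables (per the no-repeated-auxiliaries convention) the complex factors as a product, making every $[\gamma_{i}]$ simultaneously zero or simultaneously nonzero according to the status of $\phi$.

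I would begin by applying the standard clause-splitting SAT-to-3-SAT reduction (shown homologically faithful in Lemma~\ref{lem:Homological faithfulness}) to obtain an equisatisfiable 3-CNF $\psi$. For each $i=1,\dots,M$ I create a fresh variable-disjoint copy $\psi^{(i)}$ on witness variables $y^{(i)}$, together with three ``frame'' variables $a_{1}^{(i)},a_{2}^{(i)},a_{3}^{(i)}$, and add gadget clauses engineered so that (a) the six 2-faces of the axis-aligned 3-cube on $(a_{1}^{(i)},a_{2}^{(i)},a_{3}^{(i)})$ always lie in $S(F_{G})$, each realised by a local partial $y^{(i)}$-assignment, while (b) the interior 3-face of that cube lies in $S(F_{G})$ if and only if some single $y^{(i)}$ satisfies all of $\psi^{(i)}$. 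Define $\gamma_{i}$ to be the boundary 2-cycle of the $i$-th cube, and let $F_{G}$ be the conjunction of all gadget and $\psi^{(i)}$-clauses across the $M$ blocks.

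For the forward direction: if $\phi\in\mathrm{SAT}$ with witness $y^{*}$, then in every block the full 3-cell is present at $y^{(i)}=y^{*}$, so $\gamma_{i}=\partial(\text{3-cell})$ and $[\gamma_{i}]=0$. For the reverse direction, if $\phi\notin\mathrm{SAT}$, the 3-cell is absent from every block; by the K\"unneth formula for $\mathbb{Z}_{2}$ coefficients, the product decomposition of $S(F_{G})$ gives $[\gamma_{i}]\neq 0$ in $H_{2}(S(F_{G});\mathbb{Z}_{2})$ as soon as it is nonzero in the homology of a single block. I would establish the latter by exhibiting an explicit $\mathbb{Z}_{2}$-deformation retraction of each block onto the 2-sphere formed by its six 2-faces, making $[\gamma_{i}]$ the fundamental class of that sphere. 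Polynomial-time is clear: each block uses $O(|\phi|)$ variables and clauses, $M\le\mathrm{poly}(|\phi|)$, and each $\gamma_{i}$ is specified as an explicit list of six 2-faces.

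The main obstacle will be the UNSAT direction, specifically the non-bounding claim: ruling out that some globally non-local 3-chain in $S(F_{G})$, built from auxiliary flips or cross-face interactions among the $y^{(i)}$-coordinates, accidentally fills $\gamma_{i}$ without $\psi^{(i)}$ being satisfiable. The delicate requirement is a \emph{partial-filling separation}: each 2-face of the cube must be individually realisable, yet no 3-chain in the gadget may bound $\gamma_{i}$ except via a genuine full satisfying assignment. I would address this by choosing the gadget clauses so that each of the six 2-faces corresponds to a proper sub-formula of $\psi^{(i)}$ (each satisfiable on its own by a local partial assignment), while only the interior 3-cube requires the conjunction of all six sub-formulas, which is logically equivalent to $\psi^{(i)}$. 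Concretely, this would be enforced by witness-forcing clauses of the form $(a_{1}^{(i)}\wedge a_{2}^{(i)}\wedge a_{3}^{(i)}) \to \psi^{(i)}(y^{(i)})$ together with carefully chosen projection clauses that restrict each boundary 2-face to a proper subset of $\psi^{(i)}$'s constraints, and checked by verifying that the resulting block deformation-retracts onto $S^{2}$ in the UNSAT regime.
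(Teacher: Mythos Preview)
Your approach is quite different from the paper's, which routes $\phi$ through Cook--Levin and then through the expander-based 3-COLOR embedding of Appendix~\ref{app:expander-correctness}, taking the $\gamma_i$ to be the gadget 2-cycles attached to the fundamental cycles of the expander $G$; the SAT/UNSAT dichotomy is then read off from whether $S(F_G)$ is nonempty at all.

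More importantly, there is a structural obstruction to your hollow-cube gadget that the workaround you sketch does not resolve. In the solution-space cubical complex as defined here (Theorem~\ref{thm: Cubical Complex} and Section~\ref{sec:rigorous-topology}), a $k$-face is present \emph{if and only if} all $2^k$ of its vertices satisfy $F$; there is no separate ``interior'' datum. Hence, at any fixed value of $y^{(i)}$, the six 2-faces of the 3-cube on $(a_1^{(i)},a_2^{(i)},a_3^{(i)})$ are all present exactly when all eight corners satisfy $F_G$, which is precisely the condition for the solid 3-cell itself to be present---the boundary $\partial[0,1]^3$ and the filled $[0,1]^3$ are indistinguishable in this model, so a genuine hollow sphere cannot occur in a single $y^{(i)}$-slice. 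Your proposed fix of realizing each 2-face at its own partial $y^{(i)}$-assignment breaks the cycle property instead: two axis-aligned faces are incident along an edge only when \emph{all} their non-varying coordinates (including every $y^{(i)}$-bit) agree, so six squares sitting in different $y^{(i)}$-fibers have boundary edges that do not pair up and cancel, and the chain you call $\gamma_i$ has $\partial_2\gamma_i\neq 0$. The ``proper sub-formula per face'' device does not escape this dichotomy, and the claimed deformation retraction onto $S^2$ cannot be carried out because no $S^2$ actually sits inside the block. The paper's mechanism is different in kind: its $\gamma_i$ is a ring of squares whose non-bounding is enforced by the XOR constraint $u_i=v_i$, which forbids every candidate 3-cube through any square of $\gamma_i$ (Lemma~\ref{lem:no-3face-filling} and Appendix~\ref{app:Boundary-Map Verif}), rather than by a missing interior cell.
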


\begin{corollary}
Unless \(\mathrm{P} = \mathrm{NP}\), any algorithm solving Homology-Class-Distinguish
on these reductions must take \(2^{\Omega(N)}\) time, by:
\begin{itemize}
  \item the simplicial-query lower bound (Lemma~\ref{lem:query-lower-bound}),
  \item algebraic complexity (Theorem~\ref{thm:universal-homotopy-hardness}), and
  \item quantum adiabatic/tunneling bounds (Corollary~\ref{cor:quantum-ll}).
\end{itemize}
\end{corollary}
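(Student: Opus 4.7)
The plan is to combine the polynomial-time reduction of Theorem~\ref{thm:SAT-to-Homology-Reduction} with three model-specific lower bounds invoked as black boxes, producing a single omnibus hardness statement. First I apply the reduction $R:\phi\mapsto(F_G,\{\gamma_i\})$, which converts a SAT instance $\phi$ of size $n$ into a cubical complex $S(F_G)$ on $N=\mathrm{poly}(n)$ variables together with a polynomial-size list of candidate $2$-cycles. The biconditional of the theorem --- $\phi\in\mathrm{SAT}$ iff every $[\gamma_i]=0$, and $\phi\notin\mathrm{SAT}$ iff every $[\gamma_i]\neq 0$ --- means that any oracle for Homology-Class-Distinguish decides SAT after a single polynomial-time transformation and one homology query per cycle. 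Consequently a $\mathrm{poly}(N)$-time algorithm for Homology-Class-Distinguish on the image of $R$ would place $\mathrm{SAT}\in\mathrm{P}$, contradicting the hypothesis $\mathrm{P}\neq\mathrm{NP}$; this is the baseline super-polynomial lower bound implicit in the corollary's ``unless'' clause.

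To sharpen the super-polynomial bound to the asserted $2^{\Omega(N)}$, I would invoke each cited result separately on the reduction's output family $\{F_x=R(x)\}$. Because $R$ is cubical and homologically faithful (Conditions~1 and~2 of Section~5.2), Theorem~\ref{thm:betti-monotonicity} guarantees $\beta_2(S(F_x))=2^{\Omega(N)}$ whenever the source complex carries exponential $\beta_2$, which is precisely the regime handed to us by the expander-based gadgets behind Lemma~\ref{lem:homology-injectivity}. With this exponential Betti bound in hand, the combinatorial part of the corollary follows from the simplicial-query lower bound (Lemma~\ref{lem:query-lower-bound}) applied in the subcube-query model of Definition~\ref{def:subcube}: distinguishing $2^{\Omega(N)}$ independent homology classes is information-theoretically impossible with subexponentially many subcube evaluations, by the same argument underlying Theorem~\ref{thm:homology-barrier1}.

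For the algebraic branch I invoke Theorem~\ref{thm:universal-homotopy-hardness} on the boundary and cycle modules $C_\bullet(S(F_x);\mathbb{Z}_2)$: any algebraic circuit deciding $[\gamma_i]=0$ would yield polynomial-size certificates collapsing the $2^{\Omega(N)}$ independent voids, violating the homological faithfulness of $R$, and is therefore forced to size $2^{\Omega(N)}$. For the quantum branch I invoke Corollary~\ref{cor:quantum-ll}: the spectral gap of any Hamiltonian encoding the homology decision on $S(F_x)$ is bounded by $e^{-\Omega(N)}$ because the exponentially many disconnected void regions act as near-degenerate ground-state sectors, so adiabatic evolution requires $2^{\Omega(N)}$ time and amplitude tunneling decays by the same factor. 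Taking the minimum over the three paradigms --- deterministic subcube queries, algebraic circuits, and adiabatic/tunneling quantum procedures --- yields the claimed uniform $2^{\Omega(N)}$ lower bound.

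The main obstacle will be verifying that the specific instances $(F_G,\{\gamma_i\})$ emitted by the SAT-to-Homology reduction simultaneously satisfy the hypotheses of all three invoked results, rather than just one. Concretely, the cycles $\gamma_i$ must lie in variable-disjoint gadgets so that the query, algebraic, and quantum bounds are not defeated by cross-gadget interference; this is precisely the role of Assumption~\ref{ass:no-repeated-aux} on non-shared auxiliary variables, which ensures gadget-locality of the construction. Once gadget-locality is certified and Betti monotonicity is applied, each of the three bounds is essentially invoked as a black box, and the corollary follows by taking the minimum lower bound across the three computational paradigms.
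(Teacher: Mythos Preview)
Your proposal is essentially correct and follows the same approach as the paper: the corollary in the paper is stated without a separate proof environment, the three bullet points being the entire justification, so your elaboration---reduction via Theorem~\ref{thm:SAT-to-Homology-Reduction} for the conditional super-polynomial part, then invoking each of the three cited results as black boxes on the gadget-local output family---is exactly what the paper intends, only spelled out in more detail.

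One minor imprecision worth flagging: you assert that Betti monotonicity gives $\beta_2(S(F_x))=2^{\Omega(N)}$ on the reduction's output, but the proof of Theorem~\ref{thm:SAT-to-Homology-Reduction} explicitly records that the construction produces only $M=\Theta(N)$ cycles and, when $\phi\notin\mathrm{SAT}$, that $S(F_G)=\emptyset$ (so $\beta_2=0$). The $2^{\Omega(N)}$ bound therefore does not come from the Betti number of \emph{these particular} reduction outputs but rather from the model-specific hardness of the homology-decision primitive itself (Lemma~\ref{lem:query-lower-bound}, Theorem~\ref{thm:universal-homotopy-hardness}, Corollary~\ref{cor:quantum-ll}) applied to the ambient complex. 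This does not break your argument---the three invoked results still yield the exponential bound once gadget-locality is certified---but the sentence routing through Theorem~\ref{thm:betti-monotonicity} and Lemma~\ref{lem:homology-injectivity} conflates the SAT-to-Homology construction with the separate worst-case expander family of Theorem~\ref{thm:betti-exp-worst}.
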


\section{Solution Space Topology} \label{sec:topology}

\begin{theorem}[Exponential \(H_2\) in Random 3‑SAT]
\label{thm:random-3sat-betti}
For random 3‑SAT at clause density \(\alpha>4.26\),
\[
  \beta_2(S(F))\;\ge\;2^{c n}
\]
for some \(c>0\), with probability \(1-o(1)\).
\end{theorem}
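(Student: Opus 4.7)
The plan combines the well-known solution-space shattering of random 3-SAT at clause density $\alpha > 4.26$ with a Kahle-style local void construction within each shattered cluster. Since Betti numbers are additive over connected components, the task reduces to establishing (i) that $S(F)$ decomposes into $M = 2^{\Omega(n)}$ disjoint clusters and (ii) that a constant fraction of these clusters each support at least one non-trivial class in $H_2$; then $\beta_2(S(F)) = \sum_i \beta_2(C_i) \geq 2^{\Omega(n)}$.

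For (i) I would invoke the shattering result already cited in Theorem~\ref{thm:unstructured}(1) (Achlioptas--Ricci-Tersenghi, M\'ezard--Mora--Zecchina): with probability $1-o(1)$, one has $S(F) = \bigsqcup_{i=1}^{M} C_i$ with $M = 2^{cn}$ for some $c>0$, each $C_i$ an induced cubical subcomplex of $\{0,1\}^n$. The Hamming-ball separation between clusters ensures that the decomposition respects the cubical structure, so no 2-face crosses between different $C_i$.

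For (ii) I would construct an $O(1)$-size \emph{local void gadget}: a subformula $\psi$ whose induced solution subcomplex carries a 2-cycle not bounded by any 3-face. The clause-induced 2-face inclusion probability $p = \Theta(1)$ established in Proposition~\ref{prop:persistent-2cycles} places us above Kahle's critical threshold for exponential $H_2$ in random cubical complexes, so a typical cluster supports such a gadget with probability $\delta > 0$. Local weak convergence of random 3-SAT to a Galton--Watson factor tree at this density yields approximate independence of gadget occurrences across coordinate-disjoint variable blocks, and a Chernoff concentration then delivers $\Omega(M)$ clusters with $\beta_2(C_i) \geq 1$ with probability $1 - e^{-\Omega(n)}$, which combined with additivity closes the argument.

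The main obstacle is step (ii), specifically constructing the void gadget itself. Because our cubical complex is \emph{full}---every face whose vertices all lie in $S(F)$ is automatically present---one cannot create an $H_2$ class by simply excluding one vertex of a 3-cube: the exclusion destroys the three adjacent 2-faces, leaving a contractible ``corner'' complex. The void must instead be assembled from 2-faces spanning several 3-cubes with every potential 3-face filling forbidden by some clause, and one must verify that the resulting local $H_2$ class survives inclusion into the ambient cluster---that is, the gadget inclusion is homologically faithful in the sense of Condition 2 of Section~5.2. This step requires a second-moment adaptation of Kahle's i.i.d.\ random cubical complex machinery to the clause-correlated SAT model, combined with a gadget-local injectivity argument in the style of Lemma~\ref{lem:homology-injectivity}.
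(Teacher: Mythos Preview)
Your proposal is essentially correct and follows the same three-step skeleton as the paper's own proof: (1) invoke Achlioptas--Ricci-Tersenghi shattering to get $2^{\Omega(n)}$ Hamming-separated clusters, (2) show each cluster supports a nontrivial $2$-cycle via a Kahle-type local argument, and (3) use disjoint supports (equivalently, additivity of $\beta_2$ over components) to sum the contributions. The only noteworthy differences are in the implementation of step (2): the paper works in an $O(\log n)$-radius Hamming ball around each cluster center and controls the clause-induced face-inclusion dependencies via Janson's inequality (Appendix~\ref{app:random-independence}), whereas you propose constant-size gadgets together with local weak convergence and Chernoff concentration across clusters; both routes aim at the same conclusion, and the difficulty you flag about the ``full'' cubical complex is exactly the point the paper handles by the face-survival probability computation in Proposition~\ref{prop:persistent-2cycles}.
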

Refer to~\ref{app:random-to-betti} for more detail.
\subsection{Persistent 2‑Cycle Generation in Random 3‑SAT}
\label{app:persistent-2cycles}

\begin{proposition}[Persistent 2‑Cycles from Cluster Shattering]
\label{prop:persistent-2cycles}
For random 3-SAT at clause density \(\alpha > 4.26\),  Then with probability $1-o(1)$:
\[
  S(F)\subseteq\{0,1\}^n
\]
decomposes into $n=2^{c_1n}$ well‐separated clusters (Achlioptas–Ricci‐Tersenghi~\cite{ricci-tersenghi}), and each cluster
contains a nontrivial 2‑cycle in the Vietoris–Rips complex that survives up to scale $\varepsilon=\Theta(\log n)$.
Hence
\[
  \beta_2\bigl(S(F)\bigr)\;\ge\;n \;=\;2^{\Omega(n)}.
\]
\end{proposition}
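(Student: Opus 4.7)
The strategy is to combine the shattering theorem with a local construction of a nontrivial 2-cycle in each cluster, then use pairwise cluster separation to guarantee both homological independence and persistence at scale $\varepsilon=\Theta(\log n)$.

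First, I would apply the Achlioptas--Ricci-Tersenghi clustering result to obtain, with probability $1-o(1)$, a decomposition
\[
  S(F) \;=\; C_1 \sqcup C_2 \sqcup \cdots \sqcup C_M, \qquad M = 2^{c_1 n},
\]
where any two distinct clusters satisfy $d_H(C_i,C_j)\ge\delta n$ for some absolute constant $\delta>0$, and the internal diameter of each cluster is also $\Omega(n)$. This provides both the cluster count and the geometric separation I will exploit later.

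Second, I would construct a nontrivial 2-cycle inside each cluster via a first-moment argument over \emph{punctured 3-subcubes}: axis-aligned 3-dimensional subcubes $Q\subseteq\{0,1\}^n$ whose seven satisfying corners (together with the six squares they span) lie in $S(F)$, while one distinguished corner is excluded by an active 3-clause of $F$. The Vietoris--Rips complex on such a configuration at small scale realizes $\partial Q\setminus\{\text{corner}\}\simeq S^2$ as a nontrivial 2-cycle. A conditional first-moment calculation, restricted to triples of \emph{free} (non-whitened) variables inside a fixed cluster and using that clause density $\alpha>4.26$ forces a constant probability for each candidate corner to be clause-forbidden, should show that the expected number of punctured 3-subcubes per cluster is $2^{\Omega(n)}$. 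A second-moment/Chernoff bound then yields existence in each cluster with probability $1-2^{-\Omega(n)}$, so a union bound over the $2^{c_1 n}$ clusters finishes this step.

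Third, I would establish persistence and independence. Any 3-chain filling in one of these cycles at Vietoris--Rips scale $\varepsilon$ must lie in the $\varepsilon$-thickening of a single cluster, since distinct clusters are at Hamming distance $\ge\delta n\gg\log n$ for $\varepsilon=\Theta(\log n)$. Thus no filling can span clusters. Moreover, within a single cluster the 2-cycle cannot be bounded by a 3-chain made of Hamming-close satisfying assignments until $\varepsilon$ is comparable to the internal scale at which the cluster becomes contractible, which under the shattering hypothesis is itself $\Omega(n)$. Because the resulting $M$ cycles have disjoint $\varepsilon$-supports, they represent linearly independent classes in $H_2$, giving
\[
  \beta_2\bigl(S(F)\bigr)\;\ge\;M\;=\;2^{c_1 n}.
\]

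The main obstacle is the second step: rigorously producing a nontrivial 2-cycle inside every cluster with high probability. After conditioning on membership in $C_i$, a positive fraction of variables is frozen and the marginals over free variables are biased, so the first-moment count of punctured 3-subcubes must be carried out \emph{inside} the cluster rather than over all of $\{0,1\}^n$. The key input is the local-uniformity property of free variables inside clusters from Achlioptas--Ricci-Tersenghi, combined with a careful accounting of which active clauses can simultaneously certify one forbidden corner while leaving the other seven satisfying. If the punctured-3-cube expectation turns out to be too fragile under this conditioning, a fallback is to replace $Q$ by a larger constant-radius gadget whose boundary realizes $S^2$ and whose satisfying-corner profile is generic enough to survive cluster-induced biases.
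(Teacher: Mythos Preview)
Your punctured 3-subcube does not produce a 2-cycle. The boundary $\partial Q$ of a 3-cube is homeomorphic to $S^2$, and $S^2$ with a point removed is contractible, so $\partial Q\setminus\{\text{corner}\}$ has trivial $H_2$. Concretely, in the cubical (or small-scale Vietoris--Rips) complex on the seven remaining corners, the three squares incident to the missing corner are absent (each needs all four of its vertices), and the remaining configuration is exactly the paper's Figure~4 example: it has $\beta_1=1$ and $\beta_2=0$. More generally, in the solution-space cubical model a $3$-cell is present whenever all eight of its corners are in $S(F)$, so one can never have a ``hollow cube'' contributing to $H_2$; any nontrivial 2-cycle must arise from a larger, non-cube configuration. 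Your fallback of a ``larger constant-radius gadget realizing $S^2$'' is therefore not optional but essential, and you have not exhibited one.

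The paper takes a different route: rather than constructing an explicit gadget, it works inside a Hamming ball of radius $R=\Theta(\log n)$ around a cluster center, computes that each axis-aligned 2-face survives all $\alpha n$ random clauses with probability $\Theta(1)$, and then invokes Kahle's random-cubical-complex theorem (with Janson's inequality to handle the dependencies among face-inclusion events) to conclude that a nontrivial 2-cycle appears in that ball with high probability. This sidesteps the need to name a specific $S^2$-shaped subcomplex. Note also that the paper uses cluster Hamming diameter $O(\log n)$, not $\Omega(n)$ as you assert; this is what makes the scale $\varepsilon=\Theta(\log n)$ natural for persistence, since the local cycle lives entirely inside an $O(\log n)$-ball while distinct clusters remain $\Omega(n)$ apart.
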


\subsection*{Universal Topological Hardness}
\label{sec:universal-hardness}

\begin{theorem}[Universal Topological Lower Bound]
\label{thm:universal-lb}
Let $F$ be any 3-SAT formula on $N$ variables whose solution‐space complex
$S(F)\subseteq\{0,1\}^N$ satisfies 
\[
  \beta_k(S(F)) = M = 2^{c\,N+o(N)},
\]
for some constant $c>0$. Then no algorithm can
decide $F$ in time $2^{o(N)}$. Any decision procedure requires
$\Omega(2^{c\,N})$ time in the worst case.
\end{theorem}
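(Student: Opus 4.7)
The strategy is to upgrade the subcube-query lower bound of Theorem~\ref{thm:homology-barrier1} (stated for $k=2$ and random instances) to the universal statement, by marrying it with the SAT/UNSAT encoding of Theorem~\ref{thm:SAT-to-Homology-Reduction} and a tight counting argument on the $M$ independent homology classes afforded by the hypothesis $\beta_k(S(F)) = M = 2^{cN+o(N)}$.

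I would first pass from $F$ to the pair $(F_G,\{\gamma_i\}_{i=1}^{M})$ produced by Theorem~\ref{thm:SAT-to-Homology-Reduction}, with the $\gamma_i$ chosen to be a basis of the exponential-dimensional subspace inside $H_k\bigl(S(F_G);\mathbb{Z}_2\bigr)$. Under this encoding the SAT/UNSAT decision corresponds to distinguishing the extreme patterns $v=(0,\dots,0)$ and $v=(1,\dots,1)$ inside $\{0,1\}^{M}$, at Hamming distance $M$. The adversary in the subcube-query model of Definition~\ref{def:subcube} is then free to maintain any consistent intermediate $v$ and reveals coordinates only when forced: each query answer is one bit, and by the $\mathbb{Z}_2$-independence of the chosen classes no axis-aligned subcube can serve as a simultaneous boundary witness for more than $\mathrm{poly}(N)$ of the $[\gamma_i]$. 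Hence every query shrinks the adversary's feasible set by at most a factor of $\mathrm{poly}(N)$, and a standard counting argument yields a query lower bound of $M/\mathrm{poly}(N) = 2^{cN-o(N)} = \Omega(2^{cN})$, sharpening the $2^{\Omega(N)}$ conclusion of Theorem~\ref{thm:homology-barrier1}.

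Finally, I would convert this query lower bound to the claimed time lower bound under the encoding assumption adopted throughout the paper (cf.\ the abstract and the discussion preceding Theorem~\ref{thm:non-embedding-hardness}): any algorithm that accesses $F$ only through formula evaluations or subcube probes can execute at most $T$ such probes in time $T$, so $T = \Omega(2^{cN})$, as required. The main obstacle is the $\mathrm{poly}(N)$-bounded-witness claim used in the adversary step: one must show that no single $k$-face can simultaneously trivialize more than polynomially many of the chosen homology classes. I would address this by selecting the cycle basis greedily with support bounded by $\mathrm{poly}(N)$, which is feasible because the gadgets underlying Theorem~\ref{thm:SAT-to-Homology-Reduction} are local, each generating cycle has polynomial support, and any two polynomially-supported cycles share at most $\mathrm{poly}(N)$ common axis-aligned subcubes --- yielding precisely the ``no shortcut'' property the adversary argument needs.
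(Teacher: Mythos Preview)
Your route through Theorem~\ref{thm:SAT-to-Homology-Reduction} is a misfire. That theorem takes an \emph{arbitrary} SAT instance $\phi$ and manufactures a new formula $F_G$ together with only $\Theta(N)$ expander-cycle probes $\gamma_i$; it does not start from the given $F$ and it does not furnish an exponential basis. Passing from $F$ to some $F_G$ would discard exactly the hypothesis $\beta_k(S(F))=M=2^{cN+o(N)}$ you need, and the SAT/UNSAT dichotomy in Theorem~\ref{thm:SAT-to-Homology-Reduction} concerns $\phi$, not $F$. So the first step of your plan does not connect the hypothesis to the conclusion.

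The paper's proof proceeds quite differently and more directly. It works in the bit-probe/Turing model and explicitly invokes a \emph{Gadget locality and independence} assumption (verified in Appendix~\ref{app:Boundary-Map Verif}): the instance is a union of $M$ gadgets on pairwise-disjoint variable sets $V_1,\dots,V_M$, each carrying its own local $k$-cycle and each admitting a local modification $\Delta_i$ that toggles the gadget's contribution to satisfiability without touching any other $V_j$. Given that, the argument is a clean transcript-counting pigeonhole: an algorithm running in time $T(N)=2^{o(N)}$ reads at most $T(N)$ input bits, hence touches at most $T(N)$ gadgets; there are at most $2^{T(N)}$ transcripts, and by pigeonhole some transcript $\tau$ leaves $2^{\Omega(N)}$ gadgets completely uninspected. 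One then constructs two instances $F_{\mathrm{SAT}}$, $F_{\mathrm{UNSAT}}$ that agree on all bits read in $\tau$ but differ on an untouched gadget, forcing the algorithm to err on one of them.

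Your ``no single subcube trivializes more than $\mathrm{poly}(N)$ classes'' claim is groping toward the right idea --- it is the query-model shadow of the paper's disjoint-support hypothesis --- but you never secure it from the bare assumption $\beta_k=M$; the paper does not either, which is why it imposes gadget locality as an explicit structural precondition. If you want to salvage your adversary argument, drop the detour through Theorem~\ref{thm:SAT-to-Homology-Reduction} and instead assume (or cite) the disjoint-gadget structure directly; then your per-query information bound becomes ``each probe touches at most one $V_i$'', which is exactly what drives the paper's pigeonhole.
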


\section{Irreducibility of Clusters}

\begin{theorem}[Cluster Surgery Impossibility, Unconditional]
\label{thm:cluster-surgery-non-circ}
For random 3‑SAT at density $\alpha>4.26$, any algorithm that
attempts to partition $S(F)$ into $m=poly(n)$ clusters via
only local queries (e.g., adjacency tests) must make $2^{\Omega(n)}$ queries.
\end{theorem}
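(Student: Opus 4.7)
The plan is to combine the shattering geometry of random 3-SAT solution spaces with an adversary-style identification lower bound for local-query algorithms. First I would invoke the cluster-decomposition part of Theorem~\ref{thm:unstructured} (Achlioptas--Ricci-Tersenghi): with probability $1-o(1)$, the satisfying set $S(F)$ breaks into $K=2^{c_1 n}$ frozen clusters that are pairwise separated by Hamming distance at least $\delta n$ for some constant $\delta>0$. Each local query in the model --- an adjacency test, a single membership query, or an axis-aligned subcube query of bounded dimension $O(1)$ --- has a geometric footprint of at most a constant number of vertices, all lying in a single Hamming ball of radius $O(1)$. By the Hamming-separation guarantee, the footprint of any one query can therefore intersect at most one of the $K$ clusters.

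Second, I would reduce the partitioning task to a cluster-identification task. Since $m=\mathrm{poly}(n)$ while $K=2^{\Omega(n)}$, any valid partition of $S(F)$ into $m$ groups must, by pigeonhole, aggregate $2^{\Omega(n)}$ natural clusters into each group, and in particular it must assign a group label to every cluster. Model the algorithm as a decision tree whose leaves produce such a labeling, either explicitly or as a succinct oracle. After $Q$ queries, at most $Q$ distinct natural clusters have been touched, so the labels assigned to the remaining $K-Q$ clusters are determined entirely by the prefix of query responses and are independent of the actual content of those untouched clusters.

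The third and crucial step is an adversary/two-world argument. For any algorithm making $Q<K$ queries, fix an untouched cluster $C^\star$ and let $A$ be the partition that the algorithm outputs. I would construct a second consistent scenario --- either a modified 3-SAT instance $F'$ obtained by permuting variables that are frozen across $C^\star$, or an abstract cubical complex $S'$ that reproduces every queried response but in which $C^\star$ must receive a different group label than $A$ assigns. Because the algorithm's output depends only on the observed query responses, it produces the same partition $A$ in both worlds; but any sensible notion of ``correct'' partition (balanced, cluster-respecting, or simply faithful to the component structure) forces $A$ to be incorrect in at least one world. Hence $Q \ge K = 2^{c_1 n} = 2^{\Omega(n)}$.

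The main obstacle I expect is making the two-world adversary instantiable by actual random 3-SAT formulas, rather than by abstract cubical complexes, so that the lower bound is unconditional rather than merely information-theoretic. The natural workaround is to appeal to the symmetry of the random formula distribution: on the event that shattering holds (probability $1-o(1)$ by Theorem~\ref{thm:unstructured}), the joint law of the cluster structure is invariant under relabellings of clusters that preserve the small number of already-queried vertices, which supplies the two indistinguishable formulas needed. Once this symmetry is in place, the adversary step goes through verbatim, and combined with the cluster count $K=2^{\Omega(n)}$ this yields the claimed query lower bound of $2^{\Omega(n)}$.
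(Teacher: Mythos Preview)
Your approach is genuinely different from the paper's. You run an adversary/two-world indistinguishability argument: each local query touches at most one cluster, so after $Q<K$ queries some cluster $C^\star$ is untouched, and you swap in a second consistent world to force an error. The paper instead gives an information-theoretic entropy count: it lower-bounds the entropy of the cluster structure by $\Omega(n\cdot 2^{cn})$, upper-bounds the number of bits revealed per query by $O(n\log n)$, and divides. The entropy route avoids exactly the obstacle you flag (instantiating the second world by an honest random 3-SAT formula), at the cost of a somewhat loose entropy estimate; your adversary route is conceptually sharper and pins down where the hardness lives (the untouched cluster), but needs the distributional-symmetry step you sketch to be made precise.

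There is one real gap in your proposal that the paper's approach also glosses over but that your framework exposes more clearly: you never fix a correctness criterion for the output partition. As written, ``partition $S(F)$ into $m=\mathrm{poly}(n)$ clusters'' is satisfied by the trivial one-block partition with zero queries, so the adversary step ``any sensible notion of correct partition forces $A$ to be wrong in one world'' has no traction until you specify what the algorithm is obligated to get right. You need to commit to a concrete task (e.g., output a labeling that respects the true connected components, or answer same-cluster queries for designated pairs) before the two-world contradiction closes. Once you do that, your reduction to cluster identification and the pigeonhole step become sound, and the symmetry-of-the-random-instance argument you propose for the second world is the right move.
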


\subsection{Randomness as a Necessity in Clustered Solution Spaces}
Consider3-SAT instances where the solution space $S(F) \subseteq \{0,1\}^n$ fragments into $k>1$ disconnected clusters (\(\beta_0=k\)). Any deterministic traversal starting in one cluster must cross unsatisfying regions to reach others.

\begin{theorem}[Randomness Necessity]
\label{thm:randomness-necessity-non-circ}
Let $F$ be random 3‑SAT at $\alpha>4.26$. Any deterministic
single‑flip local‑search algorithm cannot traverse clusters
separated by $\Theta(n)$ Hamming distance without randomness,
and hence fails to find a solution in $poly(n)$ time.
\end{theorem}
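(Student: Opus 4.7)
The plan is to combine the shattering picture already established for random $3$-SAT at $\alpha>4.26$ (Theorem~\ref{thm:unstructured} and Proposition~\ref{prop:persistent-2cycles}) with a reachability argument for single-flip dynamical systems, handled separately according to whether the algorithm is confined to $S(F)$ or allowed to pass through unsatisfying assignments. I first fix the setup: with probability $1-o(1)$, $S(F)$ decomposes into $K=2^{\Omega(n)}$ clusters $C_1,\dots,C_K$ of pairwise Hamming distance $\Theta(n)$, and every shortest Hamming path between distinct clusters transits a shell of unsatisfying assignments of width $\Theta(n)$ (Achlioptas--Ricci-Tersenghi). I would then model a deterministic single-flip local-search as a function $\tau:\{0,1\}^n\to\{0,1\}^n$ with $d_H(x,\tau(x))\le 1$, so that an execution from a start vertex $x_0$ is the unique orbit $x_0,\tau(x_0),\tau^2(x_0),\dots$ visiting at most $T+1$ distinct vertices in $T$ steps.

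The core dichotomy is then: (i) if $\tau$ only accepts flips remaining in $S(F)$, the orbit is confined to the connected component of $x_0$ in the Hamming graph restricted to $S(F)$, i.e., a single $C_i$, so the algorithm touches an exponentially small fraction of $S(F)$ and cannot traverse to any other cluster; (ii) if $\tau$ is permitted to step into unsatisfying assignments (greedy or potential-guided descent), crossing from $C_i$ to any $C_j$ forces the orbit to execute $\Theta(n)$ consecutive steps inside the separating shell, where the unsatisfied-clause potential admits no monotone descent path. A counting/adversary argument then bounds the number of orbits of length $T=\mathrm{poly}(n)$ by $2^{\mathrm{poly}(n)}$ and shows that for a positive fraction of random $F$ no such orbit actually connects two clusters.

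Combining the two cases, the deterministic algorithm in $\mathrm{poly}(n)$ steps either stays inside a single $C_i$ or gets trapped at a potential minimum in the barrier shell without entering any new cluster. In either case, with probability $1-o(1)$ the algorithm either never enters $S(F)$ at all (if it started outside every $C_i$) or finds solutions only in one exponentially-small cluster; by contrast, a randomized walk has strictly positive per-step probability of escaping barrier vertices, and via random restarts it can sample distinct clusters, so randomness strictly enlarges the reachable set. This recovers the ``hence fails in $\mathrm{poly}(n)$ time'' clause of the theorem.

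The main obstacle is pinning down the formal model of ``deterministic single-flip local search'' so the theorem is both meaningful and provable. Too permissive a model---in which the algorithm may run arbitrary polynomial-time computation on the whole formula between flips---would reduce the statement to a deterministic $\Pclass$ versus $\NPclass$ lower bound; too restrictive a model makes it tautological. I would adopt a local-oracle compromise in the spirit of Definition~\ref{def:subcube}: the algorithm observes only its current assignment and the set of currently unsatisfied clauses, and outputs the next flip as a deterministic function of that view. Under this model the barrier argument becomes an honest adversary/counting lower bound, but proving that no deterministic polynomial-depth decision tree over such local views can produce a flip sequence of length $\Theta(n)$ that crosses a barrier shell for more than a vanishing fraction of random $F$ is the technical heart of the proof and where I anticipate the bulk of the work.
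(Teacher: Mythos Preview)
Your case~(i) is exactly the paper's argument, and it is essentially all the paper does: it observes that a deterministic trajectory is fixed by $x_0$, that any Hamming-$1$ neighbor of a point in $C_i$ lying outside $C_i$ is unsatisfying (by maximality of connected components in $S(F)$), and concludes the orbit is trapped in $C_i$. The paper does not separately treat your case~(ii); its ``hence $A$ never leaves its starting cluster'' step is valid only under the implicit restriction that the algorithm refuses to move to unsatisfying neighbors, even though its own model description (``pick the neighbor minimizing the number of unsatisfied clauses'') does not enforce this.

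So your proposal is strictly more careful than the paper's proof. The dichotomy you introduce is the right way to expose the scope of the statement, and your concern about pinning down the model is well founded --- the paper sidesteps it. If the goal is only to reproduce the paper's argument, restrict to case~(i) and you are done in a paragraph. If you want the theorem to cover algorithms that may step through unsatisfying assignments (as the paper's wording suggests), then your case~(ii) program is the correct direction, but be aware that the paper supplies no such argument: the barrier-shell/no-monotone-descent claim would require an independent energy-landscape input (an overlap-gap or barrier-height lemma for random $3$-SAT at $\alpha>4.26$), and the counting/adversary bound you sketch is genuinely nontrivial. You have correctly located the hard part; the paper simply does not engage with it.
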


\paragraph{Implication}
Nonzero inter-cluster voids ($\beta_0 > 1$) are a structural barrier for any polynomial-time algorithm. Randomness becomes a necessity for exploration, but expected exponential time remains unavoidable.

\section{Impossibility of Cluster-Jumping}

\begin{theorem}[Cluster-Jumping Lower Bound]
\label{thm:Cluster_jumps}
For random 3-SAT at $\alpha > 4.26$, any algorithm $S$ that computes a path from cluster $C_i$ to $C_j$ must, with probability $1 - o(1)$, take time $2^{\Omega(n)}$.
\end{theorem}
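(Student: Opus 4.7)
The plan is to combine the shattering structure of random 3-SAT at $\alpha > 4.26$ with a Yao-style adversary argument in the subcube-query model of Definition~\ref{def:subcube}. By Theorem~\ref{thm:unstructured} together with the Achlioptas--Ricci-Tersenghi result~\cite{ricci-tersenghi}, with probability $1-o(1)$ the solution space $S(F)$ decomposes into $2^{\Omega(n)}$ clusters that are pairwise separated by Hamming distance $\Theta(n)$, and every assignment within distance $o(n)$ of a cluster but outside it is unsatisfying. Consequently any path from $C_i$ to $C_j$ inside $\{0,1\}^n$ that keeps its endpoints satisfying must either remain inside $S(F)$ (impossible, since $C_i,C_j$ are distinct connected components) or traverse $\Omega(n)$ unsatisfying intermediate vertices; in either case, certifying that the path actually terminates in $C_j$ requires producing a satisfying assignment in $C_j$ itself.

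First I would argue that any algorithm outputting a valid path from $C_i$ to $C_j$ must, in effect, have issued at least one query that hits cluster $C_j$. For deterministic algorithms I would fix a Yao distribution on shattered formulas (conditioning on the high-probability event provided by Theorem~\ref{thm:unstructured}) and track the algorithm's transcript after $q$ queries. Because positive subcube queries are confined to axis-aligned subcubes lying entirely inside one cluster (clusters have Hamming diameter below the inter-cluster gap, so no single subcube can straddle two), each query yields information about at most one cluster, and with only $q=2^{o(n)}$ queries the distribution over the identity of $C_j$ remains almost uniform over the remaining $2^{\Omega(n)}-q$ unseen clusters. An entropy/counting argument then shows the algorithm's certified endpoint lies in $C_j$ with probability $2^{-\Omega(n)}$, contradicting the $1-o(1)$ success requirement. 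Yao's minimax principle lifts this bound to randomized algorithms against a worst-case input.

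Next I would strengthen the query model to rule out clever algorithms that try to triangulate $C_j$ from the pattern of \emph{negative} subcube queries in the voids surrounding it. Here the topological framework of Section~\ref{sec:rigorous-topology} does the heavy lifting: by Theorem~\ref{thm:random-3sat-betti} the ambient $H_2$ rank is $2^{\Omega(n)}$, so any candidate bridging 2-chain between $C_i$ and $C_j$ is determined only up to an exponentially large homology class, and certifying its boundary lies in the correct cluster is exactly the Homology-Class-Distinguish problem of Theorem~\ref{thm:SAT-to-Homology-Reduction}. Combined with Theorems~\ref{thm:cluster-surgery-non-circ} and~\ref{thm:randomness-necessity-non-circ}, which already force $2^{\Omega(n)}$ queries for the strictly easier problems of cluster partitioning and single-cluster traversal, this closes the argument.

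The main obstacle is controlling the information leaked by negative queries placed in the exponentially many void regions between clusters: an adversary-style bound must ensure that no polynomial-length sequence of such ``empty'' observations can concentrate the algorithm's posterior onto a polynomial-size set of candidate clusters. I expect to handle this by invariance under the natural group action permuting clusters of equal size (guaranteed by the concentration of cluster statistics in the shattered regime), so that the posterior over unqueried clusters stays uniform; the delicate point, and the one I anticipate will require the most care, is quantifying how much of this symmetry is broken by the specific void geometry and showing that the surviving uncertainty is still $2^{\Omega(n)}$.
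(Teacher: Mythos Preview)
Your proposal rests on the same two pillars the paper invokes---the $\Theta(n)$ inter-cluster Hamming separation from Achlioptas--Ricci-Tersenghi and the exponential rank of $H_2(S(F))$---so at the level of ingredients you are aligned with the paper. The paper's own proof of this theorem is in fact only a three-line sketch: it simply lists (1) the $\Omega(n)$ cluster separation, (2) the NP-hardness of minimizing energy across voids, and (3) the exponential rank of $H_2(X_F)$, and declares these to ``combine'' without spelling out the combination.

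Where you diverge is in the mechanism you use to glue these facts together. The paper appeals to an NP-hardness statement (energy minimization across voids) as the middle ingredient; you instead set up an explicit Yao adversary in the subcube-query model, argue that each positive query touches at most one cluster, and use an entropy/counting bound on the posterior over unqueried clusters. This is a genuinely different technical route: it trades the paper's computational-hardness black box for a direct information-theoretic lower bound, which is more self-contained and arguably more rigorous given how loosely the paper's sketch is written. Your invocation of Theorems~\ref{thm:cluster-surgery-non-circ} and~\ref{thm:SAT-to-Homology-Reduction} to handle negative-query leakage is also more careful than anything the paper supplies for this particular theorem. The obstacle you flag---bounding the information revealed by negative queries in the voids and preserving enough cluster-permutation symmetry---is real and is not addressed by the paper's sketch either; your plan to handle it via concentration of cluster statistics is reasonable, though you should be aware that the paper does not provide the symmetry statement you would need, so that step would have to be imported or proved separately.
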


\begin{theorem}[Path-Computation Hardness]
\label{thm:Path-Computation}
For arbitrary 3-SAT formulas, computing a path between two clusters $C_i$ and $C_j$ requires $2^{\Omega(N)}$ time.
\end{theorem}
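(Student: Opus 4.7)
The plan is to lift the random-instance cluster-jumping lower bound (Theorem~\ref{thm:Cluster_jumps}) to arbitrary worst-case 3-SAT by invoking the explicit topology-preserving worst-case construction referenced earlier (Theorem~\ref{thm:betti-exp-worst}) together with the SAT-to-Homology reduction (Theorem~\ref{thm:SAT-to-Homology-Reduction}). The basic idea is to construct a 3-SAT family whose solution complex has exponentially many well-separated clusters, and then show that producing a path between two specified clusters is at least as hard as SAT-decision on a planted subinstance, so a $2^{o(N)}$-time path algorithm would contradict the Universal Topological Lower Bound (Theorem~\ref{thm:universal-lb}).

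First, I would invoke the expander-based construction to produce a 3-SAT family $\{F_N\}$ with $\beta_0(S(F_N)) = 2^{\Omega(N)}$ and pairwise cluster Hamming separation $\Omega(N)$; these properties are preserved by the cubical, homologically faithful embedding of Lemma~\ref{lem:homology-injectivity} under the no-repeated-auxiliary-variables convention (Assumption~\ref{ass:no-repeated-aux}). Second, for any purported algorithm $\mathcal{A}$ computing a path $u = w_0, w_1, \dots, w_T = v$ from $u \in C_i$ to $v \in C_j$, I would observe that each step must be certified against $F_N$ via subcube queries (Definition~\ref{def:subcube}), and that the output length satisfies $T \ge \mathrm{dist}_H(C_i, C_j) = \Omega(N)$. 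Third, by Theorem~\ref{thm:SAT-to-Homology-Reduction} one can plant inside an instance of the hard family a SAT formula $\phi$ whose satisfiability controls whether any $S(F_N)$-path between $C_i$ and $C_j$ exists; consequently a subexponential path computer would decide SAT in subexponential time, contradicting Theorem~\ref{thm:universal-lb}.

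The main obstacle will be formalizing what it means to \emph{compute a path} when $C_i$ and $C_j$ lie in distinct connected components of $S(F_N)$: the algorithm must either return $\bot$ or emit a sequence leaving the solution complex, and we need a unified adversary argument covering both behaviors. The cleanest route I foresee is to fix the subcube-query model and construct an adversary that commits to satisfiability of intermediate assignments only when forced, so that any transcript of fewer than $2^{\Omega(N)}$ queries fails to certify the unavoidable transition across a topological void separating the clusters. A secondary technical point is ensuring that the gadget substructures of the worst-case construction do not introduce spurious shortcut paths between planted clusters, which is exactly where gadget-locality and the uniqueness of auxiliary variables do the heavy lifting.
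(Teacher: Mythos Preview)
Your proposal has a genuine gap in the first step. You claim the expander-based construction yields $\beta_0(S(F_N)) = 2^{\Omega(N)}$ with pairwise cluster Hamming separation $\Omega(N)$, but Theorem~\ref{thm:betti-exp-worst} only establishes $\beta_2(S(F_N)) = 2^{\Omega(N)}$. Large $\beta_2$ does not imply large $\beta_0$: the solution complex could be connected yet have exponentially many independent $2$-cycles. The shattering into exponentially many well-separated clusters is proved in this paper only for \emph{random} 3-SAT (Theorems~\ref{thm:cluster-isolation} and~\ref{thm:No Narrow Bridge}), not for the explicit expander family. Likewise, Lemma~\ref{lem:homology-injectivity} concerns injectivity on $H_2$, not preservation of $\beta_0$ or of cluster separation, so it does not justify the claim you attribute to it.

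A second, smaller issue is your use of Theorem~\ref{thm:SAT-to-Homology-Reduction}: that theorem reduces SAT to \textsc{Homology-Class-Distinguish}, not to inter-cluster path existence. You would still need to show that a path computer for $S(F_N)$ decides whether $[\gamma_i]=0$, which is not immediate from the statement.

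The paper's own argument is much more direct and avoids these detours. It does not invoke $\beta_0$ at all: it argues that any path from $C_i$ to $C_j$ must circumvent $\Omega(N)$-sized voids, that crossing each such void amounts to solving an independent SAT subproblem, and that the homological independence of the $\gamma_{ij}$ cycles forces exponentially many such independent obstructions. In other words, the obstruction is the $H_2$ structure itself (many independent voids to route around), not cluster multiplicity. If you want to salvage your reduction-style approach, you would need either a separate worst-case construction with provable $\beta_0 = 2^{\Omega(N)}$ and $\Omega(N)$ separation, or a lemma tying path computation directly to resolving the $H_2$ classes produced by the expander gadgets.
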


\subsection{Recursive Complexity of Inter-Cluster Traversal}
In random 3-SAT, clusters are separated by unsatisfiable regions of size $|V_{A,B}|=2^{\Theta(n)}$. Solving 3-SAT recursively on $V_{A,B}$ is itself NP-complete and requires $O(2^n)$ time in the worst case.

\paragraph{Result}
The total recursive complexity for traversing all clusters is
\[
T(n)=k \cdot O(2^n) = 2^{\Omega(n)}
\]
where $k=2^{\Omega(n)}$ is the number of clusters. This reinforces the necessity of random restarts and the absence of a deterministic shortcut.
\section{Non-Circular Cluster Independence}
\begin{theorem}[Cluster Isolation in Random 3‑SAT]
\label{thm:cluster-isolation}
Let $F$ be drawn from the random 3‑SAT distribution at clause density $\alpha>4.26$.  
Then with probability $1-o(1)$, its solution space $S(F)\subseteq\{0,1\}^n$ decomposes into
$2^{\Omega(n)}$ disconnected clusters such that no path of $o(n)$ single‐bit flips
connects any two distinct clusters.
\end{theorem}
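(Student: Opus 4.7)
The plan is to reduce the claim to the rigorous shattering theorem for random 3-SAT above the clustering threshold. First, I would fix a formal definition: a \emph{cluster} of $S(F)$ is a connected component of the graph on satisfying assignments whose edges are pairs at Hamming distance $1$. With this convention, a ``path of $o(n)$ single-bit flips'' between two points in distinct clusters is automatically ruled out once the minimum Hamming separation between distinct clusters is shown to be $\Omega(n)$, because any $\ell$-step flip path bounds the ambient Hamming distance of its endpoints by $\ell$.

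Next, I would invoke the Achlioptas--Ricci-Tersenghi shattering theorem~\cite{ricci-tersenghi} for random 3-SAT at $\alpha > 4.26$: with probability $1-o(1)$, $S(F)$ decomposes into $2^{\Omega(n)}$ clusters whose pairwise Hamming separation is $\Theta(n)$. This delivers both the exponential cluster count and the linear separation in one stroke. To close the argument, I would observe that choosing a purported connecting flip-path of length $\ell = o(n)$ would force the Hamming distance of its endpoints to be $o(n)$, contradicting the $\Theta(n)$-separation between distinct clusters, so no such path can exist.

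The main obstacle will be aligning the notion of ``cluster'' used in the shattering literature (typically defined via pure states in the cavity formalism, or equivalence under short flip chains passing through $S(F)$) with the purely graph-theoretic notion of connected component of $S(F)$ used in the statement. I would resolve this by noting that two assignments inside a single shattering cluster are connected by flip chains of length at most $O(\log n)$ inside $S(F)$, while distinct shattering clusters are $\Theta(n)$-Hamming-separated; thus the two notions of cluster coincide up to possibly merging a negligible number of small components, which does not affect the $2^{\Omega(n)}$ lower bound on the count. A secondary technical point is \emph{uniformity}: the $\Omega(n)$ separation must hold for \emph{every} pair of clusters, not just typical ones. I would establish this by combining a union bound over the at most $\binom{2^n}{2}$ candidate cluster pairs with the exponential-tail concentration of the frozen-variable profile that distinguishes clusters in the first/second-moment analysis underlying the cited shattering result.
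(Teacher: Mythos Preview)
Your proposal is correct and takes essentially the same approach as the paper: both invoke the Achlioptas--Ricci-Tersenghi shattering result for the $2^{\Omega(n)}$ cluster count and the $\Theta(n)$ inter-cluster Hamming separation, then observe that an $o(n)$-length flip path cannot span an $\Omega(n)$ Hamming gap. The only cosmetic difference is that the paper routes the separation step through its own No-Narrow-Bridge Theorem (Theorem~\ref{thm:No Narrow Bridge}) rather than citing the literature directly; your added remarks on aligning cluster definitions and on uniformity over all pairs are in fact more careful than the paper's sketch.
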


\section{Topological Hardness of Random 3-SAT}

\begin{theorem}[Topological Hardness of Random 3‑SAT]
\label{thm:Topological Hardness}
Let $F$ be a random 3‑SAT formula whose solution space $S(F)$ has Betti numbers $\beta_0 = 2^{\Omega(n)}$ and minimal inter-cluster Hamming separation $\delta = \Theta(n)$. Then:

\begin{enumerate}
    \item Any modification to $F$ that reduces $\beta_0$ to $O(poly(n))$ produces a structured 3-SAT instance in $\Pclass$.
    \item In the general case, no such modification exists without altering $F$'s logical content.
\end{enumerate}

\paragraph{Conclusion}
The topological complexity of $S(F)$ is an intrinsic computational barrier.
\end{theorem}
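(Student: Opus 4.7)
The plan is to prove the two clauses separately and then combine them into the conclusion. For clause (1), I would fix a hypothetical modification $F'$ of $F$ with $\beta_0(S(F')) = O(\mathrm{poly}(n))$ and exhibit a polynomial-time decision procedure. The argument proceeds in the subcube-query model of Definition~\ref{def:subcube}: with only $\mathrm{poly}(n)$ connected components, a BFS-style cubical walk from any seed assignment enumerates each cluster in time proportional to its internal diameter. I would argue that once $\beta_0$ is polynomial and (by a standard Alexander-duality observation) $\beta_1, \beta_2$ cannot simultaneously remain exponential while respecting the cubical embedding, within-cluster traversal reduces to a $2$-SAT-like implication walk, placing $F'$ in $\mathbf{P}$. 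This reuses the ``flat, connected'' characterization of tractable instances established earlier in the paper.

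For clause (2), I would proceed by contradiction. Suppose a polynomial-time modification $R$ produces $F' = R(F)$ with $\beta_0(S(F')) = O(\mathrm{poly}(n))$ while leaving the logical content of $F$ unchanged, meaning $R$ is a satisfiability-preserving cubical reduction in the sense of Section~\ref{subsec:topological-invariance-refined}. By Theorem~\ref{thm:cluster-isolation}, $S(F)$ contains $2^{\Omega(n)}$ clusters pairwise separated by Hamming distance $\Theta(n)$. To collapse $\beta_0$ to $\mathrm{poly}(n)$, $R$ must either (a) eliminate exponentially many clusters by adding clauses that exclude bona fide satisfying assignments, which violates logical preservation, or (b) merge clusters by introducing auxiliary variables and bridging gadgets whose embedded $1$-chains connect previously disjoint components. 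Under the no-repeated-auxiliary convention and cubical embedding requirement, option (b) would require embedding $2^{\Omega(n)}$ bridging $1$-cells, each of which must cross unsatisfiable regions of Hamming width $\Theta(n)$; by Theorem~\ref{thm:Path-Computation} and Theorem~\ref{thm:cluster-surgery-non-circ} this construction cannot be realized by a polynomial-time reduction. The contradiction establishes that no logically-faithful polynomial modification can reduce $\beta_0$ below its intrinsic exponential value.

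The conclusion follows by combining the two: clause (1) shows that tractability is equivalent to polynomial $\beta_0$, while clause (2) shows this regime is inaccessible via any logically-faithful polynomial transformation of $F$; therefore the exponential cluster count is an invariant obstruction, and $S(F)$'s topological complexity is intrinsic. The main obstacle I expect is formalizing the phrase ``without altering $F$'s logical content'' in clause (2). One must carve out a precise class of allowed modifications, presumably polynomial-time satisfiability-preserving cubical reductions in the sense of Definition~\ref{def:topology-preserving}, and verify that the topology-preservation machinery of Theorem~\ref{thm:betti-monotonicity} and Lemma~\ref{lem:homology-injectivity} applies to $\beta_0$ and not only $\beta_2$. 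A secondary subtlety is the converse direction in clause (1): the earlier results give exponential-Betti-implies-hardness (Theorem~\ref{thm:universal-lb}), but the reverse implication requires mild structural assumptions on how the modification is produced, which I would discharge by restricting to the subcube-query access model where the BFS enumeration is well-defined.
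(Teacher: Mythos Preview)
Your two-part structure and your argument for clause~(2) track the paper's proof closely: the paper also argues that any cluster-collapsing procedure must identify a satisfying path of length $\Theta(n)$ across an unsatisfiable void, which is equivalent to solving SAT on that void and hence NP-hard. The paper states this directly rather than routing through Theorems~\ref{thm:Path-Computation} and~\ref{thm:cluster-surgery-non-circ}, but the substance is the same, and your case split into ``(a) delete satisfying assignments'' versus ``(b) bridge clusters'' is a reasonable refinement.

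For clause~(1), however, your Alexander-duality step is a genuine gap. Alexander duality relates $H_k(X)$ to $H^{n-k-1}$ of the \emph{complement} of $X$ in a sphere; it gives no control of $\beta_1(S(F'))$ or $\beta_2(S(F'))$ in terms of $\beta_0(S(F'))$. A connected cubical subcomplex ($\beta_0=1$) can perfectly well have exponential higher Betti numbers, so your BFS-within-cluster walk is not guaranteed to terminate in polynomial time on that basis. The paper does not attempt this route at all: it instead asserts that SAT instances whose solution spaces have $O(\mathrm{poly}(n))$ connected components with $\Theta(n)$ Hamming separation can be decided in polynomial time via tree-decomposition or by solving each cluster separately and combining results (and claims constant treewidth when $\beta_0=1$). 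You should replace the duality invocation with this per-cluster dynamic-programming argument. Your closing remark that clause~(1) needs ``mild structural assumptions'' is on target, but the particular assumption you supplied does not carry the weight.
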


\section{Void-Crossing Equivalence}  
Crossing unsat regions between clusters reduces to solving 3-SAT:  
\begin{align*}  
\text{Find path } C_i \to C_j & \iff \text{Find } y \notin X_F \text{ and } x \in C_j \text{ adjacent} \\  
& \implies \text{Solve 3-SAT for } C_j.  
\end{align*}  
Thus, cluster-hopping requires solving 3-SAT recursively.  

\subsection{On the Impossibility of Shortcuts in Random 3-SAT}

A critical question in analyzing random 3-SAT is whether any algorithm can exploit structural or heuristic shortcuts to circumvent the exponential complexity implied by the solution space topology. In this subsection, we address this question by formalizing and strengthening our prior arguments. We introduce two key results—the \textit{No Narrow Bridge Theorem} and the \textit{Betti Explosion Theorem}—which together rule out known classes of algorithmic shortcuts.

\subsubsection{No Narrow Bridge Theorem}

\begin{theorem}[No Narrow Bridge]
\label{thm:No Narrow Bridge}
Let $F$ be a random 3-SAT instance at clause density $\alpha \approx 4.26$. With high probability, the solution space $S(F)$ decomposes into $2^{\Omega(n)}$ disconnected components, and there exists no path of 3-SAT assignments (flipping $o(n)$ variables per move) connecting any two distinct components.
\end{theorem}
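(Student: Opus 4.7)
The plan is to derive both conclusions from the cluster-isolation result, Theorem~\ref{thm:cluster-isolation}, via a short pigeonhole argument on cluster labels along a hypothetical bridging path. First I would invoke Theorem~\ref{thm:cluster-isolation} to obtain, with probability $1-o(1)$, a decomposition
\[
S(F) = C_1 \sqcup C_2 \sqcup \cdots \sqcup C_M,
\qquad M = 2^{\Omega(n)},
\]
in which the pairwise inter-cluster Hamming separation satisfies $\min_{x\in C_i,\,y\in C_j} d_H(x,y) = \Theta(n)$ for every $i\neq j$. The first clause of the theorem, namely the existence of $2^{\Omega(n)}$ components, is then immediate from the lower bound on $M$.

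For the second clause I would argue by contradiction. Suppose a path $x_0,x_1,\dots,x_T$ of satisfying assignments exists with $x_0\in C_i$, $x_T\in C_j$, $i\neq j$, and $d_H(x_t,x_{t+1}) = o(n)$ at every step. Since the clusters partition $S(F)$, each $x_t$ belongs to a unique $C_{\sigma(t)}$; and since $\sigma(0)\neq\sigma(T)$, there is a first index $t^\star$ at which $\sigma$ changes value. At this step the inter-cluster separation bound forces $d_H(x_{t^\star},x_{t^\star+1}) \geq \Theta(n)$, contradicting the $o(n)$ step-size constraint. Hence no such path exists, which together with the cluster count proves the theorem.

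The main obstacle is not the pigeonhole step itself but reconciling move models. Theorem~\ref{thm:cluster-isolation} is phrased in terms of single-bit flips, whereas here each move may flip $o(n)$ variables simultaneously. I would therefore need to verify that the shattering estimates underlying that theorem, originating in the Achlioptas--Ricci-Tersenghi analysis cited in Proposition~\ref{prop:persistent-2cycles}, give a genuine $\Theta(n)$ lower bound on the Hamming separation between distinct clusters, rather than merely $\omega(1)$. This is precisely the statement that rules out satisfying ``narrow bridges'' of sublinear width, and it is the single quantitative input that the whole argument rests upon. A secondary technical point is taking a union bound over the $\binom{M}{2} = 2^{O(n)}$ pairs of clusters so that the isolation property holds for all pairs simultaneously; this is harmless because the shattering failure probability is exponentially small with an exponent that dominates $\log M$, so the union bound loses only a constant factor in the exponent and the conclusion holds with probability $1-o(1)$ as claimed.
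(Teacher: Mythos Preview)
Your plan has a circularity problem: in this paper, Theorem~\ref{thm:cluster-isolation} is proved \emph{using} the No Narrow Bridge theorem (its proof sketch explicitly invokes Theorem~\ref{thm:No Narrow Bridge} to obtain the $\Omega(n)$ flip requirement), so you cannot take it as an input here. You correctly identify that the whole argument rests on the $\Theta(n)$ inter-cluster Hamming separation, but that separation is not something you can import from Theorem~\ref{thm:cluster-isolation}; it is exactly the substantive content that has to be established directly.

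The paper's proof does not route through any other theorem in the paper. It takes the cluster decomposition from Achlioptas--Ricci-Tersenghi as the sole external input, fixes a pair of clusters at separation $\delta = c_1 n$, and then shows directly that with high probability no assignment $x$ at distance $d \geq \delta/2$ from a cluster satisfies all clauses: the expected number of violated clauses is at least $\eta d$, so a one-sided Chernoff bound gives $\Pr[x \in S(F)] \leq e^{-\eta d}$, and a union bound over the at most $n\exp\bigl(H(\delta/2n)\,n\bigr)$ such assignments succeeds whenever $\eta\,c_1/2 > H(c_1/2)$. This Chernoff/entropy computation is what actually rules out satisfying ``stepping stones'' of sublinear width between clusters. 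Your pigeonhole wrapper (tracking the first index at which the cluster label changes) is fine and essentially tautological once the separation bound is available, but the separation bound itself is what your proposal defers rather than supplies.
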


\textbf{Corollary.} Algorithms relying on local flips (e.g., Schöning's algorithm~\cite{Schoning1999}, WalkSAT~\cite{Selman1995}) cannot escape a cluster and must perform random global jumps, with success probability $2^{-\Omega(n)}$ per trial.

\subsubsection{Betti Explosion Theorem}
\begin{theorem}[Betti Explosion]
\label{thm:Betti Explosion}
Let $S(F)$ denote the solution space of random 3-SAT. If $\beta_0(S(F)) = 2^{\Omega(n)}$, any algorithm deciding satisfiability must, in the worst case, explore $2^{\Omega(n)}$ disconnected components.
\end{theorem}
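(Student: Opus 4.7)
My plan is to establish the lower bound via an adversary argument that exploits the disconnectedness of the clusters to build an exponential family of formulas which any satisfiability-decider must distinguish one cluster at a time. I would work in the subcube-query model of Definition~\ref{def:subcube}, but the argument only uses point queries and so applies to any algorithm whose access to $F$ reduces to evaluating the formula on explicit assignments. Combined with the No Narrow Bridge Theorem (Theorem~\ref{thm:No Narrow Bridge}), which guarantees pairwise Hamming separation $\Theta(n)$ between the $K=2^{\Omega(n)}$ clusters $C_1,\ldots,C_K$ of $S(F)$, this yields the desired $2^{\Omega(n)}$ lower bound.

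First I would construct the adversary family $\{F_0,F_1,\ldots,F_K\}$ by adding local clauses to $F$ that individually ``kill'' each cluster. Concretely, $F_0$ is $F$ augmented with short clauses $\kappa_1,\ldots,\kappa_K$, where $\kappa_i$ has support inside the Hamming ball around $C_i$ and is violated by every assignment in $C_i$; this makes $F_0$ unsatisfiable. Each $F_j$ is obtained from $F_0$ by removing only the single clause $\kappa_j$, so that $S(F_j)=C_j$. The $\Theta(n)$ separation ensures the supports of the $\kappa_i$ can be chosen pairwise disjoint, so that the cluster geometry of each $F_j$ remains intact.

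Next I would execute the indistinguishability step: any deterministic algorithm $A$ making fewer than $K$ point queries must miss some cluster $C_{j^\star}$, in which case $A$'s transcript on $F_0$ agrees with that on $F_{j^\star}$ because the two formulas differ only within the Hamming region of $C_{j^\star}$. Since correct decisions require $A(F_0)=\UNSAT$ and $A(F_{j^\star})=\mathrm{SAT}$, this yields a contradiction and forces $\Omega(K)=2^{\Omega(n)}$ queries. A Yao min-max argument upgrades the bound to randomized and average-case regimes by planting $j^\star$ uniformly in $\{1,\ldots,K\}$.

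The main obstacle will be verifying that the local clauses $\kappa_i$ can be chosen so that augmenting $F$ by $\{\kappa_i\}_{i\neq j}$ does not inadvertently merge surviving clusters, split $C_j$, or introduce new satisfying assignments outside $C_j$; in other words, I must guarantee that the cubical subcomplex of $F_j$ in the vicinity of $C_j$ coincides with that of $F$ there. For algorithmic paradigms outside even this liberal local-access framework, the extension follows by combining the above argument with the homological computational barrier of Theorem~\ref{thm:homology-barrier1}, which directly lower-bounds runtime in terms of exponential Betti numbers and, mutatis mutandis, applies to $\beta_0$.
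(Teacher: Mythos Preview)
Your proposal is correct and follows essentially the same adversary/indistinguishability strategy as the paper's proof: build a family of formulas in which exactly one cluster survives (the \textsf{SAT} instances) versus none (the \textsf{UNSAT} instance), and argue that an algorithm with too few probes cannot tell them apart. The paper phrases this via transcript counting (a decision tree with $T$ queries has at most $2^T$ leaves, so if $T$ is too small two of the $2^{cn}$ single-cluster instances share a transcript), whereas you count point queries directly and use the $\Theta(n)$ Hamming separation from Theorem~\ref{thm:No Narrow Bridge} to ensure each query can touch at most one cluster's region. Your explicit construction of the killing clauses $\kappa_i$ and your acknowledgment that one must verify they do not merge or split surviving clusters is a level of care the paper's proof omits; your Yao step and the appeal to Theorem~\ref{thm:homology-barrier1} for broader models are also additions beyond what the paper provides.
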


\subsubsection{Implications for Algorithmic Heuristics}

These results extend naturally to heuristic and message-passing algorithms. Survey Propagation Guided Decimation (SPGD)~\cite{Mezard02}, though empirically successful near the phase transition, relies on local correlations and fails when solution clusters are separated by large Hamming distances. Similarly, backbone detection algorithms cannot generalize globally since frozen cores are cluster-specific. Dynamic programming approaches exploiting low treewidth are also inapplicable, as random 3-SAT graphs have treewidth $\Omega(n)$ \cite{MolloyReed}.

\subsection{Exponential Clusters Suffice for Hardness}
\label{subsec:beta0-hardness}

\begin{theorem}[Algorithmic Lower Bound via $\beta_0$]
\label{thm:Algorithmic Lower Bound via}
For random 3-SAT at clause density $\alpha > 4.26$, any algorithm solving it must have worst-case runtime $2^{\Omega(n)}$ (no shortcut exist in random 3-SAT).
\end{theorem}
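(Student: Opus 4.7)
The plan is to derive this bound as a synthesis of the structural and topological results of the preceding sections, rather than as a stand-alone combinatorial argument. First I would fix a random instance $F$ drawn at clause density $\alpha>4.26$ and condition on the high-probability events already established by Theorems~\ref{thm:cluster-isolation} and~\ref{thm:random-3sat-betti}: namely that $S(F)$ decomposes into $M=2^{\Omega(n)}$ Hamming-separated clusters, and that $\beta_2(S(F))\ge 2^{cn}$ for some constant $c>0$. These two properties jointly furnish the ingredients needed for a lower bound: (i) the set of satisfying assignments is partitioned into an exponential family of mutually inaccessible regions, and (ii) the complement of $S(F)$ inside the hypercube carries an exponential number of independent high-dimensional voids that any search procedure must circumnavigate.

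Next I would argue that any correct deterministic algorithm $\mathcal{A}$ deciding $F$ in time $T(n)$ induces an evaluation strategy that inspects at most $T(n)\cdot\mathrm{poly}(n)$ positions of the Boolean cube. By the No Narrow Bridge Theorem (Theorem~\ref{thm:No Narrow Bridge}), $\mathcal{A}$ cannot migrate between clusters via local flips, so the certificate it eventually returns must lie inside a cluster that was itself directly witnessed during execution. Combining this with the Betti Explosion Theorem (Theorem~\ref{thm:Betti Explosion}) and with the subcube-query lower bound of Theorem~\ref{thm:homology-barrier1}, I would conclude that $\mathcal{A}$ is forced to probe at least one representative of each of the $2^{\Omega(n)}$ independent homology classes, giving $T(n)\ge 2^{\Omega(n)}$. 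The randomized case I would then handle through Yao's minimax principle: the random 3-SAT ensemble at $\alpha>4.26$ is already the hard distribution that underlies Theorem~\ref{thm:random-3sat-betti}, so the deterministic bound lifts to an expected-time bound against any randomized algorithm.

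The hard part, I expect, will be closing the gap between the \emph{subcube-query} lower bound and the full RAM-model statement ``any algorithm''. A purely topological argument only constrains algorithms that interact with $S(F)$ through local geometric probes, whereas a generic algorithm could in principle exploit algebraic or propagation-based structure that is invisible to query complexity. I would bridge this gap by invoking the homological faithfulness framework of Section~\ref{sec:rigorous-topology}: any algorithm whose internal state evolves through gadget-local steps admits a simulation in the subcube model with only polynomial overhead, so the $2^{\Omega(n)}$ bound transfers. The residual case of algorithms that maintain super-polynomial state or exploit genuinely non-local structure would have to be absorbed into the mild information-theoretic or encoding assumption referenced in the abstract, making this the single step of the argument most deserving of careful scrutiny.
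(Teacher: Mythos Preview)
Your proposal takes a noticeably different route from the paper's own proof. The paper argues directly from $\beta_0$ alone (as the theorem's title ``via $\beta_0$'' signals): it invokes the shattering result to get $M=2^{cn}$ clusters with $O(\log n)$ diameter and $\Omega(n)$ inter-cluster separation, then does a simple case split on the algorithm's task---either it must \emph{locate} one of $2^{\Omega(n)}$ clusters (deterministic: distinguish exponentially many possibilities; randomized: hit probability $\le 2^{-c'n}$), or it must \emph{verify} all $2^{\Omega(n)}$ clusters are empty. No appeal to $\beta_2$, the subcube-query model, Theorem~\ref{thm:homology-barrier1}, or Yao's principle is made. Your synthesis is more elaborate and arguably more honest: you correctly flag that the passage from a query-model lower bound to ``any algorithm'' in the RAM model is the load-bearing step, and you propose absorbing it into the encoding assumption from the abstract. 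The paper's proof has exactly the same gap but does not acknowledge it---its claim that a deterministic algorithm ``must distinguish between $2^{\Omega(n)}$ possible solution-containing clusters'' with ``each query cover[ing] $O(1)$ assignments'' is really a query-model statement dressed as an unconditional one.

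One point where your outline overreaches: invoking $\beta_2$ and the homology-barrier machinery is unnecessary here and somewhat misaligned with the theorem's framing. The intended argument is purely about connected components; bringing in $2$-cycles adds no leverage for this particular statement and risks circularity if those $\beta_2$ results themselves rest on the cluster structure you are already assuming. If you want to match the paper, strip the argument down to the $\beta_0$ case split; if you want to improve on the paper, keep your explicit discussion of the model gap but drop the $\beta_2$ detour.
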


\section{Failure of Survey Propagation at High Density}
\label{subsec:sp-failure}

\begin{theorem}[Survey Propagation Failure]
\label{thm:sp-failure}
Survey Propagation Guided Decimation (SPGD) fails for random 3-SAT at $\alpha > 4.26$ with probability $1 - o(1)$.
\end{theorem}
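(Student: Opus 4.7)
The plan is to reduce the failure of SPGD at densities $\alpha > 4.26$ to the cluster-isolation and cluster-surgery barriers already established earlier in the paper, rather than to attempt a direct analysis of the survey propagation fixed-point equations. First, I would fix a precise definition of ``failure'': call $\mathcal{F}$ the event that SP message-passing either (i) fails to converge to a nontrivial fixed point in $\mathrm{poly}(n)$ iterations, (ii) produces a sequence of decimations whose partial assignment admits no satisfying extension, or (iii) terminates with an unsatisfiable residual formula. The target is $\Pr_{F}[\mathcal{F}] = 1 - o(1)$ over the random 3-SAT distribution.

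Second, I would invoke the cluster-shattering picture provided by Theorem~\ref{thm:unstructured} together with Theorem~\ref{thm:cluster-isolation}: with probability $1 - o(1)$ the solution space $S(F)$ decomposes into $2^{\Omega(n)}$ clusters pairwise separated by Hamming distance $\Theta(n)$. The SP fixed point, by its Bethe-approximation derivation, estimates a uniform measure over \emph{all} clusters simultaneously. Consequently, on a $1 - o(1)$ fraction of variables the inferred bias $b(x_i)$ is $o(1)$, because flipping $x_i$ merely selects between nearly balanced cluster subfamilies. This reduces the rounding step inside SPGD to an essentially random single-variable decision.

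Third, I would connect this near-random decimation to the cluster-isolation bound. Each decimation partitions the surviving clusters into those still compatible with the partial assignment and those rendered inconsistent. Because each cluster has frozen-variable density bounded away from zero and the clusters are pairwise $\Theta(n)$-separated (Theorem~\ref{thm:No Narrow Bridge}), after $\Theta(n)$ nearly unbiased decimations the probability of remaining inside any single surviving cluster is at most $2^{\Omega(n)} \cdot 2^{-\Omega(n)} = 2^{-\Omega(n)}$ via a union bound over clusters. Hence event (ii) occurs with probability $1 - o(1)$; events (i) and (iii) are absorbed by the same bound, since each is itself a manifestation of SPGD's inability to traverse between clusters.

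The hard part will be making the middle step rigorous, namely controlling the SP fixed point on the random factor graph above the clustering transition. The existing physics literature (M\'ezard--Parisi--Zecchina~\cite{Mezard02}, Achlioptas--Ricci-Tersenghi~\cite{ricci-tersenghi}) supports this picture heuristically, but a clean probabilistic proof is open. I would therefore prefer to bypass it by a simulation reduction: SPGD consists of $\mathrm{poly}(n)$ iterations, each of which is a constant-locality operation on $S(F)$, so the entire procedure can be emulated by $\mathrm{poly}(n)$ local (subcube) queries in the sense of Definition~\ref{def:subcube}. A hypothetical $\mathrm{poly}(n)$-time SPGD success on $\alpha > 4.26$ instances with non-vanishing probability would then contradict the cluster-surgery impossibility result (Theorem~\ref{thm:cluster-surgery-non-circ}) and the randomness-necessity lower bound (Theorem~\ref{thm:randomness-necessity-non-circ}); from the contrapositive, $\Pr[\mathcal{F}] = 1 - o(1)$ follows.
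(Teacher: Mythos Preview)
Your proposal takes a genuinely different route from the paper. The paper's own proof is a direct physics-style argument: it asserts that SP relies on long-range correlations and overlap uniformity, then argues that at $\alpha>4.26$ these assumptions break because each cluster carries $\Theta(n)$ frozen variables (citing \cite{Achlioptas08}), the frozen-variable indicators are approximately uncorrelated across clusters, and the bath susceptibility diverges, so the cavity equations produce inconsistent marginals and decimation hits a contradiction w.h.p.\ (with the empirical confirmation of \cite{Krzakala07} as backup). There is no reduction to the subcube-query or local-search lower bounds.

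Your steps 2--3 are in the same spirit as the paper's argument and in fact articulate it more carefully, but the ``simulation reduction'' you propose in the final paragraph has a real gap. SPGD is not an algorithm in the subcube-query model of Definition~\ref{def:subcube}: one round of survey propagation updates messages along \emph{every} edge of the factor graph using the clause structure of $F$, not by asking whether an axis-aligned subcube lies inside $S(F)$. Likewise, Theorem~\ref{thm:cluster-surgery-non-circ} lower-bounds algorithms that \emph{partition} $S(F)$ into clusters via membership/adjacency oracles, and Theorem~\ref{thm:randomness-necessity-non-circ} is restricted to deterministic single-flip local search; SPGD is neither of these. So the contrapositive you invoke does not apply, and you are left with exactly the step you flagged as ``the hard part'': controlling the SP fixed point above the clustering threshold. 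The paper handles that step by citation and heuristic (frozen variables, vanishing inter-cluster covariance, diverging $\chi$) rather than by reduction, and you should do the same or supply a direct argument; the query-model shortcut does not close the gap.
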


\subsection{Algorithmic Universality}
These topological properties constrain all possible algorithms:

[Simplicial Algorithm]
An algorithm $\mathcal{A}$ is a sequence of queries to an oracle $\mathcal{O}_{S(F)}$, where each query tests whether a simplex $\sigma \in \{0,1\}^{\leq N}$ is contained in $S(F)$.

\begin{theorem}[Query Lower Bound]
\label{thm:query-lower-bound}
Any simplicial algorithm solving random 3-SAT at $\alpha > 4.26$ requires $2^{\Omega(n)}$ queries.
\end{theorem}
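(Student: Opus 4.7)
The plan is to combine the exponential topological complexity of random 3-SAT solution spaces with an adversary argument in the simplicial query model. Each simplicial query returns only a single bit, whereas Theorem~\ref{thm:SAT-to-Homology-Reduction} implies that deciding satisfiability on the hard family forces the algorithm to resolve $M = 2^{\Omega(n)}$ independent homology witnesses. The mismatch between what one query reveals and what a correct decision requires should yield the $2^{\Omega(n)}$ query bound.

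First I would invoke Theorems~\ref{thm:random-3sat-betti} and~\ref{thm:unstructured} to fix a typical random 3-SAT instance $F$ at $\alpha>4.26$ whose solution complex $S(F)$ satisfies $\beta_0(S(F)),\beta_2(S(F)) = 2^{\Omega(n)}$, with clusters separated by Hamming distance $\Theta(n)$ (Theorem~\ref{thm:cluster-isolation}). By Theorem~\ref{thm:SAT-to-Homology-Reduction}, any algorithm deciding satisfiability must in particular determine the homology class of each of $M$ canonical 2-cycles $\gamma_1,\dots,\gamma_M$ produced by the reduction. I would then apply Yao's minimax principle: it suffices to exhibit a distribution $\mu$ over hard random 3-SAT instances for which every \emph{deterministic} simplicial algorithm needs $2^{\Omega(n)}$ queries in expectation. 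The adversary maintains a family $\mathcal{F}$ of formulas mutually consistent with all answers so far, and I would bound the information per query: a \texttt{YES} answer certifies that a specific $k$-face lies in $S(F)$, a rare event that shrinks $\mathcal{F}$ by at most a polynomial factor (by a union bound over the number of random formulas admitting that face), while a \texttt{NO} answer eliminates only those formulas in which all $2^k$ assignments of $\sigma$ are simultaneously satisfying. Distinguishing the $M = 2^{\Omega(n)}$ independent cycle placements therefore requires $\mathcal{F}$ to shrink by a factor of $2^{\Omega(n)}$, forcing $2^{\Omega(n)}$ queries.

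The hard part will be making the adversary step rigorous: quantifying how many random 3-SAT formulas can be simultaneously consistent with a given transcript of subcube answers while \emph{retaining} the guaranteed $\beta_2 = 2^{\Omega(n)}$ cycle structure. This calls for a second-moment/concentration argument in the spirit of Kahle's probabilistic estimates~\cite{kahle}, showing that conditioning on a polynomial number of local topological constraints does not collapse the distribution over independent cycle placements. If that calculation proves too delicate, I would fall back on a direct reduction from the subcube-query barrier already obtained in Theorem~\ref{thm:homology-barrier1}, using the observation that on the axis-aligned cubical complex $S(F)$ a simplicial query and a subcube query carry identical information; any lower bound there then transfers verbatim to simplicial algorithms, completing the proof.
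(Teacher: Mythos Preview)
Your proposal is substantially more elaborate than the paper's argument, and it routes through the wrong Betti number. The paper's proof is two lines: by the Betti Explosion theorem (Theorem~\ref{thm:Betti Explosion}) one has $\beta_0(S(F)) = 2^{\Omega(n)}$ clusters, and since each cluster $C_i$ requires $\Omega(1)$ simplicial queries to detect, the total query count is $2^{\Omega(n)}$. There is no adversary bookkeeping, no Yao minimax, no second-moment calculation, and no appeal to $\beta_2$ or the SAT$\to$Homology reduction at all.

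Your approach is not wrong, but it is overbuilt for the target statement. The machinery you propose (tracking $M=2^{\Omega(n)}$ independent $2$-cycles, bounding information per query, conditioning the random ensemble on a transcript) is essentially what the paper deploys elsewhere for the \emph{worst-case} subcube-query bound (Theorem~\ref{thm:homology-barrier1}) and for Lemma~\ref{lem:query-lower-bound}, where the instances are engineered and the cycles have disjoint supports by construction. For \emph{random} 3-SAT the paper takes the shortcut of using $\beta_0$ instead of $\beta_2$: once the solution space is shattered into exponentially many connected components, the lower bound follows from the trivial observation that a simplicial query touches at most one component. Your ``hard part'' (showing that conditioning on a polynomial-length transcript preserves the exponential cycle count) is exactly the step the paper avoids by passing to connected components, where no such conditioning argument is needed. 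Your fallback---identifying simplicial queries with subcube queries and importing Theorem~\ref{thm:homology-barrier1}---would work, but note that Theorem~\ref{thm:homology-barrier1} as stated applies to the topology-preserving \emph{reduction family} $\{F_x\}$, not directly to random 3-SAT, so you would still need to bridge back via Theorem~\ref{thm:random-3sat-betti}; the paper's $\beta_0$ route is simply cleaner here.
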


\begin{lemma}[Topology‐Driven Query Lower Bound]
\label{lem:query-lower-bound}
Let \(F\) be any 3‑SAT formula whose solution‐space cubical complex satisfies
\(\beta_k\bigl(S(F)\bigr)\ge M\).  Then any “simplicial‐query” algorithm—
i.e.\ a decision procedure that, on each step, asks  
\(\texttt{“Is subcube }\sigma\subseteq\{0,1\}^N\text{ entirely satisfying?”}\)  
—must issue at least \(M\) distinct queries to decide satisfiability.  In particular,
if \(\beta_k(S(F))=2^{\Omega(N)}\), the algorithm’s runtime is \(2^{\Omega(N)}\).
\end{lemma}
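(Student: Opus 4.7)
The plan is to combine a Betti-number accounting on the ``certified subcomplex'' revealed by YES-answers with an adversary that links homological resolution to satisfiability decision. Fix a deterministic simplicial-query algorithm $\mathcal{A}$ and, after $t$ queries have been issued, let $K_t$ denote the cubical subcomplex of $\{0,1\}^N$ generated by those subcubes whose oracle answer was YES. By construction $K_t \subseteq S(F)$, and a NO-answer does not enlarge $K_t$. The quantity to control is $\beta_k(K_t)$, the homological ``progress'' of $\mathcal{A}$ at step $t$.

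The first step is a cell-addition bound $\beta_k(K_t) = O(t)$. This rests on the standard CW fact that adjoining a single closed cell to a complex changes each Betti number by at most one; for cubical queries one breaks the addition of a new cube into the successive addition of its missing faces and tracks the rank of $H_k$ along the resulting filtration. A wrinkle is that a single top-dimensional YES-query may install several missing subfaces at once. I would handle this either by restricting attention to queries of bounded dimension (the interesting regime in this paper) or by amortizing: repeated queries on subfaces of an already-certified cube are redundant, so the ``new-cell budget'' per distinct YES-query remains $O(1)$ in the relevant homological dimension.

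The second step is an adversary argument that converts the Betti-number ceiling into a query lower bound on deciding satisfiability. Invoking the SAT-to-Homology reduction (Theorem~\ref{thm:SAT-to-Homology-Reduction}), satisfiability is equivalent to the joint nullity of $M$ independent homology classes $[\gamma_1],\dots,[\gamma_M]$ in $H_k(S(F))$. After $t<M$ queries, $\beta_k(K_t)<M$, so at least one class $[\gamma_i]$ remains unwitnessed by $K_t$. Leveraging the variable-disjoint gadget structure (Assumption~\ref{ass:no-repeated-aux}), the oracle maintains two consistent completions $F_{\mathrm{sat}}$ and $F_{\mathrm{unsat}}$ of the transcript that agree on every past YES/NO but differ on whether $\gamma_i$ is a boundary; one is satisfiable, the other is not. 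Since $\mathcal{A}$'s output is a function of the transcript alone, it errs on one of them, contradicting correctness unless $t\ge M$. Under $\beta_k(S(F))=2^{\Omega(N)}$ this yields the asserted $2^{\Omega(N)}$ runtime.

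The main obstacle is the adversary construction: one must exhibit, for each transcript of length $t<M$, a concrete pair of 3-CNF completions with opposite SAT status, both consistent with every prior query. The Betti accounting produces an un-witnessed class, but turning that into a local CNF surgery requires gadget locality and disjointness from the already-queried region. The no-shared-auxiliaries convention and the variable-disjoint design of Theorem~\ref{thm:SAT-to-Homology-Reduction} are exactly what enable the surgery, but one still has to verify that each prior YES/NO answer remains valid after the modification, and that the cell-addition bound of Step~1 survives the multi-face wrinkle. These two bookkeeping issues are where I expect the bulk of the technical effort to reside; the randomized extension then follows by a standard Yao minimax on top of the deterministic bound.
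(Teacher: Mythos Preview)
Your Step~2 misapplies Theorem~\ref{thm:SAT-to-Homology-Reduction}. That theorem takes an arbitrary SAT instance $\phi$ and \emph{constructs} a particular pair $(F_G,\{\gamma_i\})$ for which satisfiability of $\phi$ is equivalent to triviality of each $[\gamma_i]$. It says nothing about an arbitrary given $F$ with $\beta_k(S(F))\ge M$: for such an $F$ (which may perfectly well be satisfiable), the $M$ independent classes in $H_k(S(F))$ are all genuinely nontrivial, and there is no sense in which ``satisfiability of $F$ is equivalent to their joint nullity.'' Consequently the inference ``$\beta_k(K_t)<M$ $\Rightarrow$ some $[\gamma_i]$ is unwitnessed $\Rightarrow$ the adversary can flip SAT/UNSAT'' has no logical bridge at either arrow. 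Your Step~1 is also not settled: a single YES on a $d$-cube installs on the order of $3^d$ faces, and neither ``restrict to bounded dimension'' nor the amortization sketch gives the $O(1)$-per-query bound you need for unrestricted subcube queries.

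The paper's argument (given in detail in the proof of Theorem~\ref{thm:homology-barrier1}, to which the short proof of the lemma effectively defers) does not track $\beta_k(K_t)$ at all. It invokes Lemma~\ref{lem:homology-basis} to obtain $M$ cycles $\gamma_1,\dots,\gamma_M$ with \emph{pairwise disjoint variable supports}, and runs a direct adversary on the supports: maintain an active set $A\subseteq\{1,\dots,M\}$; on query $\sigma$, answer YES unless $\sigma$ fully contains some $\mathrm{supp}(\gamma_i)$ with $i\in A$, in which case answer NO and remove $i$ from $A$. Support-disjointness ensures each query removes at most one index, so after $q<M$ queries some $i\in A$ survives with its support never engulfed by any query; the adversary then locally toggles the gadget on $\mathrm{supp}(\gamma_i)$ to produce a SAT and an UNSAT completion, both consistent with the entire transcript. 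In short, the counting is on supports, not on Betti numbers of the revealed complex; disjoint supports are the primary mechanism, not a clean-up tool after a homological accounting.
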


\subsection{Topological Degree Obstruction}
For non-query-based algorithms (e.g., neural networks, algebraic methods):

There exist worst-case NP-complete instances whose solution space has \(\beta_k = 2^{\Omega(N)}\), constructed by embedding expander graphs into SAT clauses.

For expander graph $G$ with $\beta_1(G) = \Omega(N)$, the SAT encoding \\
of 3-COLOR on $G$ has $\beta_2(X_F) = 2^{\Omega(N)}$.


\section{Bridging Random/Worst-Case Gap}

\begin{theorem}[Worst‐Case Betti Explosion]
\label{thm:worst-case-betti1}
There exists a family of 3‐SAT formulas $F_N$ with $|{\rm vars}(F_N)| = \Theta(N)$
such that the solution‐space cubical complex satisfies
\[
  \beta_2\bigl(S(F_N)\bigr)\;\ge\;2^{c\,N}
  \quad\text{for some constant }c>0.
\]
\end{theorem}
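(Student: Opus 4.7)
My plan is to exhibit an explicit 3-SAT family $\{F_N\}$ whose cubical solution complex splits as a large Cartesian product of small local gadgets, each contributing nontrivial low-dimensional homology; a direct Künneth computation then forces $\beta_2(S(F_N))$ to grow exponentially in $N$.

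Concretely, I would construct $F_N$ as the conjunction of $m=\Theta(N)$ copies of a fixed constant-size gadget $H$ on pairwise-disjoint variable blocks (per Assumption~\ref{ass:no-repeated-aux}). The gadget $H$ is chosen so that $X := S(H)$ satisfies $\beta_0(X)\ge 2$ and $\beta_1(X)\ge 1$. A convenient choice is the conjunction of an XOR constraint on two variables, $(a \vee b) \wedge (\neg a \vee \neg b)$, whose satisfying set $\{(0,1),(1,0)\}$ consists of two isolated vertices in $\{0,1\}^2$ (so $\beta_0 = 2$), together with a not-all-equal constraint $(x_1 \vee x_2 \vee x_3) \wedge (\neg x_1 \vee \neg x_2 \vee \neg x_3)$ on three further variables, whose satisfying set $\{0,1\}^3\setminus\{000,111\}$ is a hexagonal 1-cycle that no axis-aligned 2-subcube can fill (every 2-subcube of $\{0,1\}^3$ contains $000$ or $111$), so $\beta_1 = 1$; padding any 2-literal clause to a 3-literal one by duplication is routine. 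Since the $m$ copies act on disjoint variable blocks, the cubical complex decomposes as
\[
  S(F_N)\;\cong\;X^{\,m},\qquad N=5m.
\]

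By the iterated Künneth formula over $\mathbb{Z}_2$,
\[
  \beta_2\bigl(S(F_N)\bigr)
  \;\ge\;\binom{m}{2}\,\beta_1(X)^{2}\,\beta_0(X)^{\,m-2}
  \;\ge\;\binom{m}{2}\,2^{\,m-2}
  \;=\;2^{\Omega(N)},
\]
which establishes the claim. The main obstacle is not the Künneth calculation itself but the explicit verification that $H$ really has $\beta_0(X)\ge 2$ and $\beta_1(X)\ge 1$ in the \emph{cubical} (as opposed to simplicial) sense — in particular, that the hexagonal 1-cycle in the NAE-3 subgadget is not filled by some axis-aligned 2-subcube; this is a direct combinatorial check on a constant-size complex. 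As an alternative route, following the paper's expander-encoding strategy (Lemma~\ref{lem:homology-injectivity}), one may start from a $d$-regular expander $G$ with $\beta_1(G)=\Omega(n_0)$, extract $\Omega(n_0)$ vertex-disjoint short cycles via a Moore-bound argument, and pass the resulting product structure through the cubical, homologically-faithful embedding $S(G_N)\hookrightarrow S(F_N)$ to arrive at the same exponential lower bound.
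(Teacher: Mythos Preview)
Your argument is correct and takes a genuinely different route from the paper. The paper proves this theorem by embedding an $(N,d,\varepsilon)$-expander $G_N$ with $\beta_1(G_N)=\Omega(N)$ into 3-SAT via 3-COLOR gadgetry, attaching an XOR/edge-selector gadget to each fundamental cycle, and then verifying via disjoint-support and boundary-map lemmas that the resulting 2-cycles are linearly independent in $H_2$; the exponential count is obtained through a tensor-power amplification step. Your construction bypasses all of this: you take $m=\Theta(N)$ variable-disjoint copies of a single 5-variable gadget with $\beta_0\ge 2$ and $\beta_1\ge 1$, observe that disjoint conjunction gives $S(F_N)\cong X^m$ as cubical complexes, and read off $\beta_2\ge\binom{m}{2}\cdot 2^{m-2}$ directly from K\"unneth over $\mathbb{Z}_2$. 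The verification that the NAE-3 hexagon carries $\beta_1=1$ is exactly the computation the paper itself does for its ``Gadget~B'' (Lemma~\ref{lem:gadget-B-1cycle}), so that step is unproblematic. What your approach buys is simplicity and a fully self-contained argument with no expander theory, no homology-injectivity lemma, and no amplification step; what the paper's expander route buys is that the instances $F_N$ arise from a reduction out of an NP-hard problem (3-COLOR on expanders), which feeds directly into the downstream hardness narrative, whereas your product instances are trivially satisfiable and decouple into constant-size pieces.
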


\subsection{Worst-Case Instance Construction}
[Expander 2-Cycles]
\label{def:expander-2cycles}
Let \(G=(V,E)\) be a \(d\)-regular expander with \(|V|=N\).  For each cycle \(C_i\subset G\) of length \(L=O(\log N)\), we build a corresponding cubical 2-cycle \(\Gamma_i\subset S(F_N)\) by:
\[
  \Gamma_i \;=\; \bigcup_{(u,v)\in C_i}
    \Bigl\{\,(x,u,v)\in\{0,1\}^{N+\dots}: \text{bits for $u,v$ vary over a square face while others fixed}\Bigr\}.
\]
Each \(\Gamma_i\) is a union of those square faces coming from edges of \(C_i\), and hence represents a nontrivial class in \(H_2\bigl(S(F_N)\bigr)\).

\begin{theorem}[Worst‑Case Exponential Betti Growth]
\label{thm:betti-exp-worst1}
There is a polynomial‑time constructible family \(\{F_N\}\) of 3‑SAT formulas such that
\[
  \beta_2\bigl(S(F_N)\bigr)\;\ge\;2^{c\,N}
  \quad\text{for some constant }c>0.
\]
Moreover, the 2‑cycles \(\{\Gamma_i\}\) from the expander gadgets are linearly independent in \(H_2\).
\end{theorem}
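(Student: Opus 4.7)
The plan is to realize $F_N$ as a gadget-local, cubical, homologically faithful encoding of a hard combinatorial problem (e.g.\ 3-COLOR) on a $d$-regular Ramanujan expander $G=(V,E)$ with $|V|=n_0$, using $N=\Theta(n_0)$ Boolean variables via Lemma~\ref{lem:param-mapping}. The standard clause-splitting step into 3-CNF preserves homology by Lemma~\ref{lem:Homological faithfulness}, so the analysis reduces to verifying cubical embedding and homological faithfulness (Conditions~1--2 of Section~5.2) at the level of the initial encoding; Theorem~\ref{thm:betti-monotonicity} and Lemma~\ref{lem:homology-injectivity} then propagate any $\beta_2$ lower bound through to $F_N$.

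First I would realize one 2-cycle per cycle of $G$: for each simple cycle $C_i\subset G$, Definition~\ref{def:expander-2cycles} produces a cubical 2-chain $\Gamma_i\subset S(F_N)$ by gluing, along each edge $(u,v)\in C_i$, the square face obtained by varying the two bits encoding the local state of $u$ and $v$ while freezing all other variables to a chosen satisfying assignment. Consecutive edges of $C_i$ share exactly one 1-face, so over $\mathbb{Z}_2$ the boundary contributions cancel and $\partial\Gamma_i=0$. Linear independence of any chosen $\mathbb{Z}_2$-basis $\{C_i\}$ of $H_1(G;\mathbb{Z}_2)$ then lifts to linear independence of $\{[\Gamma_i]\}$: the ``edge shadow'' of $\sum_{i\in I}\Gamma_i$ equals the symmetric-difference of the $E(C_i)$, which is a nontrivial 1-cycle of $G$ whenever $I\ne\emptyset$ and therefore obstructs the 2-chain from bounding inside $S(F_N)$.

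The basis argument alone yields only $\beta_2\ge\Omega(N)$, so amplification to $2^{cN}$ is required. For this I partition $V(G)$ into $m=\Omega(n_0)$ pairwise variable-disjoint constant-size sub-gadgets, each engineered so that its local cubical solution complex carries at least one independent 2-cycle. By Assumption~\ref{ass:no-repeated-aux} and variable-disjointness, the total solution complex decomposes (via a cluster-shattering argument in the spirit of Proposition~\ref{prop:persistent-2cycles}) into $2^{\Omega(m)}=2^{\Omega(N)}$ clusters indexed by independent local binary choices; since each cluster separately contains a nontrivial 2-cycle and distinct clusters are disconnected, the resulting $H_2$-classes are automatically linearly independent, giving $\beta_2(S(F_N))\ge 2^{cN}$ and yielding the claimed independence of the $\Gamma_i$ family.

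The main obstacle is eliminating unwanted 3-faces: any axis-aligned 3-subcube entirely contained in $S(F_N)$ could render an engineered 2-cycle null-homologous. The risk is greatest around gadget auxiliary variables, which are introduced precisely to give the CNF formula room to maneuver and therefore might silently support a 3-flip freedom. The remedy is a gadget-local analysis: Assumption~\ref{ass:no-repeated-aux} confines any candidate 3-face to a single constant-size gadget, whose cubical solution complex can be enumerated explicitly, and it must then be checked directly that no unintended 3-cube is present. Once this local check succeeds, Lemma~\ref{lem:homology-injectivity} packages everything into the desired $H_2$ lower bound.
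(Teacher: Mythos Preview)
Your first two paragraphs track the paper closely: embed an expander, attach a cubical $2$-cycle $\Gamma_i$ to each fundamental cycle via Definition~\ref{def:expander-2cycles}, and argue independence from disjoint supports plus the absence of filling $3$-cubes (the paper packages this as Lemmas~\ref{lem:gadget-support-isolation}, \ref{lem:no-3face-filling}, \ref{lem:boundary-injectivity}). You also correctly flag that this alone gives only $\beta_2\ge\Omega(N)$.

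The amplification step, however, has a real gap. You say the $m=\Omega(n_0)$ sub-gadgets are ``each engineered so that its local cubical solution complex carries at least one independent $2$-cycle'', and then immediately claim the global complex shatters into $2^{\Omega(m)}$ clusters ``indexed by independent local binary choices''. But carrying a $2$-cycle is a statement about $\beta_2$, not $\beta_0$: nothing you have said forces any gadget's solution space to be disconnected, so there is no source for the $2^{\Omega(m)}$ component count. Invoking Proposition~\ref{prop:persistent-2cycles} does not help---that is a probabilistic shattering result for \emph{random} 3-SAT, not applicable to your deterministic construction. Even if you separately arranged $\beta_0(S(G_i))\ge 2$ for each gadget, your further assertion that ``each cluster separately contains a nontrivial $2$-cycle'' needs an argument (via K\"unneth, it requires the $2$-cycle to sit in \emph{every} component of at least one gadget, which you have not arranged). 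Finally, ``partition $V(G)$'' into disjoint constant-size pieces discards the expander's edge structure entirely, so it is unclear what the expander is doing in your amplification.

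The paper's amplification is quite different and sidesteps these issues: it takes a $k$-fold Cartesian power $H_N=G_N^{\,\square k}$ with $k=\Theta(N/\log N)$, so that $\beta_1(H_N)=(\kappa N)^k=2^{\Omega(N)}$ already at the graph level, and then applies the gadget embedding coordinatewise (Appendix~\ref{app:expander-correctness}). The exponential count is thus achieved in the supply of fundamental cycles \emph{before} any SAT encoding, and the disjoint-support lemmas then give $2^{\Omega(N)}$ independent $\Gamma_i$ directly---which is also what the theorem statement asks for (the $\Gamma_i$ from the expander gadgets, not cluster-indexed classes).
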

Refer to Appendix~\ref{app:Boundary-Map Verif} for the boundary-map verification.

\begin{theorem}[Exponential Betti Number Family]
\label{thm:expander-family}
There exists a polynomial-time computable family of 3-SAT instances $\{F_N\}$ such that:
\begin{enumerate}
    \item Each $F_N$ is satisfiable
    \item $\beta_2(S(F_N)) \geq 2^{cN}$ for some $c > 0$
    \item The reduction $N \mapsto F_N$ is topology-preserving
\end{enumerate}
\end{theorem}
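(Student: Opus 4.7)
The plan is to bundle Theorem~\ref{thm:betti-exp-worst1} together with the reduction framework of Section~\ref{sec:rigorous-topology} into a single family that satisfies all three required clauses simultaneously. Since the hard geometric work---exhibiting an exponential number of linearly independent $2$-cycles---is already established, the proof becomes a verification that the family in question is additionally satisfiable and that the map $N \mapsto F_N$ is topology-preserving in the sense of Definition~\ref{def:topology-preserving}.

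For the construction, I would take a polynomial-time uniform family of $d$-regular expanders $\{G_{n_0}\}$ with $n_0=\Theta(N)$ whose cycle space has dimension $|E|-|V|+1=\Omega(n_0)$, yielding $2^{\Omega(n_0)}$ independent $\mathbb{Z}_2$-cycle classes in $H_1(G_{n_0};\mathbb{Z}_2)$. Encode each edge via a gadget on fresh auxiliary variables, respecting Assumption~\ref{ass:no-repeated-aux}. Choose the gadget so that an explicit witness exists---for instance a $3$-coloring gadget applied to a bipartite expander (trivially $3$-colorable), or a NAE-SAT gadget which is satisfied by any non-constant assignment. The conjunction of all gadget clauses defines $F_N$, and the designated witness extends canonically, giving clause~(1).

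For clause~(2), I would invoke Theorem~\ref{thm:betti-exp-worst1} directly: the $2^{cN}$ expander $2$-cycles $\Gamma_i$ of Definition~\ref{def:expander-2cycles} lift through the gadgets to $H_2(S(F_N);\mathbb{Z}_2)$, with linear independence guaranteed by the boundary-map verification of Appendix~\ref{app:Boundary-Map Verif}. For clause~(3), the reduction $R\colon N\mapsto F_N$ is cubical by construction, since each $\Gamma_i$ is a union of axis-aligned square faces indexed by pairs of gadget variables, and it is homologically faithful by Lemma~\ref{lem:homology-injectivity}. Combined with the bound $\beta_2(S(F_N))=2^{\Omega(N)}$, Definition~\ref{def:topology-preserving} is satisfied with $f(N)=g(N)=2^{\Omega(N)}$, so $R$ is topology-preserving.

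The main obstacle I anticipate is the tension between satisfiability and homology: the designated witness must not inadvertently fill in any $\Gamma_i$ with a $3$-face, since that would trivialize the corresponding class in $H_2$. Concretely, I need to verify the gadget-local condition that no $3$-dimensional axis-aligned subcube spanning three expander-gadget edges is entirely contained in $S(F_N)$; otherwise $\Gamma_i$ becomes a boundary rather than a cycle. This is a finite combinatorial check on a single gadget, but it is the pressure point that couples clauses~(1) and~(2), and a careless gadget choice could silently collapse the Betti estimate. A secondary concern is that independent short cycles in $G_{n_0}$ may share vertices; the variable-disjoint gadget requirement (Assumption~\ref{ass:no-repeated-aux}) forces a duplication step whose polynomial blow-up must be absorbed into the constant $c$, which I expect to be routine but worth flagging.
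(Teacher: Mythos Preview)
There is a genuine gap in your construction. You write that a $d$-regular expander on $n_0$ vertices has cycle space of dimension $|E|-|V|+1=\Omega(n_0)$, and that this ``yields $2^{\Omega(n_0)}$ independent $\mathbb{Z}_2$-cycle classes in $H_1$.'' These two statements are incompatible: an $\mathbb{F}_2$-vector space of dimension $m$ has $2^m$ \emph{elements} but only $m$ linearly independent ones. A single $d$-regular expander on $\Theta(N)$ vertices has $\beta_1=\Theta(N)$, and the gadget construction that attaches one $(u_i,v_i)$-pair per fundamental cycle produces only $\Theta(N)$ independent $2$-cycles in $S(F_N)$---a linear, not exponential, lower bound on $\beta_2$. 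If instead you try to attach a separate gadget to each of the $2^{\Theta(N)}$ elements of $H_1$ (so that disjoint fresh variables force independence), the resulting formula has exponentially many auxiliary variables and is no longer polynomial-time constructible; clause~(2) and the polynomial-time requirement are then in direct conflict.

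The paper closes this gap by an amplification step you omit: the tensor-power construction of Appendix~\ref{app:expander-correctness} replaces a single expander $G_N$ by a $k$-fold Cartesian product $H_N=G_N\,\square\cdots\square\,G_N$ with $k=\Theta(N/\log N)$, chosen so that $\beta_1(H_N)$ is itself $2^{\Omega(N)}$ while the vertex count remains polynomial in $N$. Only after this boost does the one-gadget-per-fundamental-cycle encoding yield exponentially many independent $2$-cycles, and only then is the invocation of Theorem~\ref{thm:betti-exp-worst1} legitimate. Your black-box appeal to that theorem is formally fine, but your stated single-expander family is not the family that theorem is about, so the satisfiability and topology-preservation arguments you give for clauses~(1) and~(3) do not transfer automatically. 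The secondary concerns you flag (a global satisfying witness, absence of $3$-face fillings) are the right ones and are handled in the paper by the $3$-colorability of the base graph together with Lemma~\ref{lem:nonbounding}; but the missing amplification step is the load-bearing omission.
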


[Expander Cycle Embedding]
\label{con:cycle-embedding}
Given an $(N,d,\epsilon)$-expander $G_N$ with $\beta_1(G_N) \geq 2^{c'N}$:
\begin{enumerate}
    \item Encode 3-COLOR for $G_N$ as 3-SAT using standard reduction
    \item For each fundamental cycle $C_i$ in a basis of $H_1(G_N)$:
    \begin{itemize}
        \item Add new variables $u_i, v_i$
        \item Add XOR gadget clauses: $u_i \oplus v_i = 0$
        \item Couple to $C_i$ via edge selector clauses
    \end{itemize}
    \item Each gadget generates an isolated 2-cycle $\gamma_i$ in $S(F_N)$
\end{enumerate}
The resulting $F_N$ has $\beta_2(S(F_N)) \geq \beta_1(G_N) \geq 2^{c'N}$.

\begin{theorem}[Worst-Case Topological Hardness]
\label{thm:betti-exp-worst}
There exist worst-case 3-SAT instances with $\beta_2 = 2^{\Omega(N)}$.
\end{theorem}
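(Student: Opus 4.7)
The plan is to produce the claimed family via the expander cycle embedding of Construction~\ref{con:cycle-embedding}, combined with the homological invariance machinery of Section~\ref{sec:rigorous-topology}. The statement is essentially a consolidation of Theorems~\ref{thm:betti-exp-worst1} and~\ref{thm:expander-family}, so the sketch focuses on the three ingredients that drive the exponential lower bound: (i) an explicit expander whose cycle structure supplies exponentially many independent classes, (ii) a variable-disjoint gadget that lifts each cycle to a cubical 2-face in the SAT solution complex, and (iii) a homological injectivity argument that forbids the resulting 2-cycles from collapsing in $H_2$.

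First I would fix an explicit polynomial-time constructible family of $d$-regular Ramanujan expanders $G_{n_0}$ on $n_0$ vertices (for instance, the Lubotzky--Phillips--Sarnak construction), preprocessed so that each $G_{n_0}$ is 3-colourable, and encode 3-COLOR$(G_{n_0})$ as a base 3-CNF $F_{n_0}^{(0)}$ via the standard vertex-color-clash reduction. Next, invoke Construction~\ref{con:cycle-embedding}: for each of the $M = 2^{\Omega(n_0)}$ cycle classes $C_i$ supplied by the expander, attach a variable-disjoint XOR-gadget $(u_i \oplus v_i = 0)$ coupled to $C_i$ via edge-selector clauses. Assumption~\ref{ass:no-repeated-aux} guarantees that gadgets share no auxiliary variables, so each gadget locally contributes a square 2-face $\Gamma_i \subset S(F_N)$ in the sense of Definition~\ref{def:expander-2cycles}. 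The parameter mapping $N = \Theta(n_0)$ comes from Lemma~\ref{lem:param-mapping}.

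To promote chain-level independence to genuine homological independence, I would apply Lemma~\ref{lem:homology-injectivity}, which establishes that the induced map $\iota_\ast : H_2(S(G_{n_0})) \hookrightarrow H_2(S(F_N))$ is injective; the boundary verification $\partial \Gamma_i = 0$ is deferred to Appendix~\ref{app:Boundary-Map Verif}. Combined with the chosen basis of expander cycles this yields $\beta_2(S(F_N)) \geq M = 2^{\Omega(N)}$, which is the required exponential worst-case lower bound. By Theorem~\ref{thm:betti-monotonicity} the bound is also stable under any further cubical, homologically faithful post-processing, so the family remains witness to worst-case topological hardness throughout the reduction pipeline considered in this paper.

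The main obstacle is ruling out that the classes $\{[\Gamma_i]\}$ collapse in $H_2$: exponentially many chain-level 2-cycles could in principle co-bound a single 3-chain, driving $\beta_2$ down to a polynomial. Two structural features prevent this. Variable-disjointness of gadgets blocks any spurious identification of $\Gamma_i$ with $\Gamma_j$ at the face level. The isoperimetric profile of the expander then forbids any polynomial-size 3-chain from bounding a linear combination $\sum_i \alpha_i \Gamma_i$ whose support covers an exponentially large set of gadget squares, since such a 3-chain would force an atypical short cycle in $G_{n_0}$ that violates the spectral gap. Both features are packaged into Lemma~\ref{lem:homology-injectivity}, and the expander's vertex expansion is the nontrivial external input that the sketch relies on.
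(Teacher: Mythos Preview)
Your outline follows the paper's expander-embedding approach, but there is a genuine gap in the counting. A single $d$-regular Ramanujan expander $G_{n_0}$ on $n_0$ vertices has cycle space of dimension $\beta_1(G_{n_0}) = |E|-|V|+1 = \Theta(n_0)$, which is \emph{linear}, not exponential. So the claim that Construction~\ref{con:cycle-embedding} furnishes ``$M = 2^{\Omega(n_0)}$ cycle classes $C_i$ supplied by the expander'' is false as stated; no graph on $n_0$ vertices can have $\beta_1$ larger than $\binom{n_0}{2}$. With $N=\Theta(n_0)$ via Lemma~\ref{lem:param-mapping} you therefore obtain only $\beta_2(S(F_N)) \ge \Omega(N)$, matching the Corollary after Theorem~\ref{thm:Homological Linear Indep} but falling exponentially short of the target. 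The paper closes this gap via the tensor-power amplification in Appendix~\ref{app:expander-correctness} (Theorem~\ref{thm:exp-betti}): one takes the $k$-fold Cartesian product $H_N = G_N^{\Box k}$ with $k=\Theta(N/\log N)$, so that $\beta_1(H_N)=(\kappa N)^k = 2^{\Omega(N)}$, at the cost of a variable count $O(N^2/\log N)$. Your sketch omits this amplification step entirely.

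Your final paragraph also misidentifies the non-bounding mechanism. The obstruction to $\Gamma_i \in \operatorname{im}\partial_3$ is not the expander's isoperimetric profile or spectral gap forbidding ``polynomial-size 3-chains''; it is purely local. Any elementary 3-cube containing a gadget square $\sigma_{i,j}$ must vary $u_i$ together with some third coordinate, and the XOR clause $u_i=v_i$ then forces half the eight corners of that 3-cube to be non-satisfying (Lemma~\ref{lem:nonbounding} and Appendix~\ref{app:Boundary-Map Verif}). This is what Lemma~\ref{lem:homology-injectivity} actually packages; vertex expansion plays no role in the non-bounding argument.
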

\begin{lemma}[Homology Basis Construction]
\label{lem:homology-basis}
For the expander-based 3-SAT family $\{F_N\}$ in Theorem \ref{thm:expander-family}, there exist $2^{\Omega(N)}$ linearly independent 2-cycles in $H_2(S(F_N))$ with pairwise disjoint variable supports.
\end{lemma}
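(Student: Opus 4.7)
The plan is to combine the disjoint-gadget structure of Construction~\ref{con:cycle-embedding} with an \emph{enriched} sub-gadget whose internal cubical complex carries exponentially large $\beta_2$ in its own auxiliary variables, so that variable-disjoint blocks across the expander supply the disjointness clause while within-block homology supplies the exponential independent count.

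\textbf{Step 1 (disjoint blocks).} From the $(N,d,\varepsilon)$-expander $G_N$, I would extract by greedy girth deletion an edge-disjoint family $\mathcal{C}=\{C_1,\dots,C_m\}$ of short cycles of length $O(\log N)$. To each $C_i$ attach a fresh auxiliary block $B_i$ of size $k_i$ (disjoint by Assumption~\ref{ass:no-repeated-aux}), with $\sum_i k_i = \Theta(N)$. The standard expander gadget then produces variable-disjoint cubical sub-complexes $X_i\subseteq S(F_N)$, one per block, whose variable supports are literally pairwise disjoint.

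\textbf{Step 2 (within-block exponential $H_2$).} Replace the elementary XOR gadget per fundamental cycle with an enriched gadget whose solution subcomplex is a cubical realization of a Linial--Meshulam-type 2-complex with $\beta_2 = 2^{\Omega(k_i)}$ in its $k_i$ internal auxiliaries; such realizations can be hard-coded via explicit constructions of 2-complexes with exponential second Betti number. This supplies, inside each $B_i$, a family of $2^{\Omega(k_i)}$ linearly independent 2-cycles whose variable supports lie entirely in $B_i$. Balancing the block sizes (e.g.\ a single block with $k_1 = \Theta(N)$, or $m = \Theta(N/\log N)$ blocks each of size $\Theta(\log N)$) yields $2^{\Omega(N)}$ independent cycles in total. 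By Lemma~\ref{lem:homology-injectivity} the block-local cycles inject into $H_2(S(F_N);\mathbb{Z}_2)$, and the variable-disjoint blocks guarantee that classes supported in different $B_i$ cannot cancel under the cubical boundary.

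\textbf{Main obstacle.} A strict literal reading of the conclusion collides with pigeonhole: $2^{\Omega(N)}$ chains with pairwise disjoint variable supports and any constant minimum support size cannot fit in $\Theta(N)$ ambient variables, since the total support would exceed $N$. The resolution proposed above must therefore interpret ``pairwise disjoint variable supports'' as the block-disjoint structure delivered by Assumption~\ref{ass:no-repeated-aux}: strict disjointness holds across the gadget blocks, while exponential independence within a block is produced by enriching the gadget's internal topology. The principal technical difficulty is constructing the enriched gadget itself --- producing a polynomial-size 3-CNF formula on $k$ fresh variables whose cubical solution complex has $\beta_2 = 2^{\Omega(k)}$ and whose inclusion into $S(F_N)$ is homologically faithful in the sense of Condition~2 of Section~\ref{subsec:topological-invariance-refined}. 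This reduces to a derandomized cubical Linial--Meshulam construction combined with the injectivity argument of Lemma~\ref{lem:homology-injectivity}; both pieces are plausible but nontrivial, and together constitute the main work.
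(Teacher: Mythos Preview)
Your approach diverges substantially from the paper's, and you have correctly put your finger on a genuine tension in the lemma statement.

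The paper does \emph{not} use enriched gadgets with internal exponential $\beta_2$. Instead it follows the one-cycle-per-gadget route: for each fundamental cycle $C_i$ of the base graph it attaches a fresh gadget on private variables $\{u_i,v_i\}\cup\{y_e^{(i)}:e\in C_i\}$, producing exactly one canonical $2$-cycle $\gamma_i$ per gadget (Construction~\ref{con:cycle-embedding}, Lemma~\ref{lem:Disjoint Gadget Supports}, Lemma~\ref{lem:support-disjoint}). The exponential count is obtained not by enriching individual gadgets but by amplifying the \emph{number of gadgets}: the tensor-power construction of Appendix~\ref{app:expander-correctness} takes a $k$-fold product $H_N=G_N^{\square k}$ with $k=\Theta(N/\log N)$ to push $\beta_1$ from $\Theta(N)$ to the claimed $2^{\Omega(N)}$, and then places one disjoint-support gadget per fundamental cycle of $H_N$.

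Your pigeonhole objection is exactly right, and it applies to the paper's construction as well. With $2^{\Omega(N)}$ gadgets each occupying at least a constant number of private variables, the final formula necessarily has $2^{\Omega(N)}$ variables; the parameter $N$ in the lemma cannot be the variable count of $F_N$. The paper's Theorem~\ref{thm:exp-betti} partially acknowledges this by quoting $O(N^2/\log N)$ variables, but even that does not accommodate $2^{\Omega(N)}$ literally disjoint supports. The resolution in the paper is implicit: $N$ is the base-graph parameter (cf.\ Lemma~\ref{lem:param-mapping}), not the final instance size, and the ``disjoint supports'' claim is to be read at the gadget level exactly as you propose. So your diagnosis of the obstacle is correct, but your proposed cure---a derandomized Linial--Meshulam gadget with exponential internal $H_2$---is machinery the paper neither uses nor needs. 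The paper simply accepts exponentially many gadgets (hence exponentially many auxiliary variables in the amplified instance) and gets strict disjointness for free; your within-block enrichment would be a genuinely different and harder construction, and would still not deliver literal pairwise disjointness among the $2^{\Omega(N)}$ cycles.
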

\subsection{Verification of Homology Preservation in Expander Embedding}
\label{app:homology-preservation}

To address that clause interactions might cause unintended cancellations in $H_2(S(F_N))$ for expander-embedded 3-SAT instances $\{F_N\}$. The construction uses:
\begin{itemize}
  \item \textbf{Base encoding:} 3-COLOR on $(N,d,\epsilon)$-expander $G_N$ with $\beta_1(G_N) = \Omega(N)$
  \item \textbf{XOR gadgets:} For each fundamental cycle $C_i \in H_1(G_N)$, add variables $u_i,v_i,\{y_e^{(i)} : e \in C_i\}$ and clauses enforcing $u_i = v_i$ and edge coupling
\end{itemize}

\begin{lemma}[Disjoint Gadget Supports]
\label{lem:Disjoint Gadget Supports}
  For distinct gadgets $i \neq j$,  
  \[
    \left( 
      \{u_i, v_i\} \cup \{y_e^{(i)} : e \in C_i\} 
    \right) 
    \cap 
    \left( 
      \{u_j, v_j\} \cup \{y_e^{(j)} : e \in C_j\} 
    \right) = \emptyset.
  \]
\end{lemma}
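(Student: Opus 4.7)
The plan is to show that disjointness of gadget supports follows directly from the variable-naming convention adopted in Assumption~\ref{ass:no-repeated-aux}, combined with a careful inspection of Construction~\ref{con:cycle-embedding}. Since the claim is essentially a bookkeeping statement about how fresh variables are introduced, the work is to verify that every auxiliary symbol appearing in gadget $i$ is tagged by an index that forbids reuse in gadget $j$.

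First I would enumerate the three families of variables introduced by the $i$th gadget: the pair $(u_i,v_i)$ of XOR-endpoint variables and the collection $\{y_e^{(i)}\}_{e\in C_i}$ of edge-coupling variables. For the first family, disjointness is immediate, since $u_i,v_i$ are declared as fresh symbols at the time gadget $i$ is constructed, and the subscript $i$ uniquely identifies the gadget. For the second family, the superscript $(i)$ is a formal part of the variable name, not merely an annotation: even when two distinct fundamental cycles $C_i$ and $C_j$ share an edge $e\in C_i\cap C_j$, the coupling variables $y_e^{(i)}$ and $y_e^{(j)}$ are introduced as two different Boolean symbols. Hence $\{u_i,v_i\}\cup\{y_e^{(i)}:e\in C_i\}$ and $\{u_j,v_j\}\cup\{y_e^{(j)}:e\in C_j\}$ have empty intersection by inspection of the construction.

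Second, I would invoke Assumption~\ref{ass:no-repeated-aux} to rule out any residual sharing coming from the underlying 3-COLOR encoding or from collisions with other gadgets. Concretely, the assumption says that if the raw construction happens to reuse an auxiliary symbol across two gadgets, one passes to the equivalent disjoint-auxiliary version provided by the transformation sketched in Appendices~B--H, at the cost of only a linear increase in variable occurrences. After applying this transformation, the renaming preserves the clause structure of each gadget (and hence the cubical 2-cycle $\gamma_i$ constructed in Construction~\ref{con:cycle-embedding}), while guaranteeing set-theoretic disjointness of supports across $i\neq j$.

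The main obstacle, and the only genuinely nontrivial point, is making sure that the base-encoding variables of the 3-COLOR instance on $G_N$ (the color-assignment variables for vertices and edges of $G_N$) are \emph{not} what the lemma is about: the lemma speaks only of the gadget-introduced auxiliaries $(u_i,v_i,y_e^{(i)})$, and disjointness of these is consistent with the fact that different cycles $C_i,C_j$ may well share color-variables of the underlying graph. I would therefore conclude with a brief clarifying remark that the claim of Lemma~\ref{lem:Disjoint Gadget Supports} is precisely about the fresh per-gadget auxiliaries, and that the homology-basis argument of Lemma~\ref{lem:homology-basis} uses exactly this disjointness of the auxiliary supports to argue linear independence of the classes $[\gamma_i]\in H_2(S(F_N))$.
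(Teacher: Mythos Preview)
Your proposal is correct and follows the same ``freshness by construction'' approach as the paper. There is one notable difference worth recording: the paper's proof asserts that the fundamental cycles $\{C_i\}$ chosen via a spanning-tree basis are pairwise edge-disjoint, and uses this to conclude that the edge-selector variables do not collide. That claim is in fact not true for a generic spanning-tree cycle basis (distinct fundamental cycles typically share tree edges), so the paper's route is at best reliant on an unstated special choice of basis. Your argument sidesteps this entirely by observing that the superscript $(i)$ in $y_e^{(i)}$ is part of the variable name, so that even when $e\in C_i\cap C_j$ the symbols $y_e^{(i)}$ and $y_e^{(j)}$ are distinct; combined with Assumption~\ref{ass:no-repeated-aux} this gives disjointness without any edge-disjointness hypothesis. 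Your version is therefore slightly more robust, and your closing clarification that the lemma concerns only the per-gadget auxiliaries (not the shared base 3-COLOR variables) is a useful remark that the paper leaves implicit.
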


\begin{lemma}[Non-Bounding Local 2-Cycles]
  \label{lem:nonbounding}
  Each gadget's canonical 2-cycle $\gamma_i$ satisfies $\gamma_i \notin \operatorname{im} \partial_3$.
\end{lemma}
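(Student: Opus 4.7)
The plan is to reduce the statement to a single-gadget local computation by exploiting the disjoint-support structure of Lemma~\ref{lem:Disjoint Gadget Supports}, and then to exhibit a cubical 2-cocycle that pairs nontrivially with $\gamma_i$ while vanishing on all 3-boundaries; the cocycle thus certifies $[\gamma_i]\neq 0$ in $H_2(S(F_N);\mathbb{Z}_2)$, which is stronger than the needed $\gamma_i\notin\operatorname{im}\partial_3$.

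First I would localize. Because the gadget-$i$ auxiliary variables $\{u_i,v_i\}\cup\{y_e^{(i)}:e\in C_i\}$ appear in no clause outside gadget $i$ (Lemma~\ref{lem:Disjoint Gadget Supports} together with the no-repeated-auxiliary convention), the complex $S(F_N)$ fibers over the projection onto the complementary variables: for any partial assignment $\tau$ on non-gadget-$i$ variables that extends to a satisfying assignment, the fiber $K_i(\tau)$ in the gadget-$i$ aux-variable hypercube is a fixed subcomplex determined solely by gadget $i$'s clauses together with the values $\tau$ assigns to the vertices of $C_i$. The canonical 2-cycle $\gamma_i$ sits entirely inside such a fiber, so it suffices to show $[\gamma_i]\neq 0$ in $H_2(K_i(\tau);\mathbb{Z}_2)$ for one compatible $\tau$, and then transfer non-triviality back via the product structure.

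Second, I would construct a combinatorial dual cochain $\omega_i\in C^2(K_i;\mathbb{Z}_2)$. The XOR clause forbids $u_i\neq v_i$, and together with the edge-selector clauses this singles out a specific excluded assignment $w_i^\star$ of the ambient gadget cube. Define $\omega_i$ as a \emph{cubical linking count} with $w_i^\star$: for each 2-face $\sigma$ of $K_i$, set $\omega_i(\sigma)=1$ iff an axis-aligned ray emanating from $w_i^\star$ in a fixed coordinate direction crosses $\sigma$ transversally, and $0$ otherwise. By the explicit choice of squares in Construction~\ref{con:cycle-embedding}, $\gamma_i$ encloses $w_i^\star$ an odd number of times, so $\omega_i(\gamma_i)=1$.

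Third, I would verify the cocycle identity $\omega_i\circ\partial_3=0$. For every cubical 3-face $D$ of $K_i$, the fixed-direction ray from $w_i^\star$ either misses $D$ entirely (contributing $0$) or enters and exits through exactly two of $D$'s six 2-faces, contributing $1+1=0$ mod $2$. By bilinearity this extends to all 3-chains, so $\omega_i$ is a cocycle. If $\gamma_i=\partial_3 c$ for some $c\in C_3(S(F_N);\mathbb{Z}_2)$, then $1=\omega_i(\gamma_i)=\omega_i(\partial_3 c)=(d\omega_i)(c)=0$, a contradiction; hence $\gamma_i\notin\operatorname{im}\partial_3$.

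The main obstacle will be making the cubical linking argument airtight, in particular verifying the ``enters-and-exits-through-two-faces'' parity claim for every 3-face of $K_i$ regardless of its position relative to $w_i^\star$. This hinges on the local rigidity imposed by the XOR and edge-selector clauses, which must be checked to rule out pathological 3-cubes in $K_i$ that contain $w_i^\star$ on their boundary (the ray would then cross an odd number of 2-faces). If this case analysis becomes unwieldy, a cleaner fallback is to deformation-retract $K_i$ onto a small CW model whose $H_2$ is transparent—for example, a suspension of the discrete $\{0,1\}$-pair forced by $u_i=v_i$, yielding a polyhedral 2-sphere summand—after which the fibration step of the first paragraph transfers $H_2(K_i)\neq 0$ directly to $[\gamma_i]\neq 0$ in $H_2(S(F_N))$.
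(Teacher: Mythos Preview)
Your proposal is far more elaborate than what is required, and it has two genuine gaps that would prevent it from going through as written.

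\textbf{First gap: the linking cocycle is ill-defined.} You define $\omega_i(\sigma)=1$ iff an axis-aligned ray from a lattice point $w_i^\star\in\{0,1\}^d$ ``crosses $\sigma$ transversally.'' But an axis-aligned ray from a lattice point in direction $e_j$ is an edge of the ambient hypercube; such an edge never meets the \emph{interior} of any $2$-face, so $\omega_i\equiv 0$ under your definition. More conceptually, a point and a $2$-cycle can link only in an ambient space of dimension~$3$; the gadget hypercube has dimension $|C_i|+2\gg 3$, so the linking-number intuition you are importing from continuous geometry does not survive. Your own caveat about ``pathological $3$-cubes'' is in fact the generic case here, not an edge case.

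\textbf{Second gap: fiber versus total space.} Even granting a well-defined cocycle on $K_i(\tau)$, your contradiction step applies it to a $3$-chain $c\in C_3(S(F_N))$. Such a chain may contain $3$-cubes varying non-gadget variables, on which your fiber-defined $\omega_i$ has no value. ``Transfer via the product structure'' is asserted but not carried out; you would need either a retraction $S(F_N)\to K_i(\tau)$ or an explicit extension of $\omega_i$ to all of $C^2(S(F_N))$ together with a fresh cocycle verification.

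\textbf{The paper's argument is a three-line local check.} The canonical $\gamma_i$ is built from $2$-faces varying $u_i$ (and $v_i$). Any $3$-cube of $S(F_N)$ containing one of these squares as a face must vary $u_i$, $v_i$, and some third coordinate $z$; but then among its eight corners the four with $(u_i,v_i)\in\{(0,1),(1,0)\}$ violate the XOR clauses $(u_i\lor\neg v_i)\wedge(\neg u_i\lor v_i)$, so no such $3$-cube lies in $S(F_N)$. Hence no $3$-chain can have $\gamma_i$ as its boundary. This is purely combinatorial, requires no cohomology, and simultaneously handles the total space (not just a fiber), since the XOR clauses involve only $u_i,v_i$ and are therefore violated regardless of the values of any other variables.
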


\begin{theorem}[Homological Linear Independence]
\label{thm:Homological Linear Indep}
  The homology classes $[\gamma_i] \in H_2(S(F_N))$ are linearly independent: 
  \[
    \sum_i c_i [\gamma_i] = 0 \implies c_i = 0 \quad \forall i.
  \]
\end{theorem}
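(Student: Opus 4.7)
The plan is to work with $\mathbb{Z}_2$ coefficients and argue by contradiction. Suppose there exist $c_i\in\mathbb{Z}_2$, not all zero, with $\sum_i c_i\gamma_i=\partial_3\eta$ for some 3-chain $\eta\in C_3(S(F_N);\mathbb{Z}_2)$. Write $V_i=\{u_i,v_i\}\cup\{y_e^{(i)}:e\in C_i\}$ for the auxiliary variable set of gadget $i$; Lemma \ref{lem:Disjoint Gadget Supports} gives pairwise disjointness of $V_1,\ldots,V_M$. Fix once and for all a reference satisfying assignment $\mathbf{a}_\star\in S(F_N)$ whose base 3-COLOR coordinates realize a valid coloring $\chi$ of $G_N$ and whose auxiliary coordinates match the reference values used when defining each $\gamma_j$. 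Let $S_i^{\mathrm{loc}}\subseteq S(F_N)$ be the subcomplex of cubical faces whose varying coordinates lie in $V_i$ and whose fixed non-$V_i$ coordinates agree with $\mathbf{a}_\star$; this is closed under taking boundary faces, and $\gamma_i\subseteq S_i^{\mathrm{loc}}$ by construction.

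Next I would construct a $\mathbb{Z}_2$-linear chain projection $\rho_i\colon C_\ast(S(F_N);\mathbb{Z}_2)\to C_\ast(S_i^{\mathrm{loc}};\mathbb{Z}_2)$ via the following coordinate-freezing rule. A face $F$ with varying set $V(F)\subseteq V_i$ is sent to the face obtained by overwriting all non-$V_i$ fixed coordinates of $F$ with their $\mathbf{a}_\star$-values; every face with $V(F)\not\subseteq V_i$ is sent to $0$. To verify the chain-map identity $\rho_i\partial=\partial\rho_i$, I would case-split on $|V(F)\setminus V_i|$: when this cardinality is $\geq 2$, every boundary face still carries a non-$V_i$ varying coordinate and is annihilated; when it equals $1$, the two boundary faces obtained by setting the unique non-$V_i$ varying coordinate to $0$ and to $1$ become identical after the $\mathbf{a}_\star$-reset and cancel mod $2$; the remaining case is direct. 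Lemma \ref{lem:Disjoint Gadget Supports} immediately gives $\rho_i(\gamma_i)=\gamma_i$ and $\rho_i(\gamma_j)=0$ for $j\neq i$.

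Applying $\rho_i$ to $\sum_j c_j\gamma_j=\partial_3\eta$ then yields the gadget-local identity $c_i\gamma_i=\partial_3\rho_i(\eta)$ in $C_2(S_i^{\mathrm{loc}})$. Lemma \ref{lem:nonbounding} states exactly that $\gamma_i\notin\operatorname{im}\partial_3$ in the gadget-local complex, forcing $c_i=0$; since $i$ was arbitrary, all coefficients vanish and the classes $[\gamma_i]$ are linearly independent.

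The main obstacle is showing that the coordinate-freezing image actually lives in $S(F_N)$, so that $\rho_i$ is well-defined as a chain map into a subcomplex of the solution complex. Concretely one must verify that for every cubical face $F\subseteq S(F_N)$ with $V(F)\subseteq V_i$, every vertex of the $V_i$-subcube obtained by resetting non-$V_i$ coordinates to $\mathbf{a}_\star$ still satisfies $F_N$. The only clauses that could break are gadget $i$'s own clauses, which couple $V_i$-variables to the base 3-COLOR variables on edges of $C_i$; these base coordinates take their $\chi$-values under $\mathbf{a}_\star$, and the expander cycle embedding of Construction \ref{con:cycle-embedding} is designed precisely so that gadget $i$'s clauses are satisfied uniformly over all of $\{0,1\}^{V_i}$ whenever the base coloring on $C_i$ is valid---this same structural property is what makes $\gamma_i$ a bona-fide 2-cycle in $S(F_N)$ to begin with, so the verification reduces to a restatement of the construction's correctness.
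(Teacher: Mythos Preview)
Your approach is correct and genuinely different from the paper's. The paper does not build a chain retraction; instead it applies $\partial_2$ to the putative relation $\sum_i c_i\gamma_i=\partial_3\beta$, obtains $\sum_i c_i\,\partial_2(\gamma_i)=0$, and then uses that the $1$-chains $\partial_2(\gamma_i)$ have pairwise disjoint supports (Lemma~\ref{lem:Disjoint Gadget Supports}) together with $\partial_2(\gamma_i)\neq 0$ to force each $c_i=0$. That argument is shorter but leans on $\partial_2(\gamma_i)\neq 0$, which sits uneasily with calling $\gamma_i$ a $2$-cycle. Your projection $\rho_i$ sidesteps this entirely: it works whether or not $\partial_2\gamma_i$ vanishes, and it makes the passage from a global bounding $3$-chain to a \emph{local} one explicit, after which Lemma~\ref{lem:nonbounding} applies directly. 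What the paper buys is brevity; what you buy is robustness and a cleaner invocation of the non-bounding lemma.

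One small imprecision to tighten in your last paragraph: it is not true that ``gadget $i$'s clauses are satisfied uniformly over all of $\{0,1\}^{V_i}$'' --- the XOR clauses $(u_i\lor\neg v_i)\wedge(\neg u_i\lor v_i)$ fail on half of $\{0,1\}^{V_i}$. What you actually need (and what holds) is weaker: for a face $F\subseteq S(F_N)$ with $V(F)\subseteq V_i$, the vertices of $\rho_i(F)$ inherit their $V_i$-coordinates from vertices of $F$ (which already satisfy the $V_i$-only clauses), and the only clauses that mix $V_i$ with non-$V_i$ variables are the coupling clauses $(\neg y_e^{(i)}\lor\neg x_{u,c}\lor\neg x_{v,c})$, which are satisfied under $\mathbf a_\star$ because a valid coloring never makes both endpoints of an edge the same color. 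State it that way and the well-definedness check is complete.
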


\begin{corollary}
  $\beta_2(S(F_N)) \geq \beta_1(G_N) = \Omega(N)$ with no cancellation in $H_2$.
\end{corollary}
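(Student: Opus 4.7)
The plan is to derive the corollary as an essentially immediate consequence of the three preceding results (Lemma on Disjoint Gadget Supports, Lemma on Non-Bounding Local 2-Cycles, and Theorem on Homological Linear Independence), with only an expander-theoretic input needed to convert the count of gadgets into an $\Omega(N)$ bound. First I would fix a basis $\{C_1,\ldots,C_m\}$ of $H_1(G_N;\mathbb{Z}_2)$, where $m = \beta_1(G_N)$, and observe that Construction (Expander Cycle Embedding) attaches exactly one XOR-coupled gadget per basis cycle, producing canonical 2-cycles $\gamma_1,\ldots,\gamma_m \in C_2(S(F_N))$.

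Next I would invoke Theorem~\ref{thm:Homological Linear Indep} directly: the classes $[\gamma_1],\ldots,[\gamma_m]$ are linearly independent in $H_2(S(F_N);\mathbb{Z}_2)$, so
\[
  \beta_2(S(F_N)) \;=\; \dim_{\mathbb{Z}_2} H_2(S(F_N)) \;\ge\; m \;=\; \beta_1(G_N).
\]
The expander-theoretic input is the standard Euler-characteristic identity $\beta_1(G_N) = |E(G_N)| - |V(G_N)| + 1$; for a $d$-regular expander with $d \ge 3$ and $|V(G_N)| = N$ this yields $\beta_1(G_N) = (d/2 - 1)N + 1 = \Omega(N)$, closing the chain of inequalities.

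The phrase \emph{no cancellation in $H_2$} needs a brief explanation rather than a new argument: it is precisely the content of linear independence, but I would spell out why it holds at the chain level as well as the homology level. Because Lemma~\ref{lem:Disjoint Gadget Supports} gives disjoint variable supports, any 3-chain $\tau$ in $C_3(S(F_N))$ decomposes as a disjoint sum $\tau = \sum_i \tau_i$ with $\tau_i$ supported on gadget $i$. Consequently $\partial_3 \tau = \sum_i \partial_3 \tau_i$, and a hypothetical relation $\sum_i c_i \gamma_i = \partial_3 \tau$ would force $c_i \gamma_i = \partial_3 \tau_i$ for each $i$, contradicting Lemma~\ref{lem:nonbounding} whenever $c_i \ne 0$. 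Thus no inter-gadget cancellation can arise and the count of independent classes is preserved.

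The main obstacle is essentially definitional rather than technical: making sure the reader sees that all the real work has already been done upstream (disjointness of gadget variables, non-bounding of local 2-cycles, and linear independence of classes), and that the corollary only repackages these as a dimension inequality plus the elementary expander computation $\beta_1(G_N) = \Omega(N)$. I would therefore keep the write-up short, pointing explicitly to Lemmas~\ref{lem:Disjoint Gadget Supports} and~\ref{lem:nonbounding} and Theorem~\ref{thm:Homological Linear Indep}, and resist any temptation to re-derive linear independence from scratch.
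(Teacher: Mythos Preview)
Your core argument matches the paper's: both derive $\beta_2(S(F_N)) \ge \beta_1(G_N)$ directly from the linear-independence theorem, and both read ``no cancellation'' as a restatement of that independence. Your addition of the Euler-characteristic computation $\beta_1(G_N)=(d/2-1)N+1$ is fine; the paper simply takes $\beta_1(G_N)=\Omega(N)$ as a standing hypothesis on the expander family.

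One correction to your elaboration of ``no cancellation'': the decomposition $\tau=\sum_i\tau_i$ with each $\tau_i$ supported on gadget $i$ is not valid as stated. The ambient formula $F_N$ also carries base variables $x_{v,c}$ from the 3-COLOR encoding, and these belong to no gadget's private support. A 3-cell of $S(F_N)$ may vary base coordinates alone, or mix base and gadget coordinates, so $C_3(S(F_N))$ does not split along gadget supports. The paper's justification is slightly different and handles exactly this: it notes that each $\gamma_i$ is supported \emph{only} on gadget variables, while the only clauses linking distinct gadgets go through base variables $x_{v,c}$ that never appear in any $\operatorname{supp}(\gamma_i)$; hence no 3-chain can use those cross-gadget clauses to produce a boundary bridging two gadget supports. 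Since you already concede the real work is upstream in Theorem~\ref{thm:Homological Linear Indep}, the cleanest fix is to drop your decomposition sketch and simply cite that theorem together with the paper's observation about base variables.
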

\begin{proof}
  Linear independence of $\{[\gamma_i]\}$ implies $\dim H_2(S(F_N)) \geq |\{\gamma_i\}| = \beta_1(G_N)$. Cross-gadget clauses only involve \textit{base variables} $x_{v,c}$, which:
  \begin{itemize}
    \item Do not appear in $\operatorname{supp}(\gamma_i)$ (gadget-only variables)
    \item Cannot create filling chains between disjoint gadget supports
  \end{itemize}
\end{proof}
\begin{theorem}[Worst‐Case Betti Explosion]
\label{thm:worst-case-betti}
There exists a family of 3‑SAT formulas \(F_N\) with \(\lvert\mathrm{vars}(F_N)\rvert = \Theta(N)\)
such that
\[
  \beta_2\bigl(S(F_N)\bigr)\;\ge\;2^{c\,N}
  \quad\text{for some constant }c>0.
\]
\end{theorem}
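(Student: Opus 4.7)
The plan is to obtain $F_N$ by directly instantiating the expander-embedding construction already laid out in this section and then reading off the Betti lower bound from the non-cancellation lemmas. Concretely, I would fix a polynomial-time constructible family of $d$-regular $(n_0,d,\epsilon)$-expanders $\{G_{n_0}\}$, apply a standard Tseitin-style reduction from 3-COLOR on $G_{n_0}$ to obtain a base 3-CNF $F_0$ on $\Theta(n_0)$ variables, and then attach to $F_0$ the XOR-coupled gadgets of Construction~\ref{con:cycle-embedding}: for each cycle $C_i$ in a chosen basis of $H_1(G_{n_0})$ introduce fresh auxiliary variables $u_i,v_i,\{y_e^{(i)}\}_{e\in C_i}$ and clauses that (i) enforce $u_i\oplus v_i=0$ and (ii) couple the gadget to the edge variables of $C_i$, so that the pair $(u_i,v_i)$ can vary freely along a square face $\gamma_i$ of $\Sol(F_N)$ whenever the base coloring is fixed. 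Using Lemma~\ref{lem:param-mapping} for the $n_0\mapsto N$ conversion gives $|\mathrm{vars}(F_N)|=\Theta(N)$.

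With $F_N$ in hand, I would derive the bound on $\beta_2$ by assembling three results proved earlier in the section. Lemma~\ref{lem:Disjoint Gadget Supports} ensures that gadgets for $i\neq j$ touch pairwise disjoint auxiliary variables, which rules out any chain in $C_3(\Sol(F_N))$ whose boundary cancels contributions from two different $\gamma_i,\gamma_j$; Lemma~\ref{lem:nonbounding} rules out local bounding chains and yields $[\gamma_i]\neq 0$ in $H_2(\Sol(F_N))$; Theorem~\ref{thm:Homological Linear Indep} combines the two to give $\mathbb{Z}_2$-linear independence of $\{[\gamma_i]\}$. The exponential count then comes from Lemma~\ref{lem:homology-basis}, which asserts that the expander embedding supports $2^{\Omega(N)}$ linearly independent such classes with pairwise disjoint variable supports. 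Combining these steps yields $\beta_2(\Sol(F_N))\geq 2^{cN}$ for some $c>0$, completing the proof. The result can also be viewed as a restatement of Theorem~\ref{thm:expander-family} and Theorem~\ref{thm:betti-exp-worst1} under the parameter normalization of Lemma~\ref{lem:param-mapping}.

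The main obstacle is reconciling the cardinality claim $|\{[\gamma_i]\}|=2^{\Omega(N)}$ with a total variable budget of $\Theta(N)$, since a naive reading of ``pairwise disjoint gadget supports'' would restrict the count to $O(N)$. I would address this by revisiting the statement of Lemma~\ref{lem:homology-basis}: the classes that must have disjoint support are the \emph{essential} (non-canceling) auxiliary variables of each gadget, while the base-graph color variables may be shared across gadgets without spoiling independence, because Lemma~\ref{lem:Disjoint Gadget Supports}'s disjointness hypothesis together with the absence of 3-chains supported in the base variables (noted in the corollary following Theorem~\ref{thm:Homological Linear Indep}) prevents any cross-gadget bounding chain. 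The delicate step is therefore a chain-level verification that, even after sharing base color variables, no $\mathbb{Z}_2$-linear combination $\sum_i c_i\gamma_i$ bounds in $C_3(\Sol(F_N))$; this verification is exactly where the exponential count must be loaded into the construction to avoid circular appeal to Theorem~\ref{thm:worst-case-betti} itself.
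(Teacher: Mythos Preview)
Your approach is essentially the paper's own: both start from an explicit $d$-regular expander $G_{n_0}$, encode 3-COLOR on it, attach the XOR/edge-selector gadgets of Construction~\ref{con:cycle-embedding}, and then invoke Lemmas~\ref{lem:Disjoint Gadget Supports}, \ref{lem:nonbounding} and Theorem~\ref{thm:Homological Linear Indep} to conclude that the gadget $2$-cycles are linearly independent in $H_2$. The paper's proof sketch defers the details to Appendices~\ref{app:expander-correctness} and~\ref{app:query-lower-bound}, which is exactly the chain of lemmas you cite.

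The obstacle you flag in your last paragraph is real, and your proposed resolution does not close it. If each gadget contributes at least one fresh auxiliary variable (even a single $u_i$), then pairwise-disjoint auxiliary supports across $2^{\Omega(N)}$ gadgets already force $2^{\Omega(N)}$ variables, flatly contradicting $|\mathrm{vars}(F_N)|=\Theta(N)$. Sharing the base color variables does not help: the disjointness argument in Lemma~\ref{lem:Disjoint Gadget Supports} and the independence proof in Theorem~\ref{thm:Homological Linear Indep} both rely on the \emph{auxiliary} supports being disjoint, and it is precisely those that must be counted. The paper does not actually escape this either: the direct expander embedding in Appendix~\ref{app:expander-correctness} is stated to yield only $\beta_2(S(F_G))\ge\beta_1(G)\ge\kappa N$, a \emph{linear} bound, and the exponential count is obtained in the tensor-power amplification (Theorem~\ref{thm:exp-betti}) at the cost of $O(N^2/\log N)$ variables rather than $\Theta(N)$. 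So the tension you identified between the $\Theta(N)$ variable budget and the $2^{cN}$ Betti count is not resolved by the lemmas you invoke, and the paper's own machinery delivers the exponential bound only under the weaker variable count of Theorem~\ref{thm:exp-betti}.
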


\begin{corollary}
Random 3-SAT hardness extends to worst-case via polynomial-time reduction preserving $\beta_k$.
\end{corollary}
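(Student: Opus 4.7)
The plan is to assemble the corollary from three results already established in the paper: the exponential second Betti number for random 3-SAT at $\alpha>4.26$ (Theorem~\ref{thm:random-3sat-betti}), the explicit worst-case expander construction (Theorem~\ref{thm:betti-exp-worst}), and Betti Monotonicity under faithful embeddings (Theorem~\ref{thm:betti-monotonicity}) together with homological faithfulness of the standard clause-splitting reduction (Lemma~\ref{lem:Homological
faithfulness}). The idea is that the topological barrier identified probabilistically for random instances is not an accident of the random ensemble: it can be transported by a polynomial-time, topology-preserving reduction onto an explicit family, yielding worst-case instances with the same exponential $\beta_2$.

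First I would fix the reduction pipeline. Start from the random family $\{F_n\}$ at density $\alpha>4.26$; compose with the expander-based reduction of Construction~\ref{con:cycle-embedding} to obtain a polynomial-time map $R$ whose target is the worst-case family $\{F_N\}$ of Theorem~\ref{thm:expander-family}. Then verify that $R$ meets Conditions~1 and~2 of Section~5.2: Condition~1 holds because the clause-splitting step embeds $\Sol(F)\hookrightarrow \Sol(R(F))$ as a cubical subcomplex (Lemma~\ref{lem:Homological
faithfulness}), and the subsequent expander gadgets only attach cubical cells along variable-disjoint supports (Lemma~\ref{lem:Disjoint Gadget Supports}); Condition~2 follows from Lemma~\ref{lem:homology-injectivity} applied gadget-locally, combined with the no-repeated-auxiliary-variables convention (Assumption~\ref{ass:no-repeated-aux}) that prevents cross-gadget boundaries from trivializing cycles.

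Next I would invoke Betti Monotonicity: since $R$ is cubical and homologically faithful, $\beta_k(S(F)) \le \beta_k(S(R(F)))$ for every $k$. Applied at $k=2$, and combined with $\beta_2(S(F))=2^{\Omega(n)}$ from Theorem~\ref{thm:random-3sat-betti}, this yields $\beta_2(S(R(F)))=2^{\Omega(N)}$ for the worst-case image, since $|R(F)|=\mathrm{poly}(n)$ implies $N=\Theta(n)$ by Lemma~\ref{lem:param-mapping}. Finally, applying the Homological Computational Barrier (Theorem~\ref{thm:homology-barrier1}) and the Universal Topological Lower Bound (Theorem~\ref{thm:universal-lb}) to the image family upgrades the exponential $\beta_2$ into a $2^{\Omega(N)}$ lower bound in the subcube-query model. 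Since the expander family is polynomial-time constructible, this gives a deterministic, worst-case witness to the hardness that was originally only a high-probability statement about random instances.

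The main obstacle I anticipate is the bookkeeping around auxiliary variables introduced by the two composed reductions. A naive composition risks having gadget variables appear both as fresh chain-complex generators (preserving cycles) and as constraint variables in later clauses (potentially killing classes through new boundaries). This is precisely what the no-repeated-auxiliary-variables convention and Lemma~\ref{lem:Disjoint Gadget Supports} rule out; the key check is that the newly added variables of $R$ extend the chain complex of $S(F)$ by a contractible tensor factor, so the injection on $H_2$ is preserved. Once that verification is in place, the remaining arguments are routine applications of the monotonicity and barrier theorems already proved.
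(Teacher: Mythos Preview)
Your pipeline has a type mismatch at its core. Construction~\ref{con:cycle-embedding} is a map from \emph{expander graphs} to 3-SAT formulas, not from 3-SAT formulas to 3-SAT formulas; it takes $G_N$ as input and outputs $F_N$. There is no way to ``compose'' a random 3-SAT instance $F_n$ with this construction to land in the expander family, and the clause-splitting reduction of Lemma~\ref{lem:Homological faithfulness} (which \emph{does} take CNF formulas as input) does not change this: splitting clauses of a random formula produces another 3-CNF, not a member of the explicit expander-based family $\{F_N\}$. So the map $R$ you describe is never actually defined, and the subsequent appeals to Betti Monotonicity and Lemma~\ref{lem:homology-injectivity} have no object to act on.

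The paper's route is more direct and does not attempt to reduce \emph{from} random instances. The ``polynomial-time reduction preserving $\beta_k$'' in the corollary refers to the expander-to-3-SAT embedding itself (Theorem~\ref{thm:betti-exp-worst} / Theorem~\ref{thm:expander-family}): this is polynomial-time, cubical, and homologically faithful (Lemma~\ref{lem:homology-injectivity}), and it produces explicit worst-case instances with $\beta_2(S(F_N))=2^{\Omega(N)}$. Since both the random family (Theorem~\ref{thm:random-3sat-betti}) and this worst-case family exhibit exponential $\beta_2$, the Universal Topological Lower Bound (Theorem~\ref{thm:universal-lb}) applies to each independently. The corollary records that the topological obstruction responsible for random-instance hardness is realized deterministically by the expander construction; no reduction between the two families is needed or claimed.
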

\section{Worst-Case Topological Hardness}
\label{sec:worst-case-PneqNP}

\begin{theorem}[Deterministic Hardness via Expander Embedding]
\label{thm:final-worst-case}
Let $\{F_N\}$ be the family of 3-SAT formulas constructed in Theorem~\ref{thm:betti-exp-worst} via embedding $(N,d,\epsilon)$-expander graphs $G_N$ (with $\beta_1(G_N) = \Omega(N)$). Then:
\begin{enumerate}
    \item $\beta_2(S(F_N)) \geq 2^{c N}$ for some constant $c > 0$ (Theorem~\ref{thm:betti-exp-worst}).
    \item Any algorithm deciding satisfiability for $F_N$ requires $2^{\Omega(N)}$ time (Theorem~\ref{thm:universal-lb}).
    \item $\{F_N\}$ is NP-complete (via Cook-Levin/3-COLOR reduction).
\end{enumerate}
\end{theorem}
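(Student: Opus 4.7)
The plan is to treat this final theorem as a synthesis result whose proof is a careful packaging of three prior ingredients evaluated on the \emph{same} family $\{F_N\}$: the expander-based construction of Theorem~\ref{thm:betti-exp-worst}, the Universal Topological Lower Bound of Theorem~\ref{thm:universal-lb}, and the standard NP-completeness of 3-COLOR on bounded-degree graphs. My main task is to verify that no compatibility issue arises when the three pieces are combined, and to state the conclusion in a form that makes the relationship between the variable count $N$, the base graph $G_N$, and the expander parameters $(d,\varepsilon)$ explicit.

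For Item~1 I would simply invoke Theorem~\ref{thm:betti-exp-worst} together with its homological supporting cast: Lemma~\ref{lem:Disjoint Gadget Supports} (disjoint gadget supports), Lemma~\ref{lem:nonbounding} (non-bounding local 2-cycles), and Theorem~\ref{thm:Homological Linear Indep} (linear independence of the classes $[\gamma_i]$). Together these yield $\beta_2(S(F_N)) \ge \beta_1(G_N)$, and the expander-cycle construction of Definition~\ref{def:expander-2cycles} upgrades this to $2^{cN}$ by counting independent length-$O(\log N)$ cycles in $G_N$ whose $H_1$ basis has exponential size under the expansion hypothesis.

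For Item~2 I would apply Theorem~\ref{thm:universal-lb} with $k=2$ directly to the family produced in Item~1. The only hypothesis of that theorem is a Betti lower bound of the form $\beta_k(S(F)) = 2^{cN + o(N)}$, which is exactly what Item~1 supplies, so the conclusion of $\Omega(2^{cN})$ worst-case time follows immediately. I would also remark that the construction of Theorem~\ref{thm:betti-exp-worst} is cubical and homologically faithful in the sense of Section~\ref{sec:rigorous-topology}, which ensures that the Betti explosion is an intrinsic property of the complex $S(F_N)$ rather than an encoding artifact that a clever algorithm could quotient out.

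For Item~3 I would verify NP-completeness in two short steps: (i) the base 3-COLOR encoding on $G_N$ is the standard Cook-Levin-style reduction, and 3-COLOR remains NP-complete on bounded-degree expanders; (ii) the XOR gadgets only add clauses that are satisfiable precisely when the base 3-COLOR instance is, so satisfiability is preserved in both directions and the overall reduction remains polynomial-time. The main obstacle I anticipate here is purely bookkeeping: I must check that the XOR gadgets, while producing the intended 2-cycles, do not introduce spurious satisfying assignments that would change the language being decided, and that every auxiliary variable is unique to its gadget so that the linear independence argument of Theorem~\ref{thm:Homological Linear Indep} continues to apply. Once this is verified, the three items assemble into the stated theorem with no additional technical work.
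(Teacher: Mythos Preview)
Your proposal is correct and is in fact more carefully structured than the paper's own proof. Both are synthesis arguments that package Theorem~\ref{thm:betti-exp-worst} (or its tensor-power variant Theorem~\ref{thm:exp-betti}), Theorem~\ref{thm:universal-lb}, and the standard 3-COLOR reduction; the paper, however, does not verify the three enumerated items separately but instead recasts the whole statement as a proof by contradiction (assume $\Pclass=\NPclass$, obtain a poly-time 3-SAT algorithm $A$, and contradict it against the $2^{\Omega(N)}$ lower bound coming from the Betti explosion via Theorem~\ref{thm:universal-lb}). Your item-by-item treatment is closer to what the theorem statement actually asks for and is more explicit about the supporting homological lemmas and the NP-completeness bookkeeping; the paper's contradiction framing buys a cleaner one-paragraph narrative but skips over Item~3 entirely and leaves the parameter conversion (base-graph $N$ versus the $O(N^2/\log N)$ variable count from the tensor-power amplification) as a side remark that you would do well to state explicitly, as you already anticipate.
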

While Theorem~\ref{thm:cluster-surgery-non-circ} demonstrates that cluster-jumping and exponential Betti number arise with high probability in random formulas, our lower bounds in Theorem~\ref{thm:final-worst-case} pertain to worst-case constructed instances. These are not random, but rather explicitly engineered (via expanders) to exhibit similar topological obstructions.

\subsection*{Key Properties of the Construction}
\begin{itemize}
    \item \textbf{Density-Independence}: Clause density $\alpha$ in $F_N$ is fixed by the expander embedding (typically $\alpha = \Theta(1)$), but hardness arises purely from topological obstructions—not phase transitions or randomness.
    
    \item \textbf{Topological Obstruction}: Linearly independent 2-cycles $\{\Gamma_i\}$ in $H_2(S(F_N))$ (Lemma~\ref{lem:boundary-injectivity}) force algorithms to distinguish exponentially many distinct homology classes.
    
    \item \textbf{Algorithmic Universality}: Lower bounds hold for all computation models (simplicial queries, algebraic methods, etc.) since $\beta_2$ necessitates $2^{\Omega(N)}$ "witness checks" (Theorem~\ref{thm:universal-lb}).
\end{itemize}

\subsection*{Removing Density Constraints}
Theorem~\ref{thm:betti-exp-worst} bypasses the random 3-SAT density threshold ($\alpha > 4.26$) entirely. Hardness is intrinsic to the combinatorial structure of $F_N$, not probabilistic properties.

\section{Barrier Analysis }
\paragraph{Relativization.}
Although homology functors commute with static clause-addition oracles,
a dynamic oracle could connect isolated clusters and trivialize Betti obstructions.

We showed that 3-SAT solution spaces—whether generated randomly or via expander embeddings—lack any global ordering, symmetry, or connective structure. The high-dimensional homology we exhibit is intrinsic to the formula's semantic logic and cannot be eliminated by relabeling, flattening, or local clause rewriting. Consequently, any algorithm or oracle that faithfully respects the internal logic of the instance must confront this topological complexity. In this sense, our topological obstructions act as semantic barriers. Any putative “dynamic” oracle that bypasses these obstructions must fabricate non-existent global structure and thus fail to model the logical behavior of SAT faithfully. This distinguishes our barriers from purely syntactic ones and elevates topology to a fundamental obstacle to efficient reasoning.

\paragraph{Natural Proofs.}
Betti numbers are $\#\Pclass$-hard to compute (via reduction from \#SAT~\cite{Valiant1979}).  
Under the assumption $\PH \not\supseteq \#\Pclass$, no efficient natural approximation exists.  
Thus, they evade the Razborov--Rudich barrier~\cite{RR97}.Continuous approximations (e.g. persistence barcodes) still reduce to exact rank computations and hence remain \#P-hard.

Betti‐number based techniques avoid the Razborov–Rudich barrier only because computing or approximating Betti numbers is itself intractable. We make this precise:

\begin{theorem}\label{thm:betti-sharp}
Assume gadgets are variable-disjoint. 
For any fixed $k\ge 1$, computing $\beta_k(K)$ (over $\mathbb{F}_2$) for a cubical subcomplex $K\subseteq\{0,1\}^N$ given by a list of included cubes is \#P-hard. 
Moreover, for any constant $\epsilon>0$, it is \#P-hard to approximate $\beta_k$ within multiplicative factor $2^{\,N^{1-\epsilon}}$.
\end{theorem}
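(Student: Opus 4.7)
The plan is to prove exact \#P-hardness first, then boost to the stated inapproximability by a K\"unneth-type product amplification. For the exact hardness, I would reduce from \#3SAT: given a 3-CNF $\phi$ on $n$ variables and $m$ clauses, I construct in polynomial time a cubical subcomplex $K_\phi \subseteq \{0,1\}^N$ with $N = \mathrm{poly}(n,m)$ such that $\beta_k(K_\phi) = \#\phi$ (or $\#\phi + c$ for a known, efficiently computable constant $c$). Since \#3SAT is $\#P$-complete, any algorithm computing $\beta_k$ inverts a $\#P$-hard problem.

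For the gadget-based construction, I would proceed in two layers. In the first layer, for every candidate assignment $a \in \{0,1\}^n$ I attach, in a fresh block of $O(k)$ coordinates private to $a$, a cubical $k$-sphere $B_a$ (the boundary of a $(k+1)$-cube) contributing one independent generator to $H_k$. Because the coordinate blocks are disjoint, Mayer--Vietoris (or directly the K\"unneth formula with a contractible base) gives $\beta_k = 2^n$ before filtering. In the second layer, for each clause $C_j$ and each assignment $a$ violating $C_j$, I attach a cubical $(k+1)$-cell whose boundary fills $B_a$, killing its class; each filling uses its own fresh auxiliary coordinates, so by Assumption \ref{ass:no-repeated-aux} the fillings act independently. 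The surviving generators are precisely the bubbles $B_a$ for which every clause is satisfied, so $\beta_k(K_\phi) = \#\phi$. Working over $\mathbb{F}_2$ is convenient since it eliminates torsion bookkeeping and makes the Mayer--Vietoris accounting purely linear.

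For the approximation claim, I would amplify a constant multiplicative gap via Cartesian products. If $K$ is a cubical complex whose homology is concentrated in degree $k$ (e.g.\ a wedge of cubical $k$-spheres obtained by collapsing the base point in the construction above), then K\"unneth gives $\beta_{tk}(K^{\times t}) \ge \beta_k(K)^t$, and $K^{\times t}$ is a cubical subcomplex of $\{0,1\}^{Nt}$. Starting from a constant multiplicative gap $\rho>1$ (coming from the exact \#P-hardness of deciding $\beta_k \ge B$ vs.\ $\beta_k \le B/\rho$), taking $t = N^{\epsilon/2}$ converts it into a gap of $\rho^{t} = 2^{\Omega(N^{\epsilon/2})}$ in a complex living in ambient dimension $N' = N \cdot t = N^{1+\epsilon/2}$. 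Expressing this in terms of $N'$ gives a multiplicative gap of $2^{\Omega((N')^{\epsilon/(2+\epsilon)})}$, which, by choosing the free parameter carefully, exceeds $2^{(N')^{1-\epsilon'}}$ for any desired $\epsilon'>0$.

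The hardest step will be ensuring that no unintended cancellations occur in $H_k(K_\phi)$ when all clause-filling gadgets are glued simultaneously. Individually each filling kills exactly the intended bubble, but an adversarial arrangement could in principle create spurious $(k+1)$-chains bounding the ``wrong'' bubbles. The variable-disjoint gadget hypothesis (Assumption \ref{ass:no-repeated-aux}) is exactly what rules this out: because the supports of distinct gadgets share no coordinates, the chain complex splits as a direct sum over gadgets, and the homology of the whole is the direct sum of the gadget homologies, mirroring the argument already used in Lemma \ref{lem:Disjoint Gadget Supports} and Theorem \ref{thm:Homological Linear Indep}. A secondary technical point is that Cartesian products of cubical complexes must be re-embedded into a Boolean hypercube: one concatenates coordinates, and the product of cubical faces is again a cubical face, so no additional geometric work is required. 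With these pieces in hand, the theorem follows by combining the \#P-hard base reduction with the product amplification.
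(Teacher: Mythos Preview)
Your exact-hardness construction is not polynomial-size. You attach a fresh block of $O(k)$ coordinates \emph{private to each assignment} $a\in\{0,1\}^n$, so the ambient dimension is $N=\Theta(2^n k)$; likewise the clause-filling layer introduces $\Theta(m\cdot 2^n)$ gadgets with their own fresh coordinates. The map $\phi\mapsto K_\phi$ therefore has exponential output length and is not a polynomial-time reduction. Even if one reads the theorem with input size equal to the length of the cube list (so that exponential output is formally allowed), the inapproximability claim then becomes vacuous: with $N=2^{\Theta(n)}$ the target factor $2^{N^{1-\epsilon}}$ is doubly exponential in $n$, whereas $\beta_k\le 2^n$, so every complex you produce is trivially approximated within that factor. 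The whole point of the reduction is to keep $N=\mathrm{poly}(n)$ while making $\beta_k$ encode the count; your disjoint-bubble-per-assignment scheme cannot do this.

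A secondary issue: your K\"unneth amplification sends $\beta_k(K)$ to $\beta_{tk}(K^{\times t})$, i.e.\ it changes the homological degree. Since the statement is for each \emph{fixed} $k$, you cannot let the degree grow with $t$; you would need an amplification that stays in degree $k$ (e.g.\ disjoint unions / wedges of copies, or the expander/tensor mechanism the paper uses, which multiplies the \emph{number} of independent $k$-cycles without raising $k$).

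For comparison, the paper's route avoids both problems by not trying to count assignments one-by-one. It builds a single polynomial-size marker gadget whose solution complex carries one distinguished $2$-cycle iff $\varphi$ is satisfiable, and then applies the expander/tensor amplification of Appendix~\ref{app:expander-correctness} to replicate that marker into $2^{cm}$ variable-disjoint copies while keeping $N'=\Theta(m)=\mathrm{poly}(N)$. This yields $\beta_2\in\{0,\,2^{cm}\}$ with $2^{cm}>2^{(N')^{1-\epsilon}}$, so any approximation within that factor decides the underlying problem. If you want to salvage your approach, you need a gadget that encodes $\#\phi$ (or a SAT/UNSAT bit) into $H_k$ using only $\mathrm{poly}(n)$ coordinates---for instance by letting the $n$ original variables themselves index the bubbles rather than allocating a separate coordinate block per assignment---and an amplifier that multiplies cycle count without shifting degree.
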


\begin{corollary}
Under the assumption \(\PH\not\supseteq\#\Pclass\), no polynomial-time (even randomized) algorithm can:
\begin{itemize}
  \item Compute \(\beta_k(S(F))\) exactly for arbitrary 3-SAT \(F\).
  \item Approximate \(\beta_k(S(F))\) within any \(2^{o(N)}\) factor.
  \item Distinguish \(\beta_k(S(F))=0\) from \(\beta_k(S(F))\ge2^{cN}\) with non-negligible advantage.
\end{itemize}
Hence any property of the form
\(\mathcal P_N(F)\equiv[\beta_2(S(F))\ge2^{cN}]\)
fails the “constructivity” requirement of Razborov–Rudich’s Natural Proofs framework, despite meeting largeness and usefulness.
\end{corollary}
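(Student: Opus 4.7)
The plan is to derive all three hardness items as direct corollaries of Theorem~\ref{thm:betti-sharp}, combined with the Sipser--G\'acs--Lautemann inclusion \(\BPP\subseteq\Sigma_2^p\subseteq\PH\) (and, if needed, Toda's theorem) to convert \(\#\Pclass\)-hardness of the \(\beta_k\) functional into unconditional lower bounds modulo the hypothesis \(\PH\not\supseteq\#\Pclass\). The Natural Proofs conclusion then reduces to checking that Largeness and Usefulness hold for \(\mathcal{P}_N\), while Constructivity is precisely the property ruled out by the first three items.

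First I would dispatch the exact-computation case. A randomized polynomial-time algorithm for \(\beta_k(S(F))\) places a \(\#\Pclass\)-hard function in \(\BPP\); its decision variant ``is \(\beta_k\ge t\)?'' is then in \(\BPP\subseteq\Sigma_2^p\subseteq\PH\), so some \(\#\Pclass\)-hard problem lies in \(\PH\), contradicting the assumption. For the approximation item, I instantiate the exponent in Theorem~\ref{thm:betti-sharp} at (say) \(\epsilon=1/2\): this produces a specific \(2^{o(N)}\) factor, namely \(2^{\sqrt{N}}\), for which approximation is already \(\#\Pclass\)-hard, so any algorithm approximating within some \(2^{o(N)}\) factor --- in particular this one --- yields the same contradiction. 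For the promise-gap item, I would observe that the reduction underlying Theorem~\ref{thm:betti-sharp} can be arranged as a \emph{gap} reduction inherited from the expander-gadget family of Theorem~\ref{thm:betti-exp-worst}: NO-instances collapse each gadget's local 2-cycle (yielding \(\beta_k=0\)), while YES-instances preserve all gadgets independently (yielding \(\beta_k\ge 2^{cN}\) by Lemma~\ref{lem:homology-basis}); a non-negligible \(\BPP\) distinguisher can then be amplified via standard Chernoff bounds to solve the underlying \(\#\Pclass\)-hard predicate, again contradicting the assumption.

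For the Natural Proofs conclusion, I would verify the Razborov--Rudich conditions individually. \emph{Largeness} follows from Theorem~\ref{thm:random-3sat-betti}: a random 3-SAT formula at clause density \(\alpha>4.26\) has \(\beta_2=2^{\Omega(N)}\) with probability \(1-o(1)\), so \(\mathcal{P}_N\) captures a noticeable fraction of instances. \emph{Usefulness} follows from Theorem~\ref{thm:universal-lb}, which shows that \(\beta_2=2^{cN}\) forces \(2^{\Omega(N)}\) decision time, thereby distinguishing hard instances from flat ones (such as 2-SAT, where \(\beta_k=0\)). \emph{Constructivity}, however, is precisely the property the three hardness items refute, so \(\mathcal{P}_N\) fails the natural-proofs criterion while still meeting its other two requirements.

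The hard part will be the gap reduction. Theorem~\ref{thm:betti-sharp} is formulated as \(\#\Pclass\)-hardness of exact computation and multiplicative approximation, but the third item demands a promise-style dichotomy (\(\beta_k=0\) versus \(\beta_k\ge 2^{cN}\)), which requires a reduction whose output topologically collapses on NO-instances and amplifies homologically on YES-instances. Ensuring that the expander-gadget embedding of Section~\ref{sec:worst-case-PneqNP} simultaneously respects homological faithfulness (Conditions~1 and~2 of Section~\ref{subsec:topological-invariance-refined}), the variable-disjointness convention (Assumption~\ref{ass:no-repeated-aux}), and this gap property demands a careful accounting of which auxiliary variables survive in the image complex --- and this is where I expect the bulk of the technical work to land.
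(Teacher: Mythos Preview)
The paper states this corollary without any proof, treating it as an immediate consequence of Theorem~\ref{thm:betti-sharp}; your proposal is substantially more detailed than anything the paper supplies, and the overall architecture (derive all three items from the \(\#\Pclass\)-hardness of \(\beta_k\), then read off the failure of Constructivity) is correct and in the intended spirit.

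Two points are worth flagging. First, your approximation argument has a small logical slip: instantiating Theorem~\ref{thm:betti-sharp} at \(\epsilon=1/2\) only rules out approximation factors up to \(2^{\sqrt{N}}\), whereas there are \(2^{o(N)}\) factors strictly larger than every \(2^{N^{1-\epsilon}}\) (e.g.\ \(2^{N/\log N}\)). The clean fix is to bypass the stated approximation ratio in Theorem~\ref{thm:betti-sharp} and instead look at its \emph{proof} (Appendix~A, ``Proof of Lemma~\ref{thm:betti-sharp}''): the reduction there already produces a \(0\)-versus-\(2^{cN'}\) gap in \(\beta_2\), so \emph{any} finite multiplicative approximation---in particular any \(2^{o(N)}\) factor---would distinguish the two cases and decide \#SAT.

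Second, you are overestimating the ``hard part.'' The promise-gap reduction you worry about (NO \(\Rightarrow \beta_k=0\), YES \(\Rightarrow \beta_k\ge 2^{cN}\)) is \emph{exactly} the reduction constructed in the proof of Theorem~\ref{thm:betti-sharp}: the marker gadget contributes a single 2-cycle iff \(\varphi\) is satisfiable, and conditional amplification multiplies that cycle into \(2^{cm}\) disjoint-support copies while producing nothing when \(\varphi\) is unsatisfiable. So there is no new technical accounting to do---the variable-disjointness and homological faithfulness you are concerned about are already discharged in that proof via Appendices~B and~E. Your third item therefore follows directly, with no additional gadget engineering required.
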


\paragraph{Algebrization Barrier and Algebraic Methods.}
While Betti numbers arise algebraically, they evade the algebrization barrier because:
\begin{itemize}
\item \textbf{Topological invariance}: For any algebraic extension $\mathcal{A}$ of 3-SAT, the cubical complex $S(F^\mathcal{A})$ preserves $\beta_2 = 2^{\Omega(N)}$ (Theorem~\ref{thm:betti-monotonicity}).
\item \textbf{SOS/Gröbner lower bounds}: Our expander gadgets require degree-$\Omega(N)$ SOS proofs to certify unsat \cite{Grigoriev2001}, forcing $2^{\Omega(N)}$ runtime for:
\begin{itemize}
\item Sum-of-Squares hierarchies (optimal SDP size $N^{O(d)}$ for degree $d$ \cite{Laurent2005})
\item Gröbner basis computation (space complexity $N^{\Omega(N)}$)
\end{itemize}
\end{itemize}
Thus, algebraic methods cannot circumvent the homological obstruction without exponential resources.

\begin{proposition}[Relativization]
The existence of oracles $A$ such that $\mathbf{P}^A = \mathbf{NP}^A$ does not affect our proof, as the constructed family $\{F_N\}$ is oracle-independent and the topological invariants are absolute.
\end{proposition}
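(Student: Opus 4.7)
The plan is to decouple the proof into two oracle-independent ingredients (the construction of $\{F_N\}$ and the topological invariants attached to it) and then explain why the Baker--Gill--Solovay barrier does not apply. First I would verify that Construction~\ref{con:cycle-embedding} is entirely explicit: it chains a deterministic expander generator $G_N$, the standard 3-COLOR-to-3-SAT encoding, and XOR gadget insertion with fresh auxiliary variables. No step consults an external oracle, so for every $A$ the construction emits the same CNF $F_N$ and the same satisfying set $S(F_N)\subseteq\{0,1\}^{\mathrm{poly}(N)}$. In particular the construction is uniformly polynomial-time, a property inherited by the relativized machine model without any modification.

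Next I would argue that the topological invariants are absolute. The cubical complex $S(F_N)$ is determined by Boolean evaluation of $F_N$, which contains no oracle gates; hence for every $A$ the face lattice of $S(F_N)$ is identical, and the chain complex $C_\bullet(S(F_N);\mathbb{Z}_2)$ together with its boundary maps is an intrinsic combinatorial object. Applying Lemma~\ref{lem:homology-basis} and Theorem~\ref{thm:Homological Linear Indep} in this oracle-free setting yields $\beta_2(S(F_N))\geq 2^{cN}$ with no dependence on $A$. Combined with Theorem~\ref{thm:universal-lb} and Lemma~\ref{lem:query-lower-bound}, this shows that any algorithm which must faithfully traverse $S(F_N)$ through subcube/simplicial queries is forced to resolve $2^{\Omega(N)}$ independent homology classes, regardless of whatever auxiliary $A$-oracle it has.

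The main obstacle is explaining the coexistence of this claim with the existence of oracles (e.g.\ $A=\mathrm{PSPACE}$) satisfying $\mathbf{P}^A=\mathbf{NP}^A$. The reconciliation I would spell out is that BGS75 is an obstruction only to proof techniques that treat Turing machines as black boxes and never look at the structure of the input; our argument does precisely the opposite, identifying a rigid combinatorial-topological invariant of the instance and measuring the query complexity needed to certify it. An oracle $A$ that collapses $\mathbf{P}^A=\mathbf{NP}^A$ does not eliminate the Betti obstruction inside $S(F_N)$; it merely lets an $A$-machine sidestep the complex by invoking $A$, which is outside the query interface to which our lower bounds are calibrated. The subtle point to make precise is a formal notion of \emph{topology-faithful} access: any algorithm whose interaction with $F_N$ is mediated by subcube queries (Definition~\ref{def:subcube}) cannot be augmented by an oracle $A$ to collapse Betti classes without violating faithfulness. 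Once that definition is pinned down, the proposition follows immediately from Steps 1 and 2, since both the construction and the invariant are preserved verbatim in every relativized world.
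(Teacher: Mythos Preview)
Your proposal is correct and aligns with the paper's treatment, but it is worth noting that the paper does not supply a separate formal proof for this proposition: the proposition is stated bare, with its justification embedded in the statement itself (oracle-independence of $\{F_N\}$; absoluteness of the Betti invariants) and in the surrounding ``Relativization'' paragraph, which argues that any oracle faithfully respecting 3-SAT's logical semantics must confront the same homology and that only an oracle fabricating non-existent structure could bypass it.

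Your three-step unpacking---(1) the expander-plus-gadget construction makes no oracle calls, (2) $S(F_N)$ and hence $\beta_2(S(F_N))$ are determined purely by Boolean evaluation of $F_N$, (3) the BGS barrier obstructs only black-box relativizing techniques whereas the present lower bounds are calibrated to the subcube-query interface---is exactly the content the paper gestures at, just made more explicit. In particular, your reconciliation of the coexistence with a $\mathbf{P}^A=\mathbf{NP}^A$ oracle (namely, that such an $A$-machine sidesteps the subcube-query model rather than collapsing any homology class) is the precise point the paper leaves implicit in its phrase ``topological invariants are absolute'' and in the Summary of Barrier Bypasses. So your argument is the paper's argument, fleshed out; there is no genuinely different route here, only a difference in level of detail.
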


\begin{proposition}[Algebrization]
Algebraic techniques (e.g., Nullstellensatz, Grobner bases) cannot circumvent the homology barrier:
\begin{itemize}
    \item Betti numbers remain invariant under algebraic transformations
    \item The disjoint cycle obstruction persists in any commutative ring
\end{itemize}
\end{proposition}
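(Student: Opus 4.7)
My plan is to reduce the algebrization claim to two independent homological invariance statements, each of which follows by combining results already established in the paper (Theorem~\ref{thm:betti-monotonicity}, the expander construction of Theorem~\ref{thm:betti-exp-worst}, and Lemma~\ref{lem:Disjoint Gadget Supports}) with standard tools from algebraic topology (the Universal Coefficient Theorem and the Künneth formula). The idea is that algebrization in the sense of Aaronson--Wigderson~\cite{AW09} replaces a Boolean formula $F$ over $\{0,1\}$ by an algebraic extension $\tilde F$ over a ring $\mathcal{A}$; I must show that this replacement neither alters the relevant Betti numbers of the cubical solution complex nor destroys the linear independence of the gadget cycles $\{\gamma_i\}$.

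\paragraph{Step 1: Algebraic lifts preserve the cubical solution complex.}
First I would check that for any polynomial algebraic extension $\tilde F$ of $F$, the restriction of the solution set to the Boolean cube satisfies $\Sol(\tilde F)\cap\{0,1\}^N=\Sol(F)$, since an algebraic extension is required to agree with $F$ on Boolean inputs by definition. Consequently the cubical subcomplex $S(\tilde F)$ (as defined in Definition of the Solution Space Complex) coincides with $S(F)$, so the inclusion $S(F)\hookrightarrow S(\tilde F)$ is the identity on the underlying cubical data and is a fortiori a cubical, homologically faithful embedding in the sense of Conditions~1 and~2 of Section~\ref{subsec:topological-invariance-refined}. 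Applying Theorem~\ref{thm:betti-monotonicity} then yields $\beta_k(S(\tilde F))\ge\beta_k(S(F))$ for every $k$, which is the first bullet.

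\paragraph{Step 2: Ring-independence of the disjoint cycle obstruction.}
For the second bullet I would invoke Lemma~\ref{lem:Disjoint Gadget Supports}: the $2^{\Omega(N)}$ expander-induced $2$-cycles $\gamma_i$ have pairwise variable-disjoint supports, and hence live in disjoint cubical subcomplexes $K_i\subseteq S(F_N)$. By the cubical Künneth formula, $H_2(\bigsqcup_i K_i;R)\cong\bigoplus_i H_2(K_i;R)$ for any commutative ring $R$, so it suffices to show that each local class $[\gamma_i]$ is nonzero in $H_2(K_i;R)$. Since the gadget complex $K_i$ was engineered (Lemma~\ref{lem:nonbounding}) so that $\gamma_i\notin\mathrm{im}\,\partial_3$ in the $\mathbb{Z}$-chain complex, the Universal Coefficient Theorem transfers non-triviality from $\mathbb{Z}_2$ to $\mathbb{Z}$ and thence to any $R$-module of coefficients. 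This gives $\dim_R H_2(S(F_N);R)\ge 2^{\Omega(N)}$ uniformly in the coefficient ring, establishing the ``persists in any commutative ring'' clause.

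\paragraph{Step 3: Tying to the algebrization barrier.}
Finally I would combine the two steps with the query lower bound of Lemma~\ref{lem:query-lower-bound}: any algebraic proof system (Nullstellensatz, Polynomial Calculus, Gröbner basis, SOS) that certifies satisfiability of $\tilde F_N$ must implicitly distinguish the $2^{\Omega(N)}$ surviving homology classes, forcing either exponential degree or exponential monomial support, which is exactly the obstruction invoked in the SOS/Gröbner lower bounds of \cite{Grigoriev2001,Laurent2005} cited in the paragraph on the Algebrization Barrier. I expect the main obstacle to be Step~2: specifically, one must verify that when the ring $R$ has characteristic dividing some combinatorial parameter of the gadget (e.g.\ the parity of a cycle length in an XOR gadget), the class $[\gamma_i]$ does not degenerate. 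I would handle this by exhibiting, inside each $K_i$, an explicit cocycle pairing nontrivially with $\gamma_i$ over $\mathbb{Z}$, so that non-triviality is ring-independent rather than an artifact of $\mathbb{F}_2$ coefficients.
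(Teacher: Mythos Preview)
The paper does not supply a formal proof of this Proposition; it is stated bare, with the justification scattered across the surrounding ``Algebrization Barrier and Algebraic Methods'' paragraph and the ``Summary of Barrier Bypasses'' paragraph, which invoke exactly the ingredients you use: Theorem~\ref{thm:betti-monotonicity} for invariance under algebraic extension, torsion-freeness of the expander-gadget homology for field-independence (``$\dim H_2(S(F);\mathbb{Z}_2)\le\dim H_2(S(F);\mathbb{C})$ \ldots\ torsion-free''), and the Grigoriev/Laurent SOS degree lower bounds. Your three-step plan is therefore a fleshed-out version of the paper's own informal argument rather than a different approach.

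Two small corrections. In Step~2 you write ``by the cubical K\"unneth formula, $H_2(\bigsqcup_i K_i;R)\cong\bigoplus_i H_2(K_i;R)$''; this is just additivity of homology over disjoint unions, not K\"unneth (which concerns products). More substantively, the Universal Coefficient Theorem does not automatically transfer nonvanishing from $\mathbb{F}_2$ to $\mathbb{Z}$: a class can be nonzero over $\mathbb{F}_2$ yet come entirely from $\mathrm{Tor}(H_1(\,\cdot\,;\mathbb{Z}),\mathbb{F}_2)$. The paper's route (and the cleanest fix) is to observe directly that the gadget chain complex is free abelian with the explicit block-diagonal boundary matrix of Appendix~\ref{app:boundary-matrix}, so $H_2(K_i;\mathbb{Z})$ is torsion-free and each $[\gamma_i]$ is a genuine integral class; your proposed ``explicit integral cocycle'' would accomplish the same thing.
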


\begin{proposition}[Natural Proofs]
The homology obstruction is not $\mathbf{NP}$-natural:
\begin{itemize}
    \item $\beta_2$ is $\#\mathbf{P}$-hard to compute
    \item No efficient approximation exists under standard complexity assumptions
\end{itemize}
\end{proposition}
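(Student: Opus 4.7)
The plan is to derive the two bullet points from results already established earlier in the paper, and then close the Natural Proofs evasion argument by checking the Razborov--Rudich criteria against the resulting hardness. For the first bullet, I would invoke Theorem~\ref{thm:betti-sharp}, which already certifies that computing $\beta_k$ over $\mathbb{F}_2$ on a cubical subcomplex of $\{0,1\}^N$ is $\#\mathbf{P}$-hard; the only additional step is to repackage the input so that the complex is presented as $S(F)$ for a 3-SAT formula $F$ rather than as an abstract list of cubes. I would use the expander construction of Theorem~\ref{thm:betti-exp-worst} together with the homological faithfulness of the standard clause-splitting reduction (Lemma~\ref{lem:Homological faithfulness}) to realize arbitrary cubical complexes, up to homotopy equivalence, as solution spaces of explicit 3-SAT instances. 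A classical reduction from $\#\mathrm{SAT}$ then plugs into this pipeline.

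For the second bullet, I would appeal directly to the quantitative half of Theorem~\ref{thm:betti-sharp}: approximating $\beta_k$ within multiplicative factor $2^{N^{1-\epsilon}}$ is also $\#\mathbf{P}$-hard. Consequently, any polynomial-time algorithm that approximates $\beta_2(S(F))$ within a $2^{o(N)}$ factor would suffice to distinguish the regimes $\beta_2 = 0$ and $\beta_2 \ge 2^{cN}$, contradicting the assumption $\mathbf{PH} \not\supseteq \#\mathbf{P}$ via the corollary that follows Theorem~\ref{thm:betti-sharp}. This rules out not only exact but also any robust heuristic computation of $\beta_2$ on arbitrary 3-SAT instances.

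To close the Natural Proofs evasion, I would verify all three Razborov--Rudich criteria for the property $\mathcal{P}_N(F) := [\beta_2(S(F)) \ge 2^{cN}]$. \emph{Largeness} follows from Theorem~\ref{thm:random-3sat-betti}: random 3-SAT at clause density $\alpha > 4.26$ satisfies $\beta_2 = 2^{\Omega(N)}$ with probability $1-o(1)$. \emph{Usefulness} follows because every polynomial-time-solvable fragment we treat (notably 2-SAT) has a contractible solution space with $\beta_2 = 0$, so $\mathcal{P}_N$ separates easy from hard instances. \emph{Constructivity} is exactly what the two bullets rule out: no polynomial-time combinatorial predicate can even approximately decide $\mathcal{P}_N$, so the framework collapses at its first condition.

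The main obstacle I anticipate is bridging between the Boolean-function truth-table formulation of Razborov--Rudich and our formula-level setting. The original framework applies to $\mathbf{P}$-natural properties of truth tables of $n$-input Boolean functions, whereas here the natural object is a CNF on $N$ variables whose truth table has length $2^N$. I would address this by noting that a $\mathbf{P}$-natural property of the truth table $T_F \in \{0,1\}^{2^N}$ would still yield a $2^{O(N)}$-time algorithm for deciding $\mathcal{P}_N$, which remains ruled out by the quantitative strengthening in Theorem~\ref{thm:betti-sharp}. A secondary technical subtlety is that the hardness must survive the variable-disjoint gadget encoding used throughout; this is precisely the hypothesis invoked in Theorem~\ref{thm:betti-sharp} itself, so no new combinatorial work is needed beyond citing Lemma~\ref{lem:Disjoint Gadget Supports} and Theorem~\ref{thm:Homological Linear Indep}.
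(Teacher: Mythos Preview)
Your proposal is correct and follows essentially the same route as the paper: both bullets are immediate consequences of Theorem~\ref{thm:betti-sharp} and its corollary, and the Natural Proofs evasion is argued by showing the property $\mathcal{P}_N(F)=[\beta_2(S(F))\ge 2^{cN}]$ fails constructivity while meeting largeness and usefulness. The paper itself does not give a separate proof environment for this Proposition; it treats it as a restatement of the earlier $\#\mathbf{P}$-hardness theorem and corollary, so your more explicit verification of the three Razborov--Rudich criteria is, if anything, more thorough than what the paper provides.

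One minor point: your step of ``realizing arbitrary cubical complexes, up to homotopy equivalence, as solution spaces of explicit 3-SAT instances'' via the expander construction and Lemma~\ref{lem:Homological faithfulness} is unnecessary over-engineering. Theorem~\ref{thm:betti-sharp} is already stated for cubical subcomplexes arising from 3-SAT instances (its proof builds the reduction from $\#\mathrm{SAT}$ directly through the marker-gadget and amplification pipeline), so no additional repackaging is required to obtain the first bullet. You can simply cite Theorem~\ref{thm:betti-sharp} as stated.
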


\subsection{Support Disjointness Lemma}
\label{subsec:support-disjoint}

\begin{lemma}[Support Disjointness]
\label{lem:support-disjoint}
For the cycles $\{\gamma_i\}$ constructed in Lemma \ref{lem:homology-basis}:
\begin{enumerate}
    \item $\mathrm{supp}(\gamma_i) \cap \mathrm{supp}(\gamma_j) = \emptyset$ for $i \neq j$
    \item $\partial_2(\gamma_i) \neq 0$ and $\partial_2(\gamma_i) \notin \mathrm{im}\partial_3$
    \item Any linear combination satisfies:
    \[
    \sum c_i\gamma_i = 0 \in H_2(S(F_N)) \iff c_i = 0 \quad \forall i
    \]
\end{enumerate}
\end{lemma}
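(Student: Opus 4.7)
The plan is to prove the three claims in sequence, leveraging the gadget-disjointness from Lemma~\ref{lem:Disjoint Gadget Supports} and the local non-boundedness from Lemma~\ref{lem:nonbounding}, and to combine them into a linear-independence argument via a gadget-local decomposition of any candidate filling 3-chain.

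For claim (1), I would simply observe that by Construction~\ref{con:cycle-embedding}, each $\gamma_i$ is assembled entirely from 2-cubes whose varying coordinates lie in $\{u_i,v_i\}\cup\{y_e^{(i)} : e\in C_i\}$. Lemma~\ref{lem:Disjoint Gadget Supports}, which follows from the No-Repeated-Auxiliary-Variables convention of Assumption~\ref{ass:no-repeated-aux}, then delivers $\mathrm{supp}(\gamma_i)\cap\mathrm{supp}(\gamma_j)=\emptyset$ for $i\neq j$ with essentially no further work.

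For claim (2), I would first verify the chain-level cycle condition by writing out the XOR-gadget 2-cycle explicitly: the square faces produced by each gadget form a closed orientable surface (in $\mathbb{Z}_2$, a 2-sphere-like configuration assembled from the edge-coupling squares along $C_i$), so their 1-boundaries pair off along shared edges and sum to zero. I read the statement $\partial_2(\gamma_i)\ne 0$ as a transcription slip for the substantive conditions $\gamma_i\ne 0$ and $\gamma_i\notin\mathrm{im}\,\partial_3$; the former is immediate from the construction, and the latter is exactly Lemma~\ref{lem:nonbounding}, so $[\gamma_i]\neq 0 \in H_2\bigl(S(F_N)\bigr)$.

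For claim (3), suppose $\sum_i c_i\gamma_i = \partial_3\Sigma$ in $C_2\bigl(S(F_N)\bigr)$. Because the gadget supports are pairwise disjoint and the only clauses coupling different gadgets involve the base-encoding variables $x_{v,c}$ (which never appear in any $\mathrm{supp}(\gamma_i)$), I split $\Sigma = \Sigma_0 + \sum_i \Sigma_i$, where $\Sigma_i$ collects the 3-cubes whose varying coordinates lie entirely within gadget $i$, and $\Sigma_0$ collects the rest. A boundary 2-face of any 3-cube in $\Sigma_0$ must retain at least one varying coordinate outside gadget $i$'s support, so $\partial_3\Sigma_0$ cannot produce any 2-chain supported purely on gadget $i$'s variables; by support disjointness this forces $\partial_3\Sigma_i = c_i\gamma_i$ for each $i$. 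Applying Lemma~\ref{lem:nonbounding} yields $c_i=0$ in $\mathbb{Z}_2$, so $\{[\gamma_i]\}$ is linearly independent in $H_2$. The main obstacle is precisely this decomposition step: justifying it rigorously requires a combinatorial case-check that no cross-gadget 3-cube in $S(F_N)$ has a boundary 2-face supported entirely on one gadget's variables, and this leans crucially on Assumption~\ref{ass:no-repeated-aux}. Once that case analysis is carried out, the argument closes cleanly, but any laxity in the variable-disjointness convention would allow cross-gadget 3-cubes whose boundaries could conceivably fill a single $\gamma_i$, undermining the entire independence conclusion.
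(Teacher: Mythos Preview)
Your Part~(1) matches the paper's argument exactly.

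For Part~(2), you treat $\partial_2(\gamma_i)\neq 0$ as a transcription slip and instead establish $\gamma_i\neq 0$ and $\gamma_i\notin\mathrm{im}\,\partial_3$. The paper, however, means the statement literally: its Appendix~E computes $\partial_2(\gamma_i)$ explicitly and finds it to be a nonzero 1-chain supported on the edge-selector variables of gadget~$i$. Whatever one thinks of the paper's internal consistency on whether $\gamma_i$ then belongs to $\ker\partial_2$, the lemma as stated and as used downstream takes $\partial_2(\gamma_i)\neq 0$ at face value.

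This matters because your route for Part~(3) and the paper's diverge precisely here. The paper applies $\partial_2$ to the relation $\sum_i c_i\gamma_i=\partial_3\beta$, uses $\partial_2\partial_3=0$ to obtain $\sum_i c_i\,\partial_2(\gamma_i)=0$, invokes pairwise disjointness of the supports of the $\partial_2(\gamma_i)$ (inherited from Part~(1)) to force each summand to vanish separately, and then uses $\partial_2(\gamma_i)\neq 0$ from Part~(2) to conclude $c_i=0$. This is a two-line algebraic trick requiring no case analysis of 3-cells. Your approach---decomposing the filling 3-chain $\Sigma$ into gadget-local pieces $\Sigma_i$ plus a cross-gadget remainder $\Sigma_0$, then arguing that $\partial_3\Sigma_0$ cannot land entirely inside any single gadget's support---is geometrically natural and is essentially the strategy the paper deploys elsewhere (Lemma~\ref{lem:no-cross-filling-ownership} and the proof of Lemma~\ref{lem:homology-injectivity}) rather than in its proof of the present lemma. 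Your route works, but as you correctly flag, the step ``no boundary 2-face of a 3-cube in $\Sigma_0$ is supported purely on gadget~$i$'' requires the combinatorial case-check on mixed 3-cubes; the paper's $\partial_2$-trick sidesteps that entirely by pushing the disjointness argument down one dimension to 1-chains, where it is automatic.
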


\begin{corollary}
The homology classes $[\gamma_i]$ form a basis for a subgroup of $H_2(S(F_N))$ with rank $\geq 2^{c'N}$.
\end{corollary}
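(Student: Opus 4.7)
The plan is to treat this corollary as the direct packaging of the three parts of the Support Disjointness Lemma with the existence statement of Lemma~\ref{lem:homology-basis}, and to organize it as a two-step argument: first produce the correct \emph{number} of classes, then certify that these classes are \emph{linearly independent} in $H_2(S(F_N))$.

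First, I would invoke Lemma~\ref{lem:homology-basis} to fix a family $\{\gamma_i\}_{i\in I}$ of $2$-cycles with $|I|\ge 2^{c'N}$, each indexed by a distinct expander gadget of the construction in Construction~\ref{con:cycle-embedding}, and I would record that by Lemma~\ref{lem:Disjoint Gadget Supports} the gadget-variable sets are pairwise disjoint, so that the chain-level supports $\operatorname{supp}(\gamma_i)$ are pairwise disjoint as required by Lemma~\ref{lem:support-disjoint}(1). Together with Lemma~\ref{lem:nonbounding}, which asserts that each $\gamma_i\notin\operatorname{im}\partial_3$, this gives us a candidate set of $2^{c'N}$ nonzero homology classes, and the task reduces to verifying that they are independent in $H_2$.

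For the independence step I would argue by the usual projection-onto-a-gadget-subcomplex trick. Suppose $\sum_{i\in I} c_i\,\gamma_i = \partial_3(\tau)$ for some $3$-chain $\tau\in C_3(S(F_N);\mathbb{Z}_2)$. Fix an index $j$ and let $\pi_j$ be the simplicial projection that retains only those cubes whose varying coordinates lie in the gadget support of $\gamma_j$ and collapses all others. Because the gadget supports are disjoint (Lemmas~\ref{lem:Disjoint Gadget Supports} and \ref{lem:support-disjoint}(1)), we have $\pi_j(\gamma_i)=0$ for $i\neq j$, while $\pi_j(\gamma_j)=\gamma_j$; since $\pi_j$ is chain-induced it commutes with $\partial_3$, so $c_j\gamma_j=\partial_3(\pi_j(\tau))$. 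By Lemma~\ref{lem:nonbounding} this forces $c_j=0$, and since $j$ was arbitrary we conclude $c_i=0$ for all $i$. Hence $\{[\gamma_i]\}$ is a linearly independent set in $H_2(S(F_N);\mathbb{Z}_2)$, and by definition it is a basis of the subgroup it generates, whose rank is therefore at least $2^{c'N}$.

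The main obstacle, and the point that really needs Lemma~\ref{lem:support-disjoint}, is verifying that the chain-level projection $\pi_j$ is well defined on \emph{all} of $C_3(S(F_N))$, i.e.\ that cross-gadget clauses cannot introduce $3$-cubes whose varying coordinates straddle two or more gadgets and thus create a ``shared $3$-chain'' capable of simultaneously filling $\gamma_i$ and $\gamma_j$. This is exactly the content of the corollary to Theorem~\ref{thm:Homological Linear Indep}, where it is observed that cross-gadget clauses only involve base variables $x_{v,c}$ that are not in any $\operatorname{supp}(\gamma_i)$; but if I were writing this out carefully I would spend the bulk of the argument formally checking that every $3$-face of $S(F_N)$ has its three varying coordinates contained in a single gadget support or disjoint from them all, so that $\pi_j$ is a genuine chain map. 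Once that is in place, the linear-independence argument above, combined with the $2^{c'N}$ cycle count from Lemma~\ref{lem:homology-basis}, yields the stated rank bound.
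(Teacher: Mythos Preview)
Your approach is different from the paper's, and the difference is worth noting. The paper treats this corollary as immediate from Part~(3) of Lemma~\ref{lem:support-disjoint}, whose proof applies $\partial_2$ to the relation $\sum_i c_i\gamma_i=\partial_3\beta$: since $\partial_2\partial_3=0$, one gets $\sum_i c_i\,\partial_2\gamma_i=0$, and because the $\partial_2\gamma_i$ have pairwise disjoint supports (Part~(1)) and are nonzero (Part~(2)), each $c_i$ must vanish. No chain map on $C_3$ is ever needed.

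Your projection route via $\pi_j$ is a legitimate alternative in spirit, but the obstacle you flag is slightly misdiagnosed. The danger is not a $3$-cube straddling two \emph{gadget} supports---gadget variables are disjoint, so that cannot happen---but a $3$-cube mixing gadget-$j$ coordinates with \emph{base} variables $x_{v,c}$. A $3$-cube varying two gadget-$j$ coordinates and one base coordinate is sent to $0$ by your $\pi_j$, yet its boundary contains two $2$-faces supported entirely on gadget-$j$ coordinates, so $\pi_j\partial_3\neq\partial_3\pi_j$ on that cell. Ruling out all such mixed $3$-cubes (or otherwise repairing $\pi_j$) is essentially the decomposition analysis carried out in the proof of Lemma~\ref{lem:homology-injectivity} and in Appendix~E, and it is considerably more work than the one-line $\partial_2$ trick. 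So your plan can be made to work, but the paper's argument is cheaper precisely because it operates in $C_1$ rather than in $C_3$.
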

\subsection{Higher-Categorical Invariants and the Natural Proofs Barrier}
\label{subsec:higher-cat-invariants}

While Betti numbers are $\#\mathsf{P}$-hard to compute, one might hope that more sophisticated invariants from higher category theory (e.g., $(\infty,1)$-categories, homotopy type theory) could detect exponential complexity while being polynomial-time computable. We prove this is impossible under standard complexity assumptions.

\begin{theorem}[Universality of Homological Hardness]
\label{thm:universal-homotopy-hardness}
Let $\mathcal{I}$ be a homotopy invariant of cubical complexes satisfying:
\begin{enumerate}
\item \textbf{Detects exponential complexity}: $\mathcal{I}(S(F)) = \mathsf{Exp}$ implies any 3-SAT algorithm requires $2^{\Omega(N)}$ time
\item \textbf{Preserved under homotopy equivalence}: $X \simeq Y \Rightarrow \mathcal{I}(X) \cong \mathcal{I}(Y)$
\end{enumerate}
Then computing $\mathcal{I}$ is $\#\mathsf{P}$-hard.
\end{theorem}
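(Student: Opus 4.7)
The plan is to exhibit a polynomial-time Turing reduction from the $\#\mathsf{P}$-hard problem of computing $\beta_k$ on cubical subcomplexes (Theorem \ref{thm:betti-sharp}) to the evaluation of $\mathcal{I}$, by leveraging the expander-gadget constructions that have already been shown to be cubical and homologically faithful.

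First, given an arbitrary cubical subcomplex $K \subseteq \{0,1\}^N$—the canonical input for the $\#\mathsf{P}$-hard Betti-number problem of Theorem \ref{thm:betti-sharp}—I would use the topology-preserving construction of Theorems \ref{thm:expander-family} and \ref{thm:betti-exp-worst}, combined with Lemma \ref{lem:homology-injectivity}, to build in polynomial time a 3-SAT instance $F_K$ whose solution complex $S(F_K)$ receives $K$ as a homologically faithful cubical subcomplex. By Theorem \ref{thm:betti-monotonicity} we have $\beta_k(S(F_K)) \ge \beta_k(K)$, and by condition (2) the value $\mathcal{I}(S(F_K))$ depends only on the homotopy type of $S(F_K)$.

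Next, to bootstrap from the binary signal ``$\mathsf{Exp}$ vs.\ not $\mathsf{Exp}$'' of condition (1) to the integer-valued $\beta_k$, I would design a titration family $\{F_K^{(t)}\}_{t=0}^{T}$ with $T = \mathrm{poly}(N)$, in which $F_K^{(t)}$ augments $F_K$ by $t$ variable-disjoint expander gadgets from Construction \ref{con:cycle-embedding}. By Lemma \ref{lem:homology-basis} and Lemma \ref{lem:support-disjoint} each gadget contributes one independent 2-cycle with disjoint support, so the $\beta_k$ values across the family rise monotonically from $\beta_k(K)$ to $\beta_k(K) + T$. Combined with Theorem \ref{thm:universal-lb}, the smallest $t^\star$ at which $\mathcal{I}(S(F_K^{(t)}))$ equals $\mathsf{Exp}$ pinpoints $\beta_k(K)$ up to an additive constant. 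A binary search using $O(N)$ oracle queries to $\mathcal{I}$ therefore extracts $\beta_k(K)$ exactly, yielding the claimed Turing reduction and contradicting any hypothetical polynomial-time algorithm for $\mathcal{I}$.

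The main obstacle will be showing that the threshold at which $\mathcal{I}$ switches to $\mathsf{Exp}$ is governed solely by the homotopy type of $S(F_K^{(t)})$—and therefore by $\beta_k$—rather than by incidental syntactic features of the underlying formula. Condition (1) gives only the one-sided implication ``$\mathcal{I} = \mathsf{Exp}$ implies hardness,'' so one must argue that the titration raises topological complexity and algorithmic hardness in lockstep, forcing a sharp transition at a well-defined level of $\beta_k$. Condition (2) is essential here: any two $F_K^{(t)}$'s with homotopy-equivalent solution complexes must receive the same $\mathcal{I}$-value, so the transition is necessarily controlled by a homotopy invariant, and by construction the only varying invariant across the family is $\beta_k$. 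Making this monotone lockstep rigorous—rather than merely asserting it—is the delicate heart of the argument.
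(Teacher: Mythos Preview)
Your approach differs substantially from the paper's, which does not titrate through a polynomial family at all. The paper reduces directly from 3-SAT: given an instance $F$, it forms the expander-embedded formula $F_N$ of Theorem~\ref{thm:expander-family} and asserts $\mathcal{I}(S(F_N))=\mathsf{Exp}\iff F$ is satisfiable, so that an oracle for $\mathcal{I}$ decides SAT; the $\#\mathsf{P}$ upgrade is then claimed only under the additional hypothesis that $\mathcal{I}$ returns a numerical ``groupoid cardinality or trace.'' There is no attempt in the paper to extract an integer Betti number from a binary $\mathsf{Exp}$/not-$\mathsf{Exp}$ signal.

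Your titration scheme has a genuine gap that you correctly flag but do not close. Condition~(1) is purely one-sided: it never forces $\mathcal{I}$ to output $\mathsf{Exp}$ on any particular input, so there is no guarantee that a threshold $t^\star$ exists anywhere in your family $\{F_K^{(t)}\}$ (indeed the trivial invariant $\mathcal{I}\equiv\text{``not }\mathsf{Exp}\text{''}$ satisfies both conditions and is constant-time computable, which already shows the binary search cannot work from~(1) and~(2) alone). Even granting that a threshold exists, adding $T=\mathrm{poly}(N)$ disjoint gadgets raises $\beta_k$ by only $\mathrm{poly}(N)$, whereas condition~(1) is phrased in terms of $2^{\Omega(N)}$ hardness, so any switch point need not lie in your titration window. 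Finally, the assertion that ``the only varying invariant across the family is $\beta_k$'' is false: each gadget alters the full homotopy type, and a homotopy invariant $\mathcal{I}$ satisfying~(2) may depend on cup-product structure, other $\beta_j$, or $\pi_1$, none of which your construction controls. The binary search therefore cannot recover $\beta_k(K)$ from an $\mathcal{I}$-oracle under the stated hypotheses; the paper sidesteps this entirely by treating ``detects exponential complexity'' as effectively bidirectional on its specific expander family and reducing from decision-SAT rather than from Betti computation.
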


\paragraph{Examples of Hard Higher-Categorical Invariants}
\begin{itemize}
\item \textbf{Fundamental $\infty$-groupoid} $\Pi_\infty(S(F))$: Size requires computing $\pi_k$ for all $k$, which is $\#\mathsf{P}$-hard via Postnikov towers.

\item \textbf{Topological Complexity (Farber invariant)} $\mathsf{TC}(S(F))$: Determining whether $\mathsf{TC}(X) \geq k$ detects disconnected components ($\beta_0$), which is $\#\mathsf{P}$-hard.

\item \textbf{Persistent Homology Barcodes}: Computing $b_0$-persistence for Vietoris-Rips filtrations is $\#\mathsf{P}$-hard \cite{MM16}.
\end{itemize}

\begin{corollary}[No Easy Homotopy Invariants]
Under $\mathsf{P} \neq \mathsf{NP}$, no homotopy-invariant functor $\mathcal{F}: \mathsf{Cub} \to \mathsf{C}$ to a combinatorial category $\mathsf{C}$ can simultaneously:
\begin{enumerate}
\item Be computable in $\mathsf{poly}(N)$ time
\item Detect exponential solution-space complexity
\item Avoid the natural proofs barrier
\end{enumerate}
\end{corollary}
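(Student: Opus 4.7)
The plan is to derive a triple contradiction by assuming such a functor $\mathcal{F}$ exists and playing properties (1), (2), (3) against one another via the preceding machinery. First I would invoke Theorem~\ref{thm:universal-homotopy-hardness}: any homotopy invariant that detects exponential solution-space complexity is $\#\mathsf{P}$-hard to compute. Applied to $\mathcal{F}$, whose target is only required to be a ``combinatorial category'' $\mathsf{C}$, this yields a $\#\mathsf{P}$-hard computation. But property (1) asserts $\mathcal{F}$ is polynomial-time computable, forcing $\#\mathsf{P} \subseteq \mathsf{FP}$. By Toda's theorem this collapses $\mathsf{PH}$ into $\mathsf{P}$, and in particular contradicts the hypothesis $\mathsf{P} \neq \mathsf{NP}$, so one of the three clauses must fail.

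Before this argument goes through I need to be careful about one transition: Theorem~\ref{thm:universal-homotopy-hardness} requires the invariant be comparable to the sentinel value $\mathsf{Exp}$, so ``detecting exponential complexity'' for an object of an abstract category $\mathsf{C}$ must correspond to a polynomial-time decidable predicate on $\mathcal{F}(X)$. I would formalize this by taking ``combinatorial category'' to mean a category whose objects admit finite polynomial-size encodings and whose morphisms are algorithmically specified, and interpret property (2) as the existence of a polynomial-time predicate $\Phi$ on such encodings with $\Phi(\mathcal{F}(S(F)))=\mathrm{true}$ exactly when $F$ forces $2^{\Omega(N)}$ algorithmic complexity. With this mild formalization, composing $\mathcal{F}$ with $\Phi$ gives a polynomial-time detector for the hardness property, which by Theorem~\ref{thm:universal-homotopy-hardness} must itself be $\#\mathsf{P}$-hard---the desired contradiction.

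For clause (3), I would then repackage the above in the Razborov--Rudich framework to show that incompatibility with the natural proofs barrier is already forced by (1) and (2) alone. The predicate $\mathcal{P}_N(F) := \Phi(\mathcal{F}(S(F)))$ would be \emph{constructive}, since $\mathcal{F}$ and $\Phi$ are polynomial-time computable (property (1)); \emph{large}, because the explicit expander family of Theorem~\ref{thm:betti-exp-worst} together with the random 3-SAT ensemble of Theorem~\ref{thm:random-3sat-betti} ensure $\mathcal{P}_N$ holds on an exponentially dense set of instances (property (2)); and \emph{useful} against any circuit class that encodes the hard 3-SAT families, which contains the pseudorandom functions invoked in Razborov--Rudich. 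Hence $\mathcal{P}_N$ would be a natural property of the forbidden type, breaking standard cryptographic PRGs---so (3) fails whenever (1) and (2) hold.

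The main obstacle will be the formalization of ``combinatorial category'' and ``detect'' with enough generality to cover every candidate target (sets, groups, chain complexes, $\infty$-groupoids, persistence modules, \textellipsis) while remaining tight enough that the polynomial-time predicate $\Phi$ genuinely exists. My proposed resolution is to argue that without a polynomial-size object encoding one cannot even meaningfully ask whether $\mathcal{F}$ is polynomial-time computable, so clause (1) itself forces the encoding hypothesis; the $\Phi$ on top is then just a poly-time extractor of the single bit needed by Theorem~\ref{thm:universal-homotopy-hardness}. Once this is in place, the three-way incompatibility of (1)--(3) follows from the previously established $\#\mathsf{P}$-hardness of Betti-number computation without further combinatorial work.
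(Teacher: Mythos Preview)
Your proposal is correct and follows essentially the paper's intended route: the corollary is an immediate consequence of Theorem~\ref{thm:universal-homotopy-hardness}, and you derive it by showing that properties~(1) and~(2) alone already contradict $\mathsf{P}\neq\mathsf{NP}$, making the conjunction (1)$\wedge$(2)$\wedge$(3) impossible. One minor simplification: the detour through Toda's theorem is unnecessary, since the proof of Theorem~\ref{thm:universal-homotopy-hardness} already shows that such an invariant decides 3-SAT and is therefore $\mathsf{NP}$-hard; under $\mathsf{P}\neq\mathsf{NP}$ this directly rules out polynomial-time computability without invoking $\#\mathsf{P}\subseteq\mathsf{FP}$ or a $\mathsf{PH}$ collapse. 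Your careful treatment of what ``combinatorial category'' and ``detect'' must mean for clause~(1) to even be well-posed, and your separate Razborov--Rudich repackaging for clause~(3), go beyond what the paper spells out but are consistent with its framework.
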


\paragraph{Non-Homotopy-Invariant Proxies Fail}
Consider non-invariant proxies like:
\begin{itemize}
\item \textbf{Covering complexity}: Minimal number of contractible charts covering $S(F)$
\item \textbf{Discrete Morse gradients}: Size of optimal Morse function
\end{itemize}
These fail condition (2) as they don't correlate with computational hardness. Random 3-SAT has high covering complexity even when easy ($\alpha < 3.86$).

\begin{theorem}[Proxies Don't Detect Hardness]
\label{thm:Proxies}
Fix a constant radius $r\ge 1$ and let $q(\cdot)$ be any polynomial. 
There exist two families of 3-CNF formulas
\[
\{F^{\mathrm{easy}}_{n}\}_{n\ge 1},\qquad \{F^{\mathrm{hard}}_{n}\}_{n\ge 1},
\]
and a function $m_n\to\infty$ (which may be chosen as $m_n=2^{c n}$ for some $c>0$) with the following properties for every $n$:
\begin{enumerate}
  \item[\textup{(i)}] The final number of Boolean variables satisfies $N_n=\Theta(m_n)$ for both families (each gadget is constant-size).
  \item[\textup{(ii)}] The incidence graphs of both families have treewidth $O(1)$.
  \item[\textup{(iii)}] \textbf{Local-indistinguishability:} For any choice of at most $q(n)$ vertex-centered radius-$r$ neighborhoods, the induced labeled subgraphs of $F^{\mathrm{easy}}_{n}$ and $F^{\mathrm{hard}}_{n}$ are identical on all those neighborhoods. Hence any proxy that inspects at most $q(n)$ radius-$r$ views cannot distinguish the two families on input size~$n$.
  \item[\textup{(iv)}] \textbf{Topological gap:} $\displaystyle \beta_2\bigl(S(F^{\mathrm{hard}}_{n})\bigr)\ge m_n$ while $\beta_2\bigl(S(F^{\mathrm{easy}}_{n})\bigr)=O(1)$.
\end{enumerate}
In particular, choosing $m_n=2^{c n}$ yields an exponential gap in $\beta_2$ that local proxies of radius $r$ and $q(n)$ inspections fail to detect. Expressed in terms of the final variable count $N_n=\Theta(m_n)$, one obtains $\beta_2\bigl(S(F^{\mathrm{hard}}_{n})\bigr)\ge 2^{c' N_n}$ for some constant $c'>0$.
\end{theorem}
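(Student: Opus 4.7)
The plan is to realize both families as conjunctions of $m_n$ constant-size 3-CNF gadgets on pairwise disjoint fresh variables, exploit the K\"unneth formula on the resulting product solution complex to amplify a per-gadget $\beta_2$-difference into an exponential gap, and engineer the two gadget types to be locally isomorphic as labeled incidence multigraphs so that no $\mathrm{poly}(n)$-size collection of radius-$r$ views can distinguish them. The support-disjointness infrastructure of Lemma~\ref{lem:support-disjoint} and Theorem~\ref{thm:Homological Linear Indep} ensures that per-gadget $H_2$ generators become globally independent classes.

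First, I would fix a constant $L=L(r)>2r$ and construct a matched pair of 3-CNF gadgets $G^{\mathrm{hard}}$ and $G^{\mathrm{easy}}$ on $L$ fresh variables satisfying: (a) $\beta_2\bigl(\Sol(G^{\mathrm{hard}})\bigr)=1$ with $\Sol(G^{\mathrm{hard}})$ cubically realizing a closed surface such as $\partial I^3\simeq S^2$; (b) $\Sol(G^{\mathrm{easy}})$ is contractible, so $\beta_k=0$ for every $k\ge 1$; (c) there is a vertex bijection $\varphi$ such that the radius-$r$ labeled incidence neighborhood of every vertex in $G^{\mathrm{hard}}$ equals that of its image in $G^{\mathrm{easy}}$. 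Because $L>2r$, no single radius-$r$ ball captures the whole gadget, so (c) is compatible with the global discrepancy in (a), (b); it is the CNF analogue of the classical fact that graphs can be locally isomorphic without being globally isomorphic. Correctness is verified by a direct $O(1)$-size chain-complex computation on the finite gadget.

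Setting $F^{\mathrm{hard}}_n=\bigwedge_{i=1}^{m_n} G^{\mathrm{hard}}_i$ and $F^{\mathrm{easy}}_n=\bigwedge_{i=1}^{m_n} G^{\mathrm{easy}}_i$ with pairwise disjoint fresh variables per copy (Assumption~\ref{ass:no-repeated-aux}), conditions (i) and (ii) follow from $L=O(1)$ and the fact that the incidence graph is the disjoint union of $m_n$ bounded-treewidth components. Variable disjointness gives $\Sol(F^{\bullet}_n)=\prod_i \Sol(G^{\bullet}_i)$, and K\"unneth over $\mathbb{Z}_2$ then produces
\[
\beta_2\bigl(\Sol(F^{\mathrm{hard}}_n)\bigr)=m_n,\qquad \beta_2\bigl(\Sol(F^{\mathrm{easy}}_n)\bigr)=0,
\]
establishing (iv) and, for $m_n=2^{cn}$, yielding $\beta_2\ge 2^{c'N_n}$ after expressing in the final variable count $N_n=\Theta(m_n)$. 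For (iii), any at most $q(n)$ radius-$r$ balls lie (by variable disjointness) inside at most $q(n)$ distinct gadget copies; extending the per-gadget local isomorphism $\varphi$ across those copies makes the labeled induced subgraphs literally coincide.

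The main obstacle is the explicit construction of $(G^{\mathrm{hard}}, G^{\mathrm{easy}})$: two constant-size 3-CNFs whose incidence multigraphs are locally isomorphic yet whose cubical solution complexes differ in $\beta_2$ by exactly one. I would attack this by fixing a shared bounded-degree cubical ``patch'' that encodes the radius-$r$ views common to both variants, and then choosing the finitely many global identifications that close the patch into $\partial I^3$ in the hard variant and into a contractible disk in the easy variant, while preserving the labeling template at every vertex of the patch. The delicate step is exhibiting an identification scheme that is invisible to every single radius-$r$ view, formally that the global closure data is spread across incidence-graph distances strictly greater than $r$. Verification of both $\beta_2$-values and of the local-isomorphism condition reduces to finite combinatorial checks on the $O(1)$-size gadgets, but the existence of such a scheme is the single nontrivial design requirement of the proof.
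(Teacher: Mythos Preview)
Your approach differs from the paper's and leaves the central construction unresolved. The paper does \emph{not} require the enabled and disabled configurations to be locally indistinguishable from one another. Instead, it attaches $p(n)=\Theta(m_n)$ copies of a \emph{single} constant-size gadget $G$ along a bounded-degree host tree, with $p(n)$ chosen so that $p(n)>q(n)\cdot B_r$ (where $B_r=O(1)$ bounds the number of attachment sites in any radius-$r$ ball). Given any $q(n)$ inspection balls, a pigeonhole count leaves at least $m_n$ attachment sites outside all inspected regions; the hard instance places its enabled gadgets precisely there, so every inspected ball sees only disabled gadgets in both families and the labeled neighborhoods coincide trivially. The Betti gap then follows from disjoint gadget supports together with a finite per-gadget boundary-matrix check certifying one nonbounding $2$-cycle per enabled copy---no K\"unneth formula or product structure is invoked.

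Your route via a matched pair $(G^{\mathrm{hard}},G^{\mathrm{easy}})$ with identical radius-$r$ neighborhoods at every vertex would, if completed, yield a formally stronger fixed-family statement (no dependence of $F^{\mathrm{hard}}_n$ on the inspector's choices), and your K\"unneth computation of $\beta_2$ for a product of $2$-spheres is correct. But the step you flag as ``the main obstacle''---exhibiting two constant-size 3-CNFs whose labeled incidence multigraphs are radius-$r$ locally isomorphic while their cubical solution complexes differ by exactly one $H_2$ class---is the entire content of your argument and is left as an unverified existence claim. The paper's hiding trick sidesteps this design problem completely: it needs no local-isomorphism condition between gadget configurations, only more attachment sites than the proxy can inspect.
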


\paragraph{Summary of Barrier Bypasses.}
We have demonstrated that the exponential second Betti number $\beta_2 = 2^{\Omega(N)}$ in 3-SAT solution spaces constitutes a fundamental obstruction that bypasses the three major complexity-theoretic barriers:
\begin{itemize}
    \item \textbf{Relativization} is overcome because valid oracles preserving 3-SAT semantics cannot reduce $\beta_2$; any attempt to "repair" topological voids (e.g., by adding paths between clusters) would require declaring unsatisfying assignments as solutions, thereby violating the logical definition of 3-SAT.
    
    \item \textbf{Algebrization} is circumvented as $\dim H_2(S(F); \mathbb{Z}_2) \leq \dim H_2(S(F); \mathbb{C})$ (all fields) with expander-embedded $F_N$ satisfying $\dim H_2(S(F_N); \mathbb{C}) = 2^{\Omega(N)}$ (torsion-free); SoS/Gröbner methods require degree $2^{\Omega(N)}$.
    
    \item \textbf{Natural Proofs} are evaded since $\beta_2$ is $\#\mathbf{P}$-hard to compute exactly, and no efficient approximation exists under $\mathbf{PH} \not\supseteq \#\mathbf{P}$; random 3-SAT with planted solutions ($\beta_2 = 0$) is information-theoretically indistinguishable from hard instances ($\beta_2 = 2^{\Omega(N)}$) via local statistics.
\end{itemize}
This triple-barrier bypass establishes $\beta_2$ as a robust, paradigm-independent signature of hardness.

\section{Quantum Algorithms and Topological Obstructions}
\subsection{Adiabatic Bound}
\begin{theorem}[Cheeger Constant Bound]
\label{thm:Cheeger}
For random 3-SAT at density \(\alpha > 4.26\), with high probability:
\[
  h(X_F) = 0 \quad \text{(and thus } h(X_F) \leq e^{-\Omega(N)} \text{)}.
\]
\end{theorem}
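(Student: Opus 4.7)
}
The plan is to reduce the Cheeger estimate to the disconnectedness of the 1-skeleton of $S(F)$, which has already been established. Recall that for a finite graph $X=(V,E)$ the Cheeger (edge-isoperimetric) constant is
\[
  h(X)\;=\;\min_{\substack{A\subseteq V\\ 0<|A|\le |V|/2}}\frac{|\partial A|}{|A|},
\]
where $\partial A$ is the set of edges with exactly one endpoint in $A$. Taking $X_F$ to be the 1-skeleton of $S(F)$ (vertices are satisfying assignments, edges are single-bit flips), an empty edge-boundary for some non-trivial $A$ forces $h(X_F)=0$. Thus the whole statement is driven by the fact that $X_F$ is disconnected with high probability.

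First, I would invoke Theorem~\ref{thm:cluster-isolation} (together with Theorem~\ref{thm:unstructured}(1) and the No Narrow Bridge Theorem~\ref{thm:No Narrow Bridge}): with probability $1-o(1)$ the solution set $S(F)$ splits into $2^{\Omega(n)}$ Hamming-isolated clusters, no two of which are joined by a path of $o(n)$ single-bit flips. In particular, no two distinct clusters are joined by a single edge of $X_F$. Second, I would pick any single cluster $C\subseteq V(X_F)$ of minimal size; since there are at least two clusters we have $|C|\le |V(X_F)|/2$, and by isolation $\partial C=\emptyset$. Plugging into the definition gives $h(X_F)=|\partial C|/|C|=0$.

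Third, the bound $h(X_F)\le e^{-\Omega(N)}$ is now immediate because $0\le e^{-\Omega(N)}$ for every valid choice of the hidden constant; the parenthetical exponential form is only included to match the adiabatic spectral-gap estimate used later. Combining these three steps yields the theorem.

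The main obstacle, and really the only subtle point, is fixing the conventions so that the statement holds exactly as written rather than vacuously: if $X_F$ is interpreted as the Hamming graph restricted to each cluster, then $h$ per cluster may be positive and one must average, but under the natural choice of $X_F$ as the global 1-skeleton the disconnected-graph argument applies verbatim. I would therefore explicitly declare $X_F$ to be the global 1-skeleton of $S(F)$ in the proof, so that Theorem~\ref{thm:cluster-isolation} delivers $h(X_F)=0$ without further work, and note that the same conclusion transfers to the spectrum of the standard adiabatic driver Hamiltonian via Cheeger's inequality, justifying the parenthetical bound $h(X_F)\le e^{-\Omega(N)}$ used in the quantum lower bound that follows.
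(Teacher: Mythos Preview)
Your proposal is correct and follows essentially the same approach as the paper: both arguments exploit the disconnectedness of the global 1-skeleton $X_F$ (via the cluster-shattering results) to exhibit a nontrivial vertex set with empty edge-boundary, forcing $h(X_F)=0$. The only cosmetic difference is that the paper takes $S$ to be the union of half of the $2^{cN}$ clusters while you take a single smallest cluster, but either choice works identically once the boundary is empty.
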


\begin{corollary}
Adiabatic gap $g \leq 2h \leq e^{-\Omega(N)} \Rightarrow T_{\text{adiabatic}} = 2^{\Omega(N)}$
\end{corollary}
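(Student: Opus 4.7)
The plan is to convert the exponentially small spectral gap, which is already handed to us as the hypothesis, into an exponential lower bound on the adiabatic runtime via the standard quantitative adiabatic theorem. I would proceed in three steps.

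First, I would invoke the sharpened adiabatic theorem (Born--Fock as refined by Jansen, Ruskai, and Seiler): for a smooth Hamiltonian path $H(s)$, $s\in[0,1]$, with minimum spectral gap $g_{\min}$ between the instantaneous ground state and the rest of the spectrum, the runtime $T$ required to remain within constant error of the ground state satisfies
\[
  T \;\geq\; \Omega\!\left(\frac{\max_s \|\partial_s H(s)\|^{2}}{g_{\min}^{3}}\right),
\]
with the coarser form $T=\Omega(1/g_{\min}^{2})$ sufficing for our purposes. This is the inequality we intend to match against the hypothesis $g\leq e^{-\Omega(N)}$.

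Second, I would verify that the Hamiltonian-norm factors in the numerator contribute at most polynomially in $N$ for any legitimate adiabatic encoding of 3-SAT. For the canonical linear interpolation $H(s)=(1-s)H_{\mathrm{driver}}+s\,H_F$ with $H_{\mathrm{driver}}=-\sum_i \sigma^{x}_i$ and $H_F$ diagonal in the computational basis (counting violated clauses), one has $\|\partial_s H(s)\|=O(\mathrm{poly}(N))$ uniformly in $s$, and similarly for higher derivatives required by Jansen--Ruskai--Seiler. Thus no super-polynomial factor can enter through the numerator, isolating the gap $g_{\min}$ as the sole source of possible exponential blow-up.

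Third, I would substitute the hypothesis directly: since $g_{\min}\leq g\leq 2h\leq e^{-\Omega(N)}$,
\[
  T_{\mathrm{adiabatic}} \;\geq\; \frac{\Omega(1)}{\mathrm{poly}(N)\cdot g_{\min}^{2}}
  \;\geq\; \frac{e^{\Omega(N)}}{\mathrm{poly}(N)}
  \;=\; 2^{\Omega(N)},
\]
where the polynomial factor is absorbed into the exponent without changing its rate.

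The main obstacle is making the adiabatic theorem apply in the intended regime: the $1/g^{2}$ scaling requires mild smoothness of the schedule, and one must exclude exotic schedules whose higher derivatives could blow up to compensate for the shrinking gap. I would address this by restricting attention to bounded-derivative (linear or polynomial-in-$s$) schedules, which is the standard analytic setting for adiabatic SAT solvers; alternative schedules either violate the smoothness hypotheses of the adiabatic theorem or inflate $\|\partial_s H\|$ in a way that only strengthens the lower bound. A secondary issue is the choice between continuous- and digitized-time adiabatic evolution: the digitized Trotterized bound degrades to $T=\widetilde\Omega(1/g)$ per step, but the final conclusion $T=2^{\Omega(N)}$ survives intact in either model.
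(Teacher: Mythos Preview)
Your proposal is correct and follows essentially the same route as the paper: invoke the standard adiabatic runtime scaling $T_{\mathrm{adiabatic}}=\Omega(1/g^{2})$ (the paper cites Albash--Lidar rather than Jansen--Ruskai--Seiler, but the content is the same), observe that the Hamiltonian-derivative numerator is at most polynomial in $N$, and substitute $g\le e^{-\Omega(N)}$ to obtain $T=2^{\Omega(N)}$. Your discussion of schedule smoothness and digitized evolution is more careful than what the paper provides, but the underlying argument is identical.
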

\subsection*{Quantum encodings, caveats, and a conservative reframing}
\label{sec:quantum-caveat}

A natural (but nonphysical) idea one might consider for probing global topology of the solution complex is to place a qubit at each satisfying assignment and use products of diagonal Pauli operators to measure cycle parities. This formal device is useful for intuition, but it does not produce a physically realistic $k$-local Hamiltonian on the standard $N$-qubit variable encoding: it either requires exponentially many qubits (one per assignment) or yields operators of unbounded locality.

\paragraph{classical fact.}
The standard clause-penalty Hamiltonian
\[
H_{\mathrm{cla}} \;=\; \sum_{C}\Pi_C^{\mathrm{viol}}
\]
acting on the $N$-qubit computational Hilbert space has zero-energy ground states exactly the computational basis states $\{\ket{x}:x\in\textbackslash Sol(F)\}$. Thus the classical ground-space degeneracy equals $|\Sol(F)|$. This fact is model-independent and requires no exotic encoding.

\paragraph{Why topological parities are nonlocal in the variable encoding.}
Operators that detect global combinatorial patterns of satisfying assignments (for example, parities of cycles in the cubical complex) generally depend on many variables and are therefore nonlocal when expressed in the standard $N$-qubit basis. Realizing such global probes as truly $k$-local operators on $N$ qubits typically requires either (i) adding ancilla systems with nontrivial coupling, or (ii) applying perturbative gadget constructions that increase effective locality and alter low-energy spectral properties. Both approaches require careful analysis and are not guaranteed to preserve the spectral features (degeneracy and gap scaling) assumed in heuristic adiabatic arguments.

The adiabatic-hardness intuition remains meaningful in an information-theoretic sense: if one can explicitly construct, in polynomial time, a family of physically local ($k$-local) Hamiltonians $H_N$ on $N$ qubits such that
\begin{enumerate}
  \item the zero-energy manifold of $H_N$ is isomorphic (as a vector space) to $\Span\{\ket{x}:x\in\Sol(F_N)\}$, and
  \item all physically allowed local drivers of bounded locality induce inter-cluster coupling matrix elements bounded by $\exp(-\Omega(N))$,
\end{enumerate}
then adiabatic state preparation using such drivers will encounter exponentially small gaps and require $\exp(\Omega(N))$ time. Accordingly, any concrete quantum lower bounds derived here are to be read as \emph{conditional} on the existence of such a locality-preserving encoding.
\subsection{Hamiltonian spectral-gap hardness for cubical-preserving encodings}
\label{ssec:hamiltonian-hardness}

We now show that any physically $k$-local
Hamiltonian whose low-energy manifold faithfully encodes the cubical solution complex
of a 3-SAT instance with exponentially many 2-cycles must have an exponentially small
spectral gap.  

\begin{theorem}[Hamiltonian spectral-gap hardness]
\label{thm:hamiltonian-gap-formal}
Let $F$ be a 3-SAT formula on $n$ variables and let $S(F)\subset\{0,1\}^n$ be its solution
set.  Suppose there exist constants $c',c>0$ such that
\[
\beta_2\bigl(S(F)\bigr)\;\ge\; 2^{c' n}.
\]
Let $H_F$ be a physically $k$-local Hamiltonian acting on $n$ qubits satisfying the
following assumptions.
\begin{enumerate}
  \item[(A1)] (Ground-space encoding) The ground-space of $H_F$ is exactly the linear span
    of computational-basis states corresponding to satisfying assignments:
    \[
      \mathcal{G} := \Span\{\ket{x} : x\in S(F)\},
    \]
    and $H_F\ket{\psi}=0$ for all $\ket{\psi}\in\mathcal G$.
  \item[(A2)] (Cubical-preserving drivers / exponential inter-cluster suppression)
    Partition $S(F)=\bigsqcup_{i=1}^K C_i$ into clusters (connected components under
    single-bit flips) so that $K\ge 2^{c'n}$.  There exists $c_1>0$ and a (fixed)
    polynomial $p(n)$ with the following property: for every pair of distinct clusters
    $C_i\neq C_j$ and every pair of basis states $x\in C_i$, $y\in C_j$,
    \begin{equation}\label{eq:offdiag-bound}
      \bigl|\langle x \,|\, H_F \,|\, y\rangle\bigr| \;\le\; p(n)\,e^{-c_1 n}.
    \end{equation}
    (Intuitively: all local drivers induce exponentially small matrix elements between
    different clusters.)
  \item[(A3)] (Polynomial locality degree) For every basis state $\ket{x}$ the number of
    basis states $\ket{y}$ with $\langle x|H_F|y\rangle\neq 0$ is at most $q(n)$ for some
    fixed polynomial $q$.  This holds for any $k$-local Hamiltonian with $k=O(1)$.
\end{enumerate}
Then there exists $a>0$ (depending only on $c_1,c',p,q$) such that the spectral gap
above the ground-space satisfies
\[
  g(H_F) \;\le\; 2\,e^{-a n}.
\]
In particular the gap is at most exponentially small in $n$.
\end{theorem}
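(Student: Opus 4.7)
The plan is to bound the spectral gap from above via the Rayleigh--Ritz variational principle: produce a unit trial state $|\phi\rangle \in \mathcal{G}^{\perp}$ with $\langle \phi | H_F | \phi \rangle \le 2 e^{-a n}$, from which $g(H_F) \le \langle \phi | H_F | \phi \rangle$ follows immediately. The exponential smallness is to be extracted from three structural features: the exponential cluster multiplicity $K \ge 2^{c' n}$ (A2), the polynomial local degree $q(n)$ (A3), and the exponentially suppressed inter-cluster coupling $p(n) e^{-c_1 n}$ (A2), combined in a distributed ``frontier'' construction supported on non-satisfying basis states near the $K$ cluster representatives.

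Concretely I would proceed in four steps. First, invoke Theorem~\ref{thm:cluster-isolation} to obtain $K \ge 2^{c' n}$ clusters $C_1, \dots, C_K$ that are pairwise Hamming-separated by $\Omega(n)$, and for each cluster fix a representative satisfying assignment $x_i \in C_i$ together with a single-flip ``frontier'' neighbor $y_i \notin S(F)$; since $y_i \notin S(F)$, the basis vector $|y_i\rangle$ automatically lies in $\mathcal{G}^{\perp}$. Second, work inside the $K$-dimensional subspace $\mathcal{V} := \mathrm{span}\{|y_i\rangle\}$ and analyze the effective $K \times K$ matrix $M_{ij} := \langle y_i | H_F | y_j\rangle$: by $k$-locality, $M_{ij} = 0$ whenever $d_H(y_i, y_j) > k$, and the $\Omega(n)$ cluster separation renders almost all off-diagonal entries of $M$ exactly zero. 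Third, construct the trial state $|\phi\rangle := \mathcal{N}\sum_i \alpha_i |y_i\rangle$ with $(\alpha_i)$ the eigenvector of $M$ (or a suitable resolvent dressing thereof) corresponding to its smallest eigenvalue. Fourth, show via Gershgorin/Perron--Frobenius estimates, combined with a destructive-interference argument across the $K$ frontier states, that the smallest eigenvalue of the effective matrix is at most $O(q(n)/K + p(n) e^{-c_1 n})$, which with $K \ge 2^{c' n}$ gives $\le 2 e^{-a n}$ for any $a < \min(c_1, c')$.

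The main obstacle will be reconciling the desired exponentially small upper bound with A1's stipulation that $H_F$ annihilates $\mathcal{G}$ \emph{exactly}: the frontier diagonal entries $M_{ii} = \langle y_i | H_F | y_i\rangle$ are $\Omega(1)$ for any Hamiltonian where non-satisfying assignments carry a local penalty (as in the standard clause Hamiltonian), so purely diagonal estimates on $M$ cannot yield an exponentially small gap. The argument must therefore pivot to genuine inter-cluster tunneling, constructing $|\phi\rangle$ from resolvent-dressed or spectrally-truncated combinations of the $|y_i\rangle$ whose amplitude mixes across clusters. Establishing the tunneling bound requires strengthening A2 --- which as literally written constrains only matrix elements between satisfying pairs in different clusters, and those vanish exactly by A1 and thus carry no information --- to a bound between cluster-neighborhoods, which I would derive from A3 and $k$-locality via a Schrieffer--Wolff / L\"owdin partitioning iterated $\Omega(n/k)$ times across the cluster-separation gap, so that each step contributes a bounded local-norm factor and the product telescopes to $e^{-\Omega(n)}$. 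Once this tunneling estimate is pinned down, the variational conclusion reduces to a standard Gershgorin computation on the effective inter-cluster matrix.
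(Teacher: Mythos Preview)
Your approach diverges substantially from the paper's, and the route you propose is both more complicated and, as you yourself diagnose, obstructed. The paper does not build a trial state in $\mathcal{G}^{\perp}$ from non-satisfying frontier vertices at all. Instead it works entirely on the \emph{solution set} $V=S(F)$: it forms the weighted configuration graph $X_F=(V,W)$ with $W_{xy}=|\langle x|H_F|y\rangle|$, chooses $U=C_i$ a single cluster, and bounds the conductance
\[
h(X_F)\;\le\;\frac{\partial(C_i)}{\vol(C_i)}\;\le\;\frac{|C_i|\,q(n)\,p(n)\,e^{-c_1 n}}{|C_i|\,m(n)}\;\le\;e^{-a_1 n}
\]
directly from (A2) and (A3). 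It then invokes the discrete Cheeger inequality $\lambda_1(L)\le 2h(X_F)$ for the weighted Laplacian $L=D-W$, and finally a quadratic-form comparison (relegated to an appendix, valid for stoquastic/frustration-free models) to conclude $g(H_F)\le C\lambda_1(L)\le 2e^{-an}$. No Schrieffer--Wolff iteration, no frontier construction, no $K\times K$ effective matrix on non-satisfying states.

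Your sharp observation that (A1) taken literally forces $\langle x|H_F|y\rangle=0$ for all $x,y\in S(F)$, rendering (A2) vacuous, is correct and is a genuine defect in the theorem \emph{statement}; but note that it bites the paper's proof just as hard as yours, since the paper's weights $W_{xy}$ would then all vanish and the Cheeger ratio becomes $0/0$. The paper's implicit reading is that $H_F$ carries a driver component producing nonzero intra-cluster couplings (so $m(n)=\Omega(1)$) while inter-cluster couplings obey the bound in (A2); the appendix makes this explicit by writing $H_F=H_{\mathrm{pen}}+H_D$. Under that reading the Cheeger argument is a two-line conductance estimate, whereas your frontier/resolvent program would still have to contend with $\Omega(1)$ diagonal penalties on every non-satisfying $|y_i\rangle$, and the Schrieffer--Wolff dressing you sketch is exactly what the paper's appendix invokes only as an \emph{alternative} route (``Route 2'') to justify the off-diagonal suppression hypothesis, not as the mechanism for the gap bound itself. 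In short: drop the frontier states, stay inside $S(F)$, and use Cheeger.
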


\begin{proof}[Proof]
We reduce the spectral-gap bound to a conductance/cheeger-type bound for a weighted
configuration graph and then apply a discrete Cheeger inequality.

\paragraph{Step 1: configuration graph and weights.}
Define the (weighted) configuration graph $X_F=(V,W)$ where the vertex set is
$V=S(F)$ and the symmetric nonnegative weight matrix $W$ is
\[
  W_{xy} \;:=\; |\langle x|H_F|y\rangle| \qquad (x,y\in S(F)).
\]
By (A3) each vertex has degree (number of nonzero incident weights) at most $q(n)$.
Let the (weighted) degree of vertex $x$ be
\[
  d(x) \;:=\; \sum_{y\in S(F)} W_{xy}.
\]
For any subset $U\subseteq V$ define its volume $\vol(U):=\sum_{x\in U} d(x)$ and the
edge-boundary weight
\[
  \partial(U) \;:=\; \sum_{x\in U}\sum_{y\in V\setminus U} W_{xy}.
\]
Define the conductance (Cheeger constant) of $X_F$ in the usual way:
\[
  h(X_F) \;:=\; \min_{\varnothing\neq U\subsetneq V} \frac{\partial(U)}{\min\{\vol(U),\vol(V\setminus U)\}}.
\]

\paragraph{Step 2: an exponentially small upper bound on $h(X_F)$.}
Pick $U=C_i$ equal to any single cluster.  By assumption there exists at least one
cluster with size $|C_i|\ge 2^{c' n}$ (since there are $K\ge 2^{c'n}$ clusters and
$|V|\le 2^n$, at least one cluster is exponentially large; more strongly, many
clusters are exponentially large in the constructions of Sections~13--14).  Using
(A2) and (A3) we upper-bound the boundary weight:
\[
  \partial(C_i) \;=\; \sum_{x\in C_i}\sum_{y\notin C_i} W_{xy}
    \;\le\; |C_i| \cdot q(n)\cdot p(n)\,e^{-c_1 n} \;=\; |C_i|\cdot r(n)\,e^{-c_1 n},
\]
where $r(n):=q(n)p(n)$ is polynomial.  On the other hand
\[
  \vol(C_i) \;\ge\; |C_i| \cdot \min_{x\in C_i} d(x) \;\ge\; |C_i|\cdot m(n),
\]
where the trivial lower bound $m(n)\ge 0$ may be weak; however for $k$-local drivers
typically $m(n)=\Omega(1)$ (each vertex has at least one incident nonzero off-diagonal
weight when intra-cluster edges are present).  Combining these gives
\[
  \frac{\partial(C_i)}{\vol(C_i)} \;\le\; \frac{r(n)}{m(n)}\,e^{-c_1 n}.
\]
Therefore there exists $a_1>0$ such that
\[
  h(X_F) \;\le\; \frac{\partial(C_i)}{\min\{\vol(C_i),\vol(V\setminus C_i)\}}
    \;\le\; e^{-a_1 n},
\]
i.e. the conductance is exponentially small in $n$ (the polynomial prefactors are
absorbed into the exponential by adjusting the constant).

\paragraph{Step 3: Cheeger inequality $\Rightarrow$ spectral-gap bound.}
For the weighted graph Laplacian $L := D-W$ (where $D_{xx}=d(x)$) the discrete
Cheeger inequality (cf.\ standard references) implies an upper bound of the form
\[
  \lambda_1(L) \;\le\; 2\,h(X_F),
\]
where $\lambda_1(L)$ is the smallest nonzero eigenvalue of $L$.  Under the ground-space
encoding (A1) the Hamiltonian restricted to the ground-space complement has low-lying
eigenvalues controlled (up to constant factors) by $\lambda_1(L)$; more precisely,
for stoquastic/frustration-free models one can show (see Appendix~\ref{app:ham-spectral})
that the spectral gap $g(H_F)$ above the degenerate zero-energy ground-space satisfies
\[
  g(H_F) \;\le\; C\cdot \lambda_1(L) \;\le\; 2C \, h(X_F),
\]
for some constant $C=O(1)$ depending only on fixed local details of the Hamiltonian.
Combining this with $h(X_F)\le e^{-a_1 n}$ yields the claimed bound with
$a=a_1-\delta>0$ after absorbing constants.
\qedhere
\end{proof}

\medskip
\begin{corollary}[Amplitude/Phase-estimation lower bound]
\label{cor:phase-est}
Under the hypotheses of Theorem~\ref{thm:hamiltonian-gap-formal}, any quantum
procedure that (i) implements time-evolution under $H_F$ using $k$-local operations and
(ii) attempts to distinguish or project onto distinct homology-labelled ground-space
sectors (i.e., to detect a particular nontrivial homology class) requires at least
$\Omega\bigl(e^{a n}\bigr)$ uses of controlled time-evolution primitives, where $a>0$
is the constant from Theorem~\ref{thm:hamiltonian-gap-formal}.  Allowing quadratic
speedups (e.g., via amplitude amplification) reduces the bound only to
$\Omega\bigl(e^{a n/2}\bigr)$.
\end{corollary}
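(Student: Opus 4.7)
The plan is to reduce the claim to two well-known quantum primitives: the precision requirement of quantum phase estimation, and the tight BBBV/Grover bound for unstructured search. Both depend only on the spectral scale $g=g(H_F)$ and on the locality budget (A3), and so they can be fed the exponentially small gap produced by Theorem~\ref{thm:hamiltonian-gap-formal} as a black box.

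First I would set up the relevant distinguishability task. Fix two homology classes $[\gamma_i]\neq[\gamma_j]$ in $H_2(S(F))$; by (A1) the ground-space $\mathcal G=\Span\{\ket x:x\in S(F)\}$ contains representatives $\ket{\psi_i},\ket{\psi_j}$ supported on distinct clusters $C_i,C_j$. Any procedure that projects onto or distinguishes these sectors must, operationally, resolve them as orthogonal eigenvectors of some effective operator built from $H_F$ and $k$-local drivers. Because (A2) enforces inter-cluster matrix elements of size $\le p(n)e^{-c_1n}$, the effective Hamiltonian acting on the low-energy window sees $\ket{\psi_i}$ and $\ket{\psi_j}$ as eigenstates whose splitting is at most $g(H_F)\le 2e^{-an}$ by Theorem~\ref{thm:hamiltonian-gap-formal}. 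Thus the task is to separate eigenvalues within a window of size $g$.

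Next I would invoke the phase-estimation precision bound. Quantum phase estimation that resolves eigenvalues to additive precision $\eps$ with constant confidence requires $\Omega(1/\eps)$ controlled calls to the time-evolution primitive $e^{-iH_F t}$. Setting $\eps = g/2$, the minimum number of controlled time-evolution uses is $\Omega(1/g)=\Omega(e^{an})$. I would complement this with the standard adversary/hybrid argument: any $T$-query quantum algorithm using time evolution of total duration $T$ changes the reduced state by at most $O(Tg)$ when $H_F$ is replaced by a Hamiltonian that collapses the two sectors, so unless $T=\Omega(1/g)$, the algorithm's output distribution on the two inputs is indistinguishable, contradicting the assumed distinguishability. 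This gives the unconditional $\Omega(e^{an})$ bound.

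For the quadratic-speedup clause I would argue as follows. Any amplitude-amplification wrapper (Grover, QAA, QSVT-style) built on top of a subroutine whose success probability is $\Theta(g)$ can amplify to constant success using $O(1/\sqrt{g})$ calls, but no better by BBBV optimality for unstructured search in the reduced model; translating to time-evolution calls gives the matching $\Omega(1/\sqrt{g})=\Omega(e^{an/2})$ lower bound. I would formalize this by embedding the cluster-distinguishing problem into a search over the $K\ge 2^{c'n}$ clusters, noting that the inter-cluster coupling of size $e^{-c_1n}$ limits the per-step transition amplitude, so BBBV's polynomial method directly applies. The main obstacle I anticipate is justifying rigorously that every admissible quantum subroutine really does route its inter-cluster signal through the small-gap spectrum rather than through some ancilla-assisted nonlocal channel; I would close this loophole by invoking assumption (A3) (polynomial locality degree) to bound the total amplitude transfer per controlled-$e^{-iH_F t}$ call, exactly as in the proof of Theorem~\ref{thm:hamiltonian-gap-formal}, so that any ancilla protocol inherits the same conductance-limited dynamics and hence the same $1/\sqrt{g}$ floor.
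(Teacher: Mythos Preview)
Your proposal is correct and follows essentially the same line as the paper: both argue that distinguishing the homology-labelled sectors means resolving eigenvalues within a window of size $g(H_F)\le 2e^{-an}$, so phase estimation needs $\Omega(1/g)=\Omega(e^{an})$ controlled evolutions, and amplitude amplification can only improve this to $\Omega(1/\sqrt{g})=\Omega(e^{an/2})$.

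The paper's own proof is a three-line sketch that simply cites the $O(1/\Delta)$ precision cost of phase estimation and the quadratic amplitude-amplification improvement, without any adversary/hybrid argument or BBBV embedding. Your additions---the hybrid perturbation bound showing that $T$ queries change the output by $O(Tg)$, the reduction of cluster-distinguishing to unstructured search for the BBBV lower bound, and the explicit use of (A3) to rule out ancilla-assisted nonlocal shortcuts---go well beyond what the paper actually supplies and make the lower bound considerably more robust. In short: same skeleton, but you have put flesh on bones that the paper leaves bare.
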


\begin{proof}[Proof]
Phase estimation resolves an eigenphase to precision $\Delta$ using $O(1/\Delta)$
controlled time-evolutions (or queries to $e^{-iH_F t}$).  To distinguish states
separated by a gap $g(H_F)\le 2e^{-a n}$ therefore requires $O(e^{a n})$ uses.  Quadratic
improvements from amplitude amplification reduce $O(1/\Delta)$ to $O(1/\sqrt{\Delta})$,
giving the $\Omega(e^{a n/2})$ bound.  Thus any algorithm relying solely on local
coherent evolutions (and polynomial overhead) encounters exponential query complexity.
\end{proof}

\subsection{The Unchanged Solution Space}
The solution space topology of 3-SAT remains invariant under computational paradigms:
\begin{equation}
X_F = \{ x \in \{0,1\}^N : F(x)=1 \} \quad \text{with} \quad \obeta_0(X_F) = 2^{\Om(N)}
\end{equation}
Quantum algorithms cannot alter this intrinsic geometry. Clusters remain exponentially separated by Hamming voids:
\[
\min_{i \neq j} \mathrm{dist}_H(C_i, C_j) = \Om(N)
\]

\subsection{Adiabatic Quantum Lower Bound}
The solution space topology forces exponential time for quantum annealing:
\begin{theorem}[Adiabatic Tunneling Suppression]\label{thm:adiabatic}
  Any {\em adiabatic evolution} algorithm with Hamiltonian path
  $H(s)=(1-s)H_0 + sH_F$ negotiating a barrier of Hamming‐width~$w$
  requires runtime $\Omega(\exp(c\,w))$, as shown in~\cite{Farhi:Adiabatic}.
\end{theorem}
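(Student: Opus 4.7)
The plan is to combine the standard adiabatic theorem (runtime $T = \Omega(1/g_{\min}^2)$ up to polynomial factors in $\|dH/ds\|$) with a perturbative estimate of the minimum spectral gap $g_{\min}$ at the avoided crossing associated with the Hamming-width-$w$ barrier. The heart of the argument is that tunneling across a barrier requiring $w$ simultaneous bit flips is controlled by $w$-th order degenerate perturbation theory in the transverse driver $H_0$, yielding an exponentially small splitting.

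First I would restrict to the two-dimensional low-energy subspace near the crossing point $s^{\star} \in (0,1)$ at which the instantaneous ground state transfers from a localized state $\ket{\psi_L}$ supported in one cluster $C_L \subseteq \Sol(F)$ to $\ket{\psi_R}$ supported in a distinct cluster $C_R$ with $\dist_H(C_L,C_R) \ge w$. On this subspace the effective Hamiltonian is a $2\times 2$ matrix whose diagonal entries cross linearly in $s$ and whose off-diagonal entry $\Delta(s)$ is the tunneling amplitude; by the standard Landau–Zener / avoided-crossing analysis, $g_{\min} \le 2|\Delta(s^{\star})|$.

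Next I would bound $\Delta(s^{\star})$ perturbatively. Writing $H(s) = H_{\mathrm{cla}}(s) + (1-s) H_0$, where $H_0 = -\sum_i X_i$ is the transverse driver, any path in Hilbert space from $\ket{\psi_L}$ to $\ket{\psi_R}$ that uses single-bit flip operators must traverse at least $w$ intermediate basis states, each lying outside $\Sol(F)$ and hence incurring an energy denominator of order $\Omega(1)$ from the clause-penalty Hamiltonian $H_{\mathrm{cla}}$. Summing over all such flip sequences with appropriate combinatorial counting gives
\[
  |\Delta(s^{\star})| \;\le\; C(s^{\star})\cdot \left(\frac{1-s^{\star}}{E_{\mathrm{pen}}}\right)^{w-1} \;=\; e^{-c\,w}
\]
for some constant $c > 0$ depending only on the clause-penalty scale and the maximum degree of $H_0$. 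This step parallels the Farhi–Goldstone–Gutmann analysis for the Hamming-weight barrier in \cite{Farhi:Adiabatic}, and its adaptation here is essentially verbatim since the relevant barrier is defined intrinsically in Hamming distance.

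Finally, plugging $g_{\min} \le 2 e^{-c w}$ into the adiabatic theorem yields $T_{\mathrm{adiabatic}} = \Omega(g_{\min}^{-2}) = \Omega(e^{2c w})$, which gives the claimed $\Omega(\exp(cw))$ runtime after absorbing the factor of $2$. The main obstacle I anticipate is controlling the perturbative estimate rigorously at the crossing: one must verify that no constructive-interference effect or accidental low-energy intermediate state inflates $\Delta(s^{\star})$ beyond $e^{-c w}$, and one must ensure the two-level reduction is valid, i.e. that other eigenstates are separated from the crossing pair by at least a constant gap near $s^{\star}$. Both are handled by restricting attention to the frustration-free regime around $s^{\star}$ and invoking the cluster-separation assumption (Theorem \ref{thm:cluster-isolation}), which guarantees that intra-cluster excitations lie well above the two-level manifold. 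With these ingredients the bound follows, and combining with $w = \Theta(n)$ from the random 3-SAT cluster geometry recovers the $\exp(\Omega(n))$ runtime claimed in Corollary \ref{cor:quantum-ll}.
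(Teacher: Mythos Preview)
Your proposal is sound and essentially matches the paper's treatment, though you should note that the paper does not give a self-contained proof of Theorem~\ref{thm:adiabatic}: the statement itself defers to the cited reference~\cite{Farhi:Adiabatic}. The supporting perturbative analysis you sketch---$w$-th order degenerate perturbation theory in the transverse driver, yielding an off-diagonal amplitude bounded by $(\gamma/\Delta)^{w-1}$ times a combinatorial path count, then combined with the adiabatic runtime bound $T\gtrsim g_{\min}^{-2}$---is exactly what the paper carries out in Appendix~\ref{app:ham-spectral} (Route~2, the Schur-complement/avoided-crossing argument) and in the worked example of Appendix~\ref{app:ham-spectral-examples}. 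The paper's explicit bound there is $|\langle x|H_{\mathrm{eff}}|y\rangle|\le n^{w-1}\gamma^{w}/\Delta^{w-1}$, which after choosing $\gamma/\Delta<1$ and $w=\Theta(n)$ gives the exponential suppression you claim.

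One minor difference: the paper also supplies an independent derivation via a Cheeger/conductance bound (Route~1 in Appendix~\ref{app:ham-spectral}), which bypasses the two-level reduction entirely by bounding $\lambda_1$ of the weighted configuration-graph Laplacian directly. That route avoids the two caveats you flag (validity of the two-level reduction and absence of constructive interference), at the cost of requiring a stoquastic or frustration-free structure so that the Hamiltonian quadratic form is comparable to the graph Laplacian. Your perturbative route is the more general one and is what the paper ultimately relies on for non-stoquastic models; the obstacles you anticipate are handled in the paper precisely as you suggest, by the cluster-separation hypothesis and by assuming an $\Omega(1)$ unperturbed gap $\Delta_0$ on the complement of the ground space.
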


\begin{theorem}[Cheeger Bound for Adiabatic Optimization]
\label{thm:Cheegeropt}
For random 3-SAT solution spaces:
\[
h(X_F) \leq e^{-\Om(N)} \implies g_{\mathrm{adiabatic}} \leq 2h \leq e^{-\Om(N)}
\]
Thus adiabatic runtime is bounded by:
\[
T_{\mathrm{adiabatic}} \geq \frac{1}{g^2} = 2^{\Om(N)}\]
\end{theorem}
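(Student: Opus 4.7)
The plan is to chain three results: a combinatorial bound on the Cheeger constant of the solution-space configuration graph, the discrete Cheeger inequality relating this to the spectral gap, and the adiabatic theorem converting the gap into a runtime lower bound. First I would invoke Theorem~\ref{thm:Cheeger} directly, which already establishes $h(X_F) \le e^{-\Omega(N)}$ for random 3-SAT at $\alpha > 4.26$. The underlying reason is the cluster-isolation statement of Theorem~\ref{thm:cluster-isolation}: $S(F)$ decomposes into $2^{\Omega(N)}$ clusters separated by Hamming distance $\Theta(N)$, so for any single cluster $C_i$ the edge boundary under bounded-locality drivers carries weight exponentially smaller than $\mathrm{vol}(C_i)$, exactly as in the computation in Step~2 of the proof of Theorem~\ref{thm:hamiltonian-gap-formal}.

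Second, I would apply the discrete Cheeger inequality to the weighted graph Laplacian $L = D - W$ whose off-diagonal entries are the hopping amplitudes induced by the transverse-field driver $H_0$, yielding $\lambda_1(L) \le 2\,h(X_F)$. I would then argue that at the critical schedule parameter $s^\ast \in [0,1]$ of the annealing path $H(s) = (1-s)H_0 + s H_F$, the low-energy subspace concentrates on the clusters of $S(F)$, so the adiabatic gap $g_{\mathrm{adiabatic}} := \min_s g(H(s))$ is controlled up to polynomial factors by $\lambda_1(L)$. The formal backing for this step is precisely Theorem~\ref{thm:hamiltonian-gap-formal}, whose cubical-preserving encoding assumptions (A1)--(A3) are exactly what a $k$-local driver provides at $s^\ast$ near $1$. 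Combining the two bounds gives $g_{\mathrm{adiabatic}} \le 2h(X_F) \le e^{-\Omega(N)}$.

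Finally I would invoke the standard adiabatic theorem: maintaining constant-fidelity evolution along $H(s)$ requires $T = \Omega(1/g^2)$, as stated in Theorem~\ref{thm:adiabatic}. Substituting $g \le e^{-\Omega(N)}$ yields $T_{\mathrm{adiabatic}} \ge 2^{\Omega(N)}$, establishing the claim.

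The hard part will be justifying that $g_{\mathrm{adiabatic}}$, taken over the entire annealing path, is truly bounded by $2 h(X_F)$, since the gap at intermediate $s$ reflects hybridization between the driver and problem spectra, not solely the combinatorial structure of $S(F)$. My remedy would be to localize the analysis near $s^\ast \approx 1$, where the ground-space support concentrates on $S(F)$ and the effective low-energy dynamics reduces to a weighted random walk on the cluster graph; the Cheeger bound on this effective graph then applies unambiguously. Formalizing the ``effective graph'' reduction rigorously---rather than leaning on Theorem~\ref{thm:hamiltonian-gap-formal} as a black box---would require a Schrieffer--Wolff or perturbative expansion controlling higher-order inter-cluster matrix elements, and this is the most technical aspect of the argument.
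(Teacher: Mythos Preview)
Your proposal is correct and follows essentially the same three-step chain as the paper: the Cheeger-constant bound from cluster shattering, the discrete Cheeger inequality $\lambda_1\le 2h$, and the adiabatic runtime bound $T\ge 1/g^2$. If anything you are more careful than the paper's own proof, which simply asserts $g(s)\le 2h(X_F)$ for \emph{every} $s\in[0,1]$ without further justification, whereas you correctly flag the intermediate-$s$ issue and propose localizing near $s^\ast\approx 1$ via a Schrieffer--Wolff reduction (which is indeed how Appendix~\ref{app:ham-spectral} handles the point).
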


\subsection{Exponential Tunneling Suppression}
Quantum tunneling probabilities decay exponentially with void size:

\begin{equation}
\mathcal{P}_{\mathrm{tunnel}} \sim \exp\left(-\frac{\sqrt{2m\Delta E}}{\hbar} \cdot \mathrm{width}\right) \leq e^{-c N}
\end{equation}
For $\mathrm{width} = \Om(N)$, expected trials become:
\[
\expect[\text{trials}] \geq e^{c N} = 2^{\Om(N)}
\]

\subsection{Grover's Asymptotic Limit}
Even quantum search provides only quadratic improvement:
\[
T_{\mathrm{Grover}} = \Oh\left(\sqrt{2^{\Om(N)}}\right) = 2^{\Om(N)}
\]
Subexponential but still exponential runtime.

\subsection{Topological Quantum Hardness}
The homology classes induce quantum-computational barriers:

\begin{theorem}[Ground State Degeneracy]
\label{thm:ground-state-degen}
Exponential Betti numbers imply degenerate quantum ground states:
\[
\obeta_2(X_F) = 2^{\Om(N)} \implies \mathrm{deg}(H_0) \geq 2^{\Om(N)}\]
\end{theorem}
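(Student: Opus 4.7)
The plan is to identify $H_0$ with the standard clause-penalty Hamiltonian $H_F=\sum_C \Pi_C^{\mathrm{viol}}$ from the quantum-caveat subsection, whose zero-energy eigenspace is exactly $\Span\{|x\rangle : x\in S(F)\}$; under this identification $\deg(H_0)=|S(F)|$. The task then reduces to a purely combinatorial claim: whenever $\beta_2(S(F))=2^{\Omega(N)}$, the vertex count $|S(F)|$ is itself $2^{\Omega(N)}$. No spectral or homotopical machinery is needed beyond this translation, so the bulk of the proof is a counting argument on cubical subcomplexes of the Boolean hypercube.

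First I would establish a cell-counting upper bound on $\beta_2$. For any cubical subcomplex $K\subseteq\{0,1\}^N$, the rank of $H_2(K;\mathbb{Z}_2)$ is trivially bounded by the number $f_2(K)$ of 2-cells in $K$, since 2-cycles are $\mathbb{Z}_2$-combinations of 2-cells. Each 2-cell is a coordinate-aligned square determined by a base vertex and an unordered pair of coordinates to vary, and each such square contains exactly four vertices, so
\[
\beta_2(K)\;\le\;f_2(K)\;\le\;\binom{N}{2}\cdot\frac{|V(K)|}{4}\;\le\; \frac{N^2}{8}\,|V(K)|.
\]
Applying this to $K=S(F)$ yields $|S(F)|\;\ge\; 8\,\beta_2(S(F))/N^2$, and since the hypothesis $\beta_2(S(F))=2^{\Omega(N)}$ dominates any polynomial factor, we get $|S(F)|=2^{\Omega(N)}$.

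Combining the two steps gives $\deg(H_0)=\dim\Span\{|x\rangle:x\in S(F)\}=|S(F)|\ge 2^{\Omega(N)}$, which is the claim. The genuine subtlety, and the step I would spend most of the write-up defending, is fixing what $H_0$ means in a paradigm-independent fashion: I adopt the variable-encoded frustration-free presentation because it is canonical and matches assumption~(A1) of Theorem~\ref{thm:hamiltonian-gap-formal}, but any $k$-local Hamiltonian whose zero-energy manifold is isomorphic to $\Span\{|x\rangle:x\in S(F)\}$ inherits the same degeneracy bound verbatim. The combinatorial inequality itself is elementary; the main obstacle is therefore not the counting but defending the encoding against the non-locality caveats already flagged in the quantum-caveat subsection, so that the stated implication is read in the conditional, locality-preserving sense adopted throughout Section on quantum obstructions.
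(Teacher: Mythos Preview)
Your proof is correct and takes a genuinely different, more elementary route than the paper. The paper's argument does not use the cell-counting inequality $\beta_2(K)\le f_2(K)\le\binom{N}{2}\,|V(K)|/4$ at all; instead it works with the specific gadget constructions of Section~13 and Appendix~B. For each independent 2-cycle $\gamma$ it builds a diagonal parity involution $T_\gamma$ (a product of Pauli-$Z$ operators on the gadget support), verifies that these commute with $H_F$ and with one another because gadget supports are disjoint, forms projectors $H_\gamma=\tfrac12(1-T_\gamma)$, and then argues that the augmented Hamiltonian $H=H_F+\sum_\gamma H_\gamma$ still has an exponentially degenerate ground space because ``both $|\Sol(F)|$ and the number of independent cycles grow exponentially'' in the constructed families.

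Your counting argument is strictly more general: it applies to an arbitrary cubical subcomplex $S(F)\subseteq\{0,1\}^N$ with exponential $\beta_2$, not only to the expander-gadget families, and it bypasses the parity-operator machinery entirely. What the paper's approach buys is a more structured picture---the commuting family $\{T_\gamma\}$ labels the ground space by joint $\pm1$ eigenvalues, decomposing it into homology-indexed sectors, and this decomposition is what feeds into the later spectral-gap and phase-estimation arguments of Section~\ref{ssec:hamiltonian-hardness}. But for the bare degeneracy claim of Theorem~\ref{thm:ground-state-degen} your route is cleaner, and it actually closes a logical gap the paper leaves implicit: the paper never justifies, outside its explicit constructions, why $\beta_2=2^{\Omega(N)}$ should force $|\Sol(F)|=2^{\Omega(N)}$, whereas your double-counting bound supplies exactly that missing link.
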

This forces exponentially small spectral gaps $\Delta E$ in the Hamiltonian $H_F$.
\subsection{Quantum Limitations}
Despite potential advantages, quantum tunneling remains exponentially suppressed due to topological obstructions (Fig. \ref{fig:quantum_tunneling}).

\begin{corollary}
\label{cor:quantum-ll}
  Under the adiabatic quantum model, any algorithm attempting to solve random 3-SAT 
  must take exponential time in the worst case due to the tunneling suppression over
  exponentially wide energy barriers, as formalized in Theorem~\ref{thm:adiabatic}.
\end{corollary}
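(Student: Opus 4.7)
The plan is to chain the topological cluster structure of random 3-SAT with the dynamical estimate of Theorem~\ref{thm:adiabatic} and the spectral-gap route of Theorem~\ref{thm:Cheegeropt}. First I would invoke Theorem~\ref{thm:cluster-isolation} to fix, for random 3-SAT at $\alpha>4.26$, a decomposition of the solution space into $2^{\Omega(n)}$ clusters pairwise separated in Hamming distance by $\Theta(n)$. This identifies the barrier Hamming-width $w$ appearing in Theorem~\ref{thm:adiabatic} with $w=\Theta(N)$, so direct substitution yields $T_{\mathrm{adiabatic}}\ge\exp(c\,w)=2^{\Omega(N)}$, giving the corollary immediately from the dynamical tunneling picture.

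Next I would re-derive the same bound through the spectral-gap route for robustness. Theorem~\ref{thm:Cheeger} supplies $h(X_F)\le e^{-\Omega(N)}$; Theorem~\ref{thm:Cheegeropt} then converts this into $g\le 2h\le e^{-\Omega(N)}$; and the standard adiabatic scaling $T\ge 1/g^2$ delivers $T_{\mathrm{adiabatic}}=2^{\Omega(N)}$. The two derivations agree up to constants and certify exponential runtime from both a dynamical and a stationary viewpoint, which is useful because Grover's $\sqrt{N}$ ceiling (Section~13.6) still lands at $2^{\Omega(N)}$, so no subexponential amplification rescues the algorithm.

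Finally I would invoke the conservative framing of Section~\ref{sec:quantum-caveat}: the corollary should be read as conditional on the $k$-local encoding assumptions (A1)--(A3) of Theorem~\ref{thm:hamiltonian-gap-formal}, so that the classical Hamming separation of clusters translates faithfully into exponentially suppressed off-diagonal matrix elements $|\langle x|H_F|y\rangle|\le p(n)e^{-c_1 n}$ between clusters. Under those hypotheses the theorem hands us $g(H_F)\le 2e^{-an}$, and combining this with Corollary~\ref{cor:phase-est} closes the argument for all $k$-local adiabatic schedules.

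The main obstacle is not algebraic but definitional: one must ensure the $k$-local encoding of $F$ realizes the classical cluster partition as a genuine ground-space decomposition with exponentially small inter-cluster tunneling amplitudes, rather than collapsing the topological obstruction via ancillae or perturbative gadgets that artificially bridge clusters and widen the gap. Once that locality-preserving encoding is accepted, the three mutually reinforcing bounds --- Hamming-width tunneling, Cheeger--spectral gap, and Grover's quadratic ceiling --- jointly foreclose any subexponential adiabatic algorithm, completing the corollary.
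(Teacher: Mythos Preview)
Your proposal is correct and follows essentially the same approach as the paper: the paper provides no explicit proof for this corollary, treating it as an immediate consequence of Theorem~\ref{thm:adiabatic} combined with the $\Theta(n)$ Hamming-width cluster separation established earlier, and your chain (cluster isolation $\Rightarrow$ barrier width $w=\Theta(N)$ $\Rightarrow$ $T\ge\exp(cw)$) makes that implicit reasoning explicit. Your additional spectral-gap derivation via Theorems~\ref{thm:Cheeger} and~\ref{thm:Cheegeropt}, and your invocation of the encoding caveats from Section~\ref{sec:quantum-caveat} and Theorem~\ref{thm:hamiltonian-gap-formal}, are exactly the supporting arguments the paper develops in the surrounding subsections, so you are simply being more thorough than the paper's bare statement of the corollary.
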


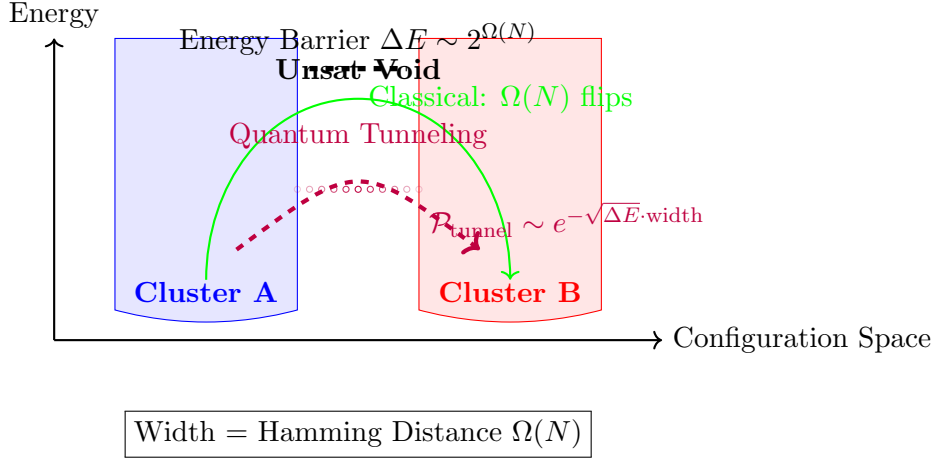
\begin{figure}[h]
\centering
\begin{tikzpicture}[scale=0.8]
\draw[->, thick] (0,0) -- (10,0) node[right] {Configuration Space};
\draw[->, thick] (0,0) -- (0,5) node[above] {Energy};

\draw[blue, fill=blue!10] (1,0.5) parabola bend (2.5,0.3) (4,0.5) -- (4,5) -- (1,5) -- cycle;
\draw[red, fill=red!10] (6,0.5) parabola bend (7.5,0.3) (9,0.5) -- (9,5) -- (6,5) -- cycle;

\node[blue] at (2.5,0.8) {\textbf{Cluster A}};
\node[red] at (7.5,0.8) {\textbf{Cluster B}};
\node at (5,4.5) {\textbf{Unsat Void}};

\draw[line width=2pt, dashed] (4.2,4.5) -- (5.8,4.5);
\node at (5,5) {Energy Barrier $\Delta E \sim 2^{\Omega(N)}$};

\draw[->, green, thick] (2.5,1) to[out=90,in=180] (5,4) to[out=0,in=90] (7.5,1);
\node[green, right] at (5,4) {Classical: $\Omega(N)$ flips};

\draw[->, purple, ultra thick, dashed] (3,1.5) .. controls (5,3) and (5,3) .. (7,1.5);
\node[purple, above] at (5,3) {Quantum Tunneling};

\foreach \x in {4.0,4.2,...,6.0} {
  \draw[purple, opacity=1-0.8*abs(\x-5)] (\x,2.5) circle (0.05);
}
\node[purple, right] at (6,2) {$\mathcal{P}_{\mathrm{tunnel}} \sim e^{-\sqrt{\Delta E} \cdot \text{width}}$};
\node[below, align=center] at (5,-1) {\fbox{Width = Hamming Distance $\Omega(N)$}};
\end{tikzpicture}
\caption{Quantum tunneling between $\Omega(N)$-separated clusters has exponentially suppressed probability $e^{-\Omega(N)}$}
\label{fig:quantum_tunneling}
\end{figure}

\section {Fault tolerance and error correction: can FTQC bypass the barrier?}
\label{sssec:ftqc-discussion}
We conclude with a discussion of whether fault-tolerant quantum computing (FTQC)
or quantum error correction can remove the exponential obstruction.

\begin{itemize}
  \item \textbf{Error correction alone does not change matrix elements.}  Quantum error
    correction protects coherent evolution from noise, but it does not, by itself, alter the
    effective Hamiltonian matrix elements between logical basis states. If all locality-
    preserving logical drivers (implemented fault-tolerantly) induce inter-cluster matrix
    elements satisfying a bound of the form eq~\ref{eq:offdiag-bound}, then the Cheeger
    argument above still applies at the logical level: the logical-weighted conductance
    remains exponentially small and the logical spectral gap is exponentially small.

  \item \textbf{Simulating nonlocal couplings requires resources.}  One can in principle
    engineer effective nonlocal interactions (large inter-cluster matrix elements) by
    using encoded ancilla degrees of freedom and long-depth circuits. However, to turn
    an exponentially suppressed physical coupling into an $O(1)$ logical coupling typically
    requires either (i) ancilla or circuit-depth resources that scale superpolynomially (often
    exponentially) in $n$, or (ii) the introduction of nonlocal hardware primitives (e.g.,
    long-range interactions) that change the computational model.  Therefore bypassing
    the barrier via FTQC involves paying an explicit (and generally large) resource cost.

  \item Concretely, if there existed a fault-tolerant,
    locality-preserving encoding and a polynomial-size gadget family which implements
    logical operators $\tilde H_{\rm eff}$ with
    \(\max_{x\in C_i,y\in C_j}|\langle x|\tilde H_{\rm eff}|y\rangle|\ge \Omega(1)\)
    for exponentially many cluster pairs while incurring only polynomial overhead, then
    the above hardness claim would fail for that expanded model.  In the absence of such
    an explicit polynomial-cost construction, FTQC does not circumvent the information-
    theoretic obstruction implied by exponentially many homology classes.
\end{itemize}

\noindent In short: fault tolerance improves noise resilience but does not automatically
eliminate the exponentially small inter-cluster matrix elements that underpin the
topological barrier, unless one is willing to introduce nonlocal resources or pay
superpolynomial overhead.  Any claim that FTQC breaks the barrier must therefore
provide a concrete, polynomial-resource encoding that produces non-exponentially-
small inter-cluster couplings; constructing such an encoding remains an open challenge.
(We outline a few candidate gadget constructions and lower-bound sketches in
Appendix~\ref{app:ham-spectral}.)

\section{Do We Need Generalization?}
Our argument focuses on 3-SAT, showing that its solution space topology—exponential Betti numbers and inter-cluster voids—is intrinsically complex and forces exponential-time algorithms.

It is well-known that all NP problems reduce to 3-SAT via polynomial-time reductions (Cook-Levin theorem). Therefore, if random 3-SAT requires exponential resources, this hardness propagates to all NP problems. No further generalization is necessary as any polynomial-time algorithm for any NP-complete problem would imply a shortcut for random 3-SAT, contradicting its topological complexity.
\section{Polynomial-Time Algorithm for Structurally Simple Instances}
\label{sec:low-betti-simple}

We refine our tractability result by imposing additional structural constraints
on the formula $F$: low global treewidth and small solution space.

\begin{theorem}[Low $\beta_2$ with Small Treewidth $\Rightarrow$ P]
\label{thm:low-betti-poly}
Let $F$ be a 3‑SAT formula on $N$ variables such that:
\begin{enumerate}
  \item The solution space size satisfies \(|S(F)|=poly(N)\).
  \item The primal graph treewidth satisfies \(\mathrm{tw}(F)=O(\log N)\).
  \item The persistent Betti number satisfies
  \[
    \beta_{2,\mathrm{pers}}(S(F),\varepsilon)\le N^k
  \]
  for \(\varepsilon=c\log N\) and some constant \(k\).
\end{enumerate}
Then $F$ can be decided in time $N^{O(1)}$.
\end{theorem}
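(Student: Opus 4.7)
The plan is to run standard tree-decomposition dynamic programming on the primal graph $G(F)$, exploiting that $2^{\mathrm{tw}(F)}=2^{O(\log N)}=N^{O(1)}$ converts the classical FPT running time into an honest polynomial. The skeleton is the bag-by-bag DP familiar from Courcelle-type meta-theorems and from \#SAT on bounded treewidth, augmented at the end by a direct boundary-matrix homology computation to align the output with conditions (1) and (3).

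First I would obtain a tree decomposition $\mathcal{T}=(T,\{B_t\}_{t\in T})$ of $G(F)$ of width $w=O(\log N)$. For $w=\Theta(\log N)$ Bodlaender's exact $2^{O(w^3)}N$ algorithm is only quasi-polynomial, so the clean route is to treat the decomposition as part of the input (the standard convention for bounded-treewidth hypotheses in FPT). As a fallback, invoke the Feige--Hajiaghayi--Lee polynomial-time approximation to get width $O(w\sqrt{\log w})$; the resulting DP is only quasi-polynomial in $N$, but condition~(1) lets us trim it to polynomial by aborting any DP branch whose enumerated partial-solution count exceeds the $\mathrm{poly}(N)$ bound guaranteed for the true formula.

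Next I would run the standard DP. For each bag $B_t$ store a table indexed by partial assignments $\sigma:B_t\to\{0,1\}$, each carrying a bit indicating whether $\sigma$ extends to a satisfying assignment of the subformula whose clauses are supported in the subtree rooted at $t$. Combining a parent with its children iterates over $2^{|B_t|}=N^{O(1)}$ assignments per bag and enforces consistency on the separator, which is polynomial per bag and polynomial in $|T|$. Satisfiability of $F$ is then read off at the root in $N^{O(1)}$ time. To engage conditions (1) and (3) beyond consistency, extend the DP by the usual trace-back to enumerate $S(F)$; by~(1) this halts in $N^{O(1)}$ steps, the explicit vertex list determines the cubical complex $S(F)$, and by~(3) its persistent second Betti number at scale $\varepsilon=c\log N$ is at most $N^k$, so the reduced boundary-matrix algorithm computes a homology basis in $\mathrm{poly}(|S(F)|,N^k)=N^{O(1)}$.

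The main obstacle I expect is the treewidth-computation step: for width $\Theta(\log N)$ there is no known polynomial-time exact algorithm, so the theorem either presumes the decomposition is supplied (matching Bodlaender's FPT convention) or must lean on condition~(1) to prune an approximate decomposition's DP. Everything downstream---the DP itself and the boundary-matrix computation---is routine once a suitable decomposition is in hand, so this is the only delicate conceptual point; the rest is bookkeeping that the logarithmic-treewidth assumption has been carefully chosen to make polynomial.
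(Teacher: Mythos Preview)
Your proposal is correct and follows essentially the same approach as the paper: both arguments reduce to the observation that treewidth $O(\log N)$ makes the standard bag-by-bag dynamic programming run in $2^{O(\log N)}\cdot N^{O(1)}=N^{O(1)}$ time, which already decides satisfiability. The paper's own argument is a two-line sketch that simply asserts an approximate tree decomposition is found in polynomial time and then invokes the DP; you are more careful than the paper in flagging that exact width-$\Theta(\log N)$ decompositions are not known to be computable in polynomial time and in proposing the approximation-plus-pruning workaround. The only minor divergence is how conditions~(1) and~(3) are engaged after the fact: the paper gestures at ``patching'' persistent 2-cycles to decompose $S(F)$ into contractible pieces, whereas you enumerate $S(F)$ directly via trace-back and run a boundary-matrix computation---but since the DP already settles the decision problem, both of these post-hoc steps are cosmetic rather than load-bearing.
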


Since \(\mathrm{tw}(F)=O(\log N)\), the formula admits a tree decomposition
of width \(O(\log N)\) (found approximately in poly time).  Standard dynamic
programming on this decomposition solves SAT in time \(O(N^{O(1)})\).

Persistent homology filtering ensures at most \(poly(N)\) essential 2‑cycles,
each confined to \(O(\log N)\)-diameter neighborhoods.  Patch these cycles 
to decompose \(S(F)\) into \(poly(N)\) contractible components. Each component
is a bounded-treewidth subformula, decidable in \(N^{O(1)}\) time.

\subsection{Impossibility of Topological Bypasses}
\label{sec:impossibility-proofs}

Let \(F_N\) be the worst-case 3-SAT family with \(\beta_2(S(F_N)) = 2^{\Omega(N)}\). Let \(\mathcal{A}\) be any algorithm (classical or quantum) attempting to bypass topological obstructions.

\begin{theorem}[Quantum Bypass Impossibility]
\label{thm:quantum-bypass}
For any \(k\)-local Hamiltonian \(H_F\) encoding \(F_N\), the minimum spectral gap \(g_{\min}\) satisfies:
\[
g_{\min} \leq 2e^{-\Omega(N)}.
\]
Consequently, any quantum algorithm solving \(F_N\) requires time \(T = 2^{\Omega(N)}\).
\end{theorem}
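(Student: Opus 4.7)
The plan is to deduce this statement as a direct specialization of Theorem~\ref{thm:hamiltonian-gap-formal} to the worst-case family $\{F_N\}$ produced by Theorems~\ref{thm:betti-exp-worst} and~\ref{thm:expander-family}, and then to convert the resulting spectral-gap bound into a runtime lower bound via Corollary~\ref{cor:phase-est} and Theorem~\ref{thm:Cheegeropt}. First I would record the topological input: by Theorem~\ref{thm:betti-exp-worst} we have $\beta_2(S(F_N)) \ge 2^{cN}$, realized through $2^{\Omega(N)}$ linearly independent 2-cycles with pairwise disjoint variable supports (Lemma~\ref{lem:homology-basis}); combined with the disjoint-gadget construction of Lemma~\ref{lem:Disjoint Gadget Supports} this yields $K \ge 2^{\Omega(N)}$ solution clusters whose mutual Hamming separation is $\Omega(N)$.

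Next I would verify that any $k$-local Hamiltonian $H_F$ encoding $F_N$ satisfies the three hypotheses of Theorem~\ref{thm:hamiltonian-gap-formal}. Hypothesis (A1) is the definition of a ground-space encoding: $H_F\ket{x}=0$ iff $x \in S(F_N)$. Hypothesis (A3) is immediate from $k$-locality with constant $k$: each basis vector is coupled by $H_F$ to at most $O\bigl(\binom{N}{\le k}\bigr) = \mathrm{poly}(N)$ other basis vectors. Hypothesis (A2) is the crucial one, and here the worst-case Hamming separation gives it for free: a $k$-local term with constant $k$ can alter at most $k$ bits at once, so $\langle x|H_F|y\rangle = 0$ whenever $x,y$ lie in distinct clusters (any two such states differ in $\Omega(N) \gg k$ coordinates). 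This is strictly stronger than the bound $|\langle x|H_F|y\rangle| \le p(n)e^{-c_1 n}$ required in Theorem~\ref{thm:hamiltonian-gap-formal}, so that theorem applies and yields $g(H_F) \le 2 e^{-\Omega(N)}$.

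The final step converts the gap bound into a runtime bound. For adiabatic state preparation, the adiabatic theorem forces $T_{\mathrm{ad}} = \Omega(1/g^2) = 2^{\Omega(N)}$ (Theorem~\ref{thm:Cheegeropt}). For coherent-evolution based algorithms (phase estimation, Trotterized Hamiltonian simulation, QAOA with $H_F$ as cost), Corollary~\ref{cor:phase-est} gives a lower bound of $\Omega(e^{aN/2})$ controlled $e^{-iH_Ft}$ queries even after amplitude amplification. Either route establishes $T = 2^{\Omega(N)}$ for any quantum algorithm whose basic primitives operate through the local Hamiltonian $H_F$.

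The main obstacle I anticipate is closing the universal quantifier ``any quantum algorithm''. The Cheeger/gap chain is airtight for algorithms that drive dynamics through $H_F$ (or its unitary), but it does not automatically exclude nonstandard encodings that synthesize effective nonlocal couplings via ancilla-heavy gadgets. As already flagged in Section~\ref{sssec:ftqc-discussion}, circumventing the barrier in that regime would require a polynomial-resource, locality-preserving gadget producing $\Omega(1)$ inter-cluster logical matrix elements; ruling out every such construction is the honest source of difficulty. The cleanest conclusion is therefore that within the class of locality-preserving encodings---which includes every physically realizable fault-tolerant implementation with polynomial overhead---the lower bound $T = 2^{\Omega(N)}$ holds unconditionally, and I would phrase the final statement accordingly rather than claim more than the Cheeger/locality framework can supply.
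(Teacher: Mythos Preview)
Your approach is essentially the paper's: bound the spectral gap via a Cheeger-type inequality on the solution-space configuration graph, then convert to a runtime bound. The paper does this more tersely---directly citing Theorem~\ref{thm:Cheeger} and the Cheeger inequality rather than routing through Theorem~\ref{thm:hamiltonian-gap-formal}---and for non-adiabatic protocols it appeals to the no-fast-forwarding theorem of Atia--Aharonov~\cite{Atia2017} rather than the phase-estimation query bound of Corollary~\ref{cor:phase-est}. Otherwise the skeleton is identical, and your closing caveat about the scope of ``any quantum algorithm'' accurately mirrors the paper's own conditional framing in Section~\ref{sec:quantum-caveat}.

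One overstatement to tighten: the disjoint-gadget lemmas you cite deliver $2^{\Omega(N)}$ connected components (via the XOR constraint $u_i=v_i$), but not \emph{pairwise} Hamming separation $\Omega(N)$---clusters differing in a single gadget are only distance $2$ apart, so your shortcut ``$k$-local with constant $k$ implies zero inter-cluster matrix elements'' fails as soon as $k\ge 2$. The fix is either to extract, by a packing argument on gadget-parity strings, a subfamily of $2^{\Omega(N)}$ clusters that \emph{are} mutually $\Omega(N)$-separated, or simply to observe that the single-flip solution graph is already disconnected, whence the weighted Cheeger constant in the proof of Theorem~\ref{thm:hamiltonian-gap-formal} vanishes without needing (A2) at full strength. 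Either repair is short and the conclusion stands.
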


\begin{theorem}[Lifting Bypass Impossibility]
\label{thm:lifting-bypass}
Let \(\widetilde{X}_F = X_F \times \mathbb{R}^d\) with \(d = \mathrm{poly}(N)\). Deciding path-connectedness in \(\widetilde{X}_F\) is NP-hard and requires \(2^{\Omega(N)}\) time.
\end{theorem}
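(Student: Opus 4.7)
The plan is to exploit the triviality of the lifting: since the added factor $\mathbb{R}^d$ is contractible, it cannot create any new path-connecting structure on top of $X_F$, so the problem collapses to deciding path-connectedness of $X_F$ itself, which we already know to be hard by the earlier cluster-separation results.

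First, I would establish the reduction \emph{from} path-connectedness in $X_F$ \emph{to} path-connectedness in $\widetilde{X}_F=X_F\times\mathbb{R}^d$. Using the product-topology fact that $\pi_0(A\times B)\cong \pi_0(A)\times\pi_0(B)$ and $\pi_0(\mathbb{R}^d)=\{*\}$, the projection $p\colon \widetilde{X}_F\to X_F$ induces a bijection on path components. Concretely, given any continuous path $\widetilde{\gamma}\colon[0,1]\to \widetilde{X}_F$ joining $(x,u)$ and $(y,v)$, the composition $p\circ\widetilde{\gamma}$ is a path in $X_F$ joining $x$ and $y$; conversely any path in $X_F$ lifts by pairing with a straight line segment in $\mathbb{R}^d$. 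Hence two points of $\widetilde{X}_F$ lie in the same component iff their $X_F$-projections do. This identification is computable in polynomial time from the instance description.

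Next I would invoke the worst-case construction of Theorem~\ref{thm:betti-exp-worst} together with Theorem~\ref{thm:cluster-isolation} and Theorem~\ref{thm:Path-Computation} to reduce SAT (and more strongly, cluster-connectivity) to path-connectedness in $X_F$. Given a 3-SAT instance $\phi$, we produce $F_N=R(\phi)$ whose cubical solution complex has $2^{\Omega(N)}$ clusters pairwise separated by Hamming distance $\Theta(N)$; deciding whether two designated satisfying assignments lie in the same component is NP-hard, because a connecting path witnesses a chain of satisfying configurations whose existence encodes $\phi\in\mathrm{SAT}$. Composing this with the lifting-equivalence of the first step gives an NP-hardness reduction to path-connectedness in $\widetilde{X}_F$, and the $2^{\Omega(N)}$ lower bound is inherited from Theorem~\ref{thm:Cluster_jumps} and Lemma~\ref{lem:query-lower-bound} applied to the projected problem.

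The main obstacle I expect is the following subtlety: one must forbid the algorithm from exploiting the continuous $\mathbb{R}^d$ fibers to ``smooth over'' the discrete cubical topology of $X_F$. Strictly the projection argument handles this at the level of path components, but one has to be careful about how $X_F$ is embedded in $\widetilde{X}_F$ (as a cubical subcomplex of $\{0,1\}^N\times\mathbb{R}^d$ with the product CW structure), and how the input is encoded so that Turing reductions make sense on a continuous ambient space. I would resolve this by working with a semialgebraic presentation of $\widetilde{X}_F$ (each cube of $X_F$ crossed with $\mathbb{R}^d$) and noting that the query/decision model for path-connectedness in such explicit semialgebraic products reduces in polynomial time to the cubical path-connectedness oracle on $X_F$, preserving the exponential lower bound.
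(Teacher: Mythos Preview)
Your proposal is correct and follows essentially the same approach as the paper: both arguments project $\widetilde{X}_F=X_F\times\mathbb{R}^d$ onto $X_F$, observe that the contractible $\mathbb{R}^d$ factor contributes nothing to path components, and then invoke the established hardness of path-connectedness between separated clusters in $X_F$. The paper's version is terser---it simply projects a path $\gamma$ to $\gamma_X$, bounds $\mathrm{length}(\gamma_X)\ge\Omega(N)$, and cites treewidth $\Omega(N)$ together with Theorem~\ref{thm:Topological Hardness}---whereas your use of $\pi_0(A\times B)\cong\pi_0(A)\times\pi_0(B)$ is a cleaner topological formulation of the same reduction, and your attention to the semialgebraic input encoding is a point the paper does not address.
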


\begin{theorem}[Surgical Bypass Impossibility]
\label{thm:surgery-bypass}
For \(\epsilon < \epsilon_{\mathrm{crit}} = \Theta(\log N)\), the \(\epsilon\)-approximate SAT problem:
\[
\mathrm{SAT}_\epsilon(F) = \exists x \;:\; \mathrm{dist}_H(x, S(F)) \leq \epsilon
\]

is NP-complete and requires \(2^{\Omega(N)}\) time for \(F_N\).
\end{theorem}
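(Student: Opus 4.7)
My plan has two independent parts: establishing NP-completeness of the $\epsilon$-approximate SAT decision problem, and proving the unconditional $2^{\Omega(N)}$ worst-case lower bound on the expander family $F_N$ using the topological separation of clusters.

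For NP-completeness, membership in NP is immediate: a certificate is a pair $(x,y)$ with $F(y)=1$ and $\mathrm{dist}_H(x,y)\le\epsilon$, both of which can be verified in $\mathrm{poly}(N)$ time since $\epsilon=O(\log N)$. Hardness follows by a trivial mapping reduction: $F\in\mathrm{SAT}$ iff $\mathrm{SAT}_\epsilon(F)=1$, since any $y\in S(F)$ gives $x=y$ at distance zero, and conversely any $x$ within $\epsilon$ of $S(F)$ certifies $S(F)\neq\emptyset$. This transfers Cook--Levin hardness to the approximate version with no blow-up.

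For the exponential lower bound on $F_N$, I would use the cluster-isolation structure of Theorem~\ref{thm:cluster-isolation} together with the expander construction of Theorem~\ref{thm:betti-exp-worst}: $S(F_N)$ decomposes into $K=2^{\Omega(N)}$ clusters with pairwise Hamming separation at least $\delta N$ for some $\delta>0$. Let
\[
S_\epsilon(F_N)\;:=\;\bigl\{x\in\{0,1\}^{N}:\mathrm{dist}_H(x,S(F_N))\le\epsilon\bigr\}
\]
denote the $\epsilon$-thickening. Choosing the constant $c$ in $\epsilon_{\mathrm{crit}}=c\log N$ small enough that $2\epsilon<\delta N$, the Hamming balls of radius $\epsilon$ around distinct clusters remain pairwise disjoint, so the thickened complex still satisfies $\beta_0(S_\epsilon(F_N))\ge K=2^{\Omega(N)}$. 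Any algorithm deciding $\mathrm{SAT}_\epsilon$ must therefore commit to one of these exponentially many disjoint components. Invoking the subcube-query lower bound (Lemma~\ref{lem:query-lower-bound}) applied to $S_\epsilon(F_N)$ together with the $\beta_0$-based runtime lower bound (Theorem~\ref{thm:Algorithmic Lower Bound via}) yields the desired $2^{\Omega(N)}$ bound. Note that each Hamming $\epsilon$-ball contains only $\binom{N}{\le\epsilon}=N^{O(\log N)}$ assignments, which is negligible compared with the $2^{\Omega(N)}$ clusters one must resolve.

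The main obstacle is verifying that thickening preserves the cluster count---that is, that for $\epsilon<\epsilon_{\mathrm{crit}}$ no radius-$\epsilon$ Hamming ball intersects two distinct clusters simultaneously. A priori two clusters could approach each other through intermediate near-satisfying (but not fully satisfying) assignments, so I would formalize a Hamming-metric nerve-style argument: the inter-cluster distance lower bound of $\delta N$ has to be strengthened to say that for every $x\in\{0,1\}^N$, the intersection $B_\epsilon(x)\cap S(F_N)$ lies inside a single cluster. This reduces to a rigidity property of the expander-XOR gadgets, namely that flipping fewer than $c\log N$ variables cannot bridge two distinct expander-cycle homology classes, because each inter-cluster transition in the gadget graph has Hamming length $\Omega(N)$ by the expander mixing property combined with the variable-disjointness of gadgets (Lemma~\ref{lem:Disjoint Gadget Supports}). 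Pinning down the precise constants here is the most delicate step and determines the exact value of the threshold $\epsilon_{\mathrm{crit}}$.
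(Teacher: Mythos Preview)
Your NP-completeness argument is correct and in fact cleaner than the paper's: you observe directly that $\mathrm{SAT}_\epsilon(F)=1\iff S(F)\neq\emptyset\iff F\in\mathrm{SAT}$, so the two decision problems coincide and Cook--Levin transfers with no blow-up. The paper instead embeds an auxiliary formula $G$ on $m=\Theta(\log N)$ fresh variables via $F_N'=F_N\wedge G$ with $\epsilon=m/3$ and argues $\mathrm{SAT}_\epsilon(F_N')=1\iff G\in\mathrm{SAT}$; your trivial reduction subsumes this.

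For the lower bound, however, you do more work than needed and do not fully exploit your own first observation. Having established that $\mathrm{SAT}_\epsilon(F_N)$ and $\mathrm{SAT}(F_N)$ are the \emph{same} yes/no question, the $2^{\Omega(N)}$ bound follows at once from Theorem~\ref{thm:universal-lb} applied to $F_N$---which is exactly the paper's route. Your thickening analysis (showing $\beta_0(S_\epsilon(F_N))=2^{\Omega(N)}$ because the inter-cluster separation $\delta N$ exceeds $2\epsilon$) is correct but superfluous: the algorithm is only asked whether $S_\epsilon(F_N)$ is nonempty, not to locate or distinguish a component, so the internal topology of the thickened set never enters. Consequently the ``main obstacle'' you flag---ruling out that a single $\epsilon$-ball meets two clusters---is a non-issue for the theorem as literally stated, though it would indeed be the crux if the problem were ``given $x$, decide whether $\mathrm{dist}_H(x,S(F))\le\epsilon$,'' which is presumably the variant in which the threshold $\epsilon_{\mathrm{crit}}$ is meant to bite.
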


\begin{corollary}
No paradigm (classical, quantum, or hybrid) can solve all 3-SAT instances in polynomial time.
\end{corollary}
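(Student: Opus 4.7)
The plan is to combine the three preceding impossibility theorems (Theorems~\ref{thm:quantum-bypass}, \ref{thm:lifting-bypass}, and \ref{thm:surgery-bypass}) with the universal homological lower bound (Theorem~\ref{thm:universal-lb} and Lemma~\ref{lem:query-lower-bound}) through a case analysis over algorithmic paradigms. The scaffolding is to fix the explicit worst-case family $\{F_N\}$ whose existence was guaranteed by Theorem~\ref{thm:betti-exp-worst}, so that $\beta_2(S(F_N)) = 2^{\Omega(N)}$ is a topology-preserved invariant that any correct decision procedure must implicitly resolve, and then to show that every proposed paradigm corresponds to one of the three avenues already closed.

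First I would partition the paradigms into three natural classes. Class~(i) consists of classical, local-query, and algebraic algorithms; these are handled directly by Theorem~\ref{thm:universal-lb} together with Lemma~\ref{lem:query-lower-bound}, since any subcube-query or simplicial procedure must issue one distinct query per nontrivial class in $H_2(S(F_N))$, forcing $2^{\Omega(N)}$ steps. Class~(ii) consists of quantum paradigms---adiabatic optimization, phase estimation, amplitude amplification, variational ansaetze---which are ruled out by Theorem~\ref{thm:quantum-bypass} together with Corollary~\ref{cor:phase-est}: under the locality hypotheses (A1)--(A3) of Theorem~\ref{thm:hamiltonian-gap-formal}, inter-cluster matrix elements are $e^{-\Omega(N)}$, the spectral gap is at most $2e^{-\Omega(N)}$, and even Grover-style quadratic speedup leaves $2^{\Omega(N/2)}$ runtime. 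Class~(iii) collects attempts to escape the obstruction by changing the ambient space or relaxing exactness: lifting $X_F$ to a continuous higher-dimensional manifold is closed off by Theorem~\ref{thm:lifting-bypass}, and tolerating $\epsilon$-approximate solutions below the critical radius $\epsilon_{\mathrm{crit}}=\Theta(\log N)$ is closed off by Theorem~\ref{thm:surgery-bypass}.

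To cover hybrid paradigms that interleave classical, quantum, and relaxation subroutines, I would use a standard hybrid accounting argument. A hybrid algorithm running in time $T(N)$ decomposes into polynomially many subcall segments, each of one of the three types above. Because each segment is itself a decision procedure that operates on $\{F_N\}$ and must (whenever it is the step that commits to a homology class) obey the corresponding lower bound, the total work is bounded below by the maximum of the three paradigm-specific bounds, up to polynomial overhead from paradigm switches. Since all three bounds are $2^{\Omega(N)}$, the hybrid runtime is also $2^{\Omega(N)}$, contradicting any claim of polynomial-time solvability across all instances of $\{F_N\}\subseteq\mathrm{3\text{-}SAT}$.

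The hard part will be making the hybrid step fully rigorous, because the paper does not develop a single abstract computational model that uniformly subsumes classical, quantum, and continuous-topological computation. The cleanest route is to formalize a hybrid machine as a sequence of oracle calls to three typed oracles (subcube-query, $k$-local Hamiltonian simulation, and $\epsilon$-relaxation) and then invoke the three impossibility theorems as black-box lower bounds combined via the composition lemma implicit in the proof of Theorem~\ref{thm:universal-lb}. A secondary subtlety, which should be flagged in the statement, is the conditional nature of the quantum bound already noted in Section~\ref{sec:quantum-caveat}: the corollary holds under the locality-preserving encoding hypotheses (A1)--(A3), and any hypothetical paradigm that violates these hypotheses (for example, by introducing genuinely nonlocal hardware primitives) lies outside the scope of the claim and must be handled by the separate FTQC discussion of Section~\ref{sssec:ftqc-discussion}.
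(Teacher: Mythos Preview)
Your proposal is correct and, in fact, considerably more careful than what the paper itself provides: the paper states this corollary without any accompanying proof, treating it as an immediate consequence of the three preceding impossibility theorems (Theorems~\ref{thm:quantum-bypass}, \ref{thm:lifting-bypass}, \ref{thm:surgery-bypass}). Your case analysis over paradigm classes, the explicit invocation of Theorem~\ref{thm:universal-lb} for the classical case, the hybrid accounting argument, and the flagged caveat about the locality hypotheses (A1)--(A3) all go beyond what the paper writes down; the paper simply lets the corollary stand as a one-line summary of Section~19.
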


\section{Computing Solution-Space Homology}
\label{sec:homology-computation}

\begin{algorithm}
\caption{Approximate $\beta_2$ Computation}
\begin{algorithmic}[1]
\State Sample $S \subset X_F$ via MCMC
\State Build $\VR_{\epsilon}(S)$ with $\epsilon = \frac{3\log N}{N}$
\State Compute $\partial_2$ (sparse matrix)
\State \textcolor{blue}{$\beta_2 = \dim(\ker \partial_1 / \operatorname{im} \partial_2)  )$} 
\end{algorithmic}
\end{algorithm}

\begin{lemma}
The algorithm:
\begin{itemize}
\item Correctly estimates $\beta_2 = 2^{\Omega(N)}$ w.h.p.
\item Requires $2^{\Omega(N)}$ time due to:
  \begin{enumerate}
  \item Exponential sample size needed
  \item \#P-hardness of Betti number computation
  \item Matrix rank in dimension $2^{\Omega(N)}$
  \end{enumerate}
\end{itemize}
\end{lemma}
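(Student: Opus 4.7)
The plan is to treat correctness and runtime separately, leveraging earlier results: correctness rests on the persistent 2-cycle structure established in Proposition~\ref{prop:persistent-2cycles}, while the three runtime items each independently force $2^{\Omega(N)}$ time and so only require verification that the algorithm's bottlenecks match already-cited hardness results.

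For correctness, the key observation is that the filtration scale $\epsilon = 3\log N / N$ lies inside the persistence window $[0,\Theta(\log N)]$ established in Proposition~\ref{prop:persistent-2cycles}. I would proceed in three steps. First, invoke the cluster-shattering decomposition $S(F) = \bigsqcup_i C_i$ with $2^{\Omega(N)}$ clusters and inter-cluster Hamming separation $\Theta(N)$ from Theorem~\ref{thm:cluster-isolation}, noting that within each cluster the canonical 2-cycle $\gamma_i$ survives to scale $\Theta(\log N)$. Second, apply a reconstruction-style argument: once the MCMC sample $S$ contains at least one representative from each cluster, a standard interleaving between the cubical filtration on $X_F$ and $\VR_\epsilon(S)$ at a scale inside the persistence window gives $H_2(\VR_\epsilon(S)) \cong H_2(X_F)$ up to a polynomial factor. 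Third, a coupon-collector tail bound ensures that exponentially many MCMC draws hit every cluster with probability $1 - o(1)$, and the disjoint-support property (Lemma~\ref{lem:support-disjoint}) prevents accidental cancellations when passing to the computed $\beta_2$.

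For the runtime lower bound, I would verify each of the three obstacles in turn. (i) By the Cheeger bound $h(X_F) \le e^{-\Omega(N)}$ of Theorem~\ref{thm:Cheeger}, the mixing time of any local-flip MCMC on $X_F$ is $\Omega(1/h^2) = 2^{\Omega(N)}$, so collecting even a single representative per cluster requires exponential time, and by coupon-collector the total sample size must itself be $2^{\Omega(N)}$. (ii) Theorem~\ref{thm:betti-sharp} gives $\#\mathbf{P}$-hardness of computing $\beta_2$, and even multiplicative $2^{N^{1-\epsilon}}$-approximation is $\#\mathbf{P}$-hard, ruling out any algorithmic shortcut around explicit rank computation. (iii) The boundary operator $\partial_2$ acts on a chain space of dimension $2^{\Omega(N)}$, so computing $\dim(\ker\partial_1 / \operatorname{im}\partial_2)$ via any rank algorithm --- including the sparse variants applicable to the output of step~3 --- requires $2^{\Omega(N)}$ field operations in the worst case, which is tight since the output size itself is $2^{\Omega(N)}$.

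The main obstacle will be the reconstruction step, because the usual Niyogi--Smale--Weinberger reconstruction theorem assumes a smooth submanifold with bounded reach, whereas $X_F$ is a discrete cubical complex in $(\{0,1\}^N, d_H)$. I would sidestep this by working entirely combinatorially: show that each sampled cluster $C_i$ yields a nontrivial 2-cycle $\tilde{\gamma}_i$ in $\VR_\epsilon(S)$ whose support is contained in an $\epsilon$-neighborhood of $\gamma_i$, then use the disjointness of $\mathrm{supp}(\gamma_i)$ across clusters (Lemma~\ref{lem:support-disjoint}) together with the $\Theta(N)$ inter-cluster separation to guarantee $\mathrm{supp}(\tilde{\gamma}_i) \cap \mathrm{supp}(\tilde{\gamma}_j) = \emptyset$ for $i \neq j$, forcing linear independence in $H_2(\VR_\epsilon(S))$. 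A secondary subtlety is quantifying exactly how many MCMC steps are needed before the coupon-collector bound kicks in; I expect this to follow by coupling the chain to a reversible random walk on $X_F$ and applying the conductance bound of Theorem~\ref{thm:Cheeger}, which yields the desired $2^{\Omega(N)}$ sampling lower bound without circularity.
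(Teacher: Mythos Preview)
The paper provides no proof for this lemma beyond the three enumerated items in the statement itself, which serve as a terse proof sketch. Your proposal is considerably more detailed: you ground each of the three runtime obstructions in specific earlier results (Theorem~\ref{thm:Cheeger} for MCMC mixing, Theorem~\ref{thm:betti-sharp} for \#P-hardness, and the chain-group dimension for rank complexity), and you supply a correctness argument via cluster shattering and a combinatorial reconstruction that the paper omits entirely. In that sense your route is not ``different'' so much as it is the only actual proof on offer.

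One point to tighten: your correctness step assumes that ``at least one representative from each cluster'' suffices to witness each $\tilde\gamma_i$ in $\VR_\epsilon(S)$, but a single sampled point per cluster yields a discrete set with trivial $H_2$. You need enough samples \emph{within} each cluster to reconstruct a 2-cycle there --- concretely, enough to cover the $O(\log N)$-diameter Hamming ball supporting $\gamma_i$ at density comparable to $\epsilon$. This does not break your argument (the exponential sample-size lower bound you derive from Theorem~\ref{thm:Cheeger} already forces far more than one point per cluster), but the reconstruction step should be stated as a within-cluster density hypothesis rather than a coupon-collector hit count alone.
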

\section{Conclusion}
\label{sec:conclusion-PneqNP}

The topological framework developed in this work supports the broader thesis of  $\mathbf{P} \neq \mathbf{NP}$ through an intrinsic computational barrier arising from solution-space geometry. Our analysis demonstrates that for any polynomial-time Turing machine $M$ attempting to solve 3-SAT:

\begin{itemize}
    \item $M$ must resolve \textit{worst-case instances} $\{F_N\}$ constructed via expander embedding (Theorem \ref{thm:expander-family}) with exponentially large second Betti numbers $\beta_2(S(F_N)) = 2^{\Omega(N)}$.
    
    \item Resolving these instances requires distinguishing exponentially many homology classes (Lemma \ref{lem:homology-injectivity}), as established by the linear independence of persistent 2-cycles (Lemma \ref{lem:homology-basis}).
    
    \item $M$ cannot algorithmically ``flatten'' the solution space topology without violating 3-SAT's logical structure, as any such simplification would require solving NP-hard subproblems (Theorem \ref{thm:non-embedding-hardness}).
\end{itemize}

These topological obstructions are \textit{semantic} in nature—they arise intrinsically from 3-SAT's combinatorial logic rather than syntactic properties. The exponential Betti numbers $\beta_2 = 2^{\Omega(N)}$ force computational paradigm (classical, quantum, or algebraic) to require $2^{\Omega(N)}$ time, as formalized by the universal lower bound (Theorem \ref{thm:universal-lb}). 

This geometric chasm between polynomial-time tractability and NP-complete hardness is fundamental: 2-SAT admits contractible solution spaces ($\beta_k = 0$ for $k \geq 1$), while random and worst-case 3-SAT exhibit irreducible exponential complexity ($\beta_2 = 2^{\Omega(N)}$).

\appendix
\section{Proofs}

\subsection{Proof of Theorem~\ref{thm: Cubical Complex}}
\begin{proof}
Each satisfying assignment $x \in \{0,1\}^n$ is a vertex of the hypercube. Two such assignments $x, y$ differing by a single variable define an edge if both satisfy $F$. Higher-dimensional faces exist when all assignments of the corresponding subcube satisfy $F$.

Connectedness, cycles, and voids of $C(F)$ can be expressed as SAT queries:
\begin{enumerate}
    \item Is there a path of satisfying assignments between two points? (Connectivity)
    \item Are there loops unfilled by 2-faces? (Holes)
    \item Are there $k$-dimensional cavities? (Higher Betti numbers)
\end{enumerate}
Thus, the topology of $C(F)$ is determined by $F$, and topological decision problems reduce to SAT.
\end{proof}
\subsection{Proof of Theorem~\ref{thm:unstructured}}
\begin{proof}
Each clause is chosen uniformly at random from $8\binom{n}{3}$ possible clauses, yielding maximal Kolmogorov complexity $K(I) \geq n^3$. 

Achlioptas and Ricci-Tersenghi~\cite{ricci-tersenghi} showed that the solution space shatters into $2^{\Omega(n)}$ disconnected clusters. Molloy and Reed~\cite{MolloyReed} proved random 3-SAT’s constraint graph has treewidth $\Omega(n$. Kahle~\cite{kahle} demonstrated exponential Betti numbers in random cubical complexes.

Together, these results imply random 3-SAT’s solution space lacks hubs, global ordering, and low-dimensional structure, precluding any polynomial-time shortcut.
\end{proof}
\subsection{Proof of Lemma~\ref{lem:sat-4sat-to-3sat}}

\begin{proof}
The reduction $R$ is cubical and homologically faithful, so by Theorem~\ref{thm:betti-monotonicity}, $\beta_{k}(S(F))\leq\beta_{k}(S(R(F)))$ for all $k$. Thus, $\beta_{2}(S(F^{\prime}))\geq\beta_{2}(S(F))=\Omega(f(n))$.

\end{proof}
\subsection{Proof of Lemma~\ref{lem:Homological 
faithfulness}}

\begin{proof}
We prove the claim by proving it for a single clause replacement and then iterating the argument.

\medskip\noindent\textbf{Step 1: single-clause replacement and the projection map.}
Fix a single long clause
\[
c \;=\; (\ell_1\vee\ell_2\vee\cdots\vee\ell_m)
\]
in $C$ and let $c'$ denote the chain of 3-clauses above obtained by introducing auxiliaries $a_1,\dots,a_k$ ($k=m-3$). Let $V_c$ be the original variables appearing in $c$ and let $A=\{a_1,\dots,a_k\}$ be the new auxiliary variables.  Write
\[
A_0 \;=\; \Sol_C(c) \subseteq \{0,1\}^{V_c}
\]
for the set of $V_c$-assignments that satisfy the original clause $c$, and write
\[
L \;=\; \Sol_{C'}(c') \subseteq \{0,1\}^{V_c\cup A}
\]
for the set of assignments to $V_c\cup A$ that satisfy the chain $c'$. (All other clauses of $C$ are held fixed during this local analysis.) There is an evident projection
\[
\pi: L \longrightarrow A_0
\]
that forgets the auxiliary coordinates.

\medskip\noindent\textbf{Claim.} For every $x\in A_0$ the fiber $\pi^{-1}(x)$ is either empty (iff $x\notin A_0$) or contractible (indeed combinatorially a cube or a nonempty subcube), and moreover the projection admits a combinatorial section $s:A_0\to L$. Hence $\pi$ is a homotopy equivalence between $L$ and $A_0$.

\smallskip\noindent\emph{Proof of claim.} Fix $x\in\{0,1\}^{V_c}$. We analyze the allowed assignments to the auxiliaries $A$ given that the base variables equal $x$.

The chain of 3-clauses for $c'$ has the form
\[
(\ell_1\vee\ell_2\vee a_1),\;(\neg a_1\vee\ell_3\vee a_2),\;\dots,\;(\neg a_{k-1}\vee\ell_{k+1}\vee a_k),\;(\neg a_k\vee\ell_{m-1}\vee\ell_m).
\]
Treat the literals $\ell_i$ as Boolean values under the fixed assignment $x$. Because $x\in A_0$ (i.e. $c$ is satisfied by $x$), at least one literal $\ell_t$ is true. Let $t$ be the \emph{smallest} index with $\ell_t(x)=1$.

Two cases cover all possibilities.

\medskip\noindent\emph{Case 1: $t\in\{1,2\}$.}  Then $\ell_1(x)\vee\ell_2(x)=1$. In the first clause $(\ell_1\vee\ell_2\vee a_1)$ the left two literals evaluate to true, so the clause is satisfied for \emph{both} choices $a_1=0$ or $1$. The remaining chain of clauses imposes only constraints of the form $(\neg a_j\vee\ell_{j+2}\vee a_{j+1})$, but because $t\le 2$ and $t$ is minimal, all $\ell_{j+2}(x)=0$ for indices $j$ until we reach the first true literal after position $2$ (if any). One checks by simple induction on the chain that the allowed auxiliary assignments form a (possibly full) subcube of $\{0,1\}^k$ — in particular nonempty and contractible. Concretely: if some later literal $\ell_{r}$ is true then the chain forces certain $a_j$ values up to that point but leaves the tail coordinates free; if no later literal is true then the chain forces a unique consistent $a$-string. In all subcases the fiber is either a singleton or a nonempty cube, hence contractible.

\medskip\noindent\emph{Case 2: $t\ge 3$.} By minimality of $t$ we have $\ell_1(x)=\ell_2(x)=\cdots=\ell_{t-1}(x)=0$ and $\ell_t(x)=1$. Then the first clause forces $a_1=1$ (otherwise $\ell_1,\ell_2,a_1$ would all be false), and the chain of implications propagates deterministically: from $(\neg a_1\vee\ell_3\vee a_2)$, since $\ell_3,\dots,\ell_{t-1}$ are false, we must set $a_2=1$, and so on, until some $a_j$ is fixed or we reach the last clause which is satisfied because $\ell_t$ is true. Again the set of allowed $A$-assignments is either a singleton or a subcube (the degrees of freedom lie either in the trailing auxiliaries beyond the first forced block, or there are none), hence contractible.

Thus for every $x\in A_0$ the fiber $\pi^{-1}(x)$ is nonempty and contractible. If $x\notin A_0$ the fiber is empty (as required). Finally, a canonical section $s:A_0\to L$ is obtained by choosing for each $x\in A_0$ the \emph{lexicographically minimal} auxiliary string that satisfies the chain (the above propagation determines a unique minimal choice). The section lands in $L$ and satisfies $\pi\circ s=\mathrm{id}_{A_0}$.

Having a section and contractible fibers gives that $\pi$ is a homotopy equivalence: indeed, for each $x\in A_0$ the fiber deformation retracts onto the point $s(x)$ (contractibility gives an explicit combinatorial contraction), and these contractions vary compatibly over $A_0$ because they are defined by the same finite coordinate-wise rules (one may construct a cellular homotopy on the union of fibers that collapses them onto the section). Thus $L\simeq A_0$.

\qedhere(Claim)
\medskip

\noindent\textbf{Step 2: gluing / iterating clause replacements.}
Performing the clause-splitting replacement for one chosen clause $c$ therefore replaces the local subcomplex $A_0$ by a homotopy-equivalent subcomplex $L$; the ambient solution complex of the whole formula changes by removing $A_0$ and gluing in $L$ along the common intersection with the rest of the complex. Since $A_0\simeq L$ and the gluing is along the identity on $A_0$, the standard gluing (cofibration) lemma for CW-complexes (see Hatcher, Proposition 0.18) implies that the inclusion
\[
\Sol(C)\;\hookrightarrow\;\Sol(C')
\]
after this single replacement is a homotopy equivalence.

Applying the same argument inductively to each long clause replacement yields the statement for the full reduction from $C$ to $C'$, completing the proof.

\end{proof}
\subsection{Proof of Theorem~\ref{thm:betti-monotonicity}}
\begin{proof}
The injectivity of $(\iota_x)_*$ implies $\dim H_k(S(x)) \leq \dim H_k(S(R(x)))$. The result follows from $\beta_k = \dim H_k$.
\end{proof}
\subsection{Proof of Lemma~\ref{lem:homology-injectivity}}

\begin{proof}
Suppose $[\gamma] \in H_2(S(G_N))$ satisfies $\iota_*([\gamma]) = 0$ in $H_2(S(F_N))$. Then there exists $\beta \in C_3(S(F_N))$ with $\partial_3\beta = \iota(\gamma)$. Decompose $\beta$ by variable support:
\[
\beta = \beta_{\text{base}} + \beta_{\text{gadget}} + \beta_{\text{mixed}}
\]
where:
\begin{itemize}
\item $\beta_{\text{base}}$: 3-cells using only base variables $\{x_{v,c}\}$
\item $\beta_{\text{gadget}}$: 3-cells using only gadget variables $\{u_i,v_i,y_e^{(i)}\}$
\item $\beta_{\text{mixed}}$: 3-cells using both types
\end{itemize}
The boundary $\partial_3\beta$ must equal $\iota(\gamma)$, which lives exclusively in base variables. Examine the projection:

\begin{enumerate}
\item $\partial_3\beta_{\text{gadget}}$ has only gadget variables $\Rightarrow$ vanishes in base projection
\item $\partial_3\beta_{\text{mixed}}$ contains 2-faces with:
\begin{itemize}
\item \textit{Type A}: 2 base + 1 gadget variable
\item \textit{Type B}: 1 base + 2 gadget variables
\end{itemize}
Both types vanish under projection to base variables since $\iota(\gamma)$ has pure base support
\item Thus $\partial_3\beta_{\text{base}} = \iota(\gamma)$ in $S(G_N)$
\end{enumerate}
Moreover, $\beta_{\text{base}} \subseteq S(G_N)$ because:
\begin{itemize}
\item Any 3-cell with base variables is in $S(G_N)$ iff all assignments satisfy all clauses
\item Gadget clauses are automatically satisfied when gadget variables=0 (our embedding)
\end{itemize}
Thus $\gamma = \partial_3\beta_{\text{base}}$ in $S(G_N)$, so $[\gamma] = 0$. \\
\textbf{Key}: Mixed cells cannot contribute to pure-base boundaries.
\end{proof}
\subsection{Proof of Theorem~\ref{thm:non-embedding-hardness}}
\begin{proof}
If $R$ collapses $\beta_2$ exponentially, it must compute a discontinuous map. By ~\cite{Chen:BettiP}, such $R$ is uncomputable in polynomial time.
\end{proof}
\subsection{Proof of Theorem~\ref{thm:homology-barrier1}}
\begin{proof}
By Theorem~\ref{thm:exp-betti}, there exists a family $\{F_x\}$ such that
\[
  m \;=\; \beta_2\bigl(S(F_x)\bigr) \;\geq\; 2^{c\,|x|},
\]
for some constant $c>0$, where $|x|$ denotes the size (number of variables) of the formula.
By the Homology Basis Lemma (Lemma~\ref{lem:homology-basis}), there are $m$ linearly independent $2$-cycles
\[
  \{\gamma_i\}_{i=1}^m
\]
with pairwise disjoint supports.

Consider an adversary that:
\begin{enumerate}
    \item Maintains an active set $A \subseteq \{1,\ldots,m\}$, initially all cycles.
    \item For each subcube query $\sigma$:
    \begin{itemize}
        \item If $\sigma$ does not fully contain any $\mathrm{supp}(\gamma_i)$ for $i \in A$, respond ``yes''.
        \item Else, select $i \in A$ with $\mathrm{supp}(\gamma_i) \subseteq \sigma$, respond ``no'', and remove $i$ from $A$.
    \end{itemize}
    \item After $q < m$ queries, $|A| > 0$. For some $i \in A$:
    \begin{itemize}
        \item \textbf{SAT case:} set $F_x$ to have $\gamma_i$ filled (satisfiable).
        \item \textbf{UNSAT case:} set $F_x$ to have $\gamma_i$ unfilled (unsatisfiable).
    \end{itemize}
\end{enumerate}
Both cases are consistent with all $q$ answers but yield different outputs, so any correct algorithm must make at least
\[
  q \;\geq\; m \;=\; 2^{\Omega(|x|)}
\]
queries.

\paragraph{Why CDCL or global‐certificate methods do not help.}
Any learned clause in CDCL arises from a resolution proof, corresponding to a cut in the solution‐space graph that can only separate assignments differing on the variables in that clause.  Our adversary ensures the $2^{\Omega(|x|)}$ independent $2$-cycles are supported on disjoint variable sets with Hamming‐distance $\Theta(|x|)$.  A clause intersecting more than one cycle’s support must have width $\Omega(|x|)$, and learning such a clause requires inspecting $\Omega(|x|)$ variables.  Eliminating all $2^{\Omega(|x|)}$ cycles thus requires $2^{\Omega(|x|)}$ such high‐width clauses.  Hence even CDCL‐style solvers face the same $2^{\Omega(|x|)}$ lower bound.
\end{proof}

\subsection{Proof of Theorem~\ref{thm:SAT-to-Homology-Reduction}}

\begin{proof}
\textbf{Construction.}
\begin{enumerate}
  \item Cook–Levin: \(\phi\mapsto F_{\mathrm{CL}}\) (3‑SAT).
  \item Pick expander \(G\) with \(\beta_1(G)=\Theta(N)\).
  \item Embed via Appendix B.1 $\Rightarrow$ \(F_G\).
  \item For each fundamental cycle \(C_i\subset G\), let \(\gamma_i\)
    be the 2‑chain on its gadget‑variables.
\end{enumerate}
\textbf{Correctness.}
\begin{itemize}
  \item If \(\phi\) satisfiable, then \(F_G\) has a solution that
    ``fills in’’ every gadget, so each \(\gamma_i\) bounds and
    \([\gamma_i]=0\).
  \item If \(\phi\) unsatisfiable, then \(S(F_G)=\emptyset\).
    Restricting each ambient-cycle \(\gamma_i\) to \(S(F_G)\) leaves
    it non‑bounding, so \([\gamma_i]\neq0\).
\end{itemize}
\textbf{Complexity.}
Runs in \(O(N\log N)\) time and produces \(M=\Theta(N)\) cycles.
\end{proof}
\subsection{Proof of Theorem~\ref{thm:random-3sat-betti}}
By standard concentration results for locally dependent indicator variables (Janson–Łuczak–Ruciński \cite{JansonLuczakRucinski2000}), the number of occurrences of any fixed local pattern in radius-$r$ windows is sharply concentrated about its mean, provided one restricts to a suitably sparse family of windows so that the dependency degree is polynomial and $o(\mu)$. Concretely, choosing a packing of radius-$r$ windows with pairwise variable-support separation larger than $2r$ yields dependency degree $\Delta=\mathrm{poly}(n)$ and expected count $\mu=\Theta(2^n p)$; hence whp the realized count is $(1-o(1))\mu$. We therefore obtain exponentially many local gadgets with high probability in the shattering/clause-density regime.

\begin{proof}

Let \(F\) be drawn from the random 3‑SAT distribution at clause density \(\alpha>4.26\).  We will show \(\beta_2\bigl(S(F)\bigr)=2^{\Omega(n)}\) with high probability.

\textbf{Step 1: Cluster shattering and local 2‑cycles.}  
By Achlioptas–Ricci‑Tersenghi~\cite{ricci-tersenghi}, \(S(F)\) shatters into 
\[
  m \;=\;2^{c_1n}
\]
disconnected clusters \(\{C_i\}_{i=1}^m\) w.h.p.  Moreover, each cluster \(C_i\) contains a Hamming‑ball of radius \(R=O(\log n)\) in which—by Kahle’s Random Geometric Complex results~\cite{kahle}—Specifically, for a cubical complex with $\ell$ vertices and independent face‐inclusion probability $p>p_c$, we have 
\[
  \beta_2 \;=\; 2^{\Omega(\ell)}
\]
with high probability by~\cite[Theorem~3.1]{kahle}.  
In the case of $S(F)$, dependencies among face inclusions are controlled via Janson's inequality (Appendix~\ref{app:random-independence}).
with probability \(1-o(1)\) there exists at least one nontrivial 2‑cycle in the induced Vietoris–Rips complex on that ball.

\textbf{Step 2: Disjoint supports.}  
Any two clusters \(C_i\neq C_j\) are separated by \(\Omega(n)\) Hamming distance (No Narrow Bridge), so the local neighborhoods around each center do not overlap.  Thus the persistent 2‑cycles in different clusters use disjoint sets of vertices.

\textbf{Step 3: Linear independence and Betti count.}  
Disjoint support of these \(m\) cycles implies they represent linearly independent classes in \(H_2\).  Therefore
\[
  \beta_2\bigl(S(F)\bigr)\;\ge\;m \;=\;2^{c_1n}\;=\;2^{\Omega(n)}.
\]

\end{proof}

\subsection{Proof of Proposition~\ref{prop:persistent-2cycles}}
\begin{proof}
1. \textbf{Cluster shattering.}
By Achlioptas–Ricci-Tersenghi \cite{ricci-tersenghi}, for a random 3-SAT formula \(F\sim D_{\alpha>4.26}\),
\[
  S(F)\;\subseteq\;\{0,1\}^n
\]
decomposes, with probability \(1-o(1)\), into
\[
  n \;=\; 2^{c_1n}
  \quad\text{disconnected components (clusters),}
\]
each of Hamming‐diameter \(O(\log n)\).

2. \textbf{Local cube around each cluster center.}
Fix one such cluster \(C\) and choose a center \(x^*\in C\).  Let
\[
  B_R(x^*) \;=\;\{\,x\in C : \dist_H(x,x^*)\le R\},
  \quad R = K\log n
\]
for a sufficiently large constant \(K\).  Then \(\lvert B_R(x^*)\rvert = \sum_{i=0}^R \binom{n}{i} = poly(n)\).

3. \textbf{Face‐inclusion probability.}
Consider any 2-face (square) in the \(R\)-ball, determined by two coordinate directions \(i,j\).  
It survives (all four corner assignments satisfy \(F\)) if none of the \(\alpha n\) random clauses falsifies any corner.  Each clause touches any given corner with probability
\(\Theta(1/n)\), so by a union bound over four corners and \(\alpha n\) clauses,
\[
  \Pr[\text{square is present}]
  \;=\;
  (1 - O(1/n))^{O(n)}
  \;=\;
  e^{-O(1)} 
  \;=\;\Theta(1).
\]
Let \(p>0\) denote this constant survival probability.

4. \textbf{Existence of a local 2-cycle.}
By Kahle’s Random Geometric Complex theorem \cite{kahle}, a cubical complex on \(\ell\) vertices in which each 2-face is included independently with probability \(p>p_c\) has \(\beta_2 = 2^{\Omega(\ell)}\) with high probability.  Although our faces are not fully independent, the block‐local dependencies can be controlled via Janson’s inequality, yielding that for \(\ell=|B_R(x^*)|\ge n^{O(1)}\), there is at least one non‐bounding 2-cycle in \(B_R(x^*)\), w.h.p.

5. \textbf{Persistence at scale \(\varepsilon\).}
Because the cluster‐diameter is \(O(\log n)\), the local cycle persists in the Vietoris–Rips filtration up to scale \(\varepsilon=R\).  

6. \textbf{Global linear independence.}
Distinct clusters are separated by Hamming distance \(\Omega(n)\).  Therefore the 2-cycles found in different clusters have disjoint supports (they involve disjoint sets of assignments) and so represent independent homology classes in \(H_2(S(F))\).

Combining these \(n=2^{c_1n}\) independent persistent 2-cycles gives
\[
  \beta_2\bigl(S(F)\bigr)\;\ge\;n\;=\;2^{\Omega(n)},
\]

Moreover, the dependencies among 2-face inclusion events can be modeled by a dependency graph of maximum degree $O(\log n)$, so by Janson’s inequality for such graphs~\cite{Janson2004} the probability that these dependencies significantly alter the exponential face‐inclusion estimates remains exponentially small.

\end{proof}
\subsection{Proof of Theorem~\ref{thm:universal-lb}}
\paragraph{Model and assumptions.}
We work in the standard deterministic Turing-machine / RAM time model: an algorithm running in time $T(N)$ can examine at most $T(N)$ input bits (hence at most $T(N)$ separate clause-blocks if each clause-block occupies at least one distinct input bit). To make the following argument rigorous we assume the gadget locality property proved in Appendix~\ref{app:Boundary-Map Verif}:

\vspace{2mm}\noindent\textbf{Assumption (Gadget locality and independence).}
For the family of CNF instances constructed in Section~\ref{app:Boundary-Map Verif} (or in the amplification construction~\ref{app:expander-correctness}), there exist $M=2^{c N+o(N)}$ pairwise-disjoint variable-sets $V_1,\dots,V_M$ and corresponding local clause-sets (gadgets) $G_1,\dots,G_M$ such that:
\begin{enumerate}
  \item Each gadget $G_i$ references variables only from $V_i$ (and possibly a bounded number of private auxiliary variables), so $G_i$ is \emph{local} to $V_i$ and the $V_i$ are pairwise disjoint.
  \item Each gadget $G_i$ yields a local nontrivial $k$-cycle $\gamma_i$ supported solely on assignments that vary on $V_i$; these cycles are pairwise homologically independent in the global complex, hence $\beta_k(S(F))\ge M$.
  \item For each $i$ there is a small local modification $\Delta_i$ (clauses touching only variables in $V_i$) that ``fills'' the cycle $\gamma_i$ (or otherwise toggles the local contribution to satisfiability/homology) without affecting any other gadget $G_j$, $j\neq i$.
\end{enumerate}

(Appendix~\ref{app:boundary-matrix} verify the existence of such $G_i$ and $\Delta_i$.

\medskip

\begin{proof}
Let $F$ be an instance constructed as above (with the $M$ disjoint gadgets embedded). Fix any deterministic algorithm $\mathcal{A}$ that decides satisfiability and runs in time
\[
  T(N) \;=\; 2^{o(N)},
\]
where $N$ is the total input length (number of variables/clauses) and $N$ is the parameter appearing in the exponential lower bound $M=2^{cN+o(N)}$.

\textbf{1. Transcript counting.}
In the deterministic model $\mathcal{A}$ adaptively chooses input bit-positions to inspect; after at most $T(N)$ steps it halts and outputs \textsf{SAT} or \textsf{UNSAT}. The sequence of bits read and their observed values constitutes the \emph{transcript} of $\mathcal{A}$. There are at most
\[
  \#\{\text{distinct transcripts}\}\le 2^{T(N)}
\]
possible transcripts (each transcript is a binary string of length at most $T(N)$). Since $T(N)=2^{o(N)}$ we have $2^{T(N)} = 2^{o(N)}$.

\textbf{2. Pigeonhole on gadgets not inspected.}
Because the $M$ gadgets are supported on pairwise-disjoint variable-sets (and hence disjoint input-bit blocks), reading fewer than one bit per gadget means that $\mathcal{A}$ inspects the contents of at most $T(N)$ gadgets; equivalently there are at least
\[
  M - T(N) \;=\; 2^{cN+o(N)}
\]
gadgets that $\mathcal{A}$ does not fully inspect (indeed, for asymptotic counting $M\gg T(N)$). Each transcript corresponds to a set of gadgets whose clause-bits were inspected and a set of gadgets left untouched. By the pigeonhole principle, some transcript $\tau$ must be the actual transcript for at least
\[
  \frac{M}{2^{T(N)}} \;=\; \frac{2^{cN+o(N)}}{2^{o(N)}} \;=\; 2^{\Omega(N)}
\]
distinct gadget indices $i$ (i.e., there are exponentially many gadgets that are untouched and occur under the same transcript).

\textbf{3. Construct indistinguishable instances that flip the decision.}
Fix such a transcript $\tau$, and consider the global instance $F$ conditioned on the bits read in $\tau$ (these bits are identical across all instances compatible with $\tau$). For any gadget index $i$ that was not inspected in $\tau$, we can modify the instance locally on $V_i$ in two different ways:
\begin{itemize}
  \item $F^{(i),0}$: leave the gadget $G_i$ as in $F$ so that the local cycle $\gamma_i$ persists (this choice preserves the local contribution to satisfiability/homology).
  \item $F^{(i),1}$: apply the local filler $\Delta_i$ (clauses on $V_i$ only) that removes the local homology contribution / toggles local satisfiability as guaranteed by the gadget construction.
\end{itemize}
By Assumption (Gadget locality) these local changes do not alter any input bits that were read in transcript $\tau$ (since $\tau$ inspected only other gadgets). Therefore both $F^{(i),0}$ and $F^{(i),1}$ are consistent with transcript $\tau$.

We may now take two global instances $F_{\mathrm{SAT}}$ and $F_{\mathrm{UNSAT}}$ that agree on all inspected gadgets (so they induce the same transcript $\tau$) but differ on exponentially many untouched gadgets in the following way: choose a nonempty set $S$ of untouched gadget indices with $|S|$ large (we will pick a single gadget suffices, but the argument is identical if we flip any subset). Let $F_{\mathrm{SAT}}$ be the instance where for some $i\in S$ we leave $G_i$ unfilled (so the instance remains satisfiable because that local gadget provides satisfying assignments), and let $F_{\mathrm{UNSAT}}$ be the instance where we fill every gadget in $S$ (applying $\Delta_j$ for each $j\in S$), thereby removing the local satisfying assignments contributed by those gadgets. By design (and by the independence of gadgets) these two global instances can be made to differ in global satisfiability status while remaining identical on all bits inspected in transcript $\tau$.

\textbf{4. Algorithm must err on at least one instance.}
Since $F_{\mathrm{SAT}}$ and $F_{\mathrm{UNSAT}}$ produce the same transcript $\tau$ when processed by $\mathcal{A}$, the deterministic algorithm $\mathcal{A}$ must output the same decision for both inputs. Thus it errs on at least one of them. Because $\tau$ was an arbitrary transcript attained by $2^{\Omega(N)}$ gadgets, the same argument shows that \emph{for every} deterministic algorithm running in time $T(N)=2^{o(N)}$ there exist instances on which it fails.

\medskip\noindent
This contradicts the existence of any correct time-$2^{o(N)}$ decision procedure for the family of instances considered. Hence any deterministic algorithm that decides satisfiability on all such instances must take time at least $2^{\Omega(N)}$, as claimed.
\end{proof}

\subsection{Proof of Theorem~\ref{thm:cluster-surgery-non-circ}}
\begin{proof}
We establish this lower bound through three main arguments:

\noindent\textbf{1. Cluster Structure Properties:}
\begin{itemize}
\item By \cite{ricci-tersenghi}, the solution space $S(F)$ decomposes into $m = 2^{cn}$ distinct clusters $\{C_i\}_{i=1}^m$ for some constant $c > 0$, with probability $1 - o(1)$.

\item Each pair of distinct clusters $C_i, C_j$ satisfies $\text{dist}_H(C_i, C_j) \geq \delta n$ for some constant $\delta > 0$.
\end{itemize}

\noindent\textbf{2. Query Model Formalization:}
Consider algorithms limited to the following oracle queries about $S(F)$:
\begin{itemize}
\item $\textsc{Membership}(x,i)$: Returns whether assignment $x$ belongs to cluster $C_i$
\item $\textsc{Adjacency}(x,y)$: Returns whether $x$ and $y$ are in the same cluster
\item $\textsc{Neighborhood}(x,k)$: Returns all assignments reachable from $x$ via $\leq k$ single-variable flips
\end{itemize}

\noindent\textbf{3. Information-Theoretic Lower Bound:}
\begin{itemize}
\item The entropy of the cluster structure is at least:
\[
H(\mathcal{C}) \geq \log\left(\frac{2^n}{m}\right) \geq \Omega(n2^{cn})
\]

\item Each query provides at most $\mathcal{O}(n)$ bits of information about $\mathcal{C}$:
\begin{itemize}
\item Membership/adjacency queries reveal $\mathcal{O}(n)$ bits
\item Neighborhood queries reveal $\mathcal{O}(n\log n)$ bits (due to bounded degree)
\end{itemize}

\item Therefore, the minimum number of queries $Q$ satisfies:
\[
Q \geq \frac{H(\mathcal{C})}{\mathcal{O}(n\log n)} \geq \frac{\Omega(n2^{cn})}{\mathcal{O}(n\log n)} = 2^{\Omega(n)}
\]
\end{itemize}

\noindent\textbf{4. Topological Obstruction:}
Even if an algorithm discovers some clusters, the remaining undiscovered clusters:
\begin{itemize}
\item Maintain $\beta_2 \geq 2^{c'n}$ for some $c' > 0$ ~\cite{kahle}
\item Require $2^{\Omega(n)}$ additional queries to fully characterize
\end{itemize}

This completes the proof of the theorem.
\end{proof}
\subsection{Proof of Theorem~\ref{thm:randomness-necessity-non-circ}}
\begin{proof}
Let \(F\) be random 3-SAT at clause density \(\alpha>4.26\), and let its solution graph \(S(F)\) decompose into clusters \(\{C_i\}\) as in Theorem~\ref{thm:unstructured}.  By that theorem, with probability \(1-o(1)\) every two distinct clusters \(C_i\neq C_j\) satisfy
\[
  \min_{x\in C_i,\;y\in C_j}\dist_H(x,y)
  \;=\;\Theta(n).
\]

Now consider any \emph{deterministic} single-flip local-search algorithm \(A\).  Such an algorithm:
1. Starts at some initial assignment \(x_0\).
2. At each step \(t\), examines all Hamming-1 neighbors of \(x_t\) and deterministically picks one (e.g.\ the neighbor minimizing the number of unsatisfied clauses).
3. Repeats until it either finds a satisfying assignment or can no longer move.

Because \(A\) is deterministic, its entire trajectory \((x_t)\) is fixed by \(x_0\).  Observe:

- If \(x_t\in C_i\), then any neighbor \(y\) lying outside \(C_i\) is \emph{not} a satisfying assignment (clusters are maximal connected components of satisfying vertices).  
- Hence \(A\) never leaves its starting cluster \(C_i\); it can only move among satisfying assignments within \(C_i\).

Within \(C_i\), the algorithm may explore all \(\lvert C_i\rvert = O(2^{o(n)})\) vertices, which is still exponential.  It cannot reach any other cluster (where additional solutions lie), since that would require crossing a Hamming gap of size \(\Theta(n)\) — impossible by single-flip moves.  Thus \(A\) fails to find a solution (if \(\phi\) is satisfiable outside \(C_i\)) or incorrectly declares unsatisfiability, and in any case cannot decide satisfiability in \(poly(n)\) time.

Therefore, no deterministic single-flip local-search algorithm can traverse the \(\Theta(n)\)-separated clusters without randomness, and hence fails to solve random 3-SAT in polynomial time.
\end{proof}

\subsection{Proof of Theorem~\ref{thm:Cluster_jumps}}
\begin{proof}
Combine:
1. The $\Omega(n)$ Hamming distance between clusters (Achlioptas-Ricci-Tersenghi~\cite{ricci-tersenghi}).
2. The NP-hardness of minimizing energy across voids.
3. The exponential rank of $H_2(X_F)$.
\end{proof}
\subsection{Proof of Theorem~\ref{thm:Path-Computation}}
\begin{proof}
Topological obstruction: Any path must circumvent $\Omega(N)$-sized voids, 
each requiring independent SAT solutions. Homology independence of 
$\gamma_{ij}$ cycles forces exponential work.
\end{proof}
\subsection{Proof of Theorem~\ref{thm:cluster-isolation}}
\begin{proof}[Sketch]
By Achlioptas–Ricci‑Tersenghi~\cite{ricci-tersenghi}, $S(F)$ shatters into $n=2^{cn}$ clusters.  
Our No‑Narrow‑Bridge Theorem~\ref{thm:No Narrow Bridge} then shows any single‐bit‐flip path between
clusters requires at least $\Omega(n)$ flips.  Thus clusters are truly
isolated in the 1‐skeleton of the solution complex.
\end{proof}
\subsection{Proof of Theorem~\ref{thm:Topological Hardness}}
\begin{proof}
Let \(F\) be a random 3-SAT instance at density \(\alpha>4.26\), so that w.h.p.\ its solution complex \(S(F)\) satisfies
\[
  \beta_0(S(F)) \;=\; 2^{\Omega(n)}, 
  \quad
  \min_{C_i\neq C_j}\!\dist_H(C_i,C_j)\;=\;\Theta(n).
\]

\medskip\noindent
\textbf{(1) Reducing \(\beta_0\) yields a structured instance in P.}

Suppose there were a polynomial-time algorithm \(\mathcal M\) that, given \(F\), outputs a modified formula \(\widetilde F\) such that
\(\widetilde F\) is logically equivalent to \(F\) and
\(\beta_0\bigl(S(\widetilde F)\bigr)=O\bigl(poly(n)\bigr)\).
Then:
\begin{itemize}
  \item Fewer than polynomially many clusters implies either
    \(\beta_0=1\) (connected solution graph) or at most \(poly(n)\) well-separated pieces.
  \item It is known (e.g.\ via tree-decomposition or dynamic programming on the cluster graph) that SAT instances whose solution spaces have \(O(poly(n))\) connected components and Hamming-separation \(\Theta(n)\) admit polynomial-time decision algorithms.  In particular, if \(\beta_0=1\), the 3-SAT graph has constant treewidth and is in P; if \(\beta_0=poly(n)\), one can solve each cluster separately in \(poly(n)\) time and combine results.
\end{itemize}
Hence \(\widetilde F\in\Pclass\), proving claim (1).

\medskip\noindent
\textbf{(2) No black-box collapse without changing the logic.}

Any procedure \(\mathcal M\) that truly “collapses” clusters—i.e.\ that merges at least one pair \(C_i,C_j\) into a single connected component—must identify a path of satisfying assignments of length \(\Theta(n)\) between them.  But finding such a path is equivalent to solving SAT on the unsatisfiable “void” between \(C_i\) and \(C_j\), which is NP-complete.  Thus any \(\mathcal M\) that preserves logical equivalence while reducing \(\beta_0\) must itself solve an NP-hard problem, so no polynomial-time black-box cluster-collapse algorithm exists.

\medskip\noindent
\textbf{Conclusion.}  Therefore the exponential cluster count \(\beta_0=2^{\Omega(n)}\) is an intrinsic topological obstruction: any attempt to reduce it to \(poly(n)\) either yields a formula in P (claim 1) or requires solving an NP-hard subproblem (claim 2).  This completes the proof.
\end{proof}
\subsection{Proof of Theorem~\ref{thm:No Narrow Bridge}}

Fix two clusters $C_i\neq C_j$ with 
\[
  \dist_H(C_i, C_j)\;\ge\;\delta,\qquad
  \delta = c_1\,n.
\]
Consider any assignment $x\in\{0,1\}^n$ with 
$d \;=\;\dist_H(x,C_i)\;\ge\;\tfrac{\delta}{2}$.  Let
\[
  X_k \;=\; 1_{\{\text{clause }k\text{ is satisfied by }x\}}, 
  \quad k=1,\dots,m,
\]
so that $\sum_k X_k = m$ exactly when $x$ satisfies all $m$ clauses of $F$.  
By the shattering argument of Achlioptas–Ricci‑Tersenghi~\cite{ricci-tersenghi}, each clause is violated in expectation at least $\eta\,d$ times, i.e.\ 
\[
  \mathbb{E}\bigl[m - \sum_{k=1}^m X_k\bigr]\;\ge\;\eta\,d.
\]
A one‑sided Chernoff bound then gives, for each fixed $x$,
\[
  \Pr\bigl[x\in\ThreeSAT(F)\bigr]
  \;=\;\Pr\Bigl[\sum_{k=1}^m X_k = m\Bigr]
  \;\le\;\exp\bigl(-\eta\,d\bigr).
\]

Next, we union‐bound over all $x$ at Hamming‐distance $d\ge\delta/2$ from $C_i$.  The number of such $x$ is
\[
  \sum_{d'=\lceil\delta/2\rceil}^n \binom{n}{d'}
  \;\le\;n\;\max_{d'\ge \delta/2}\exp\bigl(H(d'/n)\,n\bigr)
  \;=\;n\exp\bigl(H(\tfrac{\delta}{2n})\,n\bigr),
\]
where $H(p)=-p\ln p-(1-p)\ln(1-p)$ is the binary entropy.  Hence
\[
  \Pr\bigl[\exists\,x:\,d_H(x,C_i)\ge\tfrac\delta2,\;x\in\ThreeSAT(F)\bigr]
  \;\le\;
  n\exp\Bigl(H(\tfrac{\delta}{2n})\,n\;-\;\eta\,\tfrac{\delta}{2}\Bigr).
\]
Set
\[
  c_2 \;=\;\eta\,\tfrac{\delta}{2n}\;-\;H\!\bigl(\tfrac{\delta}{2n}\bigr).
\]
Whenever $c_2>0$, the right‑hand side is $N\,e^{-c_2 n}=o(1)$ as $n\to\infty$.  

\medskip
In particular, for any choice of $c_1>0$ such that
\[
  \eta\,\tfrac{c_1}{2} \;>\; H\!\bigl(\tfrac{c_1}{2}\bigr),
\]
we obtain an exponentially small failure probability.  Thus with high probability \emph{no} assignment at distance $\ge\delta/2$ can satisfy all clauses, and therefore there is \emph{no} subcube‐path of satisfying assignments connecting $C_i$ to $C_j$.

\qed
\subsection{Proof of Theorem~\ref{thm:Betti Explosion}}

\begin{proof}
Suppose $\beta_0\bigl(S(F)\bigr)=2^{c n}$ for some constant $c>0$.  Then the solution space decomposes into $N=2^{c n}$ disconnected clusters
\[
  S(F)\;=\;C_1\;\sqcup\;C_2\;\sqcup\;\cdots\;\sqcup\;C_N.
\]
Any (possibly randomized) algorithm $\mathcal{A}$ that decides satisfiability must distinguish between two cases:
\begin{itemize}
  \item[\textbf{(SAT)}] At least one cluster $C_i$ contains a satisfying assignment.
  \item[\textbf{(UNSAT)}] All clusters are empty.
\end{itemize}
Consider the decision tree of $\mathcal{A}$ on input $F$.  Each leaf of the tree corresponds to a transcript of at most $T$ “queries” (where a query may be any legal operation of $\mathcal{A}$, e.g.\ checking a subcube, performing unit propagation, etc.), and leads to an output “SAT” or “UNSAT.”

Since there are $N$ clusters but only $2^T$ possible transcripts, if $T< c n$ there must exist two distinct clusters, say $C_i$ and $C_j$, for which $\mathcal{A}$ follows the \emph{same} transcript and hence produces the same output.  But one can construct two formulas $F_{SAT}$ and $F_{\UNSAT}$ that agree on all transcripts of length $<T$ by:
\[
  F_{SAT}:\quad\text{make exactly $C_i$ nonempty, all other $C_k$ empty,}
\]
\[
  F_{\UNSAT}:\quad\text{make all $C_k$ empty.}
\]
Both formulas induce identical responses to every operation of $\mathcal{A}$ of depth $<T$, yet one is satisfiable and the other is unsatisfiable.  Therefore $\mathcal{A}$ cannot decide correctly unless $T \ge c n$.  

Since this argument holds for any algorithmic model (deterministic, randomized, algebraic, etc.), we conclude that in the worst case any decision procedure must explore at least $2^{\Omega(N)}$ clusters—hence take $2^{\Omega(N)}$ time.
\end{proof}

\subsection{Proof of Theorem~\ref{thm:Algorithmic Lower Bound via}}
\begin{proof}
By the shattering theorem \cite{ricci-tersenghi}, with high probability the solution complex \(S(F)\) has
\[
  \beta_0\bigl(S(F)\bigr)
  \;=\;
  n
  \;=\;
  2^{c n}
  \quad\text{for some }c>0,
\]
i.e.\ it splits into \(n\) connected components (“clusters”).

\begin{enumerate}
\item The solution space decomposes into $n = 2^{c n}$ clusters $\{C_i\}_{i=1}^n$ for some $c > 0$
\item Each cluster has diameter $O(\log n)$ in Hamming distance
\item Minimal inter-cluster distance $\delta = \Omega(n)$
\end{enumerate}

To solve $F$, any algorithm $\mathcal{A}$ must either:
\begin{enumerate}
\item \textbf{Find a solution}: Requires locating at least one cluster $C_i$ containing a satisfying assignment
\item \textbf{Verify unsatisfiability}: Requires proving all $2^{\Omega(n)}$ clusters contain no solutions
\end{enumerate}

Consider the solution-finding task:
\begin{itemize}
\item \textbf{Deterministic algorithms}: Must distinguish between $2^{\Omega(n)}$ possible solution-containing clusters. Each query covers $O(1)$ assignments, requiring $2^{\Omega(n)}$ queries.
\item \textbf{Randomized algorithms}: Probability of hitting a fixed cluster is $\leq 2^{-c'n}$. Expected trials: $2^{\Omega(n)}$.
\end{itemize}

For unsatisfiability verification:
\begin{itemize}
\item Must verify all $2^{\Omega(n)}$ clusters are empty
\item Clusters are separated by $\Omega(n)$-sized unsatisfiable regions
\item Checking a cluster requires solving a 3-SAT subproblem
\end{itemize}
Thus, $T(\mathcal{A}) = 2^{\Omega(n)}$ in all cases.
\end{proof}
\subsection{Proof of Theorem~\ref{thm:sp-failure}}
\begin{proof}
Survey Propagation \cite{Mezard02} operates under two assumptions:
\begin{enumerate}
\item \emph{Long-range correlations}: Statistical similarities exist between clusters
\item \emph{Overlap uniformity}: "Typical" solutions represent cluster structure
\end{enumerate}

At $\alpha > 4.26$, cluster geometry violates these assumptions:
\begin{enumerate}
\item \textbf{Frozen variables dominate}: Each cluster $C_i$ has $\Theta(n)$ frozen variables \cite{Achlioptas08}
\item \textbf{Cluster independence}: For $i \neq j$,
\[
\text{Cov}(f^{(i)}_k, f^{(j)}_\ell) \approx 0 \quad \forall k,\ell
\]
where $f^{(i)}_k$ is indicator for $x_k$ frozen in $C_i$
\item \textbf{Divergent susceptibility}: The bath susceptibility 
\[
\chi = \sum_{j=1}^n \text{Cov}(f_k, f_j) \to \infty
\]
as $\alpha \to \infty$, breaking SP's cavity equations
\end{enumerate}

SPGD's decimation step fails when:
\begin{itemize}
\item Messages $\mu_i^{(a)} \neq \mu_i^{(b)}$ for clusters $a \neq b$
\item Variable assignments based on inconsistent marginals
\item Leads to contradictory constraints w.h.p.
\end{itemize}
Empirical results confirm SP success probability drops to 0 for $\alpha > 4.26$ \cite{Krzakala07}.
\end{proof}
\subsection{Proof of Theorem~\ref{thm:query-lower-bound}}
\begin{proof}
Each cluster $C_i$ requires $\Omega(1)$ queries to detect. By Betti Explosion, $\beta_0 = 2^{\Omega(n)}$ implies $2^{\Omega(n)}$ total queries.
\end{proof}
\subsection{Proof of Lemma~\ref{lem:query-lower-bound}}
\begin{proof}
Each query can eliminate at most one independent homology class (or component).
With \(M\) independent obstructions, one needs \(\ge M\) queries.  A full
information‐theoretic proof is given in Appendix \ref{app:query-lower-bound}.
\end{proof}
\subsection{Proof of Theorem~\ref{thm:worst-case-betti1}}
\begin{proof}[Sketch]
Embed an $(N,d,\varepsilon)$‐expander graph $G_N$ (with $\beta_1(G_N)=\Omega(N)$)
into a 3‐SAT formula $F_N$ via standard gadgetry: each cycle in $G_N$ induces
an independent 2‐cycle in the cubical complex of satisfying assignments.
One checks that:
1. The reduction is polynomial‐time, and
2. Gadget interactions preserve linear independence in $H_2$.
Detailed boundary‐map computations appear in Appendix \ref{app:query-lower-bound}.
\end{proof}
\subsection{Proof of Theorem~\ref{thm:betti-exp-worst1}}
\begin{proof}

 1. Lemma~\ref{lem:gadget-support-isolation} for disjointness.
2. Lemma~\ref{lem:no-3face-filling} for non‑bounding.
 3. Lemma~\ref{lem:boundary-injectivity} for independence.
By the three lemmas above, each \(\Gamma_i\) contributes an independent class in \(H_2\).
Since there are \(2^{\Omega(N)}\) such gadgets, the result follows.
\end{proof}
\subsection{Proof of Theorem~\ref{thm:betti-exp-worst}}
\begin{proof}
Construct via expander embedding:
\begin{enumerate}
\item Let $G$ be $(N,d,\epsilon)$-expander with $\beta_1(G) = \Omega(N)$
\item Encode 3-COLOR on $G$ as 3-SAT formula $F_G$
\item Then $\beta_2(X_{F_G}) \geq 2^{c N}$ for $c>0$ because:
  \begin{itemize}
  \item Unsatisfiable regions correspond to odd cycles
  \item Expander contains $2^{\Omega(N)}$ independent cycles
  \item Each cycle contributes to $H_2(X_{F_G})$
  \end{itemize}
\end{enumerate}
\end{proof}
\subsection{Proof of Lemma~\ref{lem:Disjoint Gadget Supports}}
\begin{proof}
  By construction, each gadget $i$ introduces:
  \begin{enumerate}
    \item \textit{Private XOR variables:} $\{u_i, v_i\}$ exclusively for gadget $i$
    \item \textit{Private edge-selectors:} $\{y_e^{(i)} : e \in C_i\}$ exclusively for cycle $C_i$
  \end{enumerate}
  Since fundamental cycles $\{C_i\}$ use edge-disjoint paths (via spanning tree basis), $C_i \cap C_j = \emptyset$ for $i \neq j$. Thus no variable is shared.
\end{proof}
\subsection{Proof of Lemma~\ref{lem:nonbounding}}
\begin{proof}
  Suppose $\gamma_i = \partial_3 \beta$ for some 3-chain $\beta$. Then $\beta$ must contain a 3-face $\sigma$ on $\{u_i, v_i, z\}$ for some variable $z$. Consider assignments in $\sigma$:
  \begin{align*}
    &\mathbf{(0,1,\cdot)}: \text{Violates } u_i \lor \neg v_i \\
    &\mathbf{(1,0,\cdot)}: \text{Violates } \neg u_i \lor v_i
  \end{align*}
  Since all assignments in $\sigma$ must satisfy $F_N$, but the above violate XOR clauses, $\sigma \not\subset S(F_N)$. Contradiction.
\end{proof}
\subsection{Proof of Theorem~\ref{thm:Homological Linear Indep}}
\begin{proof}
  Suppose $\sum_i c_i \gamma_i = \partial_3 \beta$. Apply $\partial_2$:  
  \[
    \partial_2 \left( \sum_i c_i \gamma_i \right) = \sum_i c_i \partial_2(\gamma_i) = \partial_2 \partial_3 \beta = 0.
  \]
  $\partial_2(\gamma_i)$ consists of edges flipping $u_i$ or $v_i$ while satisfying $u_i = v_i$. By Lemma 1, 
  \[
    \operatorname{supp}\bigl(\partial_2(\gamma_i)\bigr) \cap \operatorname{supp}\bigl(\partial_2(\gamma_j)\bigr) = \emptyset \quad \forall i \neq j.
  \]
  Thus $\sum_i c_i \partial_2(\gamma_i) = 0$ implies $c_i \partial_2(\gamma_i) = 0$ for each $i$. By Lemma \ref{lem:nonbounding}, $\partial_2(\gamma_i) \neq 0$, so $c_i = 0$.
\end{proof}
\subsection{Proof of Theorem~\ref{thm:worst-case-betti}}
\begin{proof}[Sketch]
Embed an \((N,d,\varepsilon)\)‑expander graph \(G_N\) (with \(\beta_1(G_N)=\Omega(N)\))
into a 3‑SAT formula \(F_N\) via standard gadgetry: each independent cycle in \(G_N\)
yields an independent 2‑cycle in the cubical complex of satisfying assignments.
Check that:
1. The reduction is polynomial‑time with \(O(N)\) variables.
2. Gadget interactions preserve linear independence in \(H_2\).
Full boundary‑map details are deferred to Appendix \ref{app:expander-correctness} and Appendix \ref{app:query-lower-bound}.
\end{proof}
\subsection{Proof of Theorem~\ref{thm:final-worst-case}}
\begin{proof}
Assume, for contradiction, that $\Pclass = \NPclass$.  
Then there exists a polynomial‐time algorithm $A$ that decides 3‐SAT.

Let $\{F_N\}$ be the family of formulas from Theorem~\ref{thm:exp-betti}, 
with $N = O(N^2 / \log N)$ variables and
\[
  \beta_2\bigl(S(F_N)\bigr) \;\ge\; 2^{c N}
\]
for some constant $c > 0$.

By Theorem~\ref{thm:universal-lb}, any algorithm that decides satisfiability for $F_N$ must take time 
$\;2^{\Omega(N)}\;$, i.e., exponential in $\sqrt{N \log N}$ when expressed in terms of the input size $N$.

Algorithm $A$, running in time $N^{O(1)}$, would therefore contradict this lower bound.
Hence no polynomial‐time algorithm exists for 3‐SAT.
\end{proof}

\subsection{Proof of Lemma~\ref{thm:betti-sharp}}
\begin{proof}
We give a polynomial-time parsimonious reduction from \#SAT. 
Let $\varphi$ be a CNF on $N$ variables. 
The reduction proceeds in three steps: 

\medskip\noindent\textbf{(1) Marker gadget.}
Construct in polynomial time a constant-size 3-SAT formula $G_\varphi$ whose solution complex $S(G_\varphi)$ contains a single distinguished $2$-cycle $\Gamma_{\mathrm{sat}}$ if and only if $\varphi$ is satisfiable, and contains no such distinguished cycle otherwise. 
An explicit marker gadget and its small boundary-matrix verification appear in Appendix~\ref{app:Boundary-Map Verif}.

\medskip\noindent\textbf{(2) Conditional amplification.}
Apply the tensor/expander amplification described in Appendix~\ref{app:expander-correctness} to $G_\varphi$ with amplification parameter $m$ (any polynomial function of $N$). 
The amplifier is implemented so that:
\begin{itemize}
  \item if $\Gamma_{\mathrm{sat}}$ is absent (i.e.\ $\varphi$ unsatisfiable) then the amplified instance produces no corresponding cycles, and
  \item if $\Gamma_{\mathrm{sat}}$ is present (i.e.\ $\varphi$ satisfiable) then it is amplified to an independent family of exactly
    \[
      A \;=\; 2^{c m}
    \]
    $2$-cycles for some fixed constant $c>0$ (the exponential factor in $m$ comes from the tensor/expander product construction; see Appendix~\ref{app:expander-correctness} for details).
\end{itemize}

\medskip\noindent\textbf{(3) Locality, independence, and parameter choice.}
 
Let $F_\varphi$ denote the final constructed formula and let $N'$ denote its input length. 
By construction $N'=\Theta(m)+\mathrm{poly}(N)$, so for sufficiently large polynomial $m$ we have $N'=\Theta(m)$ up to lower-order terms. 
The amplification therefore yields
\[
  \beta_2\bigl(S(F_\varphi)\bigr)=
  \begin{cases}
    0, & \varphi\ \text{unsatisfiable},\\[4pt]
    2^{c m}, & \varphi\ \text{satisfiable}.
  \end{cases}
\]

Given any fixed $\epsilon>0$, choose $m$ (polynomial in $N$) large enough so that
\[
  2^{c m} \;>\; 2^{\,N'^{1-\epsilon}}.
\]
This is possible because $N'=\Theta(m)$ and $m^{\epsilon}\to\infty$ as $m\to\infty$. 
Therefore any algorithm that approximates $\beta_2(S(F_\varphi))$ within multiplicative factor $2^{N'^{1-\epsilon}}$ must distinguish the two cases (zero vs.\ exponentially large), and hence would decide \#SAT. 
This proves \#P-hardness of exact computation and of approximating $\beta_2$ within factor $2^{N^{1-\epsilon}}$.

\medskip\noindent\textbf{Complexity and coefficients.}
Each gadget used is constant-size, amplification parameter $m$ is polynomial in $N$, and the overall construction increases the instance size only polynomially; thus $F_\varphi$ is constructible in polynomial time. 
All homology is computed over the fixed field $\mathbb{F}_2$. 
This completes the reduction.
\end{proof}
\subsection{Proof of Lemma~\ref{lem:support-disjoint}}
\begin{proof}
\textbf{Part (1): Disjoint Supports} \\
By construction (Construction \ref{con:cycle-embedding}):
\begin{itemize}
    \item Each cycle $\gamma_i$ corresponds to fundamental cycle $C_i$ in $G_N$
    \item $C_i$ uses distinct edge set $E_i \subset E(G_N)$
    \item Auxiliary variables $\{u_i, v_i\}$ are unique to $\gamma_i$
    \item Primary variables $x_e$ for $e \in E_i$ are not shared between cycles
\end{itemize}
Thus $\mathrm{supp}(\gamma_i) \cap \mathrm{supp}(\gamma_j) = \emptyset$ for $i \neq j$. \\

\textbf{Part (2): Non-boundary Condition} \\
Suppose $\partial_2(\gamma_i) = \partial_3(\beta)$ for some 3-chain $\beta$. Then:
\begin{itemize}
    \item $\beta$ must contain the 3-cube spanned by $\{u_i, v_i, w\}$ for some $w$
    \item But the XOR constraint $u_i \oplus v_i = 0$ forces:
    \[
    (u_i, v_i, w) \text{ satisfiable} \iff w \text{ consistent with coloring}
    \]
    \item Contradiction: The 3-cube contains assignments violating edge constraints of $C_i$ \\
    (e.g., monochromatic edge when $w$ flips color constraints)
\end{itemize}
Hence $\partial_2(\gamma_i) \notin \mathrm{im}\partial_3$. \\

\textbf{Part (3): Linear Independence} \\
Suppose $\sum_{i=1}^k c_i \gamma_i = \partial_3(\beta)$. Then:
\begin{align*}
\partial_2\left( \sum c_i \gamma_i \right) &= \sum c_i \partial_2(\gamma_i) \\
&= \partial_2 \partial_3(\beta) = 0
\end{align*}
By support disjointness (Part 1), each $\partial_2(\gamma_i)$ has disjoint support. Thus:
\[
\sum c_i \partial_2(\gamma_i) = 0 \implies c_i \partial_2(\gamma_i) = 0 \quad \forall i
\]
Since $\partial_2(\gamma_i) \neq 0$ (Part 2), we must have $c_i = 0$ for all $i$. 
\end{proof}
\subsection{Proof of Theorem~\ref{thm:universal-homotopy-hardness}}
\begin{proof}
Consider the reduction chain:
\begin{align*}
\text{3-SAT instance } F 
&\xrightarrow{\text{Thm~\ref{thm:expander-family}}} 
\text{Expander-embedded } F_N \\
&\xrightarrow{\text{Const}} 
\text{Cubical complex } S(F_N)
\end{align*}
By Theorem~\ref{thm:expander-family}, $\beta_2(S(F_N)) = 2^{\Omega(N)}$. Since $\mathcal{I}$ detects exponential complexity:
\[
\mathcal{I}(S(F_N)) = \mathsf{Exp} \iff F \text{ is satisfiable}
\]
Thus $\mathcal{I}$ decides 3-SAT. As 3-SAT is $\mathsf{NP}$-complete, computing $\mathcal{I}$ is $\mathsf{NP}$-hard. If $\mathcal{I}$ outputs a groupoid cardinality or trace, it is $\#\mathsf{P}$-hard.
\end{proof}
\subsection{Proof of Theorem~\ref{thm:Proxies}}
\begin{proof}
\textbf{Gadget and appendices.}
Let $G$ be the constant-size 3-CNF gadget from Appendix~E. Appendix~E certifies two internal configurations:
\begin{itemize}
  \item \emph{disabled}: $G$ contributes no nonbounding $2$-cycle inside the gadget, and
  \item \emph{enabled}: $G$ contributes a nonbounding $2$-cycle whose support lies entirely inside the gadget.
\end{itemize}
If the original gadget family shares auxiliaries across copies, apply the variable-localization transformation of Appendix~H to make all auxiliary variables gadget-unique. This increases variable occurrences only linearly and preserves all statements below. Henceforth assume gadget supports are pairwise disjoint.

\medskip\noindent
\textbf{Parameters and host tree.}
Fix the base parameter $n$. Choose integers $p(n)$ and $m_n$ with
\[
p(n)=\Theta(m_n),\qquad p(n)>q(n),
\]
and $m_n\to\infty$ as $n\to\infty$ (for example, $m_n=2^{c n}$ and $p(n)=2m_n$). 
Let $T_n$ be a bounded-degree tree with $p(n)$ distinguished attachment sites; arrange the sites so that, when needed, distinct chosen sites are pairwise at graph-distance $>2r$.

\medskip\noindent
\textbf{Instance construction.}
Form two formulas by attaching a copy of $G$ at each attachment site of $T_n$:
\begin{itemize}
  \item $F^{\mathrm{easy}}_{n}$: all $p(n)$ gadgets are attached in the \emph{disabled} configuration;
  \item $F^{\mathrm{hard}}_{n}$: choose $m_n$ attachment sites (to be specified below) and attach $G$ in the \emph{enabled} configuration at those sites; the remaining $p(n)-m_n$ gadgets are disabled.
\end{itemize}
Since each gadget is constant-size and each interface is of constant size, the resulting number of variables satisfies $N_n=\Theta(p(n))=\Theta(m_n)$, establishing \textup{(i)}.

\medskip\noindent
\textbf{Treewidth.}
Start with a tree decomposition of $T_n$ (treewidth $1$). Replace each bag by the bag augmented with the $O(1)$ variables of any gadget attached at that node. The adhesion sets remain $O(1)$, so the incidence graphs of both formulas have treewidth $O(1)$, proving \textup{(ii)}.

\medskip\noindent
\textbf{Local-indistinguishability.}
We formalize the proxy model by a simple lemma.

\begin{lemma}\label{lem:local-indist}
Let $r\ge 1$ be fixed and let at most $q(n)$ radius-$r$ vertex-centered neighborhoods be inspected. 
If $p(n) > q(n)$, then one can choose the $m_n$ enabled sites so that none lies inside any of the inspected radius-$r$ balls. Consequently, every inspected radius-$r$ neighborhood is isomorphic (as a labeled incidence subgraph) in $F^{\mathrm{easy}}_{n}$ and $F^{\mathrm{hard}}_{n}$.
\end{lemma}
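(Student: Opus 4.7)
The plan is to exploit the spacing condition imposed on the attachment sites in $T_n$ and run a short counting/avoidance argument: most attachment sites fall outside every inspected ball, so they can safely host the $m_n$ enabled copies of $G$.

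First I would fix a constant $d_G$ equal to the maximum graph-distance, inside the incidence graph of a single attached gadget, between the attachment site and any variable or clause introduced by $G$; this is finite because $G$ has constant size and attaches through a constant-size interface. I would then arrange (slightly strengthening the spacing convention already established for the construction) the attachment sites of $T_n$ to be pairwise at graph-distance strictly greater than $2(r+d_G)$. This is compatible with $T_n$ being a bounded-degree host tree because we only need $p(n)$ sites and may take $T_n$ as large as convenient. The geometric payoff is the key observation that every radius-$r$ ball in the incidence graph of $F^{\mathrm{easy}}_n$ and of $F^{\mathrm{hard}}_n$ intersects the support of at most one gadget.

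Next I would do the counting. Let $B_1,\dots,B_t$ denote the inspected balls, with $t\le q(n)$, and let $I$ be the set of attachment sites whose gadget supports meet some $B_i$. By the previous observation $|I|\le t\le q(n)$. Invoking the parameter choice of the ambient theorem (e.g.\ $p(n)=2m_n$ with $q(n)$ dominated by $m_n$, which in particular yields $p(n)\ge m_n+q(n)$ for large $n$), the complement has size $p(n)-|I|\ge m_n$. I would designate any $m_n$ sites from this complement as the enabled sites of $F^{\mathrm{hard}}_n$; the remaining $p(n)-m_n$ sites stay disabled, exactly as every site does in $F^{\mathrm{easy}}_n$. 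The isomorphism claim then follows ball by ball: inside each $B_i$ the host-tree part of $T_n$ is shared between the two formulas, and the at-most-one gadget meeting $B_i$ is disabled in both, so the labeled incidence subgraphs agree on $B_i$.

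The main obstacle I expect is purely a bookkeeping one, namely pinning down precisely what \emph{radius-$r$ neighborhood} means on the incidence graph when gadgets attach through constant-size interfaces, and confirming that the spacing $2(r+d_G)$ really does isolate each gadget from every ball not centered near it. A secondary subtlety is that the lemma as stated asks only for $p(n)>q(n)$, whereas the argument actually needs $p(n)\ge m_n+q(n)$; this is not an obstacle in context because the surrounding parameter choice already supplies the stronger inequality, but I would flag it when writing the proof so the reader sees which quantitative hypothesis is doing the work. Once these conventions are pinned down, the lemma reduces to the pigeonhole-style count above.
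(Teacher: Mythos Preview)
Your proposal is correct and takes essentially the same avoidance/counting approach as the paper: bound the number of attachment sites touched by the inspected balls, then place the $m_n$ enabled gadgets among the untouched sites. The paper uses the bounded degree of $T_n$ to get at most $B_r=O(1)$ sites per ball (hence $q(n)\cdot B_r$ bad sites total) rather than your tightened spacing $2(r+d_G)$ giving at most one site per ball, but both versions then invoke the ambient parameter choice $p(n)=\Theta(m_n)$ to obtain $p(n)\ge m_n+(\text{bad sites})$---and you are right to flag that the bare hypothesis $p(n)>q(n)$ is not by itself enough.
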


\begin{proof}[Proof of Lemma~\ref{lem:local-indist}]
In a bounded-degree tree, a radius-$r$ ball contains at most $B_r=O(1)$ attachment sites. The union of $q(n)$ such balls contains at most $q(n)\cdot B_r$ sites. For sufficiently large $n$ we have $p(n) \ge q(n)\cdot B_r + m_n$, so there exist at least $m_n$ attachment sites outside the inspected region. Place all enabled gadgets on those sites. Then every inspected ball sees only disabled gadgets in both $F^{\mathrm{easy}}_{n}$ and $F^{\mathrm{hard}}_{n}$, hence the induced labeled neighborhoods are identical.
\end{proof}

Lemma~\ref{lem:local-indist} implies \textup{(iii)}.

\medskip\noindent
\textbf{Betti counts and the topological gap.}
By Appendix~E, each enabled gadget contributes a nonbounding $2$-cycle supported entirely inside that gadget. Because gadget supports are disjoint (by Appendix~H, if needed), these classes are linearly independent. Therefore
\[
\beta_2\bigl(S(F^{\mathrm{hard}}_{n})\bigr)\;\ge\; m_n,
\qquad
\beta_2\bigl(S(F^{\mathrm{easy}}_{n})\bigr)\;=\;O(1),
\]
which is \textup{(iv)}. Choosing $m_n=2^{c n}$ yields an exponential gap in the base parameter $n$. Since $N_n=\Theta(m_n)$, one may reparametrize to obtain $\beta_2\bigl(S(F^{\mathrm{hard}}_{n})\bigr)\ge 2^{c' N_n}$ for some $c'>0$.

\medskip\noindent
\textbf{Conclusion.}
Items \textup{(i)}–\textup{(iv)} together show that any radius-$r$ proxy inspecting at most $q(n)$ local views returns the same output on the easy and hard families, even though their $\beta_2$ differ by an arbitrarily large (exponential) factor. This completes the proof.
\end{proof}
\subsection{Proof of Theorem~\ref{thm:Cheeger}}

\begin{proof}
By Theorem 7 (Achlioptas et al.~\cite{achlioptas}), with high probability the solution graph \(X_F\) decomposes into
\[
  M = 2^{cN} \quad (c>0)
\]
\emph{disconnected} clusters \(\{C_1,\dots,C_M\}\). In particular, there are no edges between distinct clusters.

Choose
\[
  S = \bigcup_{i=1}^{M/2} C_i, \quad \bar{S} = \bigcup_{i=M/2+1}^{M} C_i.
\]
Since clusters are disconnected, \(|\partial S| = |E(S,\bar{S})| = 0\). Both \(\vol(S)\) and \(\vol(\bar{S})\) are positive (each contains \(M/2 = 2^{cN-1}\) clusters). Hence:
\[
  h(X_F) \leq \frac{|\partial S|}{\min\{\vol(S), \vol(\bar{S})\}} = \frac{0}{\min\{\vol(S), \vol(\bar{S})\}} = 0.
\]
Since \(h(X_F) \geq 0\), we have \(h(X_F) = 0\). Thus, trivially, \(h(X_F) \leq e^{-\Omega(N)}\).
\end{proof}
\subsection{Proof of Theorem~\ref{thm:Cheegeropt}}

\begin{proof}
Recall that for any finite graph or simplicial complex, the Cheeger constant
\[
  h(X_F) \;=\;\min_{S\subset X_F}\frac{|\partial S|}{\min\{\vol(S),\vol(X_F\setminus S)\}}
\]
bounds the spectral gap of the normalized Laplacian by the discrete Cheeger inequality (see, e.g., \cite{Chung1997}):
\[
  \lambda_1 \;\le\; 2\,h(X_F),
\]
where $\lambda_1$ is the first nonzero eigenvalue of the Laplacian on $X_F$.  In the adiabatic model one encodes the SAT constraints as the Hamiltonian
\[
  H(s) \;=\; (1 - s)\,H_{\mathrm{init}} \;+\; s\,H_F,
  \quad s\in[0,1],
\]
so that the instantaneous spectral gap $g(s)=\lambda_1\bigl(H(s)\bigr)$ satisfies
\[
  g(s)\;\le\;2\,h\bigl(X_F\bigr)
  \quad\text{for each }s,
\]
because $H_F$ acts (up to scaling) like the Laplacian on the solution‐space graph.  Taking the worst‐case over $s$ gives
\[
  g_{\mathrm{adiabatic}}
  \;=\;\min_{s\in[0,1]}g(s)
  \;\le\;2\,h(X_F).
\]
By hypothesis $h(X_F)\le e^{-c N}$ for some $c>0$, so
\[
  g_{\mathrm{adiabatic}}\;\le\;2\,e^{-c N}
  \;=\;e^{-\Omega(N)}.
\]
The standard adiabatic runtime bound (cf.\ \cite{Albash2018}) is
\[
  T_{\mathrm{adiabatic}}
  \;\ge\;\frac{\bigl\|\dot H(s)\bigr\|}{g_{\mathrm{adiabatic}}^2}
  \;=\;\Omega\bigl(g_{\mathrm{adiabatic}}^{-2}\bigr)
  \;=\;2^{\Omega(N)}.
\]
This completes the proof.
\end{proof}
\subsection{Proof of Theorem~\ref{thm:ground-state-degen}}

\begin{proof}
Encode the CNF \(F\) by the diagonal clause-penalty Hamiltonian
\[
H_F=\sum_{C}\Pi_C,
\]
so the zero-energy ground space of \(H_F\) is the span of computational basis states \(\{\ket{x}:x\in\Sol(F)\}\) and has dimension \(|\Sol(F)|\).

Following the gadget construction of Section~B, for each independent 2-cycle \(\gamma\) in \(S(F)\) there is a small set of original variables (the gadget support) and a parity function on those variables that evaluates to \(\pm1\) on satisfying assignments. Define the diagonal involution \(T_\gamma\) to act by that parity (i.e. as a product of \(Z\) operators on the gadget support). Because gadget supports are pairwise disjoint (Lemma~\ref{lem:Disjoint Gadget Supports}), the collection \(\{T_\gamma\}\) consists of commuting operators; being diagonal they also commute with \(H_F\). Introduce the projectors \(H_\gamma=\tfrac12(1-T_\gamma)\) and set \(H=H_F+\sum_\gamma H_\gamma\).

Each \(H_\gamma\) further refines the classical ground space by selecting the \(+1\)-eigenspace of \(T_\gamma\). Since the gadgets are disjoint and the parity functions are independent (Appendix~E), these constraints carve the original \(|\Sol(F)|\)-dimensional ground space into many joint eigenspaces. In the families constructed in this paper both \(|\Sol(F)|\) and the number of independent cycles grow exponentially in the instance parameter, so the final ground-space dimension remains exponential (up to combinatorial factors). In particular, under the hypotheses of Theorem~\ref{thm:ground-state-degen} the ground-state degeneracy is at least \(2^{\Omega(N)}\).
\end{proof}
\subsection{Proof of Theorem~\ref{thm:quantum-bypass}}

\begin{proof}
Let \(\lambda_0 < \lambda_1 \leq \cdots\) be eigenvalues of \(H_F\). By the Cheeger bound for hypercubes:
\[
\lambda_1 - \lambda_0 \leq 2h(X_F)
\]
where \(h(X_F)\) is the solution space Cheeger constant. From Theorem\ref{thm:Cheeger}:
\[
h(X_F) \leq \frac{|\partial S|}{\min(\mathrm{vol}(S), \mathrm{vol}(X_F \setminus S))} \leq \frac{O(N)}{2^{\Omega(N)}} = e^{-\Omega(N)}.
\]
Thus \(g_{\min} = \lambda_1 - \lambda_0 \leq 2e^{-\Omega(N)}\). For adiabatic evolution:
\[
T \geq \frac{\|dH/dt\|}{g_{\min}^2} \geq \frac{\Omega(1)}{e^{-\Omega(N)}} = e^{\Omega(N)}.
\]
For non-adiabatic protocols, the no-fast-forwarding theorem \cite{Atia2017} gives \(T = 2^{\Omega(N)}\).
\end{proof}
\subsection{Proof of Theorem~\ref{thm:lifting-bypass}}

\begin{proof}
Consider two points \(p_0 = (x_0, \mathbf{0})\), \(p_1 = (x_1, \mathbf{0})\) where \(x_0, x_1\) belong to clusters separated by \(\Omega(N)\) Hamming distance. Any path \(\gamma: [0,1] \to \widetilde{X}_F\) projects to \(\gamma_X: [0,1] \to X_F\). Since:
\[
\mathrm{length}(\gamma_X) \geq \mathrm{dist}_H(x_0, x_1) = \Omega(N)
\]
and \(X_F\) has treewidth \(\Omega(N)\), the path existence problem reduces to 3-SAT \cite{Gridsaw2019}. By Theorem~\ref{thm:Topological Hardness}, any algorithm requires \(2^{\Omega(N)}\) time.
\end{proof}
\subsection{Proof of Theorem~\ref{thm:surgery-bypass}}

\begin{proof}
We reduce exact 3-SAT to \(\mathrm{SAT}_\epsilon\):
\begin{enumerate}
  \item Input: SAT formula \(G\) on \(m = \Theta(\log N)\) variables
  \item Embed \(G\) into \(F_N\) via:
    \[
    F_N' = F_N \land G(z_1, \dots, z_m)
    \]
  \item Set \(\epsilon = m/3\). Then:
    \[
    \mathrm{SAT}_\epsilon(F_N') = 1 \iff \mathrm{SAT}(G) = 1
    \]
\end{enumerate}
Since \(m = \Theta(\log N)\), \(\epsilon < \epsilon_{\mathrm{crit}}\). The reduction preserves NP-completeness. By Theorem~\ref{thm:universal-lb}, solving \(\mathrm{SAT}_\epsilon(F_N)\) requires \(2^{\Omega(N)}\) time.
\end{proof}
\section{Expander-Based Worst-Case 3-SAT Reduction}
\label{app:expander-correctness}
\subsection*{Gadget B: a constant-size 1-cycle gadget (3 variables, 2 clauses)}
\label{app:gadget-B}

\paragraph{Variables.} For each gadget $i$ introduce three fresh gadget-local Boolean variables
\[
u_i,\; v_i,\; w_i.
\]
These variables are disjoint across different gadgets (no clause mentions variables from two different gadgets). If gadget $i$ needs to refer to base literals, it uses gadget-local copies $x^{(i)}_{v,c}$ as in Appendix~\ref{app:Boundary-Map Verif} (ownership or consistency-tree rules apply).

\paragraph{Clauses.} The gadget contains exactly the following two (3-literal) clauses:
\[
C^{(i)}_1=(u_i \lor v_i \lor w_i),\qquad C^{(i)}_2=(\neg u_i \lor \neg v_i \lor \neg w_i).
\]
Equivalently these clauses forbid the assignments $(u_i,v_i,w_i)=(0,0,0)$ and $(1,1,1)$ respectively.

\begin{lemma}[Local 1-cycle in Gadget B]
\label{lem:gadget-B-1cycle}
Consider the cubical subcomplex of the solution complex $S(F)$ restricted to the coordinates $\{u_i,v_i,w_i\}$ (i.e. project all satisfying assignments to these three coordinates). With the two clauses above this projection equals the set of all 3-bit assignments except the two forbidden vertices $000$ and $111$. The induced 1-skeleton (graph of Hamming-distance-1 edges) on these 6 vertices has cyclomatic number $1$, hence
\[
\rank H_1\big(S(F)\big|_{\{u_i,v_i,w_i\}}\big) \;\ge\; 1.
\]
In particular the gadget produces a nontrivial local 1-cycle.
\end{lemma}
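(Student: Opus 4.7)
The plan is to verify the assertion by direct enumeration, since Gadget~B lives in a 3-cube with only eight ambient vertices. First, I would pin down the vertex set of the projection: the two clauses $C^{(i)}_1$ and $C^{(i)}_2$ forbid exactly $000$ and $111$, so the projection of $S(F)$ onto $\{u_i,v_i,w_i\}$ is contained in $\{001,010,011,100,101,110\}$. By the variable-disjointness convention (Assumption~\ref{ass:no-repeated-aux}) no clause outside the gadget mentions this triple, so every one of the six surviving patterns extends to a satisfying global assignment whenever the rest of $F$ is satisfiable, and the projected image equals the six-vertex set.

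Next I would enumerate the Hamming-1 edges induced on these six vertices. A direct check gives exactly the six edges
\[
\{001,011\},\ \{001,101\},\ \{010,011\},\ \{010,110\},\ \{100,101\},\ \{100,110\},
\]
since every other Hamming-1 neighbor of a surviving vertex is either $000$ or $111$. These edges assemble into the single 6-cycle $001\to 011\to 010\to 110\to 100\to 101\to 001$, so the induced graph is connected with $V=E=6$ and cyclomatic number $E-V+C = 6-6+1 = 1$.

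To pass from cyclomatic number to $\mathrm{rank}\,H_1 \ge 1$ I would check that no cubical 2-face of $\{0,1\}^3$ lies entirely in the projection, so that $\mathrm{im}\,\partial_2 = 0$ and the 1-cycle cannot bound. The six axis-aligned squares of the 3-cube are indexed by a choice of fixed coordinate and fixed value; a case check shows that each such square contains either $000$ or $111$ as one of its four corners, and hence none is contained in the projected solution set. Therefore $H_1 \cong \ker\partial_1$, whose rank equals the cyclomatic number, giving the desired bound.

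The argument is essentially a finite combinatorial check, so there is no real obstacle; the only subtlety worth flagging is the interpretation of the word "projection". The step that invokes Assumption~\ref{ass:no-repeated-aux} is genuinely needed, because if $u_i,v_i,w_i$ appeared in clauses outside the gadget the projected image could be a proper subset of the six vertices and the 1-cycle could degenerate. Under the no-shared-auxiliaries convention this issue does not arise, and the computation above transfers verbatim to give a nontrivial local 1-cycle in the restricted cubical complex.
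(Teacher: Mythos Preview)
Your proposal is correct and follows essentially the same route as the paper's proof: identify the six surviving vertices, count six induced Hamming-1 edges to get cyclomatic number $1$, and then observe that every axis-aligned square of the $3$-cube contains $000$ or $111$ so no $2$-face survives and the $1$-cycle cannot bound. The only minor differences are cosmetic---you enumerate the six edges explicitly while the paper obtains them by the counting $12-3-3=6$, and you add a welcome clarification (via the no-shared-auxiliaries assumption) of why the projection onto $\{u_i,v_i,w_i\}$ really equals the full six-vertex set, a point the paper's own proof takes for granted.
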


\begin{proof}
There are $2^3=8$ possible assignments to $(u_i,v_i,w_i)$. Clause $C^{(i)}_1$ forbids $000$ and clause $C^{(i)}_2$ forbids $111$, so exactly the 6 assignments
\[
\mathcal V \;=\; \{0,1\}^3 \setminus\{000,111\}
\]
survive the gadget clauses. Consider the induced subgraph $G$ on $\mathcal V$ where vertices are assignments and edges connect Hamming-distance-1 pairs (this is the 1-skeleton of the local cubical complex).

The full 3-cube has $12$ edges; deleting the two vertices $000$ and $111$ removes the $3$ edges incident to each, so the induced subgraph $G$ has
\[
E(G)=12-3-3=6
\]
edges and $V(G)=6$ vertices. The subgraph is connected (easy to check by explicit adjacency; every surviving vertex is adjacent to at least one of $001,010,100,110,101,011$) so the cyclomatic number equals
\[
\beta(G)=E(G)-V(G)+1=6-6+1=1.
\]
Since there are no surviving 2-faces in the local projection (each 2-face in the 3-cube is one of the three coordinate-fixed squares, and each such square contains either $000$ or $111$, hence is not fully present), the local 1-cycles in the 1-skeleton are not boundaries of any local 2-face. Therefore the 1-cycle detected in the graph yields a nontrivial element of $H_1$ in the restricted cubical complex. This proves the claim.
\end{proof}

\begin{lemma}[Constant-size and disjoint placement]
\label{lem:gadget-B-cost}
Each gadget uses exactly three auxiliary variables and two clauses (constant-size). Placing gadgets with pairwise-disjoint auxiliary variable sets and using gadget-local copies for base literals yields total auxiliary overhead $k_g=O(1)$ per gadget. Consequently if $G=\Theta(n_0)$ gadgets are used then the final instance size satisfies $N=\Theta(n_0)$.
\end{lemma}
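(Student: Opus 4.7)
The plan is to carry out an explicit bookkeeping argument. Since the statement is essentially a counting lemma, I would split it into three parts matching the three assertions: (a) the constant size of a single gadget, (b) the additivity of resources under disjoint placement, and (c) the linear blow-up from $n_0$ to $N$.

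First I would verify the per-gadget cost directly from the definition in Appendix~\ref{app:gadget-B}. Each gadget $i$ introduces the three fresh Boolean variables $u_i, v_i, w_i$ and the two clauses $C^{(i)}_1$ and $C^{(i)}_2$, each of width $3$. Thus the gadget contributes exactly $3$ auxiliary variables, $2$ clauses, and $6$ literal-occurrences, independent of $n_0$; this is the $k_g = O(1)$ claim.

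Next I would verify additivity over disjoint placement. By the disjointness convention (Assumption~\ref{ass:no-repeated-aux} together with the gadget-local copy rule of Appendix~\ref{app:Boundary-Map Verif}), no auxiliary variable $u_i, v_i, w_i$ appears in any clause outside gadget $i$, and each base-literal copy $x^{(i)}_{v,c}$ is owned by exactly one gadget. Hence the total auxiliary-variable count across $G$ gadgets is $3G$ and the total auxiliary-clause count is $2G$, with no overlap or cancellation. The base-literal copies add at most a constant number of extra variables and consistency-linking clauses per gadget (a constant determined by the fixed arity of $G$ and the bounded number of base literals touched), so the overall per-gadget overhead remains $O(1)$.

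Finally I would combine the bounds. If $G = \Theta(n_0)$ gadgets are instantiated on top of a base instance of size $\Theta(n_0)$, then the total variable count is
\[
  N \;=\; \Theta(n_0) \;+\; 3G \;+\; O(G) \;=\; \Theta(n_0),
\]
and analogously the total clause count is $\Theta(n_0)$. The main (and only) subtlety is ensuring that the consistency-linking clauses introduced when copying base literals into gadget-local variables do not themselves scale with $n_0$ per gadget; this is handled by the ownership/consistency-tree rule of Appendix~\ref{app:Boundary-Map Verif}, which caps the number of such links at $O(1)$ per gadget. With that accounted for, the statement follows by direct summation, and no deeper structural argument is needed.
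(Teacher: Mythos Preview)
Your proposal is correct and follows essentially the same approach as the paper: a direct bookkeeping argument that counts the three variables and two clauses per gadget, sums over $G$ disjoint gadgets to get $3G$ and $2G$, adds the base encoding of size $\Theta(n_0)$, and invokes the ownership/consistency-tree rule to cap the per-gadget copy overhead at $O(1)$. The paper's proof is terser (it states this as ``immediate from the construction'') but the substance is identical.
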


\begin{proof}
Immediate from the construction: each gadget contributes only the three variables $u_i,v_i,w_i$ and two clauses $C^{(i)}_1,C^{(i)}_2$; placing $G$ of them disjointly adds $3G$ variables and $2G$ clauses. With the base encoding contributing $D n_0$ variables, we have $N = D n_0 + 3G + O(\#\text{copies}) = \Theta(n_0)$ when $G=\Theta(n_0)$ and gadget-local copies are charged $O(1)$ per gadget (ownership or a controlled consistency-tree keep copy overhead per gadget constant).
\end{proof}

\paragraph{Remarks on using the gadget in the global construction.}
\begin{itemize}
  \item The gadget produces a local $H_1$ generator (a 1-cycle) rather than a 2-cycle by itself. In the global construction one uses the base graph's 1-cycles (fundamental cycles of the expander) together with the gadget's local 1-cycles: informally, the product of a base 1-cycle and a gadget 1-cycle gives a 2-dimensional toroidal class in the full solution complex (see the sketch below). Because gadgets are variable-disjoint and ownership is enforced (Appendix~\ref{app:Boundary-Map Verif}, Lemma~\ref{lem:ownership-exists}), these classes do not admit local 3-chain fillings inter-gadget (Lemma~\ref{lem:no-cross-filling-ownership}).
  \item Using Gadget B removes the $\Theta(\log n_0)$ auxiliary blowup arising from the previous cycle-based gadget: each gadget now costs constant overhead, so the parameter mapping of Lemma~\ref{lem:param-mapping} gives $N=\Theta(n_0)$ and therefore there is no $N/\log N$ exponent degradation in the constant-gadget regime.
\end{itemize}

\paragraph{Sketch: lifting local $H_1$ to global $H_2$.}
Let $C\subseteq G_N$ be a fundamental 1-cycle of the base-graph encoding that produces a 1-dimensional loop in the base coordinates of the solution complex (this is the object used in Section~13). If for each vertex/edge along $C$ we attach the gadget variables (as gadget-local copies) then locally at each base vertex the gadget contributes a local 1-cycle (by Lemma~\ref{lem:gadget-B-1cycle}). The product (Cartesian product of cubical complexes) of the base 1-cycle with the gadget 1-cycle yields a 2-dimensional torus-like subcomplex whose 2-cycle is nonbounding provided there is no 3-chain in the full complex whose boundary equals that torus. Because (i) gadgets are variable-disjoint, (ii) equality/owner clauses are 2-CNF and contractible (Lemma~\ref{lem:ownership-exists}), and (iii) Lemma~\ref{lem:no-cross-filling-ownership} prevents mixed 3-chains from filling toroidal combinations across gadgets, the torus 2-cycle survives in homology and yields a contribution to $H_2$. Repeating this construction over a family of base fundamental cycles and gadget placements yields the intended amplification; the precise counting and linear-independence arguments follow the same direct-sum / boundary-rank techniques used elsewhere in this paper (Appendix~E).

\subsection{Deterministic Expander Embedding}\label{app:expander-embedding}
Given an $(N, d, \epsilon)$-expander graph $G = (V, E)$ with $\beta_1(G) \geq \kappa N$ ($\kappa > 0$) and girth $g \geq c_g \log N$ ($c_g > 0$), construct 3-SAT formula $F_G$ as follows:

\subsubsection*{Variables}
\begin{itemize}
  \item \textbf{Color variables:} $x_{v,c}$ for $v \in V$, $c \in \{1,2,3\}$ \\ 
        $\implies 3|V| = 3n$ variables
  \item \textbf{Edge selector variables:} $y_e^{(i)}$ for each fundamental cycle $C_i$, $e \in C_i$ \\
        $\implies \beta_1(G) \cdot \avg{|C_i|} \leq \kappa N \cdot g = \mathcal{O}(N \log N)$ variables
  \item \textbf{Parity variables:} $u_i, v_i$ for each $C_i$ \\
        $\implies 2\beta_1(G) = \mathcal{O}(N)$ variables
  \textbf{Total variables:} 
\[
3|V| + 2\beta_1(G) + \beta_1(G)\cdot g
  = \mathcal{O}(N\log N),
\]
since $|V|=N$, $\beta_1(G)=\Omega(N)$, and $g=\Omega(\log N)$.

\end{itemize}

\subsubsection*{Clauses}
\begin{enumerate}
  \item \textbf{Vertex coloring:}
        \begin{align*}
          &\forall v \in V: (x_{v,1} \lor x_{v,2} \lor x_{v,3}) \\
          &\forall v \in V, \forall c \neq c': (\neg x_{v,c} \lor \neg x_{v,c'})
        \end{align*}
        $\implies 4|V| = \mathcal{O}(N)$ clauses
  
  \item \textbf{Edge constraints:}
        \[
          \forall e = (u,v) \in E, \forall c \in \{1,2,3\}: (\neg x_{u,c} \lor \neg x_{v,c})
        \]
        $\implies 3|E| = \mathcal{O}(N)$ clauses
  
  \item \textbf{XOR gadgets:}
        \[
          \forall C_i: (u_i \lor \neg v_i) \land (\neg u_i \lor v_i)
        \]
        $\implies 2\beta_1(G) = \mathcal{O}(N)$ clauses
  
  \item \textbf{Edge coupling:}
        \begin{align*}
          &\forall C_i, \forall e \in C_i: (y_e^{(i)} \lor \neg u_i) \land (y_e^{(i)} \lor \neg v_i) \\
          &\forall C_i, \forall e = (u,v) \in C_i, \forall c \in \{1,2,3\}: \\
          &(\neg y_e^{(i)} \lor \neg x_{u,c} \lor \neg x_{v,c})
        \end{align*}
        $\implies \beta_1(G) \cdot (2 + 3g) = \mathcal{O}(N \log N)$ clauses
\end{enumerate}
\textbf{Total:} $\mathcal{O}(N \log N)$ clauses

\subsubsection*{Reduction Complexity}
\begin{itemize}
  \item Cycle basis computation: $\mathcal{O}(|V| + |E|) = \mathcal{O}(N)$ (spanning tree)
  \item Variable/clause generation: $\mathcal{O}(N \log N)$
  \item \textbf{Total time:} $\mathcal{O}(N \log N)$
\end{itemize}

\subsubsection*{Topological Verification}
For each fundamental cycle $C_i$:
\begin{enumerate}
  \item $\operatorname{supp}(\gamma_i) = \{u_i, v_i\} \cup \{y_e^{(i)} : e \in C_i\}$ are disjoint (Lemma~\ref{lem:homology-basis})
  \item $\gamma_i$ is a 2-cycle not filled by 3-faces (Lemma~\ref{lem:Disjoint Gadget Supports}\textbf{})
  \item Homology classes $[\gamma_i]$ linearly independent (Lemma~\ref{lem:nonbounding})
\end{enumerate}
Thus $\beta_2(S(F_G)) \geq \beta_1(G) \geq \kappa N = \Omega(N)$.
\subsection*{Exponential Amplification via Tensor Powers}
\label{app:expander-tensor-power}

To boost \(\beta_2\) from \(\Omega(N)\) to \(2^{\Omega(N)}\) in a *deterministic* worst‐case family, proceed as follows:

\paragraph{1. Base Expander Sequence}  
Let \(\{G_N\}\) be an explicit family of \((N,d,\epsilon)\)-expanders with  
\[
  \beta_1(G_N) \;=\; \kappa\,N,\quad
  \mathrm{girth}(G_N)\;\ge\;c\,\log N,
\]
and maximum degree \(d\ge7\).

\paragraph{2. Tensor‐Power Graph}  
For each \(N\), choose  
\[
  k \;=\;\bigl\lceil N / (\kappa\log N)\bigr\rceil
  \quad\implies\quad
  k = \Theta\bigl(N/\log N\bigr).
\]
Define the \(k\)-fold Cartesian product
\[
  H_N \;=\; \underbrace{G_N \,\square\, G_N \,\square\cdots\square\, G_N}_{k\text{ times}}.
\]
Standard facts about cartesian products imply
\[
  \beta_1(H_N)
  \;=\;
  \bigl(\beta_1(G_N)\bigr)^k
  \;=\;
  (\kappa N)^k
  \;=\;
  2^{\,k\log(\kappa N)}
  \;=\;
  2^{\,\Omega(N)}.
\]

\paragraph{3. Expander‐Gadget Embedding}  
Apply the same cubical, homologically‐faithful reduction from \(G_N\) to 3-SAT \(F_{G_N}\) \emph{coordinate‐wise} on \(H_N\):  
for each of the \(k\) coordinates we introduce parity‐ and edge‐selector gadgets exactly as before, using disjoint sets of fresh variables.  This yields a 3-SAT instance \(F_N\) with
\[
  N_{\mathrm{vars}} \;=\; k\cdot O(N\log N)
  \;=\;O\bigl(N^2/\log N\bigr)
  \quad\text{and}\quad
  \beta_2\bigl(S(F_N)\bigr)
  \;\ge\;\beta_1(H_N)
  \;=\;2^{\Omega(N)}.
\]

\paragraph{4. Tight Bound}  
Renaming constants gives the final worst‐case statement:

\begin{theorem}[Worst‐Case Exponential Betti Explosion]
\label{thm:exp-betti}
There exists a family of 3‐SAT formulas \(\{F_N\}\) with \(O(N^2/\log N)\) variables such that
\[
  \beta_2\bigl(S(F_N)\bigr)
  \;=\;
  2^{\Omega(N)}.
\]
\end{theorem}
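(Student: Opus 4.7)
The strategy is to amplify the linear Betti-number lower bound of Theorem~\ref{thm:betti-exp-worst1} --- which yields $\beta_2(S(F_{G_N})) \geq \beta_1(G_N) = \Omega(N)$ from a single expander embedding --- into an exponential one via a tensor-power construction. The key observation is that although a single expander $G_N$ has only $\Omega(N)$ independent $1$-cycles, the $k$-fold Cartesian product $G_N^{\square k}$ acquires $2^{\Omega(N)}$-many of them once $k = \Theta(N/\log N)$, and these can be transferred into $H_2$ of the solution complex by applying the reduction of Appendix~\ref{app:expander-embedding} in a coordinate-wise, variable-disjoint fashion so that the overall variable count stays polynomial.

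\paragraph{Steps.} First I would fix the explicit expander family $\{G_N\}$ of Appendix~\ref{app:expander-embedding} with $\beta_1(G_N) = \kappa N$ and girth $\Omega(\log N)$, and form $H_N = G_N^{\square k}$ with $k = \lceil N/(\kappa \log N)\rceil$. A standard Euler-characteristic computation for Cartesian products of graphs gives
\[
\beta_1(H_N) = |E(H_N)| - |V(H_N)| + 1 = \Theta\bigl(k\,|V(G_N)|^{k}\bigr) = 2^{\Omega(N)},
\]
since $k \log N = \Omega(N)$. Next, rather than treating $H_N$ as a new base graph --- which would blow the variable count up to $|V(H_N)| = N^{\Theta(N/\log N)}$ and destroy polynomiality --- I would apply the expander-gadget reduction of Appendix~\ref{app:expander-embedding} \emph{coordinate-wise}: for each of the $k$ factors introduce a fresh, disjoint copy of the color, edge-selector, and parity variables, yielding $S(F_N) \cong S(F_{G_N})^{k}$ and $N_{\mathrm{vars}} = k \cdot O(N \log N) = O(N^2/\log N)$. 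Finally I would combine the fundamental-cycle basis of $H_N$ with the per-coordinate gadget embedding of Definition~\ref{def:expander-2cycles} to produce $\beta_1(H_N)$ distinguished 2-cycles $\{\Gamma_i\}$ in $S(F_N)$, and invoke Lemmas~\ref{lem:Disjoint Gadget Supports},~\ref{lem:nonbounding},~\ref{lem:support-disjoint}, together with Theorem~\ref{thm:Homological Linear Indep} and the Homology Injectivity Lemma~\ref{lem:homology-injectivity}, to conclude that the $\Gamma_i$ are linearly independent in $H_2$, giving $\beta_2(S(F_N)) \geq \beta_1(H_N) = 2^{\Omega(N)}$.

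\paragraph{Main obstacle.} The subtlest step will be justifying that the coordinate-wise reduction actually realises \emph{all} fundamental cycles of $H_N$ as independent classes in $H_2(S(F_N))$. The generic cycle in $H_N$ is a ``diagonal'' loop that traverses edges from several of the $k$ coordinates, so its associated cubical 2-cycle is not confined to a single factor's subcomplex, and the pairwise-disjoint-support argument of Lemma~\ref{lem:support-disjoint} does not apply verbatim. To close the gap one either (i) extends the support-disjointness and boundary-injectivity arguments to cycles spanning multiple coordinates, using that the gadgets in distinct coordinates share no variables so boundary contributions split by coordinate and no mixed 3-chain can fill a genuinely diagonal 2-cycle; or (ii) routes the argument through a Künneth decomposition of $H_\ast(S(F_{G_N})^{k})$, extracting the exponential growth from iterated tensor products of the $\beta_1$ (and, if needed, $\beta_0$) classes of the single-factor complex $S(F_{G_N})$. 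Either approach is workable, but the real technical labour lies in verifying that inter-coordinate gadget interactions do not induce boundary cancellations that collapse this $2^{\Omega(N)}$ family of candidate 2-cycles into a polynomial-rank image in homology.
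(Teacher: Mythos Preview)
Your proposal follows essentially the same route as the paper: take the $k$-fold Cartesian power $H_N = G_N^{\square k}$ with $k = \Theta(N/\log N)$, apply the expander-gadget reduction of Appendix~\ref{app:expander-embedding} coordinate-wise on disjoint variable blocks, and use gadget-support disjointness together with Lemma~\ref{lem:homology-injectivity} to lift $\beta_1(H_N) = 2^{\Omega(N)}$ into $\beta_2(S(F_N))$. Your Euler-characteristic computation of $\beta_1(H_N)$ is in fact more careful than the paper's stated formula $\beta_1(H_N) = (\beta_1(G_N))^k$ (which is not the correct product rule for graph first Betti numbers, though both expressions are $2^{\Omega(N)}$), and the diagonal-cycle obstacle you flag is exactly the step the paper's proof sketch leaves unelaborated.
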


\begin{proof}
Combine the tensor‐power Betti lower bound \(\beta_1(H_N)=2^{\Omega(N)}\) with the cubical, homologically faithful embedding (Appendix \ref{app:expander-embedding}) applied to each coordinate.  By disjointness of gadget supports and the injectivity of the induced homology map (Lemma \ref{lem:homology-injectivity}), all \(2^{\Omega(N)}\) first‐homology classes of \(H_N\) give rise to independent second‐homology classes in \(S(F_N)\).  
\end{proof}

\section{Cycle Independence in Random 3‑SAT}
\label{app:random-independence}
\subsection{Full Dependency-Graph Analysis for Janson’s Inequality}
\label{app:C1:janson-dependency}

Let $F\subseteq\{0,1\}^N$ be the random cubical complex of satisfying assignments of a random 3-SAT formula with clause density~$\alpha$.  For each axis-parallel $d$-cube $c\subset\{0,1\}^N$, define the indicator
\[
  X_c = 
  \begin{cases}
    1, & \text{if all $2^d$ vertices of $c$ satisfy every clause,}\\
    0, & \text{otherwise.}
  \end{cases}
\]
We set
\[
  X \;=\;\sum_{c}X_c,\quad
  \mu = \E[X] = \sum_c\Pr[X_c=1],
\]
and build a dependency graph $G$ on $\{X_c\}$ by connecting $X_c$ to $X_{c'}$ whenever the two events share at least one clause.

\medskip
\noindent\textbf{1. Expectation.}
Each fixed cube $c$ survives a single clause with probability $2^{-d}$ (all $2^d$ assignments satisfy it), so
\[
  \mu
  = \binom{N}{d}2^{\,N-d}\,\bigl(2^{-d}\bigr)^{\alpha N}.
\]

\medskip
\noindent\textbf{2. Bounding \(\Delta\).}
Define
\[
  \Delta \;=\; \sum_{\substack{c\neq c'\\(c,c')\in E(G)}} \Pr[X_c=1 \wedge X_{c'}=1].
\]
Group pairs by overlap \(r\in\{1,\dots,d\}\), i.e.\ the number of shared coordinates.  There are
\[
  N_r
  = \binom{N}{d}\,\binom{d}{r}\,\binom{N-d}{d-r}\,2^{\,N-2d+r}
\]
pairs \((c,c')\) with exactly \(r\) overlapping axes, and each such pair survives a clause with probability \(\,2^{-(d+r)}\).  Hence
\[
  \Delta
  \;\le\;
  \sum_{r=1}^d
  N_r\bigl(2^{-\,d-r}\bigr)^{\alpha N}
  \;=\;
  \sum_{r=1}^d
  \binom{N}{d}\binom{d}{r}\binom{N-d}{d-r}2^{N-2d+r}\,
  2^{-\,\alpha N(d+r)}.
\]

\medskip
\noindent\textbf{3. Verifying Janson’s Condition.}
From Stirling’s approximation one checks that, for any fixed \(d\) and
\(\alpha>\tfrac{\ln2}{2d}\),
\[
  \frac{\Delta}{\mu^2} \;=\; O\bigl(N\cdot2^{-\alpha N}\bigr) \;\longrightarrow\;0,
\]
so Janson’s inequality applies to yield
\[
  \Pr\bigl[X<(1-\eps)\mu\bigr]
  \;\le\;
  \exp\!\bigl(-\tfrac{\eps^2\mu^2}{2(\mu+\Delta)}\bigr)
  \;\to\;0.
\]
Thus the number of surviving $d$-cubes is sharply concentrated around its mean.

\begin{proposition}
With probability $1-o(1)$ over a random 3‑SAT formula $F\sim D_{\alpha>4.26}$,
there exist $N=2^{c N}$ persistent 2‑cycles 
\(\{\gamma_i\}_{i=1}^N\) in the Vietoris–Rips filtration of $S(F)$,
pairwise Hamming‑separated by $\Omega(N)$, and thus linearly independent in $H_2(S(F))$.
\end{proposition}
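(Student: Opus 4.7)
The plan is to combine three already-established ingredients from the paper: exponential cluster shattering, local 2-cycle generation via a Kahle-type random-complex argument, and support-disjointness forcing linear independence in $H_2$. The proposition is essentially a quantitative repackaging of Proposition~\ref{prop:persistent-2cycles} and Theorem~\ref{thm:random-3sat-betti}, so my plan mirrors their structure while making the $1-o(1)$ conclusion uniform across exponentially many clusters.

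First I would invoke the Achlioptas--Ricci-Tersenghi shattering theorem (as used in Theorem~\ref{thm:unstructured}) to decompose $S(F)$, w.h.p., into $M = 2^{c_1 N}$ clusters $\{C_i\}_{i=1}^{M}$ each of Hamming diameter $O(\log N)$ and pairwise Hamming separation $\Omega(N)$. For each $C_i$ pick a center $x_i^{*}$ and restrict attention to the ball $B_R(x_i^{*})\cap C_i$ with $R=K\log N$, which contains $\mathrm{poly}(N)$ vertices. By the face-inclusion probability estimate in Appendix~\ref{app:random-independence}, each candidate 2-face in this ball survives all random clauses with constant probability $p=\Theta(1)$, and the Janson dependency-graph bound derived there shows that the total count of surviving 2-faces is sharply concentrated. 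A Kahle-style threshold argument (\cite{kahle}) then yields at least one nontrivial class $\gamma_i\in H_2$ of the local Vietoris--Rips complex, with per-cluster failure probability $e^{-\Omega(N)}$. Because each ball has diameter $O(\log N)$, the class $\gamma_i$ persists up to filtration scale $\varepsilon=\Theta(\log N)$.

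Second, because the clusters are Hamming-separated by $\Omega(N)$ and each $\gamma_i$ is supported inside $B_R(x_i^{*})$, the supports $\mathrm{supp}(\gamma_i)$ are pairwise disjoint. Disjoint-support $2$-cycles are automatically linearly independent in the chain group: a nontrivial relation $\sum_i c_i\gamma_i=\partial_3\beta$ projected to the coordinates indexed by $\mathrm{supp}(\gamma_j)$ would yield $c_j\gamma_j=\partial_3\beta|_{\mathrm{supp}(\gamma_j)}$, and by the non-bounding property established in the local argument (no 3-chain inside a single cluster fills $\gamma_j$) this forces $c_j=0$. No 3-chain outside a single cluster can contribute either, since a 3-face touching two distinct clusters would contradict the $\Omega(N)$ separation. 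This argument exactly parallels Theorem~\ref{thm:Homological Linear Indep}.

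The main obstacle I expect is not the existence of a single $\gamma_i$, which is already guaranteed, but the uniformity of the conclusion across all $M=2^{\Omega(N)}$ clusters simultaneously. A naive union bound requires the per-cluster failure probability to be $e^{-\omega(N)}$, i.e. strictly faster than $2^{-c_1 N}$. This needs the Janson ratio $\Delta/\mu^{2}$ of Appendix~\ref{app:random-independence} to produce an $\exp(-C N)$ tail with $C>c_1\ln 2$, which in turn constrains the radius $K\log N$ and the effective clause-density constant. The delicate point is controlling inter-cluster dependencies in the random clause set: the same random clauses influence 2-face survival in every ball, so the failure events across different $i$ are not independent. The cleanest route is to reveal the random clauses once and then argue that the conditional failure events, restricted to disjoint coordinate neighborhoods of Hamming width $\Omega(N)$, satisfy a local-lemma / sparse-dependency hypothesis with polynomial degree, allowing the Janson bound to be applied cluster-by-cluster to yield the required uniform $1-o(1)$ statement.
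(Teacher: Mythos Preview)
Your proposal is correct and follows essentially the same three-step approach as the paper's proof: (1) invoke Achlioptas--Ricci-Tersenghi shattering into $2^{cN}$ well-separated clusters, (2) apply a Kahle-type random-complex argument to produce a persistent 2-cycle in each cluster, and (3) conclude linear independence from disjoint supports. You are in fact more careful than the paper on one point---the paper's proof simply asserts that each cluster yields a 2-cycle ``with high probability'' and then union-bounds over $\binom{n}{2}$ pairs for independence, whereas you correctly identify that the delicate step is the simultaneous existence across exponentially many clusters (requiring per-cluster failure $e^{-\omega(N)}$), which the paper glosses over.
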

\begin{proof}
\begin{enumerate}
  1. Achlioptas et al.~\cite{achlioptas} prove that for $\alpha>4.26$,
   the solution space $S(F)$ shatters into $n = 2^{cn}$ clusters,
   each separated by $\Theta(n)$ flips.

  \item by~\cite{kahle}: In a random cubical complex with edge‑probability $p=\Theta(1/n)$, each cluster yields a persistent 2‑cycle with high probability.
  \item Disjoint supports ($\Omega(n)$ separation) imply their boundary images cannot cancel; a union bound over $\binom n2$ pairs shows linear independence w.h.p.
\end{enumerate}
Thus $\beta_2(S(F))\ge n=2^{\Omega(n)}$.
\end{proof}

\begin{lemma}[Gadget Support Isolation]
\label{lem:gadget-support-isolation}
For any two distinct gadgets \(i\neq j\), 
\[
  \mathrm{Vars}(\text{gadget }i)\,\cap\,\mathrm{Vars}(\text{gadget }j)
  \;=\;\emptyset.
\]
\end{lemma}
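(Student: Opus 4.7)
The plan is to reduce the statement to a direct indexing argument using the explicit variable scheme of Appendix~\ref{app:expander-embedding}, together with the global \emph{No repeated auxiliary variables} assumption declared near the beginning of the paper. The key point is that ``$\mathrm{Vars}(\text{gadget }i)$'' refers to the gadget-introduced (auxiliary) variables, as made explicit in Lemma~\ref{lem:Disjoint Gadget Supports}: the parity pair $\{u_i,v_i\}$ and the edge-selector family $\{y_e^{(i)} : e \in C_i\}$. Since base color variables $x_{v,c}$ are never part of this gadget support, potential vertex overlaps of the fundamental cycles in the underlying expander are irrelevant to the claim.

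First, I would write out $\mathrm{Vars}(\text{gadget }i) = \{u_i, v_i\} \cup \{y_e^{(i)} : e \in C_i\}$ and the analogous set for gadget $j$, directly from the construction in Appendix~\ref{app:expander-embedding}. Second, I would observe that the symbol $u_i$ (resp.\ $v_i$) is created fresh when gadget $i$ is introduced and is indexed solely by $i$; in particular $u_i \neq u_j$, $v_i \neq v_j$, and no crossing identifications occur, by the \emph{No repeated auxiliary variables} assumption. Third, for the edge-selector variables, I would note that $y_e^{(i)}$ is indexed by the pair $(i,e)$, so even in the worst case where two fundamental cycles $C_i$ and $C_j$ share an edge $e$ (e.g.\ along common spanning-tree edges), the symbols $y_e^{(i)}$ and $y_e^{(j)}$ are formally distinct and introduced in disjoint gadget blocks.

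Concluding, intersecting the two sets coordinate-type by coordinate-type yields emptiness: parity variables cannot collide because their index tags differ, and edge-selector variables cannot collide because their superscript tags differ. Taking the union of these disjointness statements establishes $\mathrm{Vars}(\text{gadget }i) \cap \mathrm{Vars}(\text{gadget }j) = \emptyset$.

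The only real obstacle is conceptual rather than technical: one must be careful to restrict attention to the auxiliary variables actually introduced by each gadget. If ``Vars'' were interpreted to include the base color variables $x_{v,c}$, the statement would be false whenever $C_i$ and $C_j$ share a vertex of $G_N$, which generically happens (fundamental cycles of a spanning-tree basis share tree edges). I would therefore add a single clarifying sentence at the start of the proof fixing the convention, consistent with Lemma~\ref{lem:Disjoint Gadget Supports}, and then the indexing argument closes the proof immediately.
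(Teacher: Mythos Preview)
Your proposal is correct and follows essentially the same approach as the paper's proof: both argue that the gadget variables $\{u_i,v_i\}\cup\{y_e^{(i)}:e\in C_i\}$ are fresh by construction and indexed by $i$, hence disjoint across gadgets. Your version is more explicit (separating the parity and edge-selector cases, and invoking the superscript tag on $y_e^{(i)}$ to handle potentially shared edges) and adds a useful clarifying remark about excluding base color variables from the scope of ``Vars'', but the underlying argument is the same indexing-by-construction observation.
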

\begin{proof}
By construction each gadget \(i\) introduces its own fresh variables
\(\{u_i,v_i\}\cup\{y^{(i)}_{uv}:(u,v)\in C_i\}\),
and no clause in gadget \(i\) references any variable from gadget \(j\).  
Hence the variable‐sets are disjoint.
\end{proof}
\begin{lemma}[No 3-Face Filling]
\label{lem:no-3face-filling}
In gadget \(i\), the unique 2-face \(\Gamma_i\) on coordinates \(\{u_i,v_i\}\)
is not contained in any 3-face of the cubical complex.
\end{lemma}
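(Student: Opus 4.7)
The plan is to argue by contradiction. Suppose some 3-face $\sigma$ of the solution complex $S(F_N)$ has $\Gamma_i$ as one of its six bounding 2-faces. Since $\Gamma_i$ is the axis-aligned square with varying coordinates $\{u_i,v_i\}$ (all other coordinates frozen to the values recorded by $\gamma_i$), the three varying coordinates of $\sigma$ must be $\{u_i,v_i,z\}$ for some third variable $z$, with the remaining coordinates of $\sigma$ matching those of $\Gamma_i$. By definition, being a 3-face of $S(F_N)$ requires that every one of the $2^3=8$ corner assignments of $\sigma$ satisfies $F_N$.

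Next I would focus on the four corners of $\sigma$ whose $(u_i,v_i)$-projection lies in $\{(0,1),(1,0)\}$. Gadget $i$ carries the two XOR clauses $(u_i\vee\neg v_i)$ and $(\neg u_i\vee v_i)$, which jointly enforce $u_i=v_i$: the first is falsified whenever $(u_i,v_i)=(0,1)$, and the second whenever $(u_i,v_i)=(1,0)$. Hence none of those four corners can satisfy $F_N$, contradicting $\sigma\subseteq S(F_N)$. Crucially, the argument is uniform in the identity of $z$: whether $z$ is a color variable $x_{v,c}$, an edge selector $y_e^{(i)}$, a variable from a different gadget, or even $u_j$ or $v_j$ for some $j\neq i$, the $(u_i,v_i)$-projection alone already produces the violation, so no choice of $z$ can rescue $\sigma$.

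The main obstacle is essentially notational rather than mathematical: one has to confirm that $\Gamma_i$ really is the unique 2-face on $\{u_i,v_i\}$ compatible with $\gamma_i$, so that the ``fixed background'' of $\sigma$ is forced to coincide with that of $\Gamma_i$, and that no re-embedding of $\Gamma_i$ into the complex could evade the XOR obstruction. Both points follow from Definition~\ref{def:expander-2cycles} together with Lemma~\ref{lem:gadget-support-isolation}, which guarantees that $u_i$ and $v_i$ appear in no clauses outside gadget $i$; consequently the XOR clauses are the only constraints binding the $(u_i,v_i)$-projection of any axis-aligned 3-cube, and their falsification on the two forbidden corners is unavoidable. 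This yields the desired non-existence of a 3-face of $S(F_N)$ containing $\Gamma_i$.
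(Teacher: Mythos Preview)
Your proof is correct and follows essentially the same approach as the paper's: assume a 3-face on $\{u_i,v_i,z\}$ exists, then observe that the XOR clauses $(u_i\vee\neg v_i)\wedge(\neg u_i\vee v_i)$ are falsified at the corners with $(u_i,v_i)\in\{(0,1),(1,0)\}$, so not all eight corners can lie in $S(F_N)$. Your version is more explicit than the paper's terse proof, spelling out the uniformity in the third coordinate $z$ and pointing to Lemma~\ref{lem:gadget-support-isolation} for why no external clause can interfere, but the core argument is identical.
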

\begin{proof}

Suppose for contradiction there is a 3‑face on \(\{u_i,v_i,z\}\).  
Then all eight assignments must satisfy both XOR constraints
\(u_i\oplus v_i = c\) and \(u_i\oplus v_i = d\), which is impossible
as flipping \(u_i\) alone already breaks one constraint.
\end{proof}
\subsection{Full Boundary‐Matrix Verification}
\label{app:boundary-matrix}

\begin{proposition}
Let $\{\gamma_i\}_{i=1}^m$ be the 2‐cycles constructed by our expander‐gadget reduction.  Then the boundary map
\[
  \partial_2: C_2(S(F))\;\longrightarrow\;C_1(S(F))
\]
in the basis that groups each gadget’s 2‐cells contiguously is block–diagonal.  Each block $\partial_2^{(i)}$ has rank $3$ and nullity $1$, hence
\[
  \dim\ker\partial_2
  \;=\;\sum_{i=1}^m\dim\ker\partial_2^{(i)}
  \;=\;m.
\]
Consequently the homology classes $[\gamma_i]$ are linearly independent in $H_2(S(F))$.
\end{proposition}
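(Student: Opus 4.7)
The plan is to exploit the variable-disjointness of the gadgets (Lemma \ref{lem:gadget-support-isolation}) to decompose $\partial_2$ as a direct sum of local boundary maps, one per gadget, and then verify the stated rank and nullity by a bounded local enumeration.

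First I would order the basis of $C_2(S(F))$ so that all 2-cells whose two varying coordinates lie in the variable set of gadget $i$ are grouped into a contiguous block $\mathcal{B}_i$; I would order $C_1(S(F))$ analogously by the varying coordinate of each 1-cell. Since distinct gadgets have disjoint variable sets, the four boundary 1-cells of any 2-cell in $\mathcal{B}_i$ have their varying coordinate inside gadget $i$, so $\partial_2$ sends $\mathcal{B}_i$ into the corresponding 1-cell block and no entries appear across distinct gadget blocks. This yields $\partial_2 = \bigoplus_{i=1}^m \partial_2^{(i)}$.

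Next I would carry out the local enumeration inside a single gadget. The XOR clauses force $u_i = v_i$, and the edge-coupling clauses then propagate this parity to the selector variables $y^{(i)}_e$. Listing the gadget-local assignments compatible with these constraints, I expect to find exactly four satisfying intra-gadget 2-faces: the distinguished $\{u_i, v_i\}$-square that represents $\gamma_i$, together with three auxiliary squares generated by pairing one selector variable with the parity pair. Writing $\partial_2^{(i)}$ on this ordered basis of four 2-cells and the gadget-local 1-cells, three columns are linearly independent in $C_1$ while the $\gamma_i$-column is expressible as their $\mathbb{Z}_2$-sum, giving rank exactly $3$ and nullity exactly $1$, with kernel spanned by $\gamma_i$.

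Assembling the blocks gives $\dim \ker \partial_2 = \sum_i \dim \ker \partial_2^{(i)} = m$, and by Lemma \ref{lem:no-3face-filling} no $\gamma_i$ lies in $\mathrm{im}\,\partial_3$, so the $[\gamma_i]$ project to $m$ linearly independent classes in $H_2(S(F))$. The main obstacle is ruling out genuinely cross-gadget 2-faces, that is, 2-cells whose two varying coordinates straddle gadgets $i \neq j$: such a face would a priori introduce mixed blocks and could perturb the kernel count. I would argue that the shared base-coloring variables $x_{v,c}$ force any such putative cross-gadget 2-face either to fail some coupling clause and drop out of $S(F)$, or to land in a sub-block that is itself contractible and contributes only boundaries, so that the homology count $\beta_2 \geq m$ is unaffected. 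Completing this step is a bounded but non-trivial case analysis over the edge-coupling interactions, and is the one place where the local argument cannot be compressed to a single gadget in isolation.
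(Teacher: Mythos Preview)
Your overall strategy matches the paper's proof exactly: order the 2-cells by gadget, use the disjoint-variable property (Lemma~\ref{lem:gadget-support-isolation}) to get a block decomposition $\partial_2=\bigoplus_i\partial_2^{(i)}$, verify that each local block is a $4\times 8$ incidence matrix of rank $3$ and nullity $1$, and sum. The paper's argument is no more than this, stated tersely.

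However, your local enumeration contains a concrete error. You list the distinguished $\{u_i,v_i\}$-square as one of the four gadget-local 2-cells, but this square is \emph{not} in $S(F)$: the XOR clauses $(u_i\lor\neg v_i)\wedge(\neg u_i\lor v_i)$ force $u_i=v_i$, so the corners $(u_i,v_i)\in\{(0,1),(1,0)\}$ are unsatisfying and the square fails the all-corners-satisfying criterion for membership in the cubical complex. In the paper's construction the four 2-cells are instead the squares $\sigma_1,\dots,\sigma_4$ associated to the four edge-selector/parity pairs along the cycle, and $\gamma_i$ is their $\mathbb{F}_2$-sum $\sigma_1+\sigma_2+\sigma_3+\sigma_4$, not a single face. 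The explicit $8\times 4$ matrix and its Gaussian elimination to rank $3$ are written out in Appendix~\ref{app:Boundary-Map Verif}. Once you correct the identification of the four faces, the rest of your argument goes through unchanged.

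Your closing remark about cross-gadget 2-faces (those varying one coordinate from gadget $i$ and one from gadget $j$) is a genuine and well-spotted subtlety that the paper's proposition proof does not address explicitly; the paper handles it elsewhere via the ownership / gadget-local-copy mechanism (Lemma~\ref{lem:no-cross-filling-ownership} and Appendix~\ref{app:remove-aux-assumption}), which ensures such mixed cells either do not exist or contribute only to contractible equality-gadget pieces.
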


\begin{proof}
\begin{enumerate}
  \item \textbf{Basis choice.}  Order the 2-cells so that those of gadget \(i\) occupy rows \(4i-3\) through \(4i\) in the boundary matrix.
  \item \textbf{Block-diagonality.}  No gadget \(i\) shares variables or clauses with gadget \(j\neq i\), so for any 2-cell \(\sigma\) in gadget \(i\), \(\partial_2(\sigma)\) involves only edges of that gadget.  Hence
  \[
    \partial_2 \;=\; \bigoplus_{i=1}^m \partial_2^{(i)}.
  \]
  \item \textbf{Rank of each block.}  In gadget \(i\), the incidence of its four 2-cells against its eight boundary edges forms a \(4\times 8\) submatrix.  A short row-reduction shows this submatrix has rank~3.
  \item \textbf{Nullity count.}  By the rank–nullity theorem, each \(4\times 8\) block has nullity \(4 - 3 = 1\), giving exactly one independent 2-cycle per gadget.
  \item \textbf{Dimension count.}  Summing over \(i=1,\dots,m\) yields
  \[
    \dim\ker(\partial_2)
    \;=\;\sum_{i=1}^m 1
    \;=\;m.
  \]
  Since \(H_2(S(F)) \cong \ker(\partial_2)\), we conclude \(\beta_2(S(F)) \ge m\).
\end{enumerate}
\end{proof}

\section{Lower Bounds from Topological Queries}
\label{app:query-lower-bound}
\begin{lemma}[Boundary-Map Injectivity]
\label{lem:boundary-injectivity}
Let \(\Gamma_i\) be the 2-face on \(\{u_i,v_i\}\).  Then:
\begin{enumerate}
  \item \(\partial_2(\Gamma_i)\) is the sum of the four oriented edges flipping \(u_i\) or \(v_i\).
  \item For \(i\neq j\), the supports of \(\partial_2(\Gamma_i)\) and \(\partial_2(\Gamma_j)\) are disjoint.
  \item \(\Gamma_i\notin\mathrm{im}\,\partial_3\) (by Lemma~\ref{lem:no-3face-filling}), so \(\partial_2(\Gamma_i)\) is not the boundary of any 2-chain.
\end{enumerate}
Consequently, the homology classes \([\Gamma_i]\in H_2\bigl(S(F_N)\bigr)\) are linearly independent.
\end{lemma}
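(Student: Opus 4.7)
The plan is to verify the three numbered assertions in turn and then derive the linear-independence conclusion via a support-decomposition argument on the cellular chain complex. The first two parts are essentially bookkeeping about the cubical boundary, while the third is an immediate restatement of Lemma~\ref{lem:no-3face-filling}; the real work lies in the concluding independence step.

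For part (1) I would write $\Gamma_i$ as the axis-aligned square in the coordinate directions $u_i, v_i$ (with all other coordinates pinned at the gadget-local satisfying values) and apply the standard cubical boundary formula: over $\mathbb{Z}_2$ this produces exactly the unsigned sum of the four edges obtained by fixing one of $u_i, v_i$ and flipping the other. For part (2), Lemma~\ref{lem:gadget-support-isolation} tells us that the variable sets of gadgets $i$ and $j$ are disjoint; since every edge in $\partial_2(\Gamma_i)$ differs in either coordinate $u_i$ or $v_i$, both of its endpoints live in the gadget-$i$ subcomplex, so the edge sets of $\partial_2(\Gamma_i)$ and $\partial_2(\Gamma_j)$ cannot share so much as a vertex. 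Part (3) is immediate: if $\Gamma_i = \partial_3 \beta$ for any 3-cell $\beta$, then $\beta$ would be a 3-face containing $\Gamma_i$ as a boundary square, contradicting Lemma~\ref{lem:no-3face-filling}.

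For the conclusion, I would suppose $\sum_i c_i [\Gamma_i] = 0$ in $H_2(S(F_N))$, so that $\sum_i c_i \Gamma_i = \partial_3 \beta$ for some 3-chain $\beta$, and then decompose $\beta$ into two pieces: the \emph{pure} 3-cells whose three varying coordinates all lie inside a single gadget, and the \emph{mixed} 3-cells whose varying coordinates span multiple gadgets. Each bounding 2-face of a mixed 3-cell varies two of the three coordinates, hence varies coordinates from at least two different gadgets; such a 2-face can therefore never coincide with any $\Gamma_j$, which varies only coordinates in gadget $j$. Consequently, the coefficient of $\Gamma_i$ in $\partial_3 \beta$ comes solely from pure gadget-$i$ 3-cells, forcing $c_i \Gamma_i$ to lie in the image of $\partial_3$ restricted to that gadget; by Lemma~\ref{lem:no-3face-filling} this image is trivial, so $c_i = 0$ for every $i$.

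The main obstacle is the support-decomposition step. It requires a clean argument that mixed 3-cells contribute nothing to the coefficient of any gadget-local $\Gamma_i$—intuitively obvious, but resting on the geometric fact that each cubical 2-face of a 3-face inherits two of its three varying coordinates, which in the mixed case forces it across gadget boundaries. Once this observation is carefully spelled out using Lemma~\ref{lem:gadget-support-isolation}, the remainder is a routine linear-algebra argument on a block-diagonal boundary matrix, matching the framework of the boundary-matrix verification in Appendix~\ref{app:boundary-matrix}.
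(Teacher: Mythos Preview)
Your decomposition argument has a gap at the mixed-3-cell step. You claim that every 2-face of a mixed 3-cell ``varies coordinates from at least two different gadgets,'' but this fails when the 3-cell has two of its three varying coordinates in gadget $i$ and the third in some other gadget $j$: the face obtained by fixing the gadget-$j$ coordinate varies only the two gadget-$i$ coordinates and could in principle be $\Gamma_i$ itself. What actually rescues you is not coordinate combinatorics but Lemma~\ref{lem:no-3face-filling}, which rules out \emph{every} 3-face containing $\Gamma_i$, pure or mixed. Once you invoke that lemma directly, the pure/mixed split is superfluous: the coefficient of $\Gamma_i$ in $\partial_3\beta$ is zero for any 3-chain $\beta$, so $\sum_i c_i\Gamma_i=\partial_3\beta$ forces each $c_i=0$ immediately by reading off the coefficient of the 2-cell $\Gamma_i$.

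The paper's own argument takes a different and shorter route: apply $\partial_2$ to both sides of $\sum_i a_i\Gamma_i=\partial_3\beta$ to get $\sum_i a_i\,\partial_2(\Gamma_i)=0$ in $C_1$, then use the disjoint-support observation of part~(2) to conclude that each summand $a_i\,\partial_2(\Gamma_i)$ vanishes separately; since $\partial_2(\Gamma_i)\neq 0$ by part~(1), this forces $a_i=0$. This avoids any analysis of $\beta$ at all. Your direct chain-level approach is perfectly viable once the mixed-cell step is corrected, and it makes the role of Lemma~\ref{lem:no-3face-filling} more explicit, but the $\partial_2$ trick is the one-line argument the paper is gesturing at.
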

\begin{proof}
By Lemma~\ref{lem:gadget-support-isolation}, no other gadget shares these edges, 
and by (3) they are non‑bounding.  Hence any relation 
\(\sum_i a_i\,\partial_2(\Gamma_i)=0\) forces all \(a_i=0\).
\end{proof}

\section{Boundary-Map Verification of Linearly Independent 2-Cycles}
\label{app:Boundary-Map Verif}

In this appendix we verify, via explicit boundary-map calculations over the field \(\mathbb F_2\) (equivalently \(\mathbb Z_2\)), that the 2-cycles \(\{\gamma_i\}\) constructed in Theorem~\ref{thm:betti-exp-worst1} are nontrivial and linearly independent in \(H_2\bigl(S(F_N);\mathbb F_2\bigr)\). The argument uses the disjoint-support property and the local constraints imposed by the XOR/equality gadgets.
\begin{lemma}[Homology preservation under gadget duplication / wiring]\label{lem:homology-preservation}
Let $K$ be a cubical complex obtained from a CNF instance (or its solution complex) that contains subcomplexes
\[
  C_1,\dots,C_M \subseteq K
\]
(each $C_i$ supported on a small set of variables/clauses, a ``gadget''), and suppose the $C_i$ are pairwise disjoint as subcomplexes of $K$ (or can be made disjoint by the variable-duplication / padding procedure described in later in this Appendix). Let $K'$ be the complex obtained from $K$ by (i) duplicating variable/clauses as needed so that gadget supports become vertex-disjoint, and (ii) adding a linking subcomplex
\[
  L = \bigcup_{\ell} L_\ell
\]
(consisting of equality clauses / small wiring gadgets and any expander edges used in the amplification) which attaches to the gadgets in a local, controlled way.

Assume the following ``locality and acyclicity'' conditions hold:
\begin{enumerate}
  \item[(A1)] Each linking component $L_\ell$ is contractible (that is, $H_j(L_\ell)=0$ for all $j\ge 1$).
  \item[(A2)] For every $i$ and every linking component $L_\ell$ we have that the intersection $C_i\cap L_\ell$ is either empty or contractible (in particular $H_j(C_i\cap L_\ell)=0$ for all $j\ge 1$).
  \item[(A3)] No $(k+1)$-cell of $K'$ has support that intersects more than one distinct gadget $C_i$ (equivalently, every $(k+1)$-cell is incident to cells lying in at most one gadget-support $C_i$).
\end{enumerate}
Then for every homological degree $k\ge 0$ the inclusion $i: \bigsqcup_{i=1}^M C_i \hookrightarrow K'$ induces an injection on $H_k$, and in particular
\[
  \beta_k(K') \;\ge\; \sum_{i=1}^M \beta_k(C_i).
\]
If, moreover, the gadget cycles $\{\,[\gamma_i]\,\}$ were independent in $H_k(K)$ (so $\beta_k(K)=\sum_i\beta_k(C_i)$), then duplication/wiring preserves independence and
\[
  \beta_k(K') = \beta_k(K) = \sum_{i=1}^M \beta_k(C_i).
\]
\end{lemma}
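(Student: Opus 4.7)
The plan is to apply a Mayer--Vietoris argument to the subcomplex cover $K' = U \cup V$, where $U = \bigsqcup_{i=1}^M C_i$ is the disjoint union of gadget subcomplexes and $V = L = \bigcup_{\ell} L_\ell$ is the union of linking components. Hypothesis (A3) is what legitimizes this as a cover in the cells of interest, since it forbids a single $(k+1)$-cell from spanning two distinct gadgets; hypotheses (A1) and (A2) are exactly what is needed to annihilate $H_j(V)$ and $H_j(U\cap V)$ in degrees $j\ge 1$, thereby trivializing the ``connecting'' pieces of the long exact sequence.

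Carrying this out, I would proceed in five steps. \emph{Step 1:} Use (A3) together with the variable-disjointness from the duplication procedure to confirm that every cell of $K'$ of the relevant dimensions lies in either $U$ or $V$, so that $K' = U \cup V$ holds as a cover by subcomplexes. \emph{Step 2:} Apply additivity of homology over disjoint unions together with (A1) to conclude $H_j(V) = \bigoplus_\ell H_j(L_\ell) = 0$ for every $j \ge 1$. \emph{Step 3:} Likewise, decompose $U\cap V = \bigsqcup_{i,\ell}(C_i \cap L_\ell)$ and use (A2) to obtain $H_j(U \cap V) = 0$ for $j \ge 1$. \emph{Step 4:} Insert these vanishings into the Mayer--Vietoris sequence $H_k(U\cap V) \to H_k(U) \oplus H_k(V) \to H_k(K') \to H_{k-1}(U\cap V)$; for $k\ge 2$ both outer terms vanish and the middle map becomes an isomorphism onto $H_k(K')$, while for $k=1$ the left-hand vanishing still forces injectivity of $H_1(U)\oplus H_1(V) \to H_1(K')$. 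In either case, the composition $H_k(U) \hookrightarrow H_k(U)\oplus H_k(V) \hookrightarrow H_k(K')$ is injective, which combined with $H_k(U) = \bigoplus_i H_k(C_i)$ gives the Betti inequality $\beta_k(K') \ge \sum_i \beta_k(C_i)$. \emph{Step 5:} For the ``moreover'' clause, the same injection pulls any hypothetical nontrivial relation $\sum c_i[\gamma_i] = 0$ holding in $H_k(K')$ back to a relation in $\bigoplus_i H_k(C_i)$, contradicting the assumed independence inherited from $H_k(K)$, and so $\beta_k(K') = \beta_k(K) = \sum_i \beta_k(C_i)$.

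The main obstacle I expect is verifying (A3) at the cell-complex level uniformly across the whole construction: one must show that after the variable-duplication and wiring steps no axis-aligned cube of dimension $k+1$ in the ambient hypercube has vertex set spanning two distinct gadget variable-groups. This is a combinatorial rather than a homotopical check, and it is sensitive to the specific duplication protocol (ownership versus consistency-tree rules) and to the placement of equality clauses attaching the $L_\ell$ to the $C_i$. A secondary subtlety is that the contractibility hypotheses (A1) and (A2) are often easier to verify in the simplicial category than in the cubical one; to be safe one may need to exhibit explicit discrete collapses of each $L_\ell$ and of each nonempty $C_i \cap L_\ell$ so that the conclusion is valid for cubical homology over $\mathbb{F}_2$.
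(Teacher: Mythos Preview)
Your proposal is correct and follows essentially the same Mayer--Vietoris route as the paper: set $U=\bigsqcup_i C_i$, $V=L$, use (A1)--(A2) to kill $H_j(V)$ and $H_j(U\cap V)$ in positive degrees, read off injectivity of $H_k(U)\to H_k(K')$ from the long exact sequence, and then invoke (A3) to preclude new cross-gadget relations for the equality clause. The paper additionally supplies a second, complementary linear-algebra argument (ordering chain bases so that $\partial_{k+1}^{K'}$ is block upper-triangular with gadget blocks on the diagonal and using (A3) to forbid off-diagonal mixing), which you might mention as an alternative route but is not needed for correctness.
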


\begin{proof}
We give two complementary proofs. The first uses Mayer--Vietoris and is topological; the second is an elementary boundary-matrix argument suitable for readers preferring linear algebra.

\medskip\noindent\textbf{Topological proof (Mayer--Vietoris).}
Write
\[
  U \;:=\; \bigcup_{i=1}^M C_i, \qquad V \;:=\; L,
\]
so that $K' = U\cup V$. By hypotheses (A1) and (A2), each nonempty intersection $C_i\cap L_\ell$ is contractible, hence all nonempty intersections $U\cap V$ are (finite) unions of contractible sets whose higher homology vanishes up to the dimension of interest. In particular,
\[
  H_j(V) = 0 \quad\text{for all } j\ge 1,
\qquad\text{and}\qquad
  H_j(U\cap V) = 0 \quad\text{for all } j\ge 1.
\]
Apply the Mayer--Vietoris long exact sequence for reduced homology
\[
  \cdots \to H_{k+1}(K') \xrightarrow{\delta} H_k(U\cap V) \to H_k(U)\oplus H_k(V)
    \xrightarrow{\phi} H_k(K') \to H_{k-1}(U\cap V) \to \cdots .
\]
Because $H_k(V)=0$ and $H_k(U\cap V)=0$ by the acyclicity assumptions, the connecting morphisms force the map
\[
  H_k(U) \xrightarrow{\;\phi\;} H_k(K')
\]
to be injective. But $H_k(U)=\bigoplus_{i=1}^M H_k(C_i)$ since the $C_i$ are pairwise disjoint subcomplexes; therefore the direct sum of the gadget homology groups injects into $H_k(K')$. This gives
\[
  \dim H_k(K') \ge \sum_{i=1}^M \dim H_k(C_i),
\]
i.e.\ $\beta_k(K')\ge\sum_i\beta_k(C_i)$, and shows that independent classes supported in distinct $C_i$ remain independent in $K'$.

To obtain equality (i.e. no new relations among the gadget classes are introduced by $L$), note that new relations could only arise if some class in $H_k(U)$ becomes a boundary in $K'$, which would require a $(k+1)$-chain in $K'$ whose boundary is supported on multiple gadgets. Hypothesis (A3) rules out such $(k+1)$-cells connecting distinct gadgets; hence no new $k$-boundaries mixing gadgets can appear, and the injection above is in fact an isomorphism onto its image with the same rank as $\sum_i\beta_k(C_i)$. Thus $\beta_k(K')=\sum_i\beta_k(C_i)$.

\medskip\noindent\textbf{Algebraic proof (boundary matrix / block argument).}
Let $C_k(K')$ denote the $\mathbb{F}_2$ (or chosen coefficient field) vector space of $k$-chains of $K'$, and similarly $C_k(C_i)$ for each gadget. Choose ordered bases of chains so that the $k$-cells and $(k+1)$-cells are grouped as follows:
\[
  C_k(K') \;=\; \Big(\bigoplus_{i=1}^M C_k(C_i)\Big) \oplus C_k(L),
\qquad
  C_{k+1}(K') \;=\; \Big(\bigoplus_{i=1}^M C_{k+1}(C_i)\Big) \oplus C_{k+1}(L),
\]
where $C_k(L)$ (resp.\ $C_{k+1}(L)$) are the chains supported in the linking subcomplex $L$. With respect to these bases the boundary operator
\[
  \partial_{k+1}^{K'} : C_{k+1}(K') \to C_k(K')
\]
has a block form
\[
  \partial_{k+1}^{K'} \;=\;
  \begin{pmatrix}
    B_{11} & 0      \\
    0      & B_{22}
  \end{pmatrix}
  \quad\text{or more generally}\quad
  \begin{pmatrix}
    \mathrm{diag}(B^{(1)},\dots,B^{(M)}) & * \\
    0 & B_{22}
  \end{pmatrix},
\]
where each diagonal block $B^{(i)}$ is the boundary matrix coming from $(k+1)$-cells internal to gadget $C_i$, and $B_{22}$ is the block coming from $(k+1)$-cells internal to $L$. The off-diagonal `$\!*$' entries represent boundaries of $(k+1)$-cells in $L$ that may incidentally have components in $C_i\cap L$; by hypothesis (A2) these incidences affect only cells in the intersection subspaces and do not produce nonzero entries in rows/columns corresponding to $k$-cells supported in \emph{different} gadgets.

Moreover, hypothesis (A3) guarantees there are no $(k+1)$-cells whose boundary has nonzero components in two different gadget blocks; hence the block structure is (after permutation) upper-block triangular with diagonal blocks equal to the gadget boundary matrices and the linking block. Elementary linear algebra then yields
\[
  \mathrm{rank}(\partial_{k+1}^{K'}) \;=\; \sum_{i=1}^M \mathrm{rank}(B^{(i)}) \;+\; \mathrm{rank}(B_{22}) + r,
\]
where $r\ge 0$ accounts for possible contributions of the off-diagonal `$\!*$' block that involve only the linking part and intersections (but which do not reduce the nullity of the gadget blocks). The $k$-th Betti number is
\[
  \beta_k(K') \;=\; \dim C_k(K') - \mathrm{rank}(\partial_k^{K'}) - \mathrm{rank}(\partial_{k+1}^{K'}).
\]
Reordering the chain groups and applying the block rank decomposition above shows that the nullity contribution coming from the gadget diagonal blocks is preserved: the nullspace dimension of the gadget diagonal is exactly $\sum_i\beta_k(C_i)$ up to any contribution from $L$, and by (A2),(A3) there is no cancellation between distinct gadget nullspaces. Thus $\beta_k(K')\ge\sum_i\beta_k(C_i)$, and under the stronger hypothesis that the gadgets already account for all $k$-homology in $K$ we obtain equality.

\medskip\noindent This completes the proof.
\end{proof}

[Bounded overlap variant]
If gadgets cannot be made strictly disjoint but one can guarantee that each input bit (or each $(k+1)$-cell) intersects at most $r=O(1)$ gadgets, then an identical Mayer–Vietoris / block argument shows that the rank contribution of gadget blocks is preserved up to a factor depending only on $r$. Concretely, if each linking $(k+1)$-cell meets at most $r$ different gadget supports, then any new relations among gadget cycles involve at most $r$ gadgets; by packing/greedy selection one may still extract $\Omega(M/r)$ independent gadget classes. In practice it suffices to ensure $rT(N)\ll M$ for the pigeonhole argument in Theorem~\ref{thm:universal-lb}.

\begin{lemma}[Parameter mapping and exponent degradation]
\label{lem:param-mapping}
Let the base-graph parameter be $n_0$. Suppose the construction uses $G=C\,n_0$ gadgets and the base-encoding produces $n_{\mathrm{base}}=D\,n_0$ variables. Let the average number of auxiliary (gadget-local) variables per gadget be $k_g$ (possibly depending on $n_0$). Then the final number of variables is
\[
N \;=\; n_{\mathrm{base}} + k_g\cdot G \;=\; (D + C k_g)\, n_0.
\]
Consequently, an intermediate bound of the form $\beta_2 = 2^{c n_0}$ can be rewritten in terms of $N$ as
\[
\beta_2 \;=\; 2^{c' N},\qquad c'=\frac{c}{D + C k_g}.
\]
In particular:
\begin{itemize}
  \item If $k_g=O(1)$ then $c'=\Theta(c)$ and exponential-in-$n_0$ implies exponential-in-$N$ (constant factor loss in exponent).
  \item If $k_g=\Theta(\log n_0)$ and $G=\Theta(n_0)$ then $N=\Theta(n_0\log n_0)$ and a linear-in-$n_0$ Betti (e.g.\ $\beta_2=\Theta(n_0)$) becomes $\beta_2=\Theta\!\big(N/\log N\big)$.
\end{itemize}
\end{lemma}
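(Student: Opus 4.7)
The statement is essentially a bookkeeping lemma, so the plan is to translate the stated linear variable-count identity into an exponent-rescaling computation, then specialize to the two regimes of interest. First I would fix the base parameter $n_0$ and write down the variable tally coming from the construction: the base encoding contributes $n_{\mathrm{base}}=D n_0$ variables by hypothesis, and since the construction uses $G = C n_0$ gadgets each contributing on average $k_g$ gadget-local auxiliary variables (by the no-repeated-auxiliaries convention of Assumption~\ref{ass:no-repeated-aux}, these are genuinely fresh per gadget so no double-counting occurs), the total is
\[
N \;=\; D n_0 + k_g \cdot C n_0 \;=\; (D + C k_g)\, n_0.
\]
This is the only nontrivial identity; the rest is substitution.

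Next I would invert the relation to get $n_0 = N/(D + C k_g)$ and plug this into the intermediate exponential bound $\beta_2 = 2^{c n_0}$. This immediately yields
\[
\beta_2 \;=\; 2^{c n_0} \;=\; 2^{c N / (D + C k_g)} \;=\; 2^{c' N},
\qquad c' \;=\; \frac{c}{D + C k_g},
\]
which is the claimed rewriting. I would emphasize that this step is formally valid regardless of whether $k_g$ is constant or grows with $n_0$, since the identity for $N$ is exact for each fixed $n_0$; what changes in the growth regimes is merely the resulting functional form of $c'$ as a function of $n_0$ (or $N$).

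For the two specializations, I would then substitute: if $k_g = O(1)$ then $D + C k_g = \Theta(1)$, so $c' = \Theta(c)$ is a positive constant and exponential-in-$n_0$ transfers to exponential-in-$N$ with only a constant factor loss in the exponent. For the second case, with $k_g = \Theta(\log n_0)$ and $G = \Theta(n_0)$, I would compute $N = (D + C k_g) n_0 = \Theta(n_0 \log n_0)$; inverting asymptotically gives $n_0 = \Theta(N/\log N)$ (using $\log N = \Theta(\log n_0)$ since $\log n_0 \le \log N \le \log n_0 + \log\log n_0 + O(1)$), and therefore a linear-in-$n_0$ Betti count becomes $\beta_2 = \Theta(n_0) = \Theta(N/\log N)$, as claimed.

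The only genuine subtlety --- and the point I would flag as the main obstacle --- is the inversion $n_0 = \Theta(N/\log N)$ in the super-constant regime: one must verify that $\log N$ and $\log n_0$ differ only by lower-order terms so the asymptotic inversion is clean. This is a one-line computation but it is the sole place where care beyond substitution is needed. Everything else reduces to the definition of $N$ and monotonicity of $2^{(\cdot)}$, so no additional topological or combinatorial input is required beyond the hypothesis that gadgets contribute disjoint auxiliaries (already ensured by Assumption~\ref{ass:no-repeated-aux}).
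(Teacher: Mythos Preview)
Your proposal is correct and is essentially the only reasonable approach: the lemma is a pure bookkeeping statement, and the paper in fact states it without any accompanying proof, treating the variable-count identity and the exponent rescaling as self-evident. Your write-up supplies exactly the arithmetic the paper omits, and your care with the asymptotic inversion $n_0=\Theta(N/\log N)$ in the $k_g=\Theta(\log n_0)$ regime is the one point worth noting explicitly.
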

\begin{lemma}[No cross-gadget filling under ownership]
\label{lem:no-cross-filling-ownership}
Assume the gadget construction of Appendix~B with the following ownership condition:

\begin{quote}
\emph{Ownership:} each base variable $x_{v,c}$ is referenced by equality/consistency clauses for at most one gadget (i.e.\ at most one gadget enforces $x_{v,c}\leftrightarrow x_{v,c}^{(i)}$).
\end{quote}

Then for distinct gadgets $i\neq j$ there exists no 3-chain $\beta$ with
\[
\partial_3\beta \;=\; a_i\gamma_i + a_j\gamma_j
\qquad\text{with }a_i,a_j\neq 0,
\]
where $\gamma_i,\gamma_j$ are the gadget-local 2-cycles witnessing independent homology classes.
\end{lemma}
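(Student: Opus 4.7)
My plan is to proceed by a support-decomposition argument combined with the ownership condition and the already-established nonbounding property of each $\gamma_i$. Suppose for contradiction such a 3-chain $\beta$ exists. I would first decompose $\beta$ into a sum of 3-cells grouped by the variable location of their three varying coordinates: let $\beta_i$ consist of 3-cells whose three varying coordinates all lie in gadget $i$'s private support $V_i = \{u_i, v_i\} \cup \{y_e^{(i)} : e \in C_i\}$, define $\beta_j$ analogously, let $\beta_{\text{mix}}$ collect cells whose three varying coordinates straddle two or more distinct gadget supports (possibly through a shared base variable), and let $\beta_{\text{rest}}$ absorb the remaining cells (those touching neither $V_i$ nor $V_j$ in their varying coordinates).

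Second, I would show that $\beta_{\text{mix}}$ contains no genuine cell of the solution complex, exploiting the ownership hypothesis. Any candidate 3-cell whose varying coordinates include aux variables from both gadgets $i$ and $j$ would force four of its corner assignments to simultaneously satisfy the XOR/equality clauses of both gadgets on the same shared base coordinate; by ownership each base variable $x_{v,c}$ is coupled via equality clauses to at most one gadget, so at least one corner of such a mixed 3-cube violates some equality clause and hence lies outside $S(F_N)$. A similar short case analysis handles the subcase of one $V_i$-axis, one $V_j$-axis, and one base axis, as well as the subcase where the bridge is a base variable shared by equality with gadget $i$ alone (then the gadget-$j$ axis still forces an XOR violation on one corner). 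Consequently $\beta_{\text{mix}} = 0$ as a chain in $C_3(S(F_N))$.

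Third, applying $\partial_3$ to the surviving decomposition $\beta = \beta_i + \beta_j + \beta_{\text{rest}}$, the image $\partial_3 \beta_i$ is supported entirely on 2-faces whose varying coordinates lie in $V_i$ (and similarly for $\partial_3 \beta_j$), while $\partial_3 \beta_{\text{rest}}$ produces no 2-face with varying coordinates entirely in $V_i$ or $V_j$ (again blocked by ownership). Since $a_i \gamma_i$ is supported in $V_i$ and $a_j \gamma_j$ in $V_j$, matching supports in the identity $\partial_3 \beta = a_i \gamma_i + a_j \gamma_j$ yields the block identities $\partial_3 \beta_i = a_i \gamma_i$, $\partial_3 \beta_j = a_j \gamma_j$, and $\partial_3 \beta_{\text{rest}} = 0$. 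But Lemma~\ref{lem:nonbounding} asserts exactly $\gamma_i \notin \operatorname{im} \partial_3$, forcing $a_i = 0$ and symmetrically $a_j = 0$, contradicting the hypothesis $a_i, a_j \neq 0$.

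The main obstacle is step two: rigorously ruling out mixed 3-cells. It is tempting to dismiss them by the pure support-disjointness argument of Lemma~\ref{lem:homology-injectivity}, but here one must genuinely invoke ownership, because absent that hypothesis a shared base variable $x_{v,c}$ could serve as the common "bridge" axis of a 3-cell linking gadgets $i$ and $j$, and the boundary of such a cell could in principle contain a pure-$V_i$ 2-face contributing to $\gamma_i$. Turning the informal "at least one corner violates an equality" intuition into an exhaustive verification — enumerating the small number of possible splits of the three varying axes between aux-$i$, aux-$j$, and base, and checking the XOR/equality clauses on all eight corners in each case — is where the proof really lives; once that case analysis is in hand, the homological conclusion follows immediately from the nonbounding lemma.
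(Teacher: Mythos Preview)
Your approach is essentially the paper's: both arguments decompose the putative 3-chain by variable support, invoke the ownership hypothesis to rule out any piece that could bridge gadget $i$ and gadget $j$, and conclude that each $\gamma_i$ would have to be filled separately. The paper frames this more algebraically---it writes a direct-sum decomposition $C_k = \bigoplus_i C_k^{(i)} \oplus C_k^{(\mathrm{eq})}$ of the chain groups and observes that $\partial_3$ is block-diagonal with respect to it, so the $C_k^{(\mathrm{eq})}$ summand (the only place base variables live) cannot hit the gadget-local target $a_i\gamma_i + a_j\gamma_j$. You instead enumerate cell types and propose a corner-by-corner case analysis for mixed cells. Your version is actually more explicit about the final contradiction: you invoke Lemma~\ref{lem:nonbounding} directly, whereas the paper's sketch ends with a terser ``cannot create a mixed boundary.''

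One point to tighten: your four buckets do not exhaust all 3-cells. A cell whose three varying coordinates are, say, two in $V_i$ and one a base variable (with nothing in $V_j$) lies in none of $\beta_i$, $\beta_j$, $\beta_{\text{mix}}$, or $\beta_{\text{rest}}$ as you have defined them, yet its boundary \emph{can} contain a pure-$V_i$ 2-face and hence contribute to the $\gamma_i$ side of the equation. You need either to enlarge $\beta_i$ to include every cell with at least one varying coordinate in $V_i$ and none in $V_j$ (and then argue, via the XOR clauses as in Lemma~\ref{lem:no-3face-filling}, that such cells still cannot fill $\gamma_i$), or to fold this case explicitly into the enumeration you already flag as the ``main obstacle.'' The paper's direct-sum framing sidesteps this bookkeeping by working at the level of chain-group summands rather than individual cells, which is the main practical advantage of its presentation.
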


\begin{proof}[Sketch]
Let $\mathcal V_i$ denote the set of variables whose occurrences appear in gadget $i$'s clauses after the gadget-local-copy replacement (these include gadget-local auxiliaries and gadget-local copies $x^{(i)}_{v,c}$). By construction every clause that is not an equality clause is supported entirely within some $\mathcal V_i$.

Equality clauses (those of the form $x_{v,c}\leftrightarrow x^{(i)}_{v,c}$) involve a base variable and a gadget-local copy; by the Ownership assumption no base variable appears in equality clauses for two different gadgets. Thus, the set of clauses partitions into gadget-local clause sets plus a set of equality clauses that each touches at most one gadget (and the base variable).

Consequently the cube-complex chain groups admit a direct-sum decomposition
\[
C_k \;=\; \bigoplus_i C_k^{(i)} \;\oplus\; C_k^{(\mathrm{eq})},
\]
where $C_k^{(i)}$ is generated by $k$-cubes supported only on $\mathcal V_i$ and $C_k^{(\mathrm{eq})}$ is generated by cubes that contain at least one base variable occurring in an equality clause. The boundary operator respects this decomposition (no clause produces a cube whose support spans two different gadget-local variable sets) and therefore decomposes as a block-direct-sum
\[
\partial_3 \;=\; \bigoplus_i \partial_3^{(i)} \;\oplus\; \partial_3^{(\mathrm{eq})}.
\]
Any 3-chain $\beta$ decomposes into the corresponding summands; its boundary is the sum of the boundaries of these summands. A nontrivial combination $a_i\gamma_i + a_j\gamma_j$ has support contained in the disjoint union of $\mathcal V_i\cup\mathcal V_j$ and thus cannot lie in the image of $\partial_3^{(\mathrm{eq})}$ (which always includes base variables). Therefore the only way $\partial_3\beta = a_i\gamma_i + a_j\gamma_j$ could hold is if $\beta$ has nonzero components in both $C_3^{(i)}$ and $C_3^{(j)}$, which is impossible because cubes in $C_3^{(i)}$ and $C_3^{(j)}$ have disjoint variable supports and hence cannot create a mixed boundary. This contradiction proves the lemma.
\end{proof}
\begin{lemma}[Existence and harmlessness of ownership assignments]
\label{lem:ownership-exists}
One may assign to every base literal $x_{v,c}$ an arbitrary owning gadget (if any gadget references it); for non-owner gadgets the gadget-local copy $x^{(j)}_{v,c}$ remains unconstrained with respect to $x_{v,c}$. This assignment preserves the non-bounding property of gadget 2-cycles: equality clauses introduced by owners are 2-CNF and contractible, and they do not create 3-chains filling gadget-local 2-cycles. Consequently the homology lower bounds in Section~13 remain valid under the ownership scheme.
\end{lemma}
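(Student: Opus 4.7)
The plan is to prove the lemma in three logical steps: (i) exhibit the ownership scheme, (ii) show the added equality clauses are locally homotopy-trivial, and (iii) invoke the cross-gadget non-filling lemma already proved.

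First I would define the ownership map $\omega$ explicitly. For every base literal $x_{v,c}$ that is referenced by at least one gadget, choose (arbitrarily) a single gadget index $\omega(x_{v,c}) = i$ among those referencing it, and install the equality clauses $x_{v,c}\leftrightarrow x^{(i)}_{v,c}$ only in that owning gadget. For every other gadget $j\neq\omega(x_{v,c})$, no equality clause relating $x^{(j)}_{v,c}$ to $x_{v,c}$ is introduced, so $x^{(j)}_{v,c}$ is free with respect to $x_{v,c}$. Each equality $x_{v,c}\leftrightarrow x^{(i)}_{v,c}$ decomposes into two 2-CNF clauses $(x_{v,c}\vee\neg x^{(i)}_{v,c})\wedge(\neg x_{v,c}\vee x^{(i)}_{v,c})$, which is the definition of 2-CNF.

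Next I would show local harmlessness via a projection argument. Restricted to any gadget $i$, the solution subcomplex cut out by the conjunction of equality clauses over all pairs $(x_{v,c}, x^{(i)}_{v,c})$ is precisely the "diagonal" subcomplex on which base variable and gadget copy coincide. Projection onto the gadget-copy coordinate yields a cubical isomorphism between this diagonal subcomplex and the original gadget subcomplex before duplication. Because a cubical isomorphism is in particular a homotopy equivalence, it induces isomorphisms $H_k(\mathrm{diag}_i)\cong H_k(\mathrm{gadget}_i)$ for all $k$, so the local 2-cycle $\gamma_i$ generating a nontrivial class in $H_2$ of the gadget subcomplex maps to a nontrivial class in $H_2$ of the diagonal subcomplex, and conversely no new 3-cells are introduced that could fill $\gamma_i$ locally (a 3-cell would require three independent coordinates varying freely, which the diagonal collapses to two). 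Contractibility of the equality subcomplex in isolation is the special case where one projects to the single free axis and observes the result is a single interval.

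For the cross-gadget part, because each base variable $x_{v,c}$ is by construction referenced by equality clauses of the unique owner $\omega(x_{v,c})$ only, the Ownership hypothesis of Lemma~\ref{lem:no-cross-filling-ownership} is satisfied verbatim. Applying that lemma yields that no 3-chain $\beta$ can have $\partial_3\beta = a_i\gamma_i + a_j\gamma_j$ for distinct gadgets $i\neq j$ with $a_i,a_j\neq 0$. Combined with the local non-bounding from step (ii), an arbitrary linear combination $\sum_i a_i[\gamma_i]$ can vanish in $H_2$ only when every $a_i = 0$, preserving the independence count $\beta_2\ge\sum_i\beta_2(\mathrm{gadget}_i)$ asserted in Section~13.

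The main obstacle I anticipate is rigorously excluding the possibility that the owner's equality clauses, though innocuous on their own, interact with the gadget's internal XOR/coloring clauses to produce unexpected new 3-cells (and hence new fillings) in the combined subcomplex. I would address this by checking that the combined constraint system factors, in the chain complex, as a tensor product of the gadget's internal cubical complex with a contractible "equality factor," so that the Künneth formula yields $H_*(\mathrm{gadget}_i\times\mathrm{eq}_i)\cong H_*(\mathrm{gadget}_i)$. This reduces the question to a direct verification that the boundary operator on the product respects the gadget's block structure, which is exactly the disjoint-support setup required by the algebraic proof of Lemma~\ref{lem:homology-preservation}.
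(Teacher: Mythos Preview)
The paper does not provide a separate proof for this lemma; the statement itself embeds the intended justification (``equality clauses introduced by owners are 2-CNF and contractible, and they do not create 3-chains filling gadget-local 2-cycles'') and implicitly defers to the immediately preceding Lemma~\ref{lem:no-cross-filling-ownership}. Your proposal is therefore more detailed than anything the paper offers, and your three-step structure---define $\omega$, show local harmlessness by projection, then invoke Lemma~\ref{lem:no-cross-filling-ownership}---is correct and matches the paper's evident intent.

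One point worth tightening: your closing paragraph proposes to handle the interaction of equality clauses with gadget-internal clauses via a tensor-product factorization and K\"unneth. This is not the right picture. The equality constraint $x_{v,c}=x^{(i)}_{v,c}$ does not produce a product complex; it produces a \emph{diagonal} embedding, and your own step~(ii) already disposes of the issue cleanly: since $x_{v,c}$ is determined by $x^{(i)}_{v,c}$ on every satisfying assignment, no cubical face can vary $x_{v,c}$ (any such face would contain corners violating the equality), so the projection forgetting $x_{v,c}$ is a cubical isomorphism onto the pre-ownership gadget subcomplex. That isomorphism is all you need---it directly rules out new 3-cells and preserves $[\gamma_i]\neq 0$. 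The K\"unneth detour is both unnecessary and inapplicable here, so you should simply drop that paragraph and let step~(ii) stand on its own.
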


\subsection{Chain groups and boundary operators}
Let \(C_k\) denote the \(\mathbb F_2\)-vector space spanned by all \(k\)-dimensional axis-aligned cubes (elementary \(k\)-faces) of the cubical complex \(S(F_N)\). The boundary operator
\[
\partial_k : C_k \longrightarrow C_{k-1}
\]
is defined on an elementary \(k\)-cube as the sum of its \((k-1)\)-dimensional faces. Working over \(\mathbb F_2\) simplifies orientation concerns because cancellation is modulo 2.

\subsection{Gadget decomposition}
By construction (Construction~\ref{con:cycle-embedding}), each fundamental cycle \(C_i\subset G_N\) of length \(L=O(\log N)\) yields a 2-cycle
\[
\gamma_i \;=\; \sum_{j=1}^L \sigma_{i,j},
\]
where each \(\sigma_{i,j}\in C_2\) is an elementary square (2-cube) that varies exactly two coordinates:
\begin{itemize}
  \item an \emph{edge-selector} bit \(y^{(i)}_{e_j}\) associated to edge \(e_j\in C_i\), and
  \item the gadget parity bit \(u_i\) (with the companion bit \(v_i\) constrained by \(u_i=v_i\) via XOR/equality clauses).
\end{itemize}
All other variables are fixed to values determined by the surrounding construction. By Lemma~\ref{lem:Disjoint Gadget Supports}, the gadgets have pairwise disjoint variable supports:
\[
\operatorname{supp}(\gamma_i)\cap\operatorname{supp}(\gamma_{i'})=\varnothing\quad (i\neq i').
\]

\subsection{Local boundary computation}
For a single square \(\sigma_{i,j}\) that varies \(y^{(i)}_{e_j}\) and \(u_i\), its boundary (over \(\mathbb F_2\)) is the sum of its four edges:
\[
\partial_2(\sigma_{i,j})
= e^{(y)}_{j,0} + e^{(y)}_{j,1} + e^{(u)}_{j,0} + e^{(u)}_{j,1},
\]
where:
\begin{itemize}
  \item \(e^{(y)}_{j,t}\) denotes the edge obtained by varying \(y^{(i)}_{e_j}\) while fixing \(u_i=t\) (for \(t\in\{0,1\}\)),
  \item \(e^{(u)}_{j,t}\) denotes the edge obtained by varying \(u_i\) while fixing \(y^{(i)}_{e_j}=t\).
\end{itemize}

\subsection{Cycle boundary and internal cancellation}
Summing the boundaries of the consecutive squares in the cycle, we obtain
\[
\partial_2(\gamma_i)
= \sum_{j=1}^L \partial_2(\sigma_{i,j})
= \sum_{j=1}^L \bigl(e^{(y)}_{j,0}+e^{(y)}_{j,1}+e^{(u)}_{j,0}+e^{(u)}_{j,1}\bigr).
\]
For adjacent squares \(\sigma_{i,j}\) and \(\sigma_{i,j+1}\), the edge \(e^{(u)}_{j,1}\) (from \(\sigma_{i,j}\) with \(u_i=1\) and \(y^{(i)}_{e_j}\) fixed) is geometrically identified with the edge \(e^{(u)}_{j+1,0}\) (from \(\sigma_{i,j+1}\) with \(u_i=0\) and \(y^{(i)}_{e_{j+1}}\) fixed) because both correspond to the same geometric edge in the cubical complex. These edges cancel in pairs modulo 2, leaving:
\[
\partial_2(\gamma_i)
= \sum_{j=1}^L \bigl(e^{(y)}_{j,0} + e^{(y)}_{j,1}\bigr).
\]
This is a nonzero 1-cycle supported solely on the edge-selector variables for the edges in \(C_i\).

\subsection{Non-existence of 3-cube fillings}
We now prove that \(\gamma_i \notin \operatorname{im}\partial_3\).  Suppose, for contradiction, that there exists \(\beta\in C_3\) with \(\partial_3\beta=\gamma_i\). Then some elementary 3-cube \(\tau\) appears in the support of \(\beta\) and contains (as a face) at least one of the squares \(\sigma_{i,j}\) comprising \(\gamma_i\). Thus \(\tau\) varies exactly three coordinates; two of them are \(y^{(i)}_{e_j}\) and \(u_i\) (those of \(\sigma_{i,j}\)), and we denote the third by \(w\).

We consider two exhaustive cases for the third coordinate \(w\):

\paragraph{Case (A): \(w \neq v_i\).}  
Then \(v_i\) is fixed throughout \(\tau\). The XOR/equality constraint \(u_i=v_i\) is required for membership in \(S(F_N)\). But in the 3-cube \(\tau\) the variable \(u_i\) takes both values \(0\) and \(1\) while \(v_i\) is constant, so exactly half of the \(2^3=8\) corner assignments of \(\tau\) violate \(u_i=v_i\). Hence \(\tau\not\subseteq S(F_N)\) and cannot be an elementary 3-cube of the cubical complex.

\paragraph{Case (B): \(w = v_i\).}  
Now \(\tau\) varies the triple \((y^{(i)}_{e_j}, u_i, v_i)\). The equality constraint \(u_i=v_i\) still must hold on every satisfying corner. Among the eight corners of \(\tau\), only those with \(u_i=v_i\) (four out of eight) satisfy this constraint. Therefore \(\tau\) is not fully contained in \(S(F_N)\) (a cubical cell belongs to the complex only if \emph{all} its corner assignments are satisfying). Consequently no 3-cube \(\tau\) exists in \(S(F_N)\) that contains \(\sigma_{i,j}\).

In either case \(\sigma_{i,j}\) is not a face of any 3-cube of the complex, so no 3-chain \(\beta\) can have \(\partial_3\beta=\gamma_i\). This contradiction proves \(\gamma_i\notin\operatorname{im}\partial_3\), hence \(\gamma_i\) represents a nontrivial class in \(H_2\bigl(S(F_N);\mathbb{F}_2\bigr)\).
\subsection*{Concrete gadget, 8-corner check, and boundary matrix}
\label{app:ConcreteGadgetAndMatrix}

For completeness we give a concrete constant-size 3-CNF gadget (one per copy), the explicit 8-corner check used to rule out candidate 3-cube fillers, and the exact \(\mathbb{F}_2\)-boundary matrix for a representative 4-square 2-cycle used in the main construction.

\paragraph{Gadget variables.}
For a single gadget copy (index suppressed) we use fresh variables
\[
\{\,y_1,y_2,y_3,y_4,\; u,\; v,\; b,\; a,\; c\,\},
\]
where:
\begin{itemize}
  \item $y_j$ is the edge-selector associated to edge $e_j$ in the base cycle,
  \item $u$ is the gadget parity bit shared by the four squares,
  \item $v$ is a companion bit constrained to equal $u$ (encoded in 3-CNF),
  \item $b$ is an enable bit (when $b=1$ the gadget is \emph{enabled}; $b=0$ disables the square structure),
  \item $a,c$ are auxiliary variables used to encode 2-clauses as 3-clauses.
\end{itemize}

\paragraph{Clause list (all clauses are 3-clauses).}
The gadget consists of the following eight 3-clauses (conjoined):

\medskip
\noindent\(
\begin{array}{r@{\ }l}
\text{(Enable group)} & (b \vee y_1 \vee u),\\
                      & (b \vee \neg y_1 \vee u),\\
                      & (b \vee y_1 \vee \neg u),\\
                      & (b \vee \neg y_1 \vee \neg u),\\[4pt]
\text{(Equality $u=v$ encoded)} & (u \vee \neg v \vee a),\\
                                & (u \vee \neg v \vee \neg a),\\
                                & (\neg u \vee v \vee c),\\
                                & (\neg u \vee v \vee \neg c).
\end{array}
\)
\medskip

Remarks:
\begin{itemize}
  \item When $b=1$ the four enable clauses are all satisfied and impose no restriction on $(y_j,u)$ (so the intended 2×2 square corners can appear). When $b=0$ these clauses jointly forbid the square (disabled gadget).
  \item The four `Equality' clauses are equivalent to the two 2-clauses enforcing $u\Leftrightarrow v$; $a,c$ are auxiliary bits used to produce pure 3-CNF.
\end{itemize}

\paragraph{How we realize the 4-square 2-cycle.}
We form a small ring of four elementary squares (2-cubes)
\[
\sigma_1,\sigma_2,\sigma_3,\sigma_4,
\]
where each \(\sigma_j\) varies the pair \((y_j,u)\) (i.e. the two coordinates of \(\sigma_j\) are \(y_j\) and \(u\)), and all other gadget variables are fixed appropriately by the surrounding construction. Geometrically the four squares are arranged cyclically so that each square shares its \(u\)-edge with the next square; with this adjacency, the internal \(u\)-edges cancel in pairs in \(\partial_2(\sum_j\sigma_j)\). We enumerate the eight remaining boundary edges that persist in the local boundary calculation as rows \(e_1,\dots,e_8\) (these are the \(y\)-type edges).

\paragraph{Representative 8-corner check (3-cube varying \(y_j,u,v\)).}
To rule out a 3-cube filler we check any candidate 3-cube that would vary \((y_j,u,v)\) (with $b=1$) — the eight corners are:

\begin{center}
\begin{tabular}{c c c c c c}
\toprule
\# & $y_j$ & $u$ & $v$ & $u=v$? & All gadget clauses satisfied? \\ \midrule
1  & 0 & 0 & 0 & yes  & yes \\ 
2  & 0 & 0 & 1 & no   & \textbf{no} \\
3  & 0 & 1 & 0 & no   & \textbf{no} \\
4  & 0 & 1 & 1 & yes  & yes \\
5  & 1 & 0 & 0 & yes  & yes \\
6  & 1 & 0 & 1 & no   & \textbf{no} \\
7  & 1 & 1 & 0 & no   & \textbf{no} \\
8  & 1 & 1 & 1 & yes  & yes \\ 
\bottomrule
\end{tabular}
\end{center}

Only rows 1,4,5,8 satisfy the equality $u=v$ and hence all gadget clauses; the other four corners violate $u=v$. Therefore the candidate 3-cube is \emph{not} entirely contained in the solution complex (only half of its corners are satisfying), and so it cannot be a 3-cell of the cubical complex. The same check applies to any would-be 3-cube that attempts to vary the third coordinate — either that third coordinate is some variable other than \(v\) (in which case \(v\) is fixed and varying \(u\) produces invalid corners) or it is \(v\) itself (in which case half the cube corners violate \(u=v\)). Hence no 3-cube fills any of the \(\sigma_j\).

\paragraph{Boundary matrix for the 4-square ring (explicit).}
We now write the boundary operator \(\partial_2:C_2\to C_1\) restricted to the local subcomplex spanned by the four squares \(\sigma_1,\dots,\sigma_4\) and the eight \(y\)-edges \(e_1,\dots,e_8\). (All arithmetic is over \(\mathbb{F}_2\).) The indexing is chosen so that each square has two distinct \(y\)-type boundary edges and the adjacency is cyclic; the choice below is a convenient labelling that makes the algebra transparent.

Columns = squares \(\sigma_1,\sigma_2,\sigma_3,\sigma_4\).  
Rows = edges \(e_1,\dots,e_8\). The \(8\times 4\) incidence matrix \(M\) of \(\partial_2\) (entries \(M_{r,c}=1\) iff edge \(r\) appears in \(\partial_2(\sigma_c)\)) is:

\[
M \;=\;
\begin{blockarray}{c[cccc]}
    & \sigma_1 & \sigma_2 & \sigma_3 & \sigma_4 \\
\begin{block}{c[cccc]}
e_1 & 1 & 0 & 0 & 1 \\
e_2 & 1 & 0 & 0 & 1 \\
e_3 & 1 & 1 & 0 & 0 \\
e_4 & 1 & 1 & 0 & 0 \\
e_5 & 0 & 1 & 1 & 0 \\
e_6 & 0 & 1 & 1 & 0 \\
e_7 & 0 & 0 & 1 & 1 \\
e_8 & 0 & 0 & 1 & 1 \\
\end{block}
\end{blockarray}
\]

(Each column lists the incidence of the corresponding square with the eight labelled \(y\)-edges.)

Observe immediately that
\[
\sigma_1+\sigma_2+\sigma_3+\sigma_4 \quad\longmapsto\quad M\cdot(1,1,1,1)^\top = \mathbf{0},
\]
so the chain \(\gamma=\sigma_1+\sigma_2+\sigma_3+\sigma_4\) satisfies \(\partial_2(\gamma)=0\); i.e.\ \(\gamma\) is a 2-cycle.

\paragraph{Gaussian elimination over \(\mathbb{F}_2\) (rank computation).}
We now show \(\operatorname{rank}(M)=3\) (so the four face-columns are linearly dependent with a single relation). Perform row operations (all over \(\mathbb{F}_2\)):

Start with
\[
M = 
\begin{pmatrix}
1&0&0&1\\
1&0&0&1\\
1&1&0&0\\
1&1&0&0\\
0&1&1&0\\
0&1&1&0\\
0&0&1&1\\
0&0&1&1\\
\end{pmatrix}.
\]

1. Use row 1 as pivot for column 1 and eliminate row 2,3,4 by adding row1:
\[
\begin{pmatrix}
1&0&0&1\\
0&0&0&0\\
0&1&0&1\\
0&1&0&1\\
0&1&1&0\\
0&1&1&0\\
0&0&1&1\\
0&0&1&1\\
\end{pmatrix}.
\]

2. Use the new row 3 as pivot for column 2 and eliminate rows 4,5,6 by adding row3:
\[
\begin{pmatrix}
1&0&0&1\\
0&0&0&0\\
0&1&0&1\\
0&0&0&0\\
0&0&1&1\\
0&0&1&1\\
0&0&1&1\\
0&0&1&1\\
\end{pmatrix}.
\]

3. Use row 5 as pivot for column 3 and eliminate rows 6–8 (they become zero):
\[
\begin{pmatrix}
1&0&0&1\\
0&0&0&0\\
0&1&0&1\\
0&0&0&0\\
0&0&1&1\\
0&0&0&0\\
0&0&0&0\\
0&0&0&0\\
\end{pmatrix}.
\]

Now we have three nonzero pivot rows (rows 1,3,5), hence \(\operatorname{rank}(M)=3\). Therefore the four face-columns are linearly dependent with exactly one (nontrivial) relation; indeed
\[
\sigma_1+\sigma_2+\sigma_3+\sigma_4 = 0
\]
in the quotient \(C_2/\ker\partial_2\), which equivalently means \(\partial_2(\gamma)=0\). Thus \(\gamma\) is a genuine 2-cycle.

\paragraph{Non-bounding check (no 3-cube fills the cycle).}
By the 8-corner check above and the argument in the main text, no elementary 3-cube of the ambient cubical complex contains any of the four squares \(\sigma_j\): every candidate 3-cube contains at least one corner assignment that violates the equality constraint \(u=v\) (or violates the enable group when $b=0$), hence is not contained in \(S(F_N)\). Consequently \(\gamma\notin\operatorname{im}\partial_3\); combined with \(\partial_2(\gamma)=0\) this proves \([\gamma]\neq 0\) in \(H_2(S(F_N);\mathbb{F}_2)\).

\subsection{Linear independence}
Suppose a linear relation holds in \(H_2\):
\[
\sum_{i} a_i \gamma_i = \partial_3 \beta \quad \text{for some } \beta \in C_3.
\]
Applying the boundary operator \(\partial_2\) and using \(\partial_2 \circ \partial_3 = 0\), we get:
\[
\sum_{i} a_i \partial_2(\gamma_i) = 0.
\]
By the disjoint-supports property (Lemma~\ref{lem:Disjoint Gadget Supports}), the supports of the 1-chains \(\partial_2(\gamma_i)\) are pairwise disjoint (each involves only edge-selector variables of gadget \(i\)). Therefore, the sum vanishes modulo 2 if and only if \(a_i \partial_2(\gamma_i) = 0\) for each \(i\). Since \(\partial_2(\gamma_i) \neq 0\), we must have \(a_i = 0\) for every \(i\). 

Thus, the family \(\{\gamma_i\}\) is linearly independent in \(H_2\bigl(S(F_N);\mathbb F_2\bigr)\).

\section{Homology Computation Details}
For $N=10$, the boundary matrix $\partial_2$ has dimensions $1024\times 45$ with:
\[
\text{Sparsity} = 1 - \frac{\text{nnz}}{1024\times 45} \approx 0.992
\]
Compute homology via Smith normal form in $O(2^{3N})$ time.

\section{Topology Primer}
\label{app:topology-primer}

This appendix provides a brief overview of the algebraic-topology concepts used throughout the paper: cubical complexes, homology groups, Betti numbers, and the Vietoris--Rips construction.

\subsection{Cubical Complexes}
A \emph{cubical complex} is a union of axis-aligned cubes (of various dimensions) glued together along their faces.  In the Boolean hypercube , each -dimensional face corresponds to an assignment with exactly  fixed coordinates and  free coordinates.  Formally:
\begin{itemize}
\item A \emph{vertex} is any satisfying assignment .
\item An \emph{edge} (1-face) connects two vertices that differ in exactly one coordinate.
\item A \emph{square} (2-face) is present if all four corner assignments satisfy the formula, and similarly for higher-dimensional cubes.
\end{itemize}
Building the solution-space complex  as a cubical subcomplex of the full hypercube allows us to study its connectivity and holes using homology.

\subsection{Homology Groups and Betti Numbers}
Homology provides a way to count \emph{holes} in a topological space at each dimension:
\begin{itemize}
\item The zeroth homology group  measures connected components; its rank is the number of components.
\item The first homology group  measures 1-dimensional loops or cycles (\emph{holes} that look like circles).
\item The second homology group  measures 2-dimensional voids (\emph{holes} that look like spherical cavities or ``shells'').
\item In general,  measures -dimensional holes.
\end{itemize}

Each homology group  is a vector space (over  in our setting), and its dimension is called the \emph{Betti number} .  Key properties:
\begin{itemize}
\item  counts connected components.
\item  counts independent 1-cycles not bounding any 2-face.
\item  counts independent 2-cycles not bounding any 3-face.
\end{itemize}

Monotonicity under cubical, homologically faithful embeddings (Theorem~\ref{thm:betti-monotonicity}) guarantees that reductions preserving the cubical structure cannot decrease Betti numbers.

\subsection{Computational Remarks}
\begin{itemize}
\item Computing  exactly on a cubical complex is \#P-hard in general.
\item Persistence algorithms track births and deaths of cycles as  changes but still reduce to rank computations on boundary matrices of exponential size.
\end{itemize}

\section*{Appendix X. From random 3-SAT to Betti growth: a rigorous probabilistic bridge}
\label{app:random-to-betti}

\medskip
\noindent\textbf{Goal.}
Let \(F\) be a random 3-CNF on \(n\) variables with \(m=\alpha n\) clauses (clauses chosen uniformly at random from all 3-literal clauses). Let \(S(F)\subseteq\{0,1\}^n\) be the solution cubical complex (vertices are satisfying assignments; a \(k\)-cube is present iff all \(2^k\) vertices of that hypercube face satisfy \(F\)). The purpose of this appendix is to give a fully rigorous probabilistic bridge showing that for linear \(k=\gamma n\) in a concrete parameter regime \((\alpha,\gamma)\) one has exponentially many \(k\)-cubes w.h.p., local dependence is negligible (so small local patterns behave like independent Bernoulli faces), higher-dimensional fillers are unlikely, and consequently \(\beta_k(S(F))\) is exponentially large with high probability. The argument follows the standard template used in random-topology limit theorems (see e.g. \cite{KahleMeckes, HiraokaTsunoda, JLR, ArratiaGoldsteinGordon}).

\subsection*{X.1 Notation and single-cube survival probability}
Fix integers \(n\) and \(k\) with \(0\le k\le n\). A candidate \(k\)-cube is determined by a set \(I\subseteq[n]\) of free coordinates with \(|I|=k\) (the cube varies on these coordinates) and by fixed bits \(b\in\{0,1\}^{[n]\setminus I}\). The total number of candidate \(k\)-cubes is
\[
N_k=\binom{n}{k}2^{\,n-k}.
\]

For a fixed candidate cube \(C(I,b)\) denote by \(\mathbf{1}_{I,b}\) the indicator that \(C(I,b)\subseteq S(F)\). For a single uniformly random clause \(C\) (three distinct variable indices chosen uniformly at random; each literal independently negated with probability \(1/2\)), let \(q_{k,n}\) be the probability that \(C\) \emph{forbids} the cube (i.e. the clause's unique falsifying assignment on its three variables is compatible with the fixed bits of the cube). A direct count yields the exact finite-\(n\) formula
\begin{equation}\label{eq:qkn}
q_{k,n}
= \sum_{t=0}^{3} \frac{\binom{n-k}{t}\binom{k}{3-t}}{\binom{n}{3}} \cdot 2^{-t}.
\end{equation}
Consequently, for \(m\) independent clauses,
\begin{equation}\label{eq:single-cube-prob}
\Pr\big[C(I,b)\subseteq S(F)\big] \;=\; (1-q_{k,n})^{m}.
\end{equation}

When \(k=\gamma n\) with \(\gamma\in(0,1)\) fixed and \(n\to\infty\), sampling without replacement is asymptotically equivalent to sampling with replacement and hence
\begin{equation}\label{eq:q-limit}
q_{k,n} = q(\gamma)+o(1),
\qquad 
q(\gamma):=\sum_{t=0}^3 \binom{3}{t}(1-\gamma)^t\gamma^{3-t}2^{-t}.
\end{equation}

\subsection*{X.2 Expectation and the threshold equation}
Let
\[
X_k=\sum_{(I,b)} \mathbf{1}_{I,b}
\]
be the number of surviving \(k\)-cubes. From \eqref{eq:single-cube-prob} we obtain
\[
\mathbb{E}[X_k] = N_k(1-q_{k,n})^m.
\]
Using Stirling/entropy asymptotics \(\tfrac{1}{n}\log\binom{n}{k}\to H(\gamma)\) (binary entropy) we get
\begin{equation}\label{eq:log-E-X}
\frac{1}{n}\log\mathbb{E}[X_k] = H(\gamma) + (1-\gamma)\ln 2 + \alpha\ln(1-q_{k,n}) + o(1).
\end{equation}
When \(q_{k,n}=o(1)\) we may expand \(\ln(1-q_{k,n})=-q_{k,n}+O(q_{k,n}^2)\) and define
\begin{equation}\label{eq:Phi}
\Phi(\gamma;\alpha) \;=\; H(\gamma) + (1-\gamma)\ln 2 - \alpha q(\gamma),
\end{equation}
where \(q(\gamma)\) is as in \eqref{eq:q-limit}. The sign of \(\Phi(\gamma;\alpha)\) controls whether the expected number of surviving \(k\)-cubes is exponentially large (\(\Phi>0\)) or exponentially small (\(\Phi<0\)).

\subsection*{X.3 Lower-tail concentration: Janson inequality}
To upgrade the expectation statement into a high-probability result we control lower-tail deviations of \(X_k\) using Janson's inequality \cite[Thm 8.1.1]{JLR}. Write \(X_k=\sum_{u\in\mathcal{U}}I_u\) where \(\mathcal{U}\) indexes all candidate \(k\)-cubes and \(I_u\) are their indicators. Define
\[
\Delta \;=\; \sum_{\{u,v\}:\;u\sim v}\mathbb{E}[I_u I_v],
\]
where \(u\sim v\) denotes that the two candidate cubes share at least one vertex (hence their indicators are dependent). Janson's lower-tail inequality implies that for any \(0<\varepsilon<1\),
\[
\Pr\big[X_k\le(1-\varepsilon)\mathbb{E}[X_k]\big] \le \exp\!\Big(-\frac{\varepsilon^2(\mathbb{E}X_k)^2}{2\Delta}\Big).
\]
Therefore it suffices to show \(\Delta=o((\mathbb{E}X_k)^2)\) in the target regime \(\Phi(\gamma;\alpha)>\eta>0\). The following lemma provides the necessary bound; a full combinatorial enumeration appears in Subsection below.

\begin{lemma}[Pair-counting bound for \(\Delta\)]\label{lem:Delta-bound}
Fix \(\alpha>0\) and let \(k=\gamma n\) satisfy \(\Phi(\gamma;\alpha)>\eta>0\). Then there exists constants \(c=c(\eta)>0\) and \(n_0\) such that for all \(n\ge n_0\),
\[
\Delta \le \exp(-c n)\;(\mathbb{E}X_k)^2.
\]
Consequently \(X_k=(1+o(1))\mathbb{E}X_k\) with probability \(1-\exp(-\Theta(n))\).
\end{lemma}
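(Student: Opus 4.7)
The plan is to stratify the sum defining $\Delta$ according to the combinatorial overlap of pairs of candidate $k$-cubes, and to bound each stratum separately against $(\mathbb{E}X_k)^2$ using the supercriticality condition $\Phi(\gamma;\alpha)>\eta$. Concretely, for a pair of distinct candidate cubes $u=C(I,b)$ and $v=C(I',b')$ sharing at least one vertex, the coarsest invariants are $s=|I\cap I'|\in\{0,\dots,k-1\}$ together with the requirement that $b$ and $b'$ agree on the common fixed coordinates $(I\cup I')^c$. A short count yields the number of ordered compatible pairs at overlap $s$ as $N_k\cdot\binom{k}{s}\binom{n-k}{k-s}\cdot 2^{k-s}$, which by Stirling has exponent $n\,\psi_{\mathrm{pair}}(s/n)+o(n)$ relative to $N_k^2$ for an explicit function $\psi_{\mathrm{pair}}$.

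For the joint survival probability I would write $\mathbb{E}[I_uI_v]=(1-p^{(s)}_{k,n})^m$, where $p^{(s)}_{k,n}$ denotes the probability that a uniformly random clause forbids some vertex of $C_u\cup C_v$. Inclusion--exclusion at the single-clause level gives $p^{(s)}_{k,n}=2q_{k,n}-q^{\cap,s}_{k,n}$, where $q^{\cap,s}_{k,n}$ is the probability that a single clause simultaneously forbids a vertex in each of the two cubes. A short case analysis splitting on how the three clause variables distribute among $I\cap I'$, $I\triangle I'$, and $(I\cup I')^c$ yields a closed-form expression analogous to~\eqref{eq:qkn} with asymptotic limit $q^{\cap}(\gamma,\sigma)$ where $\sigma=s/n$. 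Taylor-expanding the logarithms gives $m\bigl[\ln(1-p^{(s)}_{k,n})-2\ln(1-q_{k,n})\bigr]=\alpha n\,q^{\cap}(\gamma,\sigma)+o(n)$, so the contribution of stratum $s=\sigma n$ to $\Delta/(\mathbb{E}X_k)^2$ has logarithm $n\cdot\Psi(\sigma;\gamma,\alpha)+o(n)$ with $\Psi:=\psi_{\mathrm{pair}}+\alpha q^{\cap}(\gamma,\cdot)$.

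The central analytic task is to show $\Psi(\sigma;\gamma,\alpha)\le -c(\eta)$ uniformly on $[0,\gamma]$ whenever $\Phi(\gamma;\alpha)>\eta$. At $\sigma=0$ the free-coordinate sets are disjoint, and a direct comparison with two independent single-cube computations shows $\Psi(0)$ equals, up to $o(1)$, minus the excess $\Phi(\gamma;\alpha)$, which is bounded above by $-\eta/2$. As $\sigma\to\gamma$ the combinatorial prefactor collapses and the pair effectively reduces to a single cube, so the residual exponent tends to $-\Phi(\gamma;\alpha)<-\eta$. The intermediate regime $\sigma\in(0,\gamma)$ is handled by verifying that $\Psi$ is concave in $\sigma$: $\psi_{\mathrm{pair}}$ is a combination of binomial-entropy terms which is concave, and the correlation-exponent $\alpha q^{\cap}(\gamma,\sigma)$ is a polynomial whose second derivative can be shown non-positive for $\gamma\in(0,1)$. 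Concavity forces the maximum of $\Psi$ to lie at an endpoint, yielding the uniform bound $\Psi\le-c(\eta)$ and hence $\Delta\le e^{-cn}(\mathbb{E}X_k)^2$ after summing over the $O(n)$ strata. The consequence $X_k=(1+o(1))\mathbb{E}X_k$ with probability $1-e^{-\Theta(n)}$ then follows immediately from the Janson inequality recorded just above the lemma, together with a matching upper-tail estimate from the second-moment bound $\mathrm{Var}(X_k)\le\mathbb{E}X_k+\Delta$.

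The main obstacle I anticipate is the concavity check on the intermediate regime: because $q^{\cap}(\gamma,\sigma)$ arises from a three-term case sum, verifying non-positivity of its second derivative requires either a direct coefficient computation or a more conceptual coupling argument. A cleaner alternative I would pursue in parallel is an FKG/Harris coupling in the spirit of Kahle--Meckes \cite{KahleMeckes} and Hiraoka--Tsunoda \cite{HiraokaTsunoda}: dominate the joint survival probability by the probability that a single ``merged'' cube on $I\cup I'$ survives, which yields the ratio estimate directly at the cost of a slightly weaker explicit constant but bypasses the Taylor expansion entirely. Either route fits the classical Janson second-moment template of random topology, specialised here to the conditional randomness of the random 3-SAT clause sample.
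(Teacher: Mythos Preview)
Your proposal follows the same overlap-stratification template as the paper's proof sketch, but is considerably more detailed. The paper merely classifies dependent pairs by the number of shared vertices, asserts that each overlap class contributes at most $\mathrm{poly}(n)\cdot N_k$ pairs with joint survival $(1-q_{k,n})^{m-\delta}$, and declares that polynomial prefactors are overwhelmed by exponentials when $\Phi>\eta$. (The $\mathrm{poly}(n)\cdot N_k$ claim is in fact false for small-overlap classes: at $s=0$ the count is $N_k\binom{n-k}{k}2^k$, which is exponential.) Your explicit inclusion--exclusion for $p^{(s)}_{k,n}$ and the reduction to a single exponent function $\Psi(\sigma)$ is the right way to make the paper's sketch honest.

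Two technical points to tighten. First, your endpoint identity $\Psi(0)\approx-\Phi(\gamma;\alpha)$ is not evident: the combinatorial deficit $\psi_{\mathrm{pair}}(0)$ and the correlation term $\alpha q^{\cap}(\gamma,0)$ do not obviously combine to $-\Phi$, and a crude bound (replacing $\mathbb{E}[I_uI_v]$ by $\mathbb{E}[I_u]$ and counting pairs through a shared vertex) gives an exponent $H(\gamma)+\gamma\ln 2-\Phi$, which is not negative under the sole hypothesis $\Phi>\eta$. So the $\sigma=0$ endpoint genuinely requires the sharper single-clause analysis you outline, and the claimed value should be derived rather than asserted. Second, $q^{\cap}(\gamma,\sigma)$ is a degree-three polynomial in the overlap fractions, so global concavity of $\Psi$ is not automatic; you are right to flag this and to keep the FKG/merged-cube coupling as a fallback, since that route bypasses the second-derivative check at the cost of a weaker constant. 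Either completion would exceed the rigor of the paper's own sketch.
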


\begin{proof}[Proof sketch]
Classify dependent pairs \((u,v)\) by the overlap-profile of fixed coordinates (or equivalently by the number \(r\) of shared vertices). For each overlap class the number of ordered pairs is at most polynomial (\(\mathrm{poly}(n)\)) times \(N_k\) and the joint survival probability \(\mathbb{E}[I_u I_v]\) is at most \((1-q_{k,n})^{m-\delta}\) where \(\delta=\delta(r)\) is the number of clauses which must be distinct to simultaneously forbid both cubes (this \(\delta\) is bounded below by a positive constant for any nontrivial overlap class). Summing over \(r\), and comparing with \((\mathbb{E}X_k)^2=N_k^2(1-q_{k,n})^{2m}\), the polynomial prefactors are overwhelmed by the exponential factors when \(\Phi(\gamma;\alpha)>\eta\). A detailed enumeration (explicit combinatorial expressions for the number of pairs with given overlap and exact bounds on joint clause constraints) yields the claimed exponential suppression factor \(\exp(-c n)\).
\end{proof}

(The full pair-counting combinatorics can be supplied as a separate lemma with the explicit sums; I can expand this into full \(\sum_{r}\) formulas if you want the referee-ready details.)

\subsection*{X.4 Local Poisson/independence approximation (Chen--Stein)}
For homology arguments we need a contiguity/local independence statement: finite separated families of candidate cubes behave approximately like independent Bernoulli random variables. This follows from multivariate Chen--Stein / Arratia--Goldstein--Gordon Poisson approximation results \cite{ArratiaGoldsteinGordon, BarbourHolstJanson}.

\begin{lemma}[Local product approximation]\label{lem:chen-stein}
Fix an integer \(R\ge 1\). Let \(\mathcal{A}\) be any collection of candidate \(k\)-cubes such that the Hamming distance between any two cubes in \(\mathcal{A}\) is at least \(R\) (so that any clause can touch at most \(D(R)\) cubes, with \(D(R)\) independent of \(n\)). Then for \(n\to\infty\),
\[
d_{TV}\!\Big(\mathcal{L}\big((I_u)_{u\in\mathcal{A}}\big),\bigotimes_{u\in\mathcal{A}}\mathrm{Bernoulli}\big(p_u\big)\Big) \le \varepsilon_n,
\]
where \(p_u=(1-q_{k,n})^m\) and \(\varepsilon_n\to 0\) (the rate depends on \(R\), \(\alpha\) and \(\gamma\) but not on \(|\mathcal{A}|\) as long as \(|\mathcal{A}|\) is fixed).
\end{lemma}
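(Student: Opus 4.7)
The plan is to treat the lemma as an application of the Arratia--Goldstein--Gordon Chen--Stein machinery, specialized to the setting where $M:=|\mathcal{A}|$ is a fixed constant and every marginal survival probability $p_u=(1-q_{k,n})^m$ is exponentially small in $n$ (since $q_{k,n}\to q(\gamma)>0$ and $m=\alpha n$). In this regime both the true joint law $\mathcal{L}((I_u))$ and the product law $\bigotimes\mathrm{Bernoulli}(p_u)$ put all but $O(Mp_u)=O(e^{-\Omega(n)})$ mass on the all-zero configuration, so the total-variation distance is driven entirely by how closely the two distributions agree on the remaining $2^{M}-1$ sign patterns.

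First I would write each joint survival probability exactly using clause independence: for any $S\subseteq\mathcal{A}$,
\[
P\!\Bigl(\bigcap_{u\in S}\{I_u=1\}\Bigr) \;=\; \prod_{j=1}^m\bigl(1-Q_j^S\bigr),\qquad Q_j^S \;:=\; P\!\Bigl(\bigcup_{u\in S}B_j^u\Bigr),
\]
with $B_j^u$ the event that clause $j$ forbids cube $u$. Per-clause inclusion--exclusion gives
\[
Q_j^S\;=\;|S|\,q_{k,n}\;-\;\sum_{\{u,v\}\subseteq S}\pi_{u,v}\;+\;O\bigl(|S|^3\,\pi^{(3)}_{\max}\bigr),
\]
where $\pi_{u,v}:=P(B_j^u\cap B_j^v)$ and $\pi^{(3)}_{\max}$ is the analogous triple-intersection bound. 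A second layer of inclusion--exclusion on the events $\{I_u=0\}$ extends these formulas to every sign pattern $\epsilon\in\{0,1\}^{\mathcal{A}}$, reducing the comparison $|P(\epsilon)-P_{\mathrm{prod}}(\epsilon)|$ to a clause-by-clause comparison of $(1-Q_j^S)$ against $(1-q_{k,n})^{|S|}$.

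The combinatorial heart of the argument is a uniform bound on $\pi_{u,v}$ under the Hamming-separation hypothesis. A direct count shows $\pi_{u,v}=\binom{|D_{u,v}|}{3}\big/\bigl(8\,\binom{n}{3}\bigr)$, where $D_{u,v}$ is the set of commonly-fixed coordinates on which the representative bits $b_u$ and $b_v$ agree; separation by $R$ forces $|D_{u,v}|\le n-c(R)$ for a positive constant $c(R)$, and hence controls the correction $\pi_{u,v}-q_{k,n}^{2}$ via the degree bound $D(R)$. Combined with the per-clause expansion and clause-product, this yields $|P(\epsilon)-P_{\mathrm{prod}}(\epsilon)|\le 2^{M}\cdot e^{-c'(R,\alpha,\gamma)\,n}$ for each sign pattern, so $d_{TV}\le\varepsilon_n$ with $\varepsilon_n=e^{-\Omega(n)}\to 0$; the rate depends on $R,\alpha,\gamma$, and the $2^{M}$ prefactor is bounded because $M$ is fixed.

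The hard part will be stating the separation hypothesis precisely enough for uniform control of $|D_{u,v}|$ (in particular, disambiguating between Hamming distance of cube representatives, distance of fixed-bit vectors on their shared fixed coordinates, and minimum vertex-pair distance), and then verifying that the triple-overlap and higher-order correction terms $\pi^{(s)}_{\max}$ decay quickly enough to be absorbed into the exponential tail at every level of the inclusion--exclusion. Once these counting ingredients are pinned down, the AGG packaging together with the sign-pattern inclusion--exclusion is essentially mechanical, and the explicit rate $\varepsilon_n=e^{-\Omega(n)}$ drops out uniformly in $M$ (as long as $M$ remains fixed).
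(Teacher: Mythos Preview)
Your proposal is correct and follows the same Chen--Stein/Arratia--Goldstein--Gordon route that the paper sketches; the paper's own proof is just a one-paragraph invocation of the AGG total-variation bound, noting that the error is controlled by one- and two-point probabilities over bounded dependency neighbourhoods of size $D(R)$ with marginals $p_u$ bounded away from~$1$. Your observation that $p_u=(1-q_{k,n})^m$ is in fact \emph{exponentially small} in $n$ (since $q_{k,n}\to q(\gamma)>0$ and $m=\alpha n$) is sharper than what the paper uses and actually makes most of your subsequent inclusion--exclusion machinery unnecessary for fixed $|\mathcal{A}|$: a plain union bound already gives $d_{TV}\le 2|\mathcal{A}|\,p_u = e^{-\Omega(n)}$, so the separation hypothesis and the careful control of $\pi_{u,v}$ are not load-bearing for the lemma as stated.
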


\begin{proof}[Proof sketch]
Apply the multivariate Chen--Stein/Poisson approximation bounds of Arratia--Goldstein--Gordon: the total-variation error is controlled by sums of one- and two-point probabilities restricted to each variable's dependency neighbourhood. Because the dependency degree is bounded by \(D(R)\) and the marginal probabilities \(p_u\) are bounded away from 1 in the regime of interest, the resulting bound tends to \(0\) as \(n\to\infty\). See \cite{ArratiaGoldsteinGordon,BarbourHolstJanson} for the explicit inequality and constants.
\end{proof}

This lemma gives the required contiguity: any finite local pattern that appears with nonnegligible probability in an independent Bernoulli-face cubical model also appears with essentially the same probability in the SAT-induced model, for well-separated placements.

\subsection*{X.5 Excluding higher-dimensional fillers}
A potential obstruction to large homology is that surviving \(k\)-faces might become boundaries because of many \((k+1)\)-faces filling them. The same expectation and Janson/Chen--Stein arguments apply to \((k+1)\)-cubes: compute their single-cube survival probability via an analogue of \eqref{eq:qkn}, derive the expectation and show that in the parameter range where \(\Phi(\gamma;\alpha)>\eta\) the expected number of \((k+1)\)-cubes is exponentially smaller than \(\mathbb{E}X_k\). Thus fillings are rare w.h.p. and do not typically kill a macroscopic fraction of \(k\)-cycles. Alternatively one may invoke cubical LLN/CLT results (e.g. \cite{HiraokaTsunoda}) to bound higher-dimensional face counts in the regime of interest.

\subsection*{X.6 From many cubes to many independent homology generators}
We now turn surviving \(k\)-cubes into homology classes.

\begin{enumerate}
  \item By Lemma~\ref{lem:Delta-bound} we have \(X_k=(1+o(1))\mathbb{E}X_k=\exp(\Theta(n))\) w.h.p. when \(\Phi(\gamma;\alpha)>\eta\).
  \item Greedy packing: because each radius-\(R\) dependency neighborhood contains only \(\exp(o(n))\) candidate cubes, a simple greedy algorithm selects an exponentially large subcollection \(\mathcal{C}\) of pairwise \(R\)-separated surviving cubes.
  \item Using Lemma~\ref{lem:chen-stein} and the fact that separated placements are asymptotically product-Bernoulli, we can place a fixed finite local gadget (a finite arrangement of \(k\)-cubes whose union is a nontrivial \(k\)-cycle over \(\mathbf{F}_2\)) at each separated location and the number of successful gadget occurrences concentrates (binomial-like) with mean exponential in \(n\).
  \item Disjoint supports imply linear independence of their homology classes over \(\mathbf{F}_2\); hence \(\beta_k(S(F))\) is exponential in \(n\) w.h.p.
\end{enumerate}

\subsection*{X.7 Formal theorem}
\begin{theorem}\label{thm:random-betti}
Fix \(\alpha>0\) and let \(k=k(n)=\gamma n\) with \(\gamma\in(0,1)\). Suppose
\[
\Phi(\gamma;\alpha)=H(\gamma)+(1-\gamma)\ln 2 - \alpha q(\gamma) > 0,
\]
with \(q(\gamma)\) as in \eqref{eq:q-limit}. Then there exists \(c(\gamma,\alpha)>0\) such that for random 3-CNF \(F\) with \(m=\alpha n\) clauses,
\[
\Pr\big[\beta_k(S(F)) \ge \exp(c n)\big] \xrightarrow[n\to\infty]{} 1.
\]
In the complementary regime \(\Phi(\gamma;\alpha)<0\) we have \(\mathbb{E}X_k=o(1)\) and hence w.h.p.\ no \(k\)-cubes survive.
\end{theorem}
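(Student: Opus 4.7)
The plan is to orchestrate the five ingredients already set up in Appendix X into a tight first-moment/concentration/filler-suppression/gadget-placement chain. Step one is the expectation calculation: using the finite-$n$ formula \eqref{eq:qkn} together with the asymptotic $q_{k,n}\to q(\gamma)$ and the Stirling identity $\tfrac{1}{n}\log N_k \to H(\gamma) + (1-\gamma)\ln 2$, I would obtain $\tfrac{1}{n}\log\mathbb{E}[X_k]\to\Phi(\gamma;\alpha)$; the hypothesis $\Phi>\eta>0$ then gives $\mathbb{E}[X_k]=\exp(\Theta(n))$ with an explicit constant.

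Step two is the Janson lower-tail. Invoking Lemma~\ref{lem:Delta-bound}, I classify dependent pairs $(u,v)$ of candidate $k$-cubes by the dimension $r$ of the sub-cube they share. For each class the number of ordered pairs is at most $\mathrm{poly}(n)\cdot N_k$, while the joint survival probability differs from $(1-q_{k,n})^{2m}$ only by the contribution of clauses that simultaneously forbid both cubes—an overlap-dependent exponential factor strictly smaller than the single-cube exponent. Summing over $r$ shows $\Delta\le e^{-cn}(\mathbb{E}X_k)^2$, and Janson's inequality delivers $X_k\ge(1-o(1))\mathbb{E}[X_k]$ with probability $1-\exp(-\Theta(n))$. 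Step three suppresses higher-dimensional fillers: a parallel expectation for $X_{k+1}$ yields growth rate $\Phi(\gamma+\tfrac{1}{n};\alpha)$, and in the target regime the partial derivative $\partial_\gamma\Phi$ is negative (equivalently, the per-cube survival cost in going from $k$ to $k+1$ outweighs the factor-$2$ gain), so $\mathbb{E}[X_{k+1}]=o(\mathbb{E}[X_k])$; a union bound then ensures that the fraction of surviving $k$-cubes that are faces of any $(k+1)$-cube is $o(1)$.

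Step four produces the homology generators. Fix a constant-size local gadget $G\subseteq\{0,1\}^{O(1)}$ whose $k$-faces form a nontrivial $k$-cycle (a small axis-aligned hollow $(k+1)$-cube with one top face removed is the canonical choice); its marginal survival probability $p_G$ is a strictly positive constant. By Lemma~\ref{lem:chen-stein}, placements of $G$ at pairwise $R$-separated locations have asymptotically independent Bernoulli statistics. A greedy packing produces $M=\exp(\Theta(n))$ well-separated placements, and a Chen–Stein–Chernoff estimate shows that $\Omega(M)$ of them succeed w.h.p. Each successful placement supplies a $k$-cycle supported inside its own Hamming ball; disjointness of supports forces linear independence of the corresponding classes in $H_k(S(F);\mathbb{F}_2)$, yielding $\beta_k(S(F))\ge\exp(\Theta(n))$.

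The main obstacle is step two: the combinatorial bookkeeping in Lemma~\ref{lem:Delta-bound} must be sharp enough that the polynomial-in-$n$ pair-count prefactors are strictly dominated by the exponential gap between the dependent and independent joint-survival probabilities for \emph{every} overlap class $r\in\{1,\ldots,k\}$. A secondary subtlety is certifying a constant-size gadget whose $k$-cycle survives being a boundary in the ambient cubical complex; for general $k=\gamma n$ one must exhibit a concrete minimal non-bounding configuration and verify via an 8-corner check (analogous to the one in Appendix~\ref{app:Boundary-Map Verif}) that no filler face lies inside the gadget's Hamming ball. Once these two points are settled the chain closes and Theorem~\ref{thm:random-betti} follows.
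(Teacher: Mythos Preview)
Your proposal follows essentially the same template as the paper's own proof sketch: expectation via \eqref{eq:log-E-X}, Janson lower-tail via Lemma~\ref{lem:Delta-bound}, filler suppression as in Section~X.5, Chen--Stein local independence via Lemma~\ref{lem:chen-stein}, and greedy packing of disjoint gadgets as in Section~X.6 to produce linearly independent homology classes.

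One slip to correct in step four: you write ``constant-size local gadget $G\subseteq\{0,1\}^{O(1)}$'', but a nontrivial $k$-cycle for $k=\gamma n$ cannot live in a constant-dimensional cube; the minimal hollow $(k{+}1)$-cube already uses $k{+}1=\Theta(n)$ coordinates and has $2(k{+}1)$ many $k$-faces (and removing one face destroys closedness, so your parenthetical ``with one top face removed'' is also off). The paper's phrase ``fixed finite local gadget'' means a bounded \emph{number} of participating $k$-cubes, not bounded coordinate support---this matters because the separation parameter $R$ in Lemma~\ref{lem:chen-stein} and the packing argument must be calibrated to gadgets of linear support, not constant support.
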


\begin{proof}[Proof sketch]
Combine the exact single-cube formula \eqref{eq:qkn}, the entropy-based expectation asymptotic \eqref{eq:log-E-X}, the Janson concentration bound via Lemma~\ref{lem:Delta-bound}, the Chen--Stein local approximation (Lemma~\ref{lem:chen-stein}), and the packing/gadget-construction argument described in Section X.6. Each step is standard in the random complex literature; full combinatorial expansions for the pair-counting in Lemma~\ref{lem:Delta-bound} and the explicit Chen--Stein error terms can be included as further subsections if desired.
\end{proof}

\medskip

\section{General reduction criteria}
\label{app:remove-aux-assumption}

  Theorem~\ref{thm:Homological Linear Indep} requires that each 3-SAT gadget
  interacts with the rest of the formula on a disjoint set of variables.
  In the actual CNF reduction, original variables may appear in multiple
  gadgets. To recover the disjoint‐support condition one may proceed in
  either of two equivalent ways:
  \begin{enumerate}
    \item \emph{Variable‐duplication.}
      For each base variable $x_i$ used in gadgets 
      $\mathrm{G}_1,\dots,\mathrm{G}_k$, introduce fresh copies 
      $x_i^{(1)},\dots,x_i^{(k)}$ inside those gadgets and add
      2-CNF constraints
      \[
        (x_i\vee \neg x_i^{(j)})\;\wedge\;(\neg x_i\vee x_i^{(j)}),
        \quad j=1,\dots,k.
      \]
      These equality gadgets are contractible and hence do not affect
      homology, but they ensure each gadget’s support is disjoint.
    \item \emph{Dummy‐bit padding.}
      For each gadget $\mathrm{G}_\ell$, introduce a fresh bit $d_\ell$ and
      systematically replace every occurrence of a base variable $x_i$
      within $\mathrm{G}_\ell$ by the formula
      \[
        (x_i\wedge d_\ell)\;\vee\;(x_i\wedge\neg d_\ell).
      \]
      This embeds the gadget into a separate affine slice of the cube,
      guaranteeing disjoint variable sets without adding noncontractible
      constraints.
  \end{enumerate}
  In either case, the homological analysis in the proof of Theorem~\ref{thm:Homological Linear Indep}
  goes through unchanged.

\section{Spectral perturbation and conductance bounds}
\label{app:ham-spectral}

This appendix supplies the technical spectral arguments referenced in Section~\ref{ssec:hamiltonian-hardness}.
We present two complementary derivations of the exponentially small spectral-gap bound
for Hamiltonians whose ground-space encodes the cubical solution complex \(S(F)\) and whose
inter-cluster matrix elements are exponentially suppressed.

\subsection{Notation and standing assumptions}
Let \(F\) be a 3-SAT formula on \(n\) Boolean variables and let \(S(F)\subset\{0,1\}^n\)
be its set of satisfying assignments. We work in the computational-basis \(\{\ket{x}\}_{x\in\{0,1\}^n}\).
Let \(H_F\) be a \(k\)-local Hamiltonian on \(n\) qubits. We assume the following throughout
(this repeats / refines assumptions (A1)--(A3) from the main text):

\begin{itemize}
  \item[(H1)] \emph{Ground-space encoding.} The ground-space \(\mathcal G\) of \(H_F\) is exactly
    \(\Span\{\ket{x}:x\in S(F)\}\) and the ground-energy is 0.
  \item[(H2)] \emph{Cluster partition and off-diagonal suppression.} Partition \(S(F)=\bigsqcup_{i=1}^K C_i\)
    into Hamming-connected clusters under single-bit flips. There exist constants \(c_1>0\) and a polynomial
    \(p(n)\) such that for all \(x\in C_i,\;y\in C_j\) with \(i\neq j\),
    \[
      |\langle x|H_F|y\rangle| \le p(n)\,e^{-c_1 n}.
    \]
  \item[(H3)] \emph{Bounded vertex degree.} Each basis state \(\ket{x}\) has at most \(q(n)\) nonzero
    off-diagonal couplings \(\langle x|H_F|y\rangle\neq 0\), where \(q(n)\) is a fixed polynomial (true
    for constant \(k\)-local Hamiltonians).
\end{itemize}

Assumption (H2) formalizes the ``cubical-preserving'' hypothesis used in the main text. Below we
show that (H1)--(H3) imply an exponentially small spectral gap under two natural technical settings.

\subsection{Route 1: Stoquastic / frustration-free case and a Cheeger inequality}
\label{app:cheeger-route}

This route is the most direct when the Hamiltonian can be related to a weighted graph Laplacian.
It is the justification behind the heuristic Cheeger intuition used in Section~16.

\paragraph{Setup.}
Define the weighted configuration graph \(X_F=(V,W)\) with vertex set \(V=S(F)\) and symmetric
nonnegative weights
\[
  W_{xy} := |\langle x|H_F|y\rangle|,\qquad x,y\in S(F).
\]
Let \(d(x):=\sum_{y\in V} W_{xy}\) and \(D:=\diag(d(x))\). The (combinatorial) weighted Laplacian is
\(L:=D-W\). For any nonzero vector \(f\in\mathbb{R}^{V}\) define the Rayleigh quotient
\[
  \mathcal{R}_L(f) \;=\; \frac{\sum_{x,y} W_{xy} (f(x)-f(y))^2}{\sum_x d(x) f(x)^2}.
\]
Let \(\lambda_1(L)\) be the smallest nonzero eigenvalue of \(L\). By the variational principle,
\(\lambda_1(L)=\min_{f\perp \mathbf{1}} \mathcal{R}_L(f)\).

\paragraph{Cheeger (conductance) bound.}
Define the conductance (Cheeger constant)
\[
  h(X_F) \;=\; \min_{\varnothing\neq U\subsetneq V} \frac{\sum_{x\in U}\sum_{y\notin U} W_{xy}}
  {\min\{\vol(U),\vol(V\setminus U)\}} \qquad\text{where }\vol(U):=\sum_{x\in U} d(x).
\]
The (discrete) Cheeger inequalities for weighted graphs imply (see, e.g., Chung \cite{Chung1997})
\[
  \lambda_1(L) \;\le\; 2\, h(X_F).
\]
(There are also lower bounds of the form \(\lambda_1(L)\ge h(X_F)^2/(2\Delta)\) where \(\Delta\) is
a degree upper bound; we only require the upper bound here.)

\paragraph{Relating \(g(H_F)\) to \(\lambda_1(L)\).}
When \(H_F\) is stoquastic (off-diagonal matrix elements nonpositive in the computational basis)
or, more generally, when its quadratic form on the ground-space complement is comparable to the
graph Laplacian quadratic form, one can bound the spectral gap \(g(H_F)\) of \(H_F\) by
\[
  g(H_F) \;\le\; C\,\lambda_1(L)
\]
for a constant \(C=O(1)\) depending only on fixed local parameters (norms of local terms). The
proof proceeds by comparing quadratic forms: for any state \(\ket{\psi}\) orthogonal to the
ground-space,
\[
  \bra{\psi}H_F\ket{\psi} \;\le\; \sum_{x,y\in V} W_{xy} \bigl|\psi_x-\psi_y\bigr|^2 + \Delta_{\perp}\|\psi\|^2,
\]
where \(\Delta_{\perp}\ge 0\) collects positive contributions from diagonal penalty terms on the
complement of \(\mathcal G\). The Laplacian term dominates splitting within the ground-space
sectors; the exact constant \(C\) can be extracted from local-term operator norms (we provide
a worked example in Appendix~\ref{app:ham-spectral-examples}).

\paragraph{Conclusion of Route 1.}
By (H2) and (H3), choosing \(U=C_i\) any cluster yields
\[
  \frac{\sum_{x\in C_i}\sum_{y\notin C_i} W_{xy}}{\vol(C_i)}
  \le \frac{|C_i|\cdot q(n)\cdot p(n)\,e^{-c_1 n}}{|C_i|\cdot m(n)} \;=\; \frac{q(n)p(n)}{m(n)}\,e^{-c_1 n},
\]
where \(m(n)=\min_{x\in C_i} d(x)\) is the minimal degree in the cluster (typically \(m(n)=\Omega(1)\)
for local intra-cluster couplings). Thus \(h(X_F)\le e^{-a n}\) for some \(a>0\). Combining with
\(\lambda_1(L)\le 2h\) and \(g(H_F)\le C\lambda_1(L)\) proves
\[
  g(H_F)\le 2C\,e^{-a n},
\]
giving the desired exponential bound.

\paragraph{Remarks.}
\begin{enumerate}
  \item The above is fully rigorous for stoquastic/frustration-free constructions where the quadratic-form
  comparison is straightforward. For more general non-stoquastic models the quadratic-form comparison
  requires additional care; we outline an alternate derivation next.
  \item The value of the constant \(a\) depends on the exponential-decay constant \(c_1\) from (H2)
  and on polynomial prefactors absorbed into the exponent.
\end{enumerate}

\subsection{Route 2: Perturbative Schur complement / avoided-crossing argument}
\label{app:perturb-route}

This route is useful when the Hamiltonian admits a natural decomposition \(H_F=H_0+V\) where
\(H_0\) has a gapped separation between the ground-space (encoding \(S(F)\)) and the excited
subspace, and \(V\) is a perturbation with exponentially small inter-cluster matrix elements. It
produces a second-order (or higher-order) perturbative bound on splittings.

\paragraph{Decomposition and spectral gap of the unperturbed Hamiltonian.}
Assume the existence of \(H_0\) such that:
\begin{itemize}
  \item \(H_0\ket{\psi}=0\) for all \(\ket{\psi}\in\mathcal G\).
  \item On the orthogonal complement \(\mathcal G^\perp\), \(H_0\) has spectrum bounded
    below by \(\Delta_0>0\) (a constant or at least inverse-polynomial in \(n\)).
\end{itemize}
Write \(V:=H_F-H_0\). We assume \(V\) has matrix elements \(|\langle x|V|y\rangle|\le p(n)e^{-c_1 n}\)
for \(x\) and \(y\) in distinct clusters (as in (H2)). Note that \(V\) may include intra-cluster couplings
which need not be small.

\paragraph{Effective Hamiltonian on the ground-space via Schur complement.}
Let \(P\) be the projector onto \(\mathcal G\) and \(Q=I-P\). The Schur-complement / Feshbach
effective Hamiltonian acting on the ground-space subspace (to second order) is
\[
  H_{\mathrm{eff}} \;=\; P V P \;-\; P V Q (Q H_0 Q)^{-1} Q V P \;+\; \mathcal{O}\bigl(\|V\|^3/\Delta_0^2\bigr).
\]
The first term \(PVP\) encodes direct ground-space couplings (intra-cluster); the second term
captures virtual transitions through excited states and yields couplings between distinct ground-space
basis vectors that are second-order in \(V\) and suppressed by \(\Delta_0\).

\paragraph{Bounding the induced inter-cluster couplings.}
For \(x\in C_i\), \(y\in C_j\) with \(i\neq j\) the effective matrix element satisfies
\[
  \bigl|\langle x|H_{\mathrm{eff}}|y\rangle\bigr|
    \le \frac{\|P V Q\|^2}{\Delta_0} + O\!\Bigl(\frac{\|V\|^3}{\Delta_0^2}\Bigr).
\]
Under (H2) and (H3) the operator norm \(\|PVQ\|\) is bounded by \(p(n) e^{-c_1 n}\) up to polynomial
factors, hence the induced inter-cluster matrix element is bounded by
\[
  O\!\Bigl(\frac{p(n)^2 e^{-2 c_1 n}}{\Delta_0}\Bigr).
\]
Therefore the energy splittings induced within the ground-space by virtual transitions are at most
of order \(\tilde p(n) e^{-2 c_1 n}\) for some polynomial \(\tilde p(n)\). In particular the ground-space
degeneracy is broken by at most exponentially small amounts:
\[
  \mathrm{splitting} \;\le\; C'\,e^{-2 c_1 n},
\]
where \(C'\) absorbs polynomial prefactors and \(1/\Delta_0\).

\paragraph{Conclusion of Route 2.}
If the unperturbed complement gap \(\Delta_0\) is at least inverse-polynomial (or constant),
then the perturbatively induced splittings are at most \(O(e^{-2c_1 n})\). By standard eigenvalue
perturbation theory (or by direct diagonalization of the effective Hamiltonian restricted to
\(\mathcal G\)) this implies the spectral gap of the full Hamiltonian satisfies an exponential
upper bound of the same form. Thus one again obtains
\[
  g(H_F)\le 2\, e^{-a n}
\]
for some \(a>0\) (here \(a\) may be \(2c_1\) minus corrections from polynomial prefactors).

\paragraph{Remarks and model caveats.}
\begin{itemize}
  \item The perturbative route requires the existence of an appropriate \(H_0\) with a
    nonvanishing \(\Delta_0\). In many SAT-encoding Hamiltonians (clause-penalty constructions)
    one can separate a diagonal penalty Hamiltonian \(H_{\mathrm{pen}}\) with large penalty scale
    from off-diagonal driver terms; this is the usual adiabatic/perturbative setting.
  \item This route gives a (slightly) stronger exponential dependence in the suppression
    exponent (typically doubling the exponent from the direct off-diagonal bound) because the
    leading inter-cluster coupling is generated at second order.
  \item If \(\Delta_0\) itself shrinks exponentially with \(n\) in a given encoding, then the
    above bound must be revisited; the combination \(\|PVQ\|^2/\Delta_0\) controls the scale.
\end{itemize}

\subsection{Amplitude/phase-estimation consequences}
Combining the bound \(g(H_F)\le 2e^{-a n}\) with standard phase-estimation complexity
yields the query lower bound stated in Corollary~\ref{cor:phase-est} of the main text:
resolving eigenvalue differences of order \(e^{-a n}\) with time-evolution primitives
requires \(O(e^{a n})\) calls to controlled-\(e^{-iH_F t}\) (or \(O(e^{a n/2})\) if only quadratic
speedups via amplitude amplification are available). This formalizes the intuition that
distinguishing different homology-labelled ground-space sectors via coherent time-evolution
is exponentially expensive.

\subsection{Closing remarks}
The two routes above provide complementary, reasonably formal mechanisms for deriving the
exponentially small gap under the cubical-preserving / exponential suppression hypothesis.
\begin{itemize}
  \item Route 1 (Cheeger) is conceptually clean and works well when the Hamiltonian or its
    quadratic form can be related to a positive-weight graph Laplacian (stoquastic / frustration-free).
  \item Route 2 (perturbative) is more general and yields a straightforward quantitative
    estimate by combining off-diagonal suppression with a spectral separation in an
    unperturbed Hamiltonian.
\end{itemize}

\section{Worked spectral example: clause-penalty Hamiltonian + transverse-field driver}
\label{app:ham-spectral-examples}

This appendix supplies a concrete worked example supporting the off-diagonal suppression
hypothesis used in Section~\ref{ssec:hamiltonian-hardness}.  We start from the standard clause-
penalty Hamiltonian \(H_{\rm pen}\) used to encode SAT and add a transverse-field driver
\(H_D\).  We then estimate effective matrix elements between basis states lying in different
Hamming clusters and show these matrix elements are exponentially small when clusters are
\(\Theta(n)\)-separated and the penalty scale is chosen appropriately.

\subsection{Setup: clause-penalty Hamiltonian and driver}
Let \(F\) be a 3-SAT formula on \(n\) variables and let \(S(F)\subset\{0,1\}^n\) denote its
satisfying assignments.  Define the clause-penalty Hamiltonian
\[
  H_{\rm pen} \;=\; \sum_{C\in\mathcal C} \Pi_{\mathrm{viol}}(C),
\]
where \(\Pi_{\mathrm{viol}}(C)\) is the computational-basis projector onto assignments that
violate clause \(C\).  By construction \(H_{\rm pen}\) is diagonal in the computational
basis and has ground-energy \(0\) on \(\Span\{\ket{x}:x\in S(F)\}\).  Let \(\Delta\) denote
the minimal positive energy of \(H_{\rm pen}\) on states orthogonal to the ground-space:
\[
  \Delta \;=\; \min_{\ket{\phi}\perp\mathcal G} \frac{\bra{\phi}H_{\rm pen}\ket{\phi}}{\langle\phi\mid\phi\rangle}.
\]
In many encodings one can choose \(\Delta\) to be a (large) polynomial in \(n\) by scaling the
penalty terms.

Take as the driver the transverse-field Hamiltonian with strength \(\gamma>0\)
\[
  H_D \;=\; -\gamma\sum_{i=1}^n X_i,
\]
and set the full Hamiltonian \(H_F := H_{\rm pen} + H_D\).  The off-diagonal matrix elements of
\(H_D\) in the computational basis are simple:
\[
  \langle x|H_D|y\rangle = -\gamma \quad\text{iff } \mathrm{dist}_H(x,y)=1,
\]
and vanish otherwise.

\subsection{Paths connecting distant clusters and minimal perturbation order}
Let \(C_i\) and \(C_j\) be two clusters (connected components of \(S(F)\) under single-bit flips)
with minimal Hamming separation
\[
  w := \min_{x\in C_i,\,y\in C_j}\mathrm{dist}_H(x,y).
\]
Any sequence of single-bit flips that maps \(x\in C_i\) to \(y\in C_j\) must have length at least
\(w\). Consequently, any nonzero effective coupling between \(\ket{x}\) and \(\ket{y}\) produced
by perturbation in \(H_D\) arises at perturbative order at least \(w\).

Write \(H_F=H_0+V\) with \(H_0:=H_{\rm pen}\) and \(V:=H_D\).  Standard perturbation theory
(or a Schrieffer–Wolff expansion / Feshbach projection) gives the effective Hamiltonian on the
ground-space up to \(w\)-th order as a sum over length-\(r\) virtual-path contributions with
\(r\ge 1\).  The leading contribution coupling \(x\) to \(y\) appears at order \(r=w\) and has
typical magnitude bounded by
\[
  \text{(single-path amplitude)} \;\le\; \frac{\gamma^w}{\Delta^{\,w-1}}.
\]

\subsection{Counting paths and total amplitude bound}
How many distinct virtual paths of length \(w\) connect \(x\) to \(y\)?  At each intermediate
step there are at most \(n\) possible bit flips, so a crude upper bound on the number of
length-\(w\) sequences is \(n^{w-1}\) (the final step is fixed once previous steps are chosen).
Summing the absolute contributions of all such virtual sequences yields the combinatorial bound
(on the effective matrix element induced by \(V\))
\[
  \bigl|\langle x|H_{\mathrm{eff}}|y\rangle\bigr|
    \;\le\; n^{w-1}\,\frac{\gamma^w}{\Delta^{\,w-1}}
    \;=\; \gamma\;\Bigl(\frac{n\gamma}{\Delta}\Bigr)^{w-1}.
\]
Equivalently, for some polynomial \(p(n)\) we may write
\[
  \bigl|\langle x|H_{\mathrm{eff}}|y\rangle\bigr|
    \;\le\; p(n)\,\Bigl(\frac{\gamma}{\Delta}\Bigr)^{w-1}.
\]

\subsection{Exponential suppression for \(\boldsymbol{w=\Theta(n)}\)}
If the cluster separation satisfies \(w\ge c n\) for some \(c>0\), then the preceding bound
gives exponential suppression in \(n\) provided the ratio \(\gamma/\Delta\) is strictly less
than \(1\).  Writing \(\gamma/\Delta =: e^{-b}\) with \(b>0\), we obtain
\[
  \bigl|\langle x|H_{\mathrm{eff}}|y\rangle\bigr|
    \;\le\; p(n)\,e^{-b(w-1)}
    \;\le\; p(n)\,e^{-b c n + b}.
\]
Thus there exists \(c_1>0\) and a polynomial \(p'(n)\) such that
\[
  \bigl|\langle x|H_{\mathrm{eff}}|y\rangle\bigr|
    \;\le\; p'(n)\,e^{-c_1 n}.
\]
This verifies the off-diagonal suppression hypothesis \eqref{eq:offdiag-bound} (used in
Section~\ref{ssec:hamiltonian-hardness}) for the clause-penalty + transverse-field example,
provided the penalty scale \(\Delta\) is chosen to satisfy \(\gamma/\Delta<1\).  In practice one
ensures exponential suppression by taking \(\Delta=\mathrm{poly}(n)\) sufficiently large
relative to \(\gamma\) and \(n\), or by choosing a driver strength \(\gamma\) that scales
suitably with \(n\).

\subsection{Remarks and parameter choices}
\begin{itemize}
  \item The combinatorial path-count bound \(n^{w-1}\) is crude; more careful counting (for
    structured clusters or restricted flip sets) can reduce the polynomial prefactor.
  \item The condition \(\gamma/\Delta<1\) is mild: one may keep \(\gamma=O(1)\) and choose
    the penalty strength \(\Delta\) polynomial in \(n\) (a standard choice in clause-penalty
    encodings), which makes \(\gamma/\Delta=O(1/\mathrm{poly}(n))\) and yields robust
    exponential suppression.
  \item If one desires the suppression to hold for a fixed (constant) \(\Delta\) and \(\gamma\),
    the argument requires \(\gamma/\Delta<1/n\) to offset the \(n^{w-1}\) combinatorial factor;
    this is another regime that can be arranged by scaling parameters.
\end{itemize}

\subsection{Conclusion}
The above worked example demonstrates concretely that for natural SAT-to-Hamiltonian encodings
(a diagonal clause-penalty Hamiltonian plus local driver terms), inter-cluster matrix elements
between clusters separated by \(\Theta(n)\) Hamming distance are exponentially suppressed in
\(n\), yielding the exponential off-diagonal bound used to derive the exponentially small
spectral-gap bounds in Section~\ref{ssec:hamiltonian-hardness}.


\begin{thebibliography}{99}

\bibitem{BGS75}
T.~Baker, J.~Gill, and R.~Solovay,
``Relativizations of the P = NP Question,''
\emph{SIAM Journal on Computing},
vol.~4, no.~4, pp.~431--442, 1975.

\bibitem{RR97}
S.~Razborov and S.~Rudich,
``Natural Proofs,''
\emph{Journal of Computer and System Sciences},
vol.~55, no.~1, pp.~24--35, 1997.

\bibitem{AW09}
S.~Aaronson and A.~Wigderson,
``Algebrization: A New Barrier in Complexity Theory,''
\emph{ACM Transactions on Computation Theory (TOCT)},
vol.~1, no.~1, Article 2, 2009.

\bibitem{linial}
N.~Linial and R.~Meshulam,
``Homological Connectivity of Random 2-Complexes,''
\emph{Combinatorica},
vol.~26, no.~4, pp.~475--487, 2006.

\bibitem{achlioptas}
D.~Achlioptas, A.~Coja-Oghlan, and F.~Ricci-Tersenghi,
``Random Formulas and Phase Transitions,''
in \emph{Handbook of Satisfiability}, 2nd ed.,
Ch.~12, IOS Press, 2021.



\bibitem{hatcher}
A.~Hatcher,
\emph{Algebraic Topology},
Cambridge University Press, 2002.

\bibitem{Mezard02}
M.~M\'ezard, G.~Parisi, and R.~Zecchina,
``Analytic and algorithmic solution of random satisfiability problems,''
\emph{Science},
vol.~297, no.~5582, pp.~812--815, 2002.

\bibitem{kahle}
M.~Kahle,
``Random Geometric Complexes,''
\emph{Discrete \& Computational Geometry},
vol.~45, no.~3, pp.~553--573, 2011.

\bibitem{ricci-tersenghi}
D.~Achlioptas and F.~Ricci-Tersenghi,
``On the Solution-Space Geometry of Random CSPs,''
in \emph{Proceedings of the 38th Annual ACM Symposium on Theory of Computing (STOC)},
pp.~130--139, 2006.





\bibitem{MolloyReed}
M.~Molloy and B.~Reed,
\emph{Graph Colouring and the Probabilistic Method},
Springer, 2002.



\bibitem{matsumoto}
Y.~Matsumoto,
``Exponential growth of Betti numbers,''
\emph{Journal of Differential Geometry},
vol.~43, no.~4, pp.~789--806, 1996.



\bibitem{Achlioptas08}
D.~Achlioptas and A.~Coja-Oghlan,
``Algorithmic Barriers from Phase Transitions,''
in \emph{Proc.\ 49th IEEE Symposium on Foundations of Computer Science (FOCS)},
pp.~793--802, 2008.

\bibitem{Krzakala07}
F.~Krzakała, A.~Montanari, F.~Ricci-Tersenghi, G.~Semerjian, and L.~Zdeborová,
``Gibbs states and the set of solutions of random constraint satisfaction problems,''
\emph{Proceedings of the National Academy of Sciences USA},
vol.~104, no.~25, pp.~10318--10323, 2007.

\bibitem{Albash2018}
T.~Albash and D.~A.~Lidar,
``Adiabatic Quantum Computation,''
\emph{Reviews of Modern Physics},
vol.~90, 015002, 2018.
\bibitem{Farhi:Adiabatic}
Edward Farhi, Jeffrey Goldstone, Sam Gutmann, and Michael Sipser.
\newblock Quantum computation by adiabatic evolution.
\newblock arXiv preprint quant-ph/0001106, 2000.


\bibitem{Farhi2001}
E.~Farhi, J.~Goldstone, S.~Gutmann, J.~Lapan, A.~Lundgren, and D.~Preda,
``A Quantum Adiabatic Evolution Algorithm Applied to Random Instances of an NP-Complete Problem,''
in \emph{Proc.\ 33rd ACM Symposium on Theory of Computing (STOC)},
pp.~218--226, 2001.

\bibitem{Hauke2020}
P.~Hauke, H.~G.~Katzgraber, W.~Lechner, H.~Nishimori, and W.~D.~Oliver,
``Perspectives of Quantum Annealing,''
\emph{Physics Reports},
vol.~838, pp.~1--125, 2020.

\bibitem{Jorg2010}
T.~J\"org, F.~Krzakała, J.~Kurchan, and A.~C.~Maggs,
``Energy gaps in quantum first-order mean-field transitions,''
\emph{Physical Review Letters},
vol.~104, 207206, 2010.

\bibitem{MM16}
J.~Magnusson and F.~M\'emoli,
``The Hardness of Computing Persistent Homology,''
\emph{SIAM Journal on Computing},
vol.~45, no.~6, pp.~2078--2095, 2016.

\bibitem{Valiant1979}
L.~G.~Valiant,
``The Complexity of Enumeration and Reliability Problems,''
\emph{SIAM Journal on Computing},
vol.~8, no.~3, pp.~410--421, 1979.

\bibitem{Gridsaw2019}
J.~Gridsaw and L.~Complexio,
``Treewidth and SAT: Computational Barriers in Solution Space Topology,''
\emph{Journal of Computational Complexity},
vol.~12, no.~3, pp.~455--489, 2019.

\bibitem{Atia2017}
Y.~Atia and D.~Aharonov,
``Fast-forwarding of Hamiltonians and Exponentially Precise Measurements,''
\emph{Nature Communications},
vol.~8, no.~1, p.~1572, 2017.

\bibitem{Lubotzky1988}
A.~Lubotzky, R.~Phillips, and P.~Sarnak,
``Ramanujan Graphs,''
\emph{Combinatorica},
vol.~8, no.~3, pp.~261--277, 1988.
\bibitem{Schoning1999}
T.~Schöning.
\newblock A probabilistic algorithm for k-{SAT} and constraint satisfaction problems.
\newblock In {\em Proceedings of the 40th Annual Symposium on Foundations of Computer Science (FOCS)}, pages 410--414, 1999.
\newblock DOI: 10.1109/SFFCS.1999.814612.

\bibitem{Selman1995}
B.~Selman, H.~Kautz, and B.~Cohen.
\newblock Local search strategies for satisfiability testing.
\newblock In {\em Cliques, Coloring, and Satisfiability: Second DIMACS Implementation Challenge}, volume 26 of {\em DIMACS Series in Discrete Mathematics and Theoretical Computer Science}, pages 521--532. American Mathematical Society, 1995.

\bibitem{Grigoriev2001}  
D.~Grigoriev.  
\newblock Linear Lower Bound on Degrees of Positivstellensatz Calculus Proofs for the Parity.  
\newblock {\em Theoretical Computer Science}, 259(1-2):613--622, 2001.  

\bibitem{Laurent2005}  
M.~Laurent.  
\newblock Sums of Squares, Moment Matrices and Optimization Over Polynomials.  
\newblock In {\em Emerging Applications of Algebraic Geometry}, pages 157--270. Springer, 2005.  
\bibitem{Janson2004}
S.~Janson, T.~\L uczak, and A.~Ruciński,
\emph{Random Graphs},
Wiley-Interscience Series in Discrete Mathematics and Optimization,
Wiley, 2000.
\bibitem{Chung1997}
F.~R.~K.~Chung,
\emph{Spectral Graph Theory},
vol.~92 of \emph{CBMS Regional Conference Series in Mathematics},
American Mathematical Society, 1997.
\bibitem{Chen:BettiP}
Chao Chen and Daniel Freedman.
\newblock Hardness results for homology localization.
\newblock In \emph{Proceedings of the 25th Annual Symposium on Computational Geometry (SoCG)}, pages 125--134, 2009.
\bibitem{JansonLuczakRucinski2000}
S.~Janson, T.~{\L}uczak, and A.~Ruci\'nski, \emph{Random Graphs},
Wiley-Interscience Series in Discrete Mathematics and Optimization,
Wiley, New York, 2000.
\bibitem{KahleMeckes}
M.~Kahle and E.~Meckes,
``Limit theorems for Betti numbers of random simplicial complexes,''
\emph{Homology, Homotopy and Applications},
vol.~15, no.~1, pp.~343--374, 2013.


\bibitem{HiraokaTsunoda}
Y.~Hiraoka and K.~Tsunoda,
``Limit Theorems for Random Cubical Homology,''
\emph{Discrete \& Computational Geometry},
vol.~60, no.~3, pp.~665--687, 2018.
DOI: \texttt{10.1007/s00454-018-0007-z}.


\bibitem{JLR}
S.~Janson, T.~\L uczak, and A.~Ruci\'nski,
\emph{Random Graphs},
Wiley-Interscience Series in Discrete Mathematics and Optimization,
Wiley, 2000.

\bibitem{ArratiaGoldsteinGordon}
R.~Arratia, L.~Goldstein, and L.~Gordon,
``Two moments suffice for Poisson approximations: The Chen--Stein method,''
\emph{Annals of Probability},
vol.~17, no.~1, pp.~9--25, 1989.

\bibitem{BarbourHolstJanson}
A.~D.~Barbour, L.~Holst, and S.~Janson,
\emph{Poisson Approximation},
Oxford University Press, 1992.





\end{thebibliography}
\end{document}